\title{Semantics for a Turing-complete Reversible
Programming Language with Inductive Types} 
\author{Kostia Chardonnet}{Department of Computer Science and Engineering, University of Bologna, Italy}{}{}{is partially supported by the MIUR FARE project CAFFEINE, ``Compositional and Effectful Program Distances'', R20LW7EJ7L.}
\author{Louis Lemonnier\footnote{Corresponding author.}}{Université Paris-Saclay, CNRS, ENS Paris-Saclay, Inria, Laboratoire Méthodes Formelles, 91190, Gif-sur-Yvette, France}{}{https://orcid.org/0000-0003-1761-3244}{}
\author{Benoît Valiron}{Université Paris-Saclay, CNRS, CentraleSupélec, ENS Paris-Saclay, Inria, Laboratoire Méthodes Formelles, 91190, Gif-sur-Yvette, France}{}{}{}
\authorrunning{K. Chardonnet, L. Lemonnier and B. Valiron} 
\titlerunning{Sem. for a Turing-complete Rev. Prog. Lang. with Induct. Types}
\keywords{Reversible programming, functional programming, Computability, Denotational Semantics} 
\newcommand{\allignLabel}[1]{\refstepcounter{equation}(\theequation)\def\tmplab{#1}\ltx@label\tmplab}
\newtheorem{property}[theorem]{Property}
\newcommand{\N}{\mathbb{N}}
\newcommand{\CC}{\mathcal C}
\newcommand{\DD}{\mathcal D}
\newcommand{\ov}[1]{\ensuremath{\overline{#1}}}
\newcommand{\opn}[1]{\ensuremath{\operatorname{#1}}}
\newcommand{\set}[1]{\ensuremath{\{#1\}}}
\newcommand{\mynl}{\\[1ex]}
\newcommand{\natS}{\ensuremath{\mathbb{N}}}
\newcommand{\natT}{\ensuremath{\mathtt{nat}}}
\newcommand{\boolT}{\ensuremath{\mathbb{B}}}
\newcommand{\one}{\ensuremath{\mathds{1}}}
\newcommand\1\one
\newcommand{\intf}[1]{\left\llbracket #1 \right\rrbracket}
\newcommand{\interp}[1]{\ensuremath{\intf{#1}}}
\newcommand{\unit}{\ensuremath{()}}
\newcommand{\alt}{~\mid~}
\newcommand{\inl}[1]{\ensuremath{\mathtt{inj}_\ell{\;#1}}}
\newcommand{\inr}[1]{\ensuremath{\mathtt{inj}_r{\;#1}}}
\newcommand{\ini}[1]{\ensuremath{\mathtt{inj}_i{\;#1}}}
\newcommand{\pv}[2]{\ensuremath{\langle #1,#2 \rangle}}
\newcommand{\pair}[2]{\ensuremath{\langle #1,#2 \rangle}}
\newcommand{\fold}[1]{\ensuremath{\mathtt{fold}{\;#1}}}
\newcommand{\lett}{\ensuremath{\mathtt{let}}}
\newcommand{\letv}[3]{{\mathtt{let}}\,{#1}={#2}~{\mathtt{in}}~{#3}}
\newcommand{\tc}{\mathtt{t}\!\mathtt{t}}
\newcommand{\fc}{\mathtt{f}\!\mathtt{f}}
\newcommand{\entailiso}{\vdash_{\isoterm}}
\newcommand{\entail}{\vdash}
\newcommand{\iso}{\ensuremath{\leftrightarrow}}
\newcommand{\isovar}{\ensuremath{\phi}}
\newcommand{\isolambdavar}{\ensuremath{\psi}}
\newcommand{\isobasique}{\ensuremath{\set{v_1\iso e_1\mid \dots \mid v_n \iso e_n}}}
\newcommand{\isobreduit}{\ensuremath{\set{v_i\iso e_i}_{i \in I}}}
\newcommand{\fix}{\ensuremath{\mathtt{fix}~}}
\newcommand{\ffix}{\ensuremath{\mathtt{fix}~}}
\newcommand{\nfix}[1]{\ensuremath{\mathtt{fix}^{#1}~}}
\newcommand{\isoterm}{\ensuremath{\omega}}
\newcommand{\ttt}{\ensuremath{\mathbf{tt}}}
\newcommand{\fff}{\ensuremath{\mathbf{ff}}}
\newcommand{\dup}{\ensuremath{\opn{Dup}}}
\newcommand{\garRem}[2]{\ensuremath{\mathtt{GarbRem}(#1, #2)}}
\newcommand{\gctx}{\ensuremath{\mathcal{G}}}
\newcommand{\match}[3]{#1(#2) = #3}
\newcommand{\finto}{\underset{\mathrm{fin}}{\to}}
\newcommand{\isoto}{\underset{\mathrm{iso}}{\to}}
\newcommand{\termto}{\underset{\mathrm{term}}{\to}}
\newcommand{\subfin}{\triangleleft}
\DeclarePairedDelimiter\floor{\lfloor}{\rfloor}
\newcommand{\len}{\ensuremath{\mathtt{len}}}
\newcommand{\erase}{\ensuremath{\mathtt{erase}}}
\newcommand{\growth}{\ensuremath{\mathtt{growth}}}
\newcommand{\snoc}{\ensuremath{\mathtt{snoc}}}
\newcommand{\snocc}{\ensuremath{\mathtt{snoc'}}}
\newcommand{\It}{\ensuremath{\mathtt{It}}}
\newcommand{\rmBlank}{\ensuremath{\mathtt{rmBlank}}}
\newcommand{\rev}{\ensuremath{\mathtt{rev}}}
\newcommand{\cleanUp}{\ensuremath{\mathtt{cleanUp}}}
\newcommand{\entaile}{\vdash}
\newcommand{\res}[1]{\overline{#1}}
\newcommand{\DCPO}{\mathbf{DCPO}}
\newcommand{\dcpo}{\mathbf{DCPO}}
\newcommand{\PInj}{\mathbf{PInj}}
\newcommand{\Hom}{\text{Hom}}
\newcommand{\inv}{^{\circ}}
\newcommand{\pinv}{^{\circ}}
\newcommand{\iid}{\mathrm{id}}
\newcommand{\comp}{\mathrm{comp}}
\newcommand{\rmcurry}{\mathrm{curry}}
\newcommand{\rmeval}{\mathrm{eval}}
\newcommand{\defeq}{\overset{\text{def.}}{=}}
\newcommand{\den}[1]{\left\llbracket #1 \right\rrbracket}
\newcommand\sem\den
\newcommand{\abs}[1]{\left\vert #1 \right\vert}
\newcommand{\rc}[1]{\res{#1^{\circ}}}
\newsavebox{\@brx}
\newcommand{\llangle}[1][]{\savebox{\@brx}{\(\m@th{#1\langle}\)}%
\mathopen{\copy\@brx\kern-0.5\wd\@brx\usebox{\@brx}}}
\newcommand{\rrangle}[1][]{\savebox{\@brx}{\(\m@th{#1\rangle}\)}%
\mathclose{\copy\@brx\kern-0.5\wd\@brx\usebox{\@brx}}}
\newcommand{\noma}{\text{\usefont{U}{min}{m}{n}\symbol{'005}}\!}
\DeclareFontFamily{U}{min}{}
\DeclareFontShape{U}{min}{m}{n}{<-> udmj30}{}
\tikzset{
  node rotated/.style = {rotate=180},
  border rotated/.style = {shape border rotate=180},
  downtriangle/.style = {fill=white, draw=black, regular polygon, regular polygon sides=3, border rotated},
  triangle/.style = {fill=white, draw=black, regular polygon, regular polygon sides=3}
}
\tikzset{every path/.style={draw=black!80, line width=0.6pt}}
\tikzstyle{every picture}=[baseline=-0.25em]
\tikzstyle{none}=[inner sep=0mm]
\tikzstyle{box}=[fill=white, draw=black, shape=rectangle]
\tikzstyle{zxnode}=[shape=circle, minimum width=.25cm, inner sep=0.5pt, font=\footnotesize, draw=black,thick]
\tikzstyle{gn}=[zxnode ,fill=green, draw=green!10!black]
\tikzstyle{rn}=[zxnode ,fill=red, draw=red!10!black]
\tikzstyle{H box}=[rectangle,fill=yellow, draw=yellow!10!black,thick,xscale=1,yscale=1,font=\footnotesize,inner sep=1.2pt,minimum width=0.15cm,minimum height=0.15cm]
\tikzstyle{ug}=[regular polygon, regular polygon sides=3, fill=red,draw=black,inner sep = 0pt,minimum width=0.8em]
\tikzstyle{black dot}=[inner sep=0.4mm,minimum width=0pt,minimum height=0pt,fill=black,draw=black,shape=circle]
\tikzstyle{dot}=[black dot]
\tikzstyle{white dot}=[dot,fill=white, inner sep=0.4mm,minimum width=0pt,minimum height=0pt, font=\footnotesize]
\tikzstyle{arrow}=[decoration={markings,mark=at position 1 with
\tikzstyle{glabel}=[rounded corners=0.2em,fill=green!30,inner sep=0.1em,font=\scriptsize, anchor=west, xshift=-0.3em, yshift=0,opacity=1]
\tikzstyle{rlabel}=[rounded corners=0.2em,fill=red!30,inner sep=0.1em,font=\scriptsize, anchor=west, xshift=-0.3em, yshift=0,opacity=1]
\tikzstyle{box}=[rectangle, draw=black, fill=white, inner sep=1pt, font=\scriptsize]
\tikzstyle{box-no-outline}=[rectangle, draw=white, fill=white, inner sep=2pt]
\tikzstyle{circle-no-outline}=[circle, draw=white, fill=white, inner sep=0pt]
\tikzstyle{squigglearrow}=[->, line join=round, decorate, decoration={zigzag, segment length=4, amplitude=0.8, post=lineto, post length=2pt}]
\tikzstyle{divide}=[regular polygon, regular polygon sides=3, draw=black, fill=gray!50, inner sep=1.6pt, rounded corners=0.8mm]
\tikzstyle{very thick}=[-, line width=1pt]
\tikzstyle{boxedge}=[draw=gray!50]
\tikzstyle{pbs}=[diamond, draw=black, inner sep=-0.5pt, fill=white]
\tikzstyle{ribbon}=[thick, rounded corners=0.4pt,fill={rgb,255: red,157; green,246; blue,255}, fill opacity=0.7]
\tikzset{tensor/.style={inner sep=2.5pt, draw, circle, path picture={ 
  \draw[black]
(path picture bounding box.south east) -- (path picture bounding box.north west) (path picture bounding box.south west) -- (path picture bounding box.north east);
}}}
\tikzset{plus/.style={inner sep=2.5pt, draw, circle, path picture={ 
  \draw[black]
(path picture bounding box.east) -- (path picture bounding box.west) (path picture bounding box.south) -- (path picture bounding box.north);
}}}
\tikzstyle{contraction}=[circle,draw,font={\scriptsize c}, inner sep= 1pt]
\tikzstyle{parrll}=[circle,draw,font={\scriptsize $\parr$}, inner sep= 0.2pt]
\tikzstyle{small-contraction}=[circle,draw,font={\tiny c}, inner sep= 0.2pt]
\tikzstyle{vacuum}=[rounded rectangle, draw, fill=gray!50, rounded rectangle west arc=none, rotate=180]
\tikzstyle{every loop}=[]
\newcommand{\tikzfigpathValue}{./figures}
\newcommand{\tikzfig}[1]{%
\IfFileExists{#1.tikz}
  {\input{#1.tikz}}
  {%
    \IfFileExists{\tikzfigpathValue/#1.tikz}
      {\input{\tikzfigpathValue/#1.tikz}}
      {\tikz[baseline=-0.5em]{\node[draw=red,font=\color{red},fill=red!10!white] {\textit{\tikzfigpathValue/#1}};}}%
  }%
}
\newtcolorbox{myframe}[2][]{%
  enhanced,colback=white,colframe=black,coltitle=black,
  sharp corners,boxrule=0.4pt,
  fonttitle=\itshape,
  attach boxed title to top left={yshift=-0.3\baselineskip-0.4pt,xshift=2mm},
  boxed title style={tile,size=minimal,left=0.5mm,right=0.5mm,
    colback=white,before upper=\strut},
  title=#2,#1
}
\begin{document}

\maketitle
\begin{abstract}
	This paper is concerned with the expressivity and denotational semantics of a
	functional higher-order reversible programming language based on Theseus. In
	this language, pattern-matching is used to ensure the reversibility of
	functions. We show how one can encode any Reversible Turing Machine in said
	language. We then build a sound and adequate categorical semantics based on
	join inverse categories, with additional structures to capture
	pattern-matching and to interpret inductive types and recursion. We then
	derive a notion of completeness in the sense that any computable, partial,
	first-order injective function is the image of a term in the language.
\end{abstract}

\section{Introduction}

Originally, reversible computation has emerged as an energy-preserving
model of computation in which no data is ever erased. This comes from
Laundauer's principle which states that the erasure of information is
linked to the dissipation of energy as heat~\cite{Landauer61,
  berut2012experimental}. In reversible computation, given some
process $f$, there always exists an inverse process $f^{-1}$ such that
their composition is equal to the identity: it is always possible to
``\emph{go back in time}'' and recover the input of your
computation. Although this can be seen as very restrictive, non-reversible computation
can be emulated in a reversible setting by keeping track of
intermediate results. As discussed in~\cite{bennett1973logical}, the
simulation of standard computation with reversible computation can be
understood as a notion of \emph{Turing completeness}---provided we
accept that the final result comes together with auxiliary,
intermediate computation.

Reversible computation has since been shown to be a versatile
model. In the realm of quantum computation, reversible computing is at
the root of the construction of \emph{oracles}, subroutines describing
problem instances in quantum algorithms~\cite{nielsen02quantum}. Most
of the research in reversible circuit design can then been repurposed
to design efficient quantum circuits. On the theoretical side,
reversible computing serves the main ingredient in several operational
models of linear logic, whether through token-based Geometry of
Interaction~\cite{mackie1995geometry}
or through the Curry-Howard correspondence for
$\mu$MALL~\cite{chardonnet2022curry, phd-kostia}.

Reversible programming has been approached in two different ways. The
first one, based on Janus and later R-CORE and
R-WHILE~\cite{lutz1986janus,yokoyama2007reversible,
  gluck2019reversible,yokoyama2016fundamentals}, considers imperative
and flow-chart languages. The other one follows a functional
approach~\cite{yokoyama2011reversible,thomsen2015interpretation,james2014theseus,JacobsenKT18,sabry2018symmetric,chardonnet2022curry}: a function $A\to B$ in the language represents a
function ---a bijection--- between values of type $A$ and values of
type $B$. In this approach, types are typically structured, and
functional reversible languages usually feature pattern-matching to
discriminate on values.

One of the issues reversible programming has to deal with is
\emph{non-termination}: in general, a reversible program computes a
\emph{partial injective map}. This intuition can be formalised with
the concept of \emph{inverse
  categories}~\cite{kastl1979inverse,cockett2002restriction-I,cockett2003restriction-II,cockett2007restriction-III}:
categories in which every morphism comes with a partial inverse, for which the
category $\PInj$ of sets and partial injective maps is the emblematic
concrete instance.
This categorical setting has been successfully used in the study of
reversible programming semantics, whether based on
flow-charts~\cite{gluck2017categorical,kaarsgaard2019condition},
with
recursion~\cite{axelsen2016join,kaarsgaard2017join,kaarsgaard2019inversion,kaarsgaard2019engarde},
with side effects~\cite{heunen2015reversible,heunen2018reversible},
\textit{etc}.

Although much work has been dedicated to the categorical analysis of
reversible computation, the \emph{adequacy} of the developed
categorical constructs with reversible functional programming
languages has only recently been under scrutiny, either in
\emph{concrete} categories of partial
isomorphisms~\cite{kaarsgaard2019engarde,kaarsgaard2021join}, or for
simple, \emph{non Turing-complete} languages~\cite{nous21invcat}. A
formal, categorical analysis of a (reversible) Turing-complete, reversible language
is still missing.

\medskip
\noindent
\textsf{\bfseries Contributions.~} In this paper, we aim at closing this gap: we
propose a Turing-complete (understood as in the reversible setting),
reversible language, together with a
categorical semantics. In particular, the contributions of this paper
are as follows.
\begin{itemize} 
	\item A (reversible) Turing-complete, higher-order reversible language with
			inductive types.  Building on the Theseus-based family of languages studied
			in \cite{sabry2018symmetric,nous21invcat,chardonnet2022curry,
			phd-kostia, phd-louis}, we consider an extension with \emph{inductive
			types}, general \emph{recursion}
			and \emph{higher-order} functions.
		\item Sound and adequate categorical semantics. We show how the
			language can be interpreted in join inverse rig categories. The
			result relies on the $\DCPO$-enrichments of join inverse rig
			categories.
		\item A notion of completeness.  We finally discuss how the
			interpretation of the language in the category $\PInj$ is complete
			in the sense that any \emph{first-order} computable, partial
			injective function on the images of types is realisable within
			the language.
\end{itemize}

\section{Language}
\label{sec:language}

In this section, we present a reversible language, unifying and
extending the Theseus-based variants presented in the
literature~\cite{sabry2018symmetric,nous21invcat,
  chardonnet2022curry}. In particular, the language we propose
features higher-order (unlike~\cite{nous21invcat}), pairing,
injection, inductive types (unlike~\cite{sabry2018symmetric}) and
general recursion (unlike~\cite{chardonnet2022curry}). Functions in
the language are based on pattern-matching, following a strict
syntactic discipline: term variables in patterns should be used
linearly, and clauses should be non-overlapping on the left \emph{and}
on the right (therefore enforcing non-ambiguity and
injectivity). In~\cite{sabry2018symmetric,nous21invcat,
  chardonnet2022curry} one also requires exhaustivity for totality. In
this paper, we drop this condition in order to allow non-terminating
behaviour.

\begin{table}[t]
\begin{alignat*}{100}
		&\text{(Base types)} \quad& A, B &&&::=~ && \one \alt A \oplus B
		\alt A \otimes B \alt \mu X. A \alt X \\
    &\text{(Isos)} & T &&&::=&&A\iso B \mid
    T_1 \to T_2 \\[1.5ex]
    &\text{(Values)} & v &&&::=&& \unit \alt x \alt \inl{v} \alt \inr{v}
    \alt \pv{v_1}{v_2} \alt \fold{v} \\
    &\text{(Patterns)} & p &&&::=&& x \alt \pv{p_1}{p_2} \\
    &\text{(Expressions)} & e &&&::=&& v \alt
    \letv{p_1}{\omega~p_2}{e} \\
    &\text{(Isos)} & \isoterm &&&::=&& \isobasique \alt \fix
    \isovar.\isoterm \alt \lambda\isolambdavar.
    \omega \alt \isovar \alt \omega_1~\omega_2\\
    &\text{(Terms)} & t &&&::=&& \unit \alt x \alt \inl{t} \alt \inr{t}
    \alt
    \pv{t_1}{t_2} \alt  \\
    & & &&& && \fold{t} \alt \isoterm~t \alt
    \letv{p}{t_1}{t_2}
\end{alignat*}
\caption{Grammar for terms and types.}
\label{tab:termtypes}
\end{table}

The language is presented in Table~\ref{tab:termtypes}. It consist of
two layers.
\begin{itemize}
\item Base types: The base types consist of the unit type $\one$ along
  with its sole constructor $\unit$, coproduct $A\oplus B$ and
  tensor product $A\otimes B$ with their respective constructors,
  $\inl{(t)}, \inr{(t)}$ and $\pv{t_1}{t_2}$.  Finally, the
  language features inductive types of the form $\mu X. A$ where
  $X$ is a type variable occurring in $A$ and $\mu$ is its
  binder. Its associated constructor is $\fold{(t)}$. The
  inductive type $\mu X. A$ can then be unfolded into
  $A[\mu X. A/X]$, i.e., substituting each occurrence
  of $X$ by $\mu X. A$ in $A$. Typical examples of inductive types
  that can be encoded this way are the natural number, as $\natT
  = \mu X. (\one\oplus X)$ or the lists of types $A$, noted $[A]
  = \mu X. \one\oplus (A\otimes X)$. Note that we only work with
  closed types. We shall denote term-variables with $x, y, z$.

\item Isos types: The language features isos, denoted $\omega$, higher order
	reversible functions whose types $T$ consist either of a pair of base type,
		noted $A\iso B$ or function types between isos, $T_1 \to T_2$. Note that
		the word \emph{iso} comes from isomorphism. However, in this paper, we have
		freed some constraints; in our case, isos are \emph{forward deterministic}
		and \emph{backward deterministic}, meaning that each value has at most one
		image and at most one value that has the former as image. A first-order iso
		of type $A\iso B$ consists of a finite set of \emph{clauses}, written
		$v\iso e$ where $v$ is a value of type $A$ and $e$ an expression of type
		$B$. An expression consists of a succession of applications of isos to some
		argument, described by {\lett} constructions:
	$\letv{(x_1, \dots, x_n)}{\omega~(y_1, \dots, y_n)}{e}$.
	Isos can take other isos as arguments through the
	$\lambda \phi . \omega$ construction.
	Finally, isos can also represent \emph{recursive computation} through the $\ffix
	\isovar. \omega$ construction, where $\isovar$ is an
	\emph{iso-variable}.
  In general, we shall denote iso-variable by
  $\isovar_1, \isovar_2, \dots$ and we use the shorthands
  $\fix \overrightarrow{\isovar}$ or
  $\fix \isovar_1, \dots, \isovar_n$ and
  $\lambda \overrightarrow{\isovar}$ or
  $\lambda \isovar_1,\dots,\isovar_n$ for
  $\fix \isovar_1. \fix \isovar_2. \dots \fix \isovar_n .$ and
  $\lambda \isovar. \lambda \isovar_2. \dots \lambda \isovar_n$.
\end{itemize}

\medskip
\noindent
\textsf{\bfseries Convention.~} We write $(t_1, \dots, t_n)$ for
$\pv{t_1}{\pv{\dots}{t_n}}$ and $\bigoplus^n A$ (resp. $\bigotimes^n
A$) for $A\oplus\dots\oplus A$ (resp. $A\otimes\dots\otimes A$) $n$
times and $\omega_1 \dots \omega_n t$ for
a succession of {\lett}
constructions applying $\omega_n$ to $\omega_1$. We also consider
constructors to be right-associative, meaning that
$\fold{\inr{\pv{x}{y}}}$ should be read as
$\fold{(\inr{(\pv{x}{y})})}$. To avoid conflicts between
variables, we will always work up to $\alpha$-conversion and use
Barendregt's convention~\cite[p.26]{henk1984lambda}, which consists of
keeping the names of all bound and free variables distinct, even when this
remains implicit.
\begin{table}
	\[
		\begin{array}{@{}c@{}}
			\infer{\Psi;\emptyset\entaile \unit \colon \one}{}
			\qquad
			\infer{\Psi;x \colon A \entaile x \colon A}{}
			\qquad
			\infer{\Psi;\Delta\entaile \inl{t}\colon A \oplus B}{\Psi;\Delta\entail
			t\colon A}
			\qquad
			\infer{\Psi;\Delta\entaile\inr{t}\colon A \oplus B}{\Psi;\Delta\entaile t\colon B}
			\mynl
			\infer{
				\Psi;\Delta_1,\Delta_2;\entaile \pair{t_1}{t_2} \colon  A \otimes B
			}{
				\Psi;\Delta_1\entaile t_1 \colon  A
				&
				\Psi;\Delta_2\entaile t_2 \colon  B
			}
			\qquad
			\infer{\Psi; \Delta \entaile \fold{t} \colon  \mu X. A }{\Psi; \Delta
                                  \entaile t \colon  A[\mu X. A/X]}
			\qquad
			\infer{
				\Psi;\Delta\entaile \isoterm~t \colon  B
			}{
				\Psi\entailiso \isoterm \colon  A \iso B
				&
				\Psi;\Delta\entaile t \colon  A
			}
			\mynl
			\infer{\Psi;\Delta_1,\Delta_2\entaile \letv{(x_1, \dots,
			x_n)}{t_1}{t_2} \colon B}{\Psi;\Delta_1\entaile t_1 \colon  A_1 \otimes
			\dots \otimes A_n\qquad\Psi;\Delta_2, x_1 \colon  A_1, \dots, x_n
			\colon  A_n\entaile t_2 \colon  B}
		\end{array}
	\]
	\caption{Typing rules for terms.}
	\label{tab:typterm}
\end{table}
\begin{table}
	\[
		\begin{array}{c}
			\infer{\Psi, \isovar \colon T \entailiso \isovar \colon T}{}
			\qquad
			\infer{
				\Psi\entailiso \ffix \isovar.\omega \colon T
			}
			{
				\Psi, \isovar \colon T \entailiso \omega \colon T
			}
			\mynl
			\infer{
				\Psi\entailiso \omega_2~\omega_1 \colon T_2
			}
			{
				\Psi\entailiso \omega_1 \colon T_1
				&
				\Psi\entailiso \omega_2 \colon  T_1 \to T_2
			}
			\qquad
			\infer{
				\Psi\entailiso \lambda \isovar. \omega \colon  T_1 \to T_2
			}
			{
				\Psi, \isovar \colon T_1 \entailiso \omega \colon T_2
			}
			\mynl
			\infer{
				\Psi \entailiso
				\isobasique \colon  A \iso B.
			}{
				\begin{array}{l@{\quad}l@{\quad}l@{\qquad}l}
					\Psi;\Delta_1 \entaile v_1 \colon  A
					&
					\ldots
					&
					\Psi; \Delta_n\entaile v_n \colon  A
					&
					\forall i\not= j, v_i~\bot~v_j \\
					\Psi;\Delta_1\entaile e_1 \colon  B
					&
					\ldots
					& \Psi;\Delta_n\entaile e_n \colon  B & \forall i\not= j, e_i~\bot~ e_j
				\end{array}
			}
		\end{array}
	\]
    \caption{Typing rules for isos.}
    \label{tab:typisos}
\end{table}

\medskip
\noindent
\textsf{\bfseries Typing judgements.~}
Both base terms and isos feature their typing judgements, given
in~\Cref{tab:typterm} and~\Cref{tab:typisos}. Term typing judgements are of the
form $\Psi;\Delta\entaile t \colon A$ where $\Delta$ is a context of
term-variables of type $A$ and $\Psi$ is a context of iso-variables of type $T$
and isos typing judgements are of the form $\Psi\vdash_\omega \omega \colon T$.
While $\Delta$ is a \emph{linear} context, $\Psi$ is not, as an iso represents
a closed computation, and can be duplicated or erased at will. In the last rule
of~\Cref{tab:typisos}, the term variables in $\Delta$ are bound by the
pattern-matching construction: they are not visible outside of the term, thus
not appearing anymore in the typing context of the conclusion.

While~\cite{chardonnet2022curry} and~\cite{sabry2018symmetric} require
isos to be exhaustive (i.e. to cover all the possible values of their
input types) and non-overlapping (i.e. two clauses cannot match the same
value), we relax the exhaustivity requirement in this paper, in the
spirit of what was done in~\cite{nous21invcat}.
Non-overlapping is formalised by the notion of \emph{orthogonality}
between values, noted $v_1~\bot~v_2$.

\begin{definition}[Orthogonality]
  \label{def:orthogonality}
  We introduce a binary relation $\bot$ on terms. Given two terms $t_1, t_2$,
  $t_1~\bot~t_2$ holds if it can be derived inductively with the rules below;
  we say that $t_1$ and $t_2$ are orthogonal. The relation $\bot$ is defined as
  the smallest relation such that:
  \[
    \begin{array}{c}
      \infer{\inl{t_1}~\bot~\inr{t_2}}{}
      \qquad
      \infer{\inr{t_1}~\bot~\inl{t_2}}{}
      \qquad
      \infer{C_\bot[t_1]~\bot~C_\bot[t_2]}{t_1~\bot~t_2},
    \end{array}
  \]
  where the contexts $C_\bot$ are defined using the following grammar:
  \[C_\bot ::= [-] \alt \inl C_\bot \alt \inr C_\bot \alt
  \pv{C_\bot}{t} \alt \pv{t}{C_\bot} \alt \fold C_\bot \alt
  \letv{p}{t}{C_\bot}\]
\end{definition}

\begin{table}[t]
	  \[
    \begin{array}{c}
      \infer{
        \ffix \phi . \omega \to \omega[\ffix \phi . \omega / \phi]
      }{}
      \qquad
      \infer{
        (\lambda \phi . \omega_1) \omega_2 \to \omega_1[\omega_2/\phi]
      }{}
      \qquad
      \infer{
        \omega_1\omega_2 \to \omega'_1 \omega_2
      }{
        \omega_1 \to \omega'_1
      }
      \qquad
      \infer{\omega~t \to \omega'~t}{\omega \to \omega'}
    \end{array}
  \]
  \[
    \begin{array}{c}
      \infer{ \isobasique~v' \to \sigma(e_i)}{
        \match{\sigma}{v_i}{v'}}
      \qquad
      \infer{C_\to[t_1] \to C_\to[t_2]}{t_1 \to t_2}
      \qquad
      \infer{\letv{p}{v}{t} \to \sigma(t)}{\match{\sigma}{p}{v}}
      \qquad
    \end{array}
  \]
\caption{Evaluation relation $\to$.}
\label{def:rewriting-system}
\end{table}

\medskip
\noindent
\textsf{\bfseries Operational semantics.~}  The language comes
equipped with a rewriting system $\to$ on terms, defined
in~\Cref{def:rewriting-system}. As usual, we write $\to^*$ for the
reflexive transitive closure of $\to$.  The evaluation contexts
$C_\to$ are defined by the grammar $[~] \alt \inl{C_\to} \alt
\inr{C_\to} \alt \isobasique~C_\to \alt \letv{p}{C_\to}{t} \alt
\pv{C_\to}{v} \alt \pv{v}{C_\to} \alt C_\to~t \alt \fold{C_\to}$.
Note how the rewriting system follows a \emph{call-by-value} strategy
on terms and values, requiring that the argument of an iso be fully
evaluated to a value before firing the substitution. On the contrary,
we follow a \emph{call-by-name} strategy to simplify the manipulation
of the fixpoint. Note that unlike~\cite{chardonnet2022curry,
  sabry2018symmetric}, we do not require any form of termination and
isos are not required to be exhaustive: the rewriting system can
diverge or be stuck.
The evaluation of an iso applied to a value is dealt with by pattern-matching:
the input value will try to match one of the values from the clauses and
potentially create a substitution if the two values match, giving the
corresponding expression as an output under that substitution.
A substitution $\sigma$ is a mapping from a set of variables to terms.
The substitution of $\sigma$ on an expression $t$, written
$\sigma(t)$, is defined in the usual way by $\sigma(\unit) = \unit$;
$\sigma(x) = v$ if $\{x\mapsto v\}\subseteq \sigma$;
$\sigma(\inr{(t)}) = \inr{(\sigma(t))}$; $\sigma(\inl{(t)}) =
\inl{(\sigma(t))}$; $\sigma(\fold{(t)}) = \fold{(\sigma(t))}$
$\sigma(\pv{t}{t'}) = \pv{\sigma(t)}{\sigma(t')}$; $\sigma(\omega~t) =
\omega~\sigma(t)$ and $\sigma(\letv{p}{t_1}{t_2}) =
(\letv{p}{\sigma(t_1)}{\sigma(t_2)})$.
The support of a substitution, written
$\text{supp}(\sigma)$, is defined as $\set{x \mid (x\mapsto v) \in \sigma}$.

\begin{lemma}[Subject Reduction]
  \label{lemma:subject_reduction} If
  $\Psi;\Delta\entaile t \colon A$ and $t\rightarrow t'$, then
  $\Psi;\Delta\entaile t' \colon A$.\qed
\end{lemma}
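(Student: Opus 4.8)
The plan is to prove the statement by case analysis on the last rule used to derive $t\to t'$ (equivalently, by induction on the reduction relation of \Cref{def:rewriting-system}). Since term reduction invokes iso reduction through the congruence rule $\omega~t\to\omega'~t$, it is cleanest to prove subject reduction \emph{simultaneously} for the two judgements: (A) if $\Psi;\Delta\entaile t\colon A$ and $t\to t'$, then $\Psi;\Delta\entaile t'\colon A$; and (B) if $\Psi\entailiso\omega\colon T$ and $\omega\to\omega'$, then $\Psi\entailiso\omega'\colon T$. Because no iso-level reduction ever triggers a term-level reduction, (B) is self-contained and can be established first, then fed into the $\omega~t$ case of (A).

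The interesting cases all rest on two substitution lemmas that I would prove beforehand. The first is an \emph{iso-substitution lemma}: if $\Psi,\isovar\colon T_1\entailiso\omega\colon T_2$ and $\Psi\entailiso\omega'\colon T_1$, then $\Psi\entailiso\omega[\omega'/\isovar]\colon T_2$. This is proved by induction on the typing derivation of $\omega$ (\Cref{tab:typisos}), using that the iso-context $\Psi$ is non-linear and therefore admits weakening and contraction. The second is a combined \emph{matching and value-substitution lemma}: if $v_0$ is a pattern value with $\Psi;\Delta_0\entaile v_0\colon A$ where $\Delta_0=x_1\colon A_1,\dots,x_m\colon A_m$, and $\Psi;\Delta\entaile v'\colon A$ with $\match{\sigma}{v_0}{v'}$, then $\Delta$ splits as $\Delta=\Delta_1,\dots,\Delta_m$ with $\Psi;\Delta_j\entaile\sigma(x_j)\colon A_j$ for each $j$; moreover, for every expression $e$ with $\Psi;\Delta_0\entaile e\colon B$ one has $\Psi;\Delta\entaile\sigma(e)\colon B$. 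The splitting is obtained by induction on the structure of $v_0$, and the substitution part by induction on the typing derivation of $e$, distributing the disjoint sub-contexts $\Delta_j$ along the linear splitting rules for $\otimes$ and for $\lett$ in \Cref{tab:typterm}.

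With these in hand, the cases are routine inversions. For $\fix\isovar.\omega\to\omega[\fix\isovar.\omega/\isovar]$, the fixpoint rule gives $\Psi,\isovar\colon T\entailiso\omega\colon T$, and the iso-substitution lemma applied with $\omega'=\fix\isovar.\omega$ concludes. For $(\lambda\isovar.\omega_1)\omega_2\to\omega_1[\omega_2/\isovar]$, inverting the application and abstraction rules yields $\Psi,\isovar\colon T_1\entailiso\omega_1\colon T_2$ and $\Psi\entailiso\omega_2\colon T_1$, and we again apply iso-substitution. The congruence rules $\omega_1\omega_2\to\omega_1'\omega_2$, $\omega~t\to\omega'~t$, and $C_\to[t_1]\to C_\to[t_2]$ follow immediately from the induction hypothesis after inverting the corresponding typing rule and re-applying it. For the pattern-matching rule $\isobasique~v'\to\sigma(e_i)$ with $\match{\sigma}{v_i}{v'}$, inverting the application rule gives $\Psi\entailiso\isobasique\colon A\iso B$ and $\Psi;\Delta\entaile v'\colon A$, while the clause-typing rule supplies $\Psi;\Delta_i\entaile v_i\colon A$ and $\Psi;\Delta_i\entaile e_i\colon B$ with a common context $\Delta_i$; the matching/value-substitution lemma then yields $\Psi;\Delta\entaile\sigma(e_i)\colon B$. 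The rule $\letv{p}{v}{t}\to\sigma(t)$ is handled identically by inverting the $\lett$ rule.

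The main obstacle is the matching and value-substitution lemma, and specifically the linear bookkeeping it requires. The delicate point is showing that the context $\Delta$ of the matched value decomposes \emph{exactly} into the disjoint pieces needed to type each bound variable, with neither leftover hypotheses nor sharing between branches; this is precisely where linearity of $\Delta$ together with Barendregt's convention (keeping variable names distinct) is essential, and getting the splitting to align with the $\otimes$-introduction and $\lett$ rules in the substitution step is the one spot that demands care. It is worth noting that the orthogonality condition of \Cref{def:orthogonality} plays \emph{no} role here: it governs the determinism of reduction, not type preservation, so the non-overlap hypotheses can be ignored throughout the argument.
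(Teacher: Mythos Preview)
Your proposal is correct and follows essentially the same structure as the paper's own argument: the appendix proves iso subject reduction (your (B)) first via an iso-substitution lemma, then term subject reduction (your (A)) by induction on $\to$, invoking a term-substitution lemma for the pattern-matching and \texttt{let} cases and the context-closure observation for $C_\to$. Your presentation is in fact a bit more careful than the paper's, since you explicitly state the matching/value-substitution lemma in a form that handles open values $v'$ and tracks the linear splitting of $\Delta$, whereas the paper's substitution lemma is phrased only for closing valuations.
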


The proof is similar to what has been done in~\cite{chardonnet2022curry}. As
the rewriting system is deterministic, confluence is direct; meanwhile, as we
are concerned with partial functions, progress is not guaranteed: a term can be
stuck, for example, $\set{\inl{(x)} \iso e}~\inr{(v)}$ does not reduce.

\medskip
\noindent
\textsf{\bfseries Inversion.~}
Finally, any iso $\omega \colon T$ can be inverted into an iso
$\omega^{-1} \colon T^{-1}$, such that their composition makes up the
identity. Intuitively, if $\omega$ is of type $A\iso B$, then $\omega^{-1}$
will be of type $B\iso A$. Inversion is defined as follows.
Given an iso-type $T$, we define its inverse $T^{-1}$ as: $(A\iso B)^{-1} =
B\iso A$ and $(T_1 \to T_2)^{-1} = T_1^{-1} \to T_2^{-1}$.
Given an iso $\omega$, we define its dual $\omega^{-1}$ as: $\isovar^{-1} =
\isovar$; $(\fix \isovar. \omega)^{-1} = \fix \isovar. \omega^{-1}$;
$(\omega_1~\omega_2)^{-1} = (\omega_1)^{-1} (\omega_2)^{-1}$; $(\lambda
\isovar. \omega)^{-1} = \lambda\isovar. (\omega)^{-1}$ and $\{(v_i\iso
e_i)_{i\in I}\}^{-1} = \{((v_i\iso e_i)^{-1})_{i\in I}\}$ and
\[\left(
    \begin{array}{l@{~~~}c@{~~~}l} v_1&{\iso}&\letv{p_1}{\omega_1~p_1'}{} \\
                                  && \cdots \\
                                  && \letv{p_n}{\isoterm_n~p'_n}{v'_1}
    \end{array}
    \right)^{-1} := \left(
      \begin{array}{l@{~~~}c@{~~~}l} v'_1&{\iso}&\letv{p'_n}{\isoterm_n^{-1}~p_n}{} \\
                                   && \cdots \\
                                   && \letv{p'_1}{\isoterm_1^{-1}~p_1}{v_1}
    \end{array}
    \right).
\]

\begin{property}[Inversion is an involution]
  For any well-typed iso $\omega$, we have $(\omega^{-1})^{-1} = \omega$.
\end{property}
\begin{proof}
  By a straightforward induction on $\omega$, notice that if $\omega =
  \isobasique$ then by definition we swap twice the order of the
  $\lett$ construction, hence recovering the original term.
\end{proof}

\begin{lemma}[Inversion is well-typed]
  \label{lem:inv-type}
  If $\isovar_1 \colon A_1\iso B_1\ldots  \isovar_n \colon A_n\iso B_n \vdash_\omega \omega \colon T$, then
  $\isovar_1 \colon B_1\iso A_1\ldots  \isovar_n \colon B_n\iso A_n  \vdash_\omega\omega^{-1} \colon T^{-1}$.
  \qed
\end{lemma}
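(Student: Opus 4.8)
The plan is to prove a slightly more general statement, of which the lemma is the first-order instance. Writing $\Psi^{-1}$ for the context obtained by replacing every declaration $\isovar \colon T$ of $\Psi$ by $\isovar \colon T^{-1}$, I would show that $\Psi \entailiso \isoterm \colon T$ implies $\Psi^{-1} \entailiso \isoterm^{-1} \colon T^{-1}$. Since $(A \iso B)^{-1} = B \iso A$, instantiating this with $\Psi = \isovar_1 \colon A_1 \iso B_1, \ldots, \isovar_n \colon A_n \iso B_n$ recovers the statement verbatim. The generalisation to an arbitrary $\Psi$ is needed because the abstraction and fixpoint rules push (possibly higher-order) types into the context as the induction proceeds, so one cannot restrict to first-order contexts throughout.

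I would argue by induction on the derivation of $\Psi \entailiso \isoterm \colon T$. The non-clause cases are immediate from the defining equations of $\isoterm^{-1}$ together with $(T_1 \to T_2)^{-1} = T_1^{-1} \to T_2^{-1}$. For a variable, $\isovar^{-1} = \isovar$ and $\isovar \colon T \in \Psi$ gives $\isovar \colon T^{-1} \in \Psi^{-1}$, so the variable rule reapplies. For $\fix \isovar. \isoterm$, the induction hypothesis yields $\Psi^{-1}, \isovar \colon T^{-1} \entailiso \isoterm^{-1} \colon T^{-1}$, and the fixpoint rule gives $\Psi^{-1} \entailiso \fix \isovar. \isoterm^{-1} \colon T^{-1}$, which is exactly $(\fix \isovar. \isoterm)^{-1}$. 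Abstraction and application are analogous; in the application case one uses that inversion keeps the function and the argument in their respective positions, so the inverted function still has an arrow type matching the inverted argument.

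The substantial case is that of a set of clauses $\isoterm = \isobasique \colon A \iso B$, whose inverse is the set of reversed clauses $(v_i \iso e_i)^{-1}$, claimed to have type $B \iso A$. Two points have to be verified. First, each reversed clause body is well-typed: a body has the shape $\letv{p_1}{\omega_1~p_1'}{\ldots}$ ending in a tail value $v'$, and its inverse reverses the chain, in each binding swaps the argument pattern $p_j'$ with the result pattern $p_j$, replaces $\omega_j$ by $\omega_j^{-1}$, and exchanges the head value $v_i$ with the tail value $v'$. I would establish well-typedness of the reversed body by a nested induction on the length of the chain, threading the linear term-context through the bindings in the opposite order; at each binding the outer induction hypothesis applied to the sub-iso $\omega_j$ provides $\omega_j^{-1}$ with the inverted type, which is precisely what makes the types at the interface of two consecutive reversed bindings agree. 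Second, the orthogonality side-conditions must be re-established: after inversion the new left-hand sides are the former tail values and the new bodies carry the former head values as their tails. Because $\bot$ is symmetric and, on bodies, is witnessed inside the tail value, the hypothesis $e_i \bot e_j$ yields left-orthogonality of the inverse while $v_i \bot v_j$ yields its right-orthogonality; inversion thus simply exchanges the forward- and backward-determinism conditions.

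The main obstacle is this clause case, specifically the bookkeeping in the nested induction: one must follow how the linear context is consumed and regenerated along the let-chain and check that reversing this data-flow lines up with the inverted iso types at every interface. Once the generalised statement and this reversal argument are in place, the higher-order cases are routine.
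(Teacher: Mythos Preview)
Your proposal is correct and follows the same inductive structure as the paper's proof. The paper also proceeds by induction on $\omega$, implicitly using the generalisation to arbitrary $\Psi$ via the notation $\Psi^{-1}$ that you make explicit; your treatment of the clause case (the nested induction on the let-chain and the orthogonality swap) is considerably more detailed than the paper's, which dispatches this case in one line as ``direct'' with orthogonality ``trivially preserved''.
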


\begin{lemma}[Inversion is preserved by evaluation]
  \label{lem:inv-rewriting}
  If $\omega \to \omega'$ then $\omega^{-1} \to \omega'^{-1}$.\qed
\end{lemma}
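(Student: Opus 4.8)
The plan is to argue by induction on the derivation of $\omega \to \omega'$. Only three rules of \Cref{def:rewriting-system} can reduce an iso to an iso: the fixpoint unfolding $\ffix\isovar.\omega_0 \to \omega_0[\ffix\isovar.\omega_0/\isovar]$, the $\beta$-rule $(\lambda\isovar.\omega_1)\,\omega_2 \to \omega_1[\omega_2/\isovar]$, and the left-congruence rule $\omega_1\omega_2 \to \omega_1'\omega_2$ derived from $\omega_1 \to \omega_1'$ (the rule $\omega~t \to \omega'~t$ and the clause-application rule both produce \emph{terms}, not isos, and are thus out of scope). Since two of these three cases rewrite by iso-substitution, the essential ingredient is a commutation lemma between inversion and substitution, which I would establish first.

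\textbf{Key auxiliary lemma.} For all isos $\omega, \theta$ and every iso-variable $\isovar$, one has $(\omega[\theta/\isovar])^{-1} = \omega^{-1}[\theta^{-1}/\isovar]$. This is proved by structural induction on $\omega$, using throughout that inversion fixes iso-variables, $\isovar^{-1} = \isovar$, so that substituting for $\isovar$ is unaffected by inverting. For the $\lambda$, application and fixpoint constructors the statement follows by pushing inversion and substitution through the constructor and invoking the induction hypothesis, Barendregt's convention guaranteeing the absence of variable capture. The one case that needs attention is the first-order clause set: inversion there reverses the order of the $\lett$-bindings and replaces each nested iso $\isoterm_j$ by $\isoterm_j^{-1}$, whereas substitution acts pointwise on those nested isos; the two operations commute because the induction hypothesis applies to each $\isoterm_j$ independently, and the reordering of bindings does not interact with the substitution.

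\textbf{The induction.} For the congruence rule, the premise $\omega_1 \to \omega_1'$ gives $\omega_1^{-1} \to (\omega_1')^{-1}$ by the induction hypothesis; since $(\omega_1\omega_2)^{-1} = \omega_1^{-1}\omega_2^{-1}$, reapplying the same congruence rule yields $\omega_1^{-1}\omega_2^{-1} \to (\omega_1')^{-1}\omega_2^{-1} = (\omega_1'\omega_2)^{-1}$, as required. For the fixpoint rule, $(\ffix\isovar.\omega_0)^{-1} = \ffix\isovar.\omega_0^{-1}$ reduces by the same rule to $\omega_0^{-1}[\ffix\isovar.\omega_0^{-1}/\isovar]$, and by the auxiliary lemma this equals $(\omega_0[\ffix\isovar.\omega_0/\isovar])^{-1} = (\omega')^{-1}$. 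The $\beta$-rule is handled identically: $((\lambda\isovar.\omega_1)\omega_2)^{-1} = (\lambda\isovar.\omega_1^{-1})\,\omega_2^{-1}$ reduces to $\omega_1^{-1}[\omega_2^{-1}/\isovar]$, which the auxiliary lemma identifies with $(\omega_1[\omega_2/\isovar])^{-1} = (\omega')^{-1}$.

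\textbf{Main obstacle.} The only delicate point is the clause-set case of the auxiliary lemma, where one must track that reversing the chain of $\lett$-bindings and inverting each embedded iso commutes with an outer iso-substitution; once this bookkeeping is settled, the remainder is a routine rule-by-rule verification.
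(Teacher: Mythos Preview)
Your proof is correct and follows the same induction-on-$\to$ strategy as the paper. The paper's proof is terser: it handles the $\beta$- and fixpoint cases directly and dispatches the congruence case by the induction hypothesis, implicitly using the identity $(\omega[\theta/\isovar])^{-1} = \omega^{-1}[\theta^{-1}/\isovar]$ without isolating it as a separate lemma; your explicit auxiliary lemma (and its clause-set case) fills in precisely the bookkeeping that the paper leaves to the reader.
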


\begin{theorem}[Semantics of isos and their inversions~\cite{chardonnet2022curry}]
  \label{thm:isos-iso}
  For all well-typed isos $\vdash_\omega \omega \colon A\iso B$, and for all
  well-typed values $\entaile v \colon A$, if~ $(\omega~(\omega^{-1}~v))\rightarrow^*
  v'$ then $v = v'$.\qed
\end{theorem}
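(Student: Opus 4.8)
The plan is to derive the statement from a single \emph{reverse-execution lemma}: for every closed well-typed iso $\omega : A\iso B$ and every value $v$, if $\omega~v \to^* u$ for some value $u$, then $\omega^{-1}~u \to^* v$. Granting this, the theorem follows quickly. Because reduction is call-by-value, in order for $\omega~(\omega^{-1}~v)$ to reach the value $v'$ the inner application $\omega^{-1}~v$ must first be reduced to a value $u$, so $\omega^{-1}~v \to^* u$ and $\omega~u \to^* v'$. Applying the reverse-execution lemma to the iso $\omega^{-1}$ (which is closed and well typed by \Cref{lem:inv-type}, and satisfies $(\omega^{-1})^{-1} = \omega$ by the involution property) along the reduction $\omega^{-1}~v\to^* u$ yields $\omega~u \to^* v$. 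Since the rewriting system is deterministic, $\omega~u$ has at most one normal form, whence $v' = v$.

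It remains to prove the reverse-execution lemma, which I would do by strong induction on the length of the reduction $\omega~v\to^* u$. If the first step is an iso-level reduction $\omega \to \omega'$ (covering all cases where $\omega$ is a fixpoint, an abstraction, or an application, together with their congruence rules), then $\omega~v \to \omega'~v \to^* u$ with a strictly shorter tail. The induction hypothesis gives $\omega'^{-1}~u \to^* v$, and \Cref{lem:inv-rewriting} gives $\omega^{-1} \to \omega'^{-1}$, so that $\omega^{-1}~u \to \omega'^{-1}~u \to^* v$, as required. The only genuinely new case is when $\omega = \isobasique$ is a set of clauses and the first step fires a clause: the input $v$ matches exactly one left-hand side $v_k$ (uniqueness by orthogonality) with substitution $\sigma$, and $\sigma(e_k) \to^* u$.

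For this clause case I would prove an auxiliary statement about the chained $\lett$-expressions. Writing $e_k = \letv{p_1}{\omega_1~p_1'}{\cdots \letv{p_n}{\omega_n~p_n'}{v_{\mathrm{out}}}}$, the forward evaluation of $\sigma(e_k)$ proceeds left to right, producing at step $i$ an intermediate value $w_i$ from $\omega_i~\rho_{i-1}(p_i') \to^* w_i$, matching $w_i$ against $p_i$ to extend the current substitution $\rho_{i-1}$ to $\rho_i$, and finally outputting $u = \rho_n(v_{\mathrm{out}})$. The inverse clause is $v_{\mathrm{out}} \iso \letv{p_n'}{\omega_n^{-1}~p_n}{\cdots \letv{p_1'}{\omega_1^{-1}~p_1}{v_k}}$; run on $u$, it first matches $u$ against $v_{\mathrm{out}}$ and then walks the chain in reverse. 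Each sub-reduction $\omega_i~\rho_{i-1}(p_i')\to^* w_i$ is strictly shorter than the whole reduction, so the induction hypothesis of the reverse-execution lemma yields $\omega_i^{-1}~w_i \to^* \rho_{i-1}(p_i')$; since matching $w_i$ against $p_i$ makes $w_i$ exactly the value carried by $p_i$, backward step $i$ recomputes $\rho_{i-1}(p_i')$ and thereby recovers the variables of $p_i'$. An inner induction on $n$ then shows that the backward pass reconstructs the substitutions in reverse order and terminates by outputting $\sigma(v_k) = v$.

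The main obstacle is the variable bookkeeping in this last step: one must check that at each backward stage exactly the variables needed to build the argument of $\omega_i^{-1}$ have already been recovered, and that the substitutions assembled going backward agree on shared variables and ultimately restore all of $\sigma$ on $\FV(v_k)$. This is precisely where the linearity of the term context $\Delta$ (each variable bound and consumed exactly once) and the orthogonality conditions on both the left- and right-hand sides of the clauses (making forward and backward matching each deterministic) are essential: they ensure that the forward and backward passes traverse the same dataflow in opposite directions, so that the substitutions accumulated in the two directions are genuine inverses of one another.
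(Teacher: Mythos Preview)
Your proposal is correct and proceeds via a genuinely different (and somewhat cleaner) decomposition than the paper. The paper does not isolate a reverse-execution lemma; instead it argues directly that $\omega^{-1}(\omega~v_0)\to^* v_0$, first dismissing the non-clause cases with a ``without loss of generality $\omega$ is a set of clauses'' and then performing a structural induction on the size of $\omega$, applying the induction hypothesis to the sub-isos $\omega_i$ appearing in the $\lett$-chain. The analysis of the clause case is essentially the same as yours: decompose the matching substitution along the linearly used variables, run the $\lett$-chain forward to build the $\gamma_i$, then run the inverse clause backward and observe that each $\omega_i^{-1}$ undoes the corresponding $\omega_i$, reassembling the original substitution up to the commutativity lemma for disjointly supported substitutions. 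Your strong induction on the length of the reduction buys a more honest treatment of the fixpoint and application cases: where the paper's ``WLOG'' tacitly relies on iso progress together with \Cref{lem:inv-rewriting}, you make that step explicit, and your induction hypothesis applies uniformly to the shorter sub-reductions $\omega_i~\rho_{i-1}(p_i')\to^* w_i$ without needing the sub-isos to already be in clause form. Conversely, the paper's structural presentation keeps the argument closer to the syntax and avoids reasoning about interleavings of iso-level and term-level steps. Both routes hinge on the same bookkeeping about linearity and disjoint supports that you identify as the main obstacle; the paper discharges it with the explicit commutativity-of-substitution lemma, which would also serve in your inner induction on $n$.
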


\begin{example}
  \label{ex:isos:map}
  Remember that $[A] = \mu X. \one\oplus (A\otimes X)$.  One can
  define the \emph{map} operator on lists with an iso of type
  $(A\iso B)\to [A]\iso [B]$, defined as
  \[
    \lambda \isolambdavar. \fix \isovar. \left\{
        [~] ~ {\iso} ~ [~] ~~|~~
        h :: t ~  {\iso} ~ \letv{h'}{\isolambdavar~h}{} \letv{t'}{\isovar~t}{}h' :: t'
    \right\},
  \]
  with the terms $[~] = \fold{(\inl{(\unit)})}$, representing the empty
  list, while the head and tail of the list is represented with $h::t = \fold{(\inr{(\pv{h}{t})})}$.
  Its inverse $\opn{map}^{-1}$ is
  \[
    \lambda\isolambdavar. \fix \isovar. \left\{
        [~] ~ {\iso} ~ [~] ~~|~~
        h' :: t' ~ {\iso} ~ \letv{t}{\isovar~t'}{} \letv{h}{\isolambdavar~h'}{} h :: t
    \right\}.
\]
Note that in the latter, the variable $\isolambdavar$ has type $B\iso A$.
If we consider the inverse of the term $(\opn{map}~\omega)$ we would obtain the term
$(\opn{map}^{-1}~\omega^{-1})$ where $\omega^{-1}$ would be of type $B\iso A$.
\end{example}

\begin{example}[Cantor Pairing]
  \label{ex:cantor}
  One can encode the Cantor Pairing between $\natS \otimes \natS \iso \natS$.
  First recall that the type of natural number $\natT$ is given by $\mu X.
  \one\oplus X$, then define $\overline{n}$ as the encoding of natural numbers
  into a closed value of type $\natT$ as $\overline{0} = \fold{(\inl{\unit})}$ and
  given a variable $x$ of type $\natT$, its successor is $\overline{S(x)} =
  \fold{(\inr{(x)})}$.  Omitting the $\overline{\ \cdot\ }$ operator for
  readability, the pairing is then defined as:
  \[\begin{array}{ll} \omega_1 \colon \natT \otimes \natT \iso (\natT \otimes \natT)
    \oplus \one  \\
    = \left\{\begin{array}{@{}lcl@{}}
      \pv{S(i)}{j} & {\iso} & \inl{(\pv{i}{S(j)})} \\
      \pv{0}{S(S(j))} & {\iso} & \inl{(\pv{S(j)}{0})} \\
      \pv{0}{S(0)} & {\iso} & \inl{(\pv{0}{0})} \\
      \pv{0}{0} & {\iso} & \inr{(\unit)} \end{array}\right\},
    \end{array}
    \hfill
    \begin{array}{@{}ll@{}}
      \omega_2 \colon (\natT \otimes \natT) \oplus \one \iso \natT  \\
      = \left\{\begin{array}{lcl}
        \inl{(x)} & {\iso} & \letv{y}{\isovar~x}{S(y)} \\
        \inr{(\unit)} & {\iso} & 0
    \end{array}\right\},
		\\

    \opn{CantorPairing} \colon \natT\otimes \natT \iso \natT \\
    = \fix \isovar. \left\{\begin{array}{@{}lcl@{}}
      x & {\iso} & \letv{y}{\omega_1~x}{} \\
      && \letv{z}{\omega_2~y}{z}
  \end{array}\right\},
  \end{array}\]
  where the variable $\isovar$ in $\omega_2$ is the one being bound
  by the ${\fix}$of the $\opn{CantorPairing}$ iso. Intuitively,
  $\omega_1$ realises one step of the Cantor Pairing evaluation while
  $\omega_2$ checks if we reached the end of the computation and either
  applies a recursive call, or stops.

  For instance, $\opn{CantorPairing}~\pv{1}{1}$ will match with the
  first clause of $\omega_1$, evaluating into $\inl{\pv{0}{2}}$, and
  then, inside $\omega_2$ the reduction
  $\opn{CantorPairing}~\pv{0}{2}$ will be triggered through the
  recursive call, evaluating the second clause of $\omega_1$, reducing
  to $\inl{\pv{1}{0}}$, etc.
\end{example}


\section{Expressivity}
\label{sec:RTM}
This section is devoted to assessing the expressivity of the
language. To that end, we rely on Reversible Turing Machine
(RTM)~\cite{axelsen11rtm}. We describe how to encode an RTM as an iso,
and prove that the iso realises the string semantics of the RTM.

\subsection{Recovering duplication, erasure and manipulation of constants}
\label{sec:dup}
Although the language is linear and reversible, since closed values
are all finite, and one can build isos to encode notions of
duplication, erasure, and constant manipulation thanks to partiality.

\begin{definition}[Duplication]\rm We define $\dup_A^S$ the iso of
	type $A\iso A\otimes A$ which can duplicate any closed value of type
	$A$ by induction on $A$, where $S$ is a set of pairs of a
	type-variable $X$ and an iso-variable $\isovar$, such that for every
	free-type-variable $X \subseteq A$, there exists a unique pair $(X,
	\isovar) \in S$ for some $\isovar$.

	The iso is defined by induction on $A$:
	$\dup_\one^S = \{\unit \iso \pv{\unit}{\unit}\}$, and
	\begin{itemize}
		\item $\dup_{A\otimes B}^S = \left\{\begin{array}{lcl}
				\pv{x}{y} & \iso & \letv{\pv{x_1}{x_2}}{\dup_A^S~x}{}
				\letv{\pv{y_1}{y_2}}{\dup_B^S~y}{} \\
				&& \pv{\pv{x_1}{y_1}}{\pv{x_2}{y_2}}
		\end{array}\right\}$;

	\item $\dup_{A\oplus B}^S = \left\{\begin{array}{lcl}
			\inl{(x)} & \iso & \letv{\pv{x_1}{x_2}}{\dup_A^S~x}{}
			\pv{\inl{(x_1)}}{\inl{(x_2)}} \\[0.5em]
			\inr{(y)} & \iso & \letv{\pv{y_1}{y_2}}{\dup_B^S~y}{}
			\pv{\inr{(y_1)}}{\inr{(y_2)}}
	\end{array}\right\}$;

\item If $(X, \_) \not\in S$: $\dup_{\mu X. A}^S = \fix\isovar.
	\left\{\begin{array}{l@{~}c@{~}l}
		\fold{(x)}
		& \iso
		&
		\letv{\pv{x_1}{x_2}}{Dup_{A[\mu X. A/X]}^{S\cup \set{(X , \isovar)}}~x}{} \\
		&& \pv{\fold{(x_1)}}{\fold{(x_2)}}
	\end{array}\right\}$;

\item If $(X, \isovar) \in S$:
	$\dup_{\mu X. A}^S = \set{x \iso
	\letv{\pv{x_1}{x_2}}{\isovar~x}{\pv{x_1}{x_2}}}$.
	\end{itemize}
	Remember that bound variables are assumed distinct following Barendregt's
	convention, allow for the well-definition of the isos above.
\end{definition}

\begin{lemma}[Properties of Duplication]
  \label{lem:duplication-invariant}
  \label{lem:duplication-typed}
  Given a closed type $A$, then $\dup_A^\emptyset$ is well-defined,
  and the iso $\dup_A^\emptyset$ is well typed of type
  $A\iso A\otimes A$.\qed
\end{lemma}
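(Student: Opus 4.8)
The statement bundles two claims: that the recursive definition of $\dup_A^S$ actually terminates (\emph{well-definedness}), and that the iso it produces has type $A\iso A\otimes A$ (\emph{well-typedness}). The plan is to prove a strengthened version of both by a single well-founded induction on the pair $(A,S)$, and then instantiate it at $S=\emptyset$. The delicate point is that the recursion is \emph{not} structural on $A$: in the case $\mu X.A$ with $(X,\_)\notin S$, the definition recurses on the \emph{unfolding} $A[\mu X.A/X]$, which is in general syntactically larger than $\mu X.A$. Plain induction on the size of $A$ therefore fails, and the first task is to exhibit a suitable well-founded measure.

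The measure I would use reflects the informal fact that each inductive type is unfolded \emph{at most once}: the first time a binder $\mu X$ with $X\notin\pi_1(S)$ is met it is unfolded and $(X,\isovar)$ is added to $S$, after which every further occurrence of that same $\mu X.A$ (in particular every copy created by the substitution) falls into the last clause $(X,\isovar)\in S$ and is treated as a leaf. Formally, for a pair $(A,S)$ let $D(A)$ be the set of type-variables bound by a $\mu$ inside $A$, set $m(A,S)=\lvert D(A)\setminus\pi_1(S)\rvert$, and order pairs lexicographically by $(m(A,S),\lvert A\rvert)$ with $\lvert A\rvert$ the syntactic size. In the $\oplus$ and $\otimes$ cases $S$ is unchanged and the calls are on proper subterms, so $m$ does not increase while $\lvert A\rvert$ strictly decreases. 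In the case $\mu X.A$ with $(X,\_)\notin S$, Barendregt's convention gives $X\notin D(A)$, whence $D(A[\mu X.A/X])\subseteq\{X\}\cup D(A)=D(\mu X.A)$; since $X$ enters $\pi_1(S)$ one gets $m(A[\mu X.A/X],\,S\cup\{(X,\isovar)\})\le m(\mu X.A,S)-1$, a strict decrease of the first component. The clauses for $\one$ and for $(X,\isovar)\in S$ are leaves. This establishes well-foundedness, hence well-definedness. I expect this measure argument to be the main obstacle, precisely because of the non-structural $\mu$-unfolding.

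For well-typedness I would run the same well-founded induction, strengthening the invariant to track $S$: to each $(X,\isovar)\in S$ associate the closed inductive type $\mu X.A_X$ whose binder introduced $X$ (well defined because distinct binders carry distinct names), and let $\Psi_S$ be the iso-context assigning $\isovar\colon \mu X.A_X\iso (\mu X.A_X)\otimes(\mu X.A_X)$. A short preliminary check shows that along the recursion the subscript type stays closed (from a closed $\mu X.A$ the unfolding $A[\mu X.A/X]$ is closed again), starting from the closed $A$ of the statement, so every iso formed is between closed types. The claim proved by induction is then $\Psi_S\entailiso \dup_A^S\colon A\iso A\otimes A$. Each clause is typed with the rule of~\Cref{tab:typisos}: the $\one$ clause is immediate; in the $\otimes$ and $\oplus$ cases the bodies type-check by feeding the induction hypotheses for $\dup_A^S$ and $\dup_B^S$ into the enclosing $\lett$-constructions; in the $\mu X.A$ case the fixpoint rule reduces the goal to typing the body under $\Psi_{S\cup\{(X,\isovar)\}}$, which is exactly the induction hypothesis at the smaller measure; and the leaf clause $(X,\isovar)\in S$ type-checks directly from $\Psi_S$. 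Instantiating at $S=\emptyset$ with $\Psi_\emptyset$ empty yields the lemma.

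Finally, the typing rule also carries orthogonality premises on the clauses, which I would discharge within the same induction. The $\one$, $\otimes$ and leaf cases contain a single clause, so the condition is vacuous; the only real work is in the $\oplus$ case, where the two left-hand sides $\inl{x}$ and $\inr{y}$ are orthogonal by the base rule $\inl{t_1}~\bot~\inr{t_2}$, and the two right-hand sides return values inhabiting the disjoint summands $\inl{\cdots}$ and $\inr{\cdots}$. Matching this right-hand orthogonality precisely against the formal definition of $\bot$ is the fiddliest bookkeeping of the argument, and I would establish it from the observation that the two outputs are constrained to live in disjoint images; everything else is routine.
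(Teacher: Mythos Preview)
Your proposal is correct and considerably more careful than the paper's own proof, which is a two-line sketch (``straightforward case analysis on $A$'' for well-definedness, ``induction on $A$'' for typing) that does not engage with the non-structural recursion at the $\mu$-unfolding step. You identify this as the crux and your lexicographic measure on $(\lvert D(A)\setminus\pi_1(S)\rvert,\lvert A\rvert)$ is a sound well-founded order for it; the strengthened typing invariant carrying an iso-context $\Psi_S$ indexed by $S$ is exactly what the paper's informal ``induction on $A$'' tacitly relies on in its $\mu$-case but never states. The point you flag as fiddly---right-hand orthogonality in the $\oplus$ clause, where the two $\mathtt{let}$-bindings differ and hence do not literally share a common $C_\bot$ context under the paper's syntactic definition of $\bot$---is a genuine wrinkle in the paper's formalism that its own proof does not discuss either, so you are not missing anything the paper supplies.
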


\begin{lemma}[Semantics of Duplication]
  \label{lem:duplication-semantics}
  Given a closed type $A$ and a closed value $v$ of type $A$, then
  $\dup_A^\emptyset~v \to^* \pv{v_1}{v_2}$ and $v = v_1 = v_2$.
	\qed
\end{lemma}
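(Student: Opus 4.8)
The plan is to prove a strengthened statement by well-founded induction on the structure of the closed value $v$, simultaneously over all types $A$ and all bookkeeping sets $S$. A naive induction on $v$ alone (or on the type $A$) does not close, because the clause for $\mu X. A$ unfolds a fixpoint and then recurses through $\dup_{A[\mu X. A/X]}^{S\cup\set{(X,\isovar)}}$ with a \emph{nonempty} set $S$ and with the iso-variable $\isovar$ now standing for the very fixpoint that was just unfolded. I would therefore generalise over $S$ and over the free iso-variables it introduces.

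Concretely, write $\widehat{\dup_A^S}$ for the iso obtained from $\dup_A^S$ by substituting each iso-variable $\isovar$ named in $S$ by the duplication fixpoint associated with its bound inductive type (the fixpoint produced by the third clause of the definition). The strengthened claim is: for every closed value $v$, every type $A$, and every $S$ well-formed for $A$ (relating each free type-variable of $A$ to a unique iso-variable, as in the definition), one has $\widehat{\dup_A^S}~v \to^* \pv{v}{v}$. The lemma is then the special case $S=\emptyset$, for which the substitution is vacuous and $\widehat{\dup_A^\emptyset}=\dup_A^\emptyset$, so that $v_1=v_2=v$ as required.

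The base case $v=\unit$ computes directly. For $v=\inl{v'}$, $v=\inr{v'}$, and $v=\pv{v_1}{v_2}$, the corresponding clause fires, the inner $\dup$ applications are all on \emph{strict} subvalues, the induction hypothesis rewrites each to its diagonal pair, and the surrounding $\lett$-bindings reassemble $\pv{v}{v}$. The critical case is $v=\fold{v'}$ at type $\mu X. A$, which splits in two. If $(X,\_)\notin S$, the fixpoint unfolds, the pattern $\fold{x}$ matches with $x\mapsto v'$, and the body invokes $\dup_{A[\mu X.A/X]}^{S\cup\set{(X,\isovar)}}$ on the strictly smaller $v'$; crucially, after the unfolding step the iso-variable $\isovar$ is bound to exactly the fixpoint that constitutes the closed instantiation for the enlarged set, so the invariant defining $\widehat{(-)}$ is re-established and the induction hypothesis applies. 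If $(X,\isovar)\in S$, the single clause binds $x\mapsto \fold{v'}$ and calls $\isovar$, which in $\widehat{\dup_A^S}$ has been replaced by the duplication fixpoint for $\mu X. A$; a few administrative reductions (unfolding that fixpoint and matching the outer $\fold$) reduce the goal to a $\dup$ call on the strictly smaller $v'$, and the induction hypothesis again finishes.

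The main obstacle is the bookkeeping around $S$ and the substitution $\widehat{(-)}$: I expect the delicate point to be stating precisely the invariant that every $(X,\isovar)\in S$ is, in the closed instantiation, bound to the duplication fixpoint for its associated inductive type, and then verifying that a single fixpoint-unfolding step in the $(X,\_)\notin S$ subcase re-establishes this invariant for $S\cup\set{(X,\isovar)}$. Once this invariant is formalised, the remaining reductions are routine, and termination is unproblematic because every recursive $\dup$ call is performed on a strictly smaller closed value and closed values are finite.
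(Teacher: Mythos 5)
Your proposal is correct and takes essentially the same route as the paper: the paper's entire proof of this lemma is the single remark that the reduction follows ``by a straightforward induction on $v$'', which is exactly the induction you carry out. Your contribution is merely to make explicit the strengthening that this induction actually needs --- generalising over the bookkeeping set $S$ and maintaining the invariant that each iso-variable in $S$ is instantiated by the corresponding duplication fixpoint --- details the paper's one-line proof leaves implicit.
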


\begin{definition}[Constant manipulation]\rm
  \label{def:constant}
  We define $\opn{erase}_v \colon A \otimes \Sigma^T \iso A$ which
  erases its second argument when its value is $v$ as $\{\pv{x}{v}
  \iso x\}$. Reversed, it turns any $x$ into $\pv{x}{v}$.
\end{definition}

\subsection{Definition of Reversible Turing Machine}

\begin{definition}[Reversible Turing Machine~\cite{axelsen11rtm}]\rm
  Let $M = (Q, \Sigma, \delta, b, q_s, q_f)$ be a Turing Machine, where
  $Q$ is a set of states, $\Sigma = \set{b, a_1, \dots, a_n}$ is a
  finite set of tape symbols (in the following, $a_i$ and $b$ always
  refer to elements of $\Sigma$),
  $\delta \subseteq \Delta = (Q\times
  [(\Sigma\times\Sigma)\cup\set{\leftarrow, \downarrow,
    \rightarrow}]\times Q)$ is a partial relation defining the
  transition relation such that there must be no transitions leading
  out of $q_f$ nor into $q_s$, $b$ a blank symbol and $q_s$ and $q_f$
  the initial and final states. We say that $M$ is a \emph{Reversible
    Turing Machine} (RTM) if it is:
  \begin{itemize}
      \item \textit{forward} deterministic: for any two distinct pairs
  of triples $(q_1, a_1, q_1')$ and $(q_2, a_2, q_2')$ in $\delta$, if
  $q_1 = q_2$ then $a_1 = (s_1, s_1')$ and $a_2 = (s_2, s_2')$ and
  $s_1\not= s_2$.

  \item \textit{Backward} deterministic: for any two distinct pairs
  of triples $(q_1, a_1, q_1')$ and $(q_2, a_2, q_2')$ in $\delta$, if
  $q_1' = q_2'$ then $a_1 = (s_1, s_1')$ and $a_2 = (s_2, s_2')$ and
  $s_1'\not= s_2'$.
\end{itemize}
\end{definition}

\begin{definition}[Configurations~\cite{axelsen11rtm}]
  \label{confTM}\rm
  A \emph{configuration} of a RTM is a tuple $(q, (l, s, r)) \in
  \opn{Conf} = Q\times (\Sigma^* \times \Sigma \times \Sigma^*)$ where
  $q$ is the internal state, $l, r$ are the left and right parts of
  the tape (as string) and $s\in \Sigma$ is the current symbol being
  scanned. A configuration is \emph{standard} when the cursor is on
  the immediate left of a finite, blank-free string $s \in
  (\Sigma\setminus\set{b})^*$ and the rest is blank, i.e. it is in
  configuration $(q, (\epsilon, b, s))$ for some $q$, where $\epsilon$
  is the empty string, representing an infinite sequence of blank
  symbols $b$.
\end{definition}

\begin{definition}[RTM Transition~\cite{axelsen11rtm}]\rm
  An RTM $M$ in configuration $C = (q, (l, s, r))$ goes to
  a configuration $C' = (q', (l', s', r'))$, written $T\vdash
  C\rightsquigarrow C'$ in a single step if there exists
  a transition $(q, a, q')\in\delta$ where $a$ is either $(s, s')$,
  and then $l=l'$ and $r=r'$ or $a\in\{\leftarrow,\downarrow,
  \rightarrow\}$, and we have for the case $a=\leftarrow$:
  $l' = l \cdot s$ and for $r = x \cdot r_2$ we have $s' = x$ and
  $r' = r_2$, similarly for the case $a = \rightarrow$ and for the
  case $a =\downarrow$ we have $l' = l$ and $r' = r$ and $s = s'$.
\end{definition}

The semantics of an RTM is given on \textit{standard configurations} of
the form $(q, (\epsilon, b, s))$ where $q$ is a state, $\epsilon$ is
the finite string standing for a blank-filled tape, and $s$ is the
blank-free, finite input of the RTM.

\begin{definition}[String Semantics~\cite{axelsen11rtm}]
  \label{RTM:String-Sem}\rm The semantics of a RTM $M$, written
  $\opn{Sem}(M)$ is defined on standards configurations and is given
  by the set $\opn{Sem}(M) = \set{(s, s') \in ((\Sigma\backslash\set{b})^*
  \times (\Sigma\backslash\set{b})^*) \mid M \vdash (q_s, (\epsilon,
  b, s)) \rightsquigarrow^* (q_f, (\epsilon, b, s'))}$.
\end{definition}

\begin{theorem}[Properties of RTM~\cite{axelsen11rtm}]
  For all RTM $M$, $\opn{Sem}(M)$ is the graph of an injective
  function. Conversely, all injective computable functions (on a
  tape) are realisable by a RTM. Finally, any Turing Machine can be
  simulated by a Reversible Turing Machine.\qed
\end{theorem}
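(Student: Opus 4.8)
The plan is to handle the three assertions in turn, the first by a determinism argument on configurations and the latter two by the classical Bennett history-tape construction. For the first claim, I would show that the single-step relation $\rightsquigarrow$ is, viewed as a relation on $\opn{Conf}$, simultaneously a partial function and a partial injection. Functionality follows from forward determinism: from a configuration $(q,(l,s,r))$ the set of applicable transitions in $\delta$ is a singleton, since the definition forbids two distinct triples out of the same state unless both are symbol rules with distinct read symbols---in particular a symbol rule and a move rule cannot share a source state---so at most one transition fires. Dually, backward determinism makes the reverse relation functional, so $\rightsquigarrow$ is a partial injection. Using that no transition leaves $q_f$ and none enters $q_s$, the unique maximal forward orbit of a standard configuration $(q_s,(\epsilon,b,s))$ determines at most one standard halting configuration $(q_f,(\epsilon,b,s'))$, giving functionality of $\opn{Sem}(M)$; running the same argument backwards from a $q_f$-configuration yields injectivity. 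Hence $\opn{Sem}(M)$ is the graph of an injective partial function.

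For the third claim, I would fix a Turing machine $N$ computing a function $f$ and build a three-tape RTM in three phases, following Bennett. The first phase runs $N$ step by step while appending to a dedicated history tape a record of which rule was applied; this phase is forward deterministic because $N$ is, and backward deterministic because the recorded rule uniquely identifies the reverse step. The second phase reversibly copies the output onto a fresh, initially blank tape (copying onto blanks is injective). The third phase is the first phase run in reverse, which erases the history and restores the input, leaving the machine computing $x \mapsto (x, f(x))$; this realises the Bennett-style simulation. For the second claim, given an injective computable $f$ I would additionally exploit that RTMs are closed under composition and inversion (the inverse of an RTM is obtained by reversing each transition, which is again forward and backward deterministic). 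Since $f$ is injective and computable, $f^{-1}$ is partial computable, so the previous construction yields RTMs $R_f \colon x \mapsto (x, f(x))$ and $R_{f^{-1}} \colon y \mapsto (y, f^{-1}(y))$. Composing $R_f$, a tape swap, and the inverse machine $R_{f^{-1}}^{-1}$ sends $x \mapsto (x, f(x)) \mapsto (f(x), x) \mapsto f(x)$, a genuine injective string function realised by an RTM; reducing the resulting multi-tape machine to a single-tape RTM is standard.

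The routine determinism bookkeeping of the first claim is not the difficulty. The hard part will be verifying that every stage of the Bennett construction is genuinely reversible in the strong sense required here---both forward \emph{and} backward deterministic---most delicately the backward determinism of the history-recording phase and of the copy phase, and that composition and inversion preserve this property together with the standard-configuration semantics. The step that genuinely consumes the hypothesis is the input-erasure in the second claim: without injectivity of $f$ one cannot run an $f^{-1}$-machine backwards to remove the retained input, and only $x \mapsto (x, f(x))$ with leftover garbage is obtainable, which is exactly the weaker statement of the third claim.
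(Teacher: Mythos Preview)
The paper does not prove this theorem at all: it is quoted from \cite{axelsen11rtm} and closed immediately with \qed, so there is no ``paper's own proof'' to compare against. Your outline is the standard one from that literature and is essentially correct; the determinism argument for the first claim is sound, and the Bennett history-tape construction together with the compose-with-$R_{f^{-1}}^{-1}$ trick is exactly how the second and third claims are established in the cited source. One small point worth making explicit is the justification that $f^{-1}$ is partial computable when $f$ is partial computable and injective (by dovetailing over the graph of $f$), since your garbage-removal step relies on $R_{f^{-1}}$ actually halting on every element of the range of $f$.
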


\subsection{Encoding RTMs as Isos}
A RTM configuration is a set-based construction that we can model
using the type constructors available in our language. Because the
transition relation $\delta$ is backward and forward deterministic, it
can be encoded as an iso. Several issues need to be dealt with; we
discuss them in this section.

\smallskip
\noindent
\textsf{\bfseries Encoding configurations.~}
The set of states $Q = \{q_1, \dots, q_n\}$ is modeled with the type
$Q^T=\1\oplus\cdots\oplus\1$ ($n$ times). The encoding of the state
$q_i$ is then a closed value $q_i^T$. They are pairwise orthogonal.
The set $\Sigma$ of tape symbols is represented similarly by
$\Sigma^T=\1\oplus\cdots\oplus\1$, and the encoding of the tape symbol
$a$ is $a^T$.
We then define the type of configurations in the obvious manner: a
configuration $C = (q, (l, s, r))$ corresponds to a closed value
$\opn{isos}(C)$ of type
$Q^T \otimes ([\Sigma^T] \otimes \Sigma^T \otimes [\Sigma^T])$.

\begin{definition}[Encoding of Configurations]\rm We define the type
  of configurations as $C^T = (Q^T\otimes ([\Sigma^T] \otimes \Sigma
  \otimes [\Sigma^T]))$. Given a configuration $C = (q, ((\epsilon,
  a_1, \dots, a_n), a, (a_1', \dots, a_m', \epsilon)))$, it is encoded
  as $\opn{isos}(C) = (q^T, ([a_n^T, \dots, a_1^T], a^T, [a_1'^T,
  \dots, a_m'^T]))$. For example, the standard configuration $C =
  (q_s, (\epsilon, b, [a_1, \dots, a_n]))$ is represented as
  $\opn{isos}(C) = (q_s^T, ([], b^T, [a_1^T, \dots, a_n^T]))$.
\end{definition}

\smallskip
\noindent
\textsf{\bfseries Encoding the transition relation $\delta$.~}
A limitation of our language is that every sub-computation has to be reversible
and does not support infinite data structures such as streams. In the context
of RTMs, the empty string $\epsilon$ is identified with an infinite string of
blank symbols. If this can be formalised in set theory, in our limited model,
we cannot emit blank symbols out of thin air without caution.

In order to simulate an infinite amount of blank symbols on both sides
of the tape during the evaluation, we provide an iso that grows the
size of the two tapes on both ends by blank symbols at each transition
step.
The iso $\growth$ is shown in
Table~\ref{tab:useful-functions}. It is built using three auxiliary
functions, written in a Haskell-like notation.
$\len$ sends a closed value $[v_1, \dots, v_n]$ to
$\pair{[v_1, \dots, v_n]}{\ov{n}}$.
$\snocc$ sends $\pair{[v_1, \dots, v_n]}{v, \ov{n}}$ to
$\pair{[v_1, \dots, v_n, v]}{v, \ov{n}}$.
$\snoc$ sends $\pv{[v_1, \dots, v_n]}{v}$ to
$\pv{[v_1, \dots, v_n, v]}{v}$.
Finally, $\growth$ sends
$\pv{[a_1^T\!, \dots, a_n^T]}{[a_1'^T\!, \dots, a_m'^T]}$ to
$\pv{[a_1^T, \dots, a_n^T, b^T]}{[a_1'^T, \dots, a_m'^T, b^T]}$.

Now, given a RTM $M = (Q, \Sigma, \delta, b, q_s, q_f)$, a relation
$(q, r, q') \in \delta$ is encoded as a clause between values
$\opn{iso}(q,r,q') = v_1 \iso v_2$ of type $C^T \iso C^T$. These clauses
are defined by case analysis on $r$ as follows.
When $x, x', z, y$ and $y'$ are variables:
\begin{itemize}
\item $\opn{iso}(q, \rightarrow, q') =
  (q^T, (x', z, y :: y')) \iso \letv{(l,
    r)}{\growth~(x', y')}{(q'^T, (z :: l, y, r))}$,
\item $\opn{iso}(q, \leftarrow, q') =
  (q^T, (x::x', z, y')) \iso \letv{(l, r)}{\growth~(x',
    y')}{(q'^T, (l, x, z :: r))}$,
\item $\opn{iso}(q, \downarrow, q') =
  (q^T, (x', z, x')) \iso \letv{(l, r)}{\growth~(x',
    y')}{(q'^T, (l, z, r))}$,
\item $\opn{iso}(q, (s, s'), q') =
  (q^T, (x', s^T, y')) \iso \letv{(l, r)}{\growth~(x',
    y')}{(q'^T, (l, s'^T, r))}$.
\end{itemize}
The encoding of the RTM $M$ is then the iso $\opn{isos}(M)$ whose
clauses are the encoding of each rule of the transition relation
$\delta$, of type $\opn{Conf}^T\iso\opn{Conf}^T$.

\begin{table}
	\begin{tabular}{@{}l}
		$
		\begin{array}{|l}
			\len:[A]\iso[A] \otimes \natT\\
			\begin{array}{@{}l@{\,}l@{~}c@{~}l}
				\len & [~] & {\iso} & ([~], 0) \\
				\len & h :: t & {\iso} & \letv{(t', n)}{\len~t}{} \\
				&   & & (h :: t', S(n)) \\
			\end{array}
		\end{array}
		$
		\\[6ex]
		$
		\begin{array}{|l}
			\snocc \colon [A] \otimes A \otimes \natT \iso [A] \otimes A \otimes \natT \\
			\begin{array}{@{}l@{\,}l@{~}c@{~}l}
				\snocc & ([~], x, 0) & {\iso} & \letv{(x_1, x_2)}{\dup_A^\emptyset x}{}\\
				& & & ([x_1], x_2, 0) \\
				\snocc & (h::t, x, S(n)) & {\iso} & \letv{(t', x', n')}{\snocc (t, x, n)}{} \\
				&&&(h :: t', x', S(n'))
			\end{array}
		\end{array}
		$
		\\[8ex]
		$
		\begin{array}{|l}
			\snoc \colon [A] \otimes A\iso [A] \otimes A \\
			\begin{array}{@{}l@{\,}l@{~}c@{~}l}
				\snoc
				& (x, y) & {\iso} & \letv{(x', n)}{\len~x}{} \\
				& & & \letv{(x'', y', n')}{\snocc~(x', y, n)}{} \\
				& & & \letv{n''}{\set{x ~\iso~ S x}~n'}{} \\
				& & & \letv{z}{\len^{-1}~(x'', n'')}{} (z, y')
			\end{array}
		\end{array}
		$
		\\[8ex]
		$
		\begin{array}{|l}
			\growth \colon [\Sigma^T] \otimes [\Sigma^T] \iso [\Sigma^T]
			\otimes [\Sigma^T]\\
			\begin{array}{@{}l@{\,}l@{~}c@{~}l}
				\growth
				& (l, r)
				& {\iso}
				& \letv{\pv{l'}{b_1}}{\snoc\pv{l}{b^T}}{} \\
				&& & \letv{\pv{r'}{b_2}}{\snoc\pv{r}{b^T}}{} \\
				&& & \letv{l''}{\erase_b \pv{l'}{b_1}}{} \\
				&& & \letv{r''}{\erase_b \pv{r'}{b_2}}{} (l'', r'')
			\end{array}
		\end{array}
		$
		\\[8ex]
		$
		\begin{array}{|l}
			\It: (A\iso A \otimes (\one\oplus\one)) \to (A\iso
			A\otimes \natT)\\
			\begin{array}{@{}l@{\,}l@{~}c@{~}l}
				\It \isolambdavar
				& x& \iso & \letv{y}{\isolambdavar~x}{}\\
				&&& \letv{z}{
					\left\{\begin{array}{l@{~}c@{~}l}
						(y, \tc)
						&\iso
						& \letv{(z,n)}{(\It\,\isolambdavar)~y}{(z, S~n)} \\
						(y, \fc) & \iso & (y, 0)
					\end{array}\right\}
					~y}{}
				z
			\end{array}
		\end{array}
		$
		\\[6ex]
		$
		\begin{array}{|l}
			\rmBlank \colon [\Sigma] \iso [\Sigma] \otimes \natS
			\\
			\begin{array}{@{}l@{\,}l@{~}c@{~}l}
				\rmBlank & [] & {\iso} & ([], 0) \\
				\rmBlank & b^T :: t & {\iso} & \letv{(t', n)}{\rmBlank~t}{(t', S(n))} \\
				\rmBlank & a_1^T :: t & {\iso} & ((a_1^T :: t), 0) \\
				\vdots& \vdots & \vdots & \vdots \\
				\rmBlank & a_n^T :: t & {\iso} & ((a_n^T :: t), 0)
			\end{array}
		\end{array}
		$
		\\[10ex]
		$
		\begin{array}{|l}
			\rev_{\text{aux}} \colon [A] \otimes [A] \iso [A]
			\otimes [A]
			\\
			\begin{array}{@{}l@{\,}l@{~}c@{~}l}
				\rev_{\text{aux}} & ([], y) & {\iso} & ([], y) \\
				\rev_{\text{aux}} & (h :: t, y) & {\iso} & \letv{(h_1, h_2)}{\dup_A^\emptyset~h}{} \\
				&& & \letv{(t_1, t_2)}{\isovar (t, h_2 :: y)}{} \\
				&& & (h_1 :: t_1, t_2)
			\end{array}
		\end{array}
		$
		\\[8ex]
		$
		\begin{array}{|l}
			\rev \colon [A] \iso [A] \otimes [A]\\
			\rev = \set{x \iso \letv{(t_1,t_2)}{\rev_{\text{aux}}~(x, [])}{(t_1, t_2)}}
		\end{array}
		$
		\\[4ex]
		$
		\begin{array}{|l}
			\cleanUp: C^T \otimes \natT \iso C^T \otimes
			\natT \otimes \natT \otimes \natT \otimes [\Sigma^T]
			\\
			\begin{array}{@{}l@{\,}l@{~}c@{~}l}
				\cleanUp
				&
				((x, (l, y, r)), n)
				& {\iso} &
				\letv{(l', n_1)}{\rmBlank~l}{}\\
				&&&\letv{(r_{\text{ori}}, r_{\text{rev}})}{\rev~r}{}\\
				&&&\letv{(r', n_2)}{\rmBlank~r_{\rev}}{}\\
				&&&((x, (l', y, r')), n, n_1, n_2, r_{\text{ori}})
			\end{array}
		\end{array}
		$
	\end{tabular}
	\caption{Some useful isos for the encoding.}
	\label{tab:useful-functions}
\end{table}

\smallskip
\noindent
\textsf{\bfseries Encoding successive applications of $\delta$.~}
The transition $\delta$ needs to be iterated until the final state is
reached. This behavior can be emulated in our language using the iso
$\It$, defined in Table~\ref{tab:useful-functions}.  The iso
$\It\,\omega$ is typed with $(A\iso A\otimes \natT)$. Fed with a
value of type $A$, it iterates $\omega$ until $\fc$ is met. It then
returns the result together with the number of iterations.

To iterate $\opn{iso}(M)$, we then only need to modify $\opn{iso}$ to
return a boolean stating whether $q_f$ was met. This can be done straightforwardly, yielding an iso $\opn{isos}_\boolT(M))$ of
type
$
  \opn{Conf}^T\iso \opn{Conf}^T\otimes(\1\oplus\1).
$
With such an iso, given $M$ be a RTM such that $M \vdash (q_s, (\epsilon, b,
s)) \rightsquigarrow^{n+1} (q_f, (\epsilon, b, (a_1, \dots, a_n)))$, then
$\It(\opn{isos}_\boolT(M))~(q_s^T, ([b^T], b^T, s^T))$ reduces to the
encoding term $((q_f^T, ([b^T, \dots, b^T], b^T, [a_1^T, \dots, a_n^T, b^T,
\dots, b^T])), \overline{n})$. If it were not for the additional blank tape
elements, we would have the encoding of the final configuration.

\smallskip
\noindent
\textsf{\bfseries Recovering a canonical presentation.~}
Removing blank states at the \emph{beginning} of a list is easy: for
instance, it can be done with the iso $\opn{rmBlank}$, shown
in Table~\ref{tab:useful-functions}. Cleaning up the tail of the list
can then be done by reverting the list, using, e.g. $\opn{rev}$ in the
same table. By abuse of notation, we use constants in some patterns:
an exact representation would use \Cref{def:constant}. Finally, we can
define the operator $\cleanUp$, solving the issue raised in the
previous paragraph.
In particular, given a RTM $M$ and an initial configuration $C$ such
that $M \vdash C \rightsquigarrow C' = (q, (\epsilon, b, (a_1, \dots,
a_n)))$, then we have that $\cleanUp~(\It(\opn{isos}_\boolT(M)) C^T)
\to^* ((q^T, ([], b^T, [a_1^T, \dots, a_n^T])), v)$, where $v$ is of
type $\natT \otimes \natT \otimes \natT \otimes [\Sigma^T]$. If we
want to claim that we indeed capture the operational behaviour of
RTMS, we need to get rid of this value $v$.

\begin{figure}
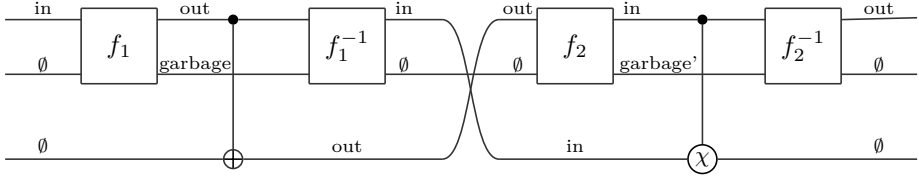

  \tikzfig{no-to-garbage}
  \caption{Reversibly removing additional garbage from some process.}
  \label{fig:garbage-removal}
\end{figure}

\smallskip
\noindent
\textsf{\bfseries Getting rid of the garbage.~}
To discard this value $v$, we rely on Bennett's
trick~\cite{bennett1973logical}, shown in \Cref{fig:garbage-removal}.
Given two Turing machines $f_1$ and $f_2$ and some input $\mathtt{in}$
such that if $f_1(\mathtt{in}) = \mathtt{out} \otimes \mathtt{garbage}$ and
$f_2(\mathtt{out}) = \mathtt{in} \otimes \mathtt{garbage'}$, then the process
consists of taking additional tapes in the Turing Machine in order to
reversibly duplicate (represented by the $\oplus$) or reversibly erase
some data (represented by the $\chi$) in order to recover only the
output of $f_1$, without any garbage.

Given an iso $\omega \colon A\iso B\otimes C$ and
$\omega' \colon B \iso A \otimes C'$ where $C, C'$ represent garbage, we
can build an iso from $A\iso B$ as follows, where the variables
$x, y, z$ (and their indices) respectively correspond to the first,
second, and third wire of~\Cref{fig:garbage-removal}. This operator
makes use of the iso \opn{Dup} discussed in Section~\ref{sec:dup}.
\[
\begin{array}{lcl}
  \garRem{\omega}{\omega'}~x_1
  & \iso
  & \letv{\pv{x_2}{y}}{\omega~x_1}{}
    \letv{\pv{x_3}{z}}{\dup_B^\emptyset~x_2}{}\\
  && \letv{x_4}{\omega^{-1}~\pv{x_3}{y}}{}
   \letv{\pv{z_2}{y_2}}{\omega'~z}{}\\
  && \letv{z_3}{(\dup_B^\emptyset)^{-1}~\pv{z_2}{x_4}}{}
   \letv{z_4}{\omega'^{-1}~\pv{z_3}{y_2}}{}
             z_4.
\end{array}
\]

\begin{theorem}[Capturing the exact semantics of a RTM]
  For all RTM $M$ with standard configurations $C = (q_s,
  (\epsilon, b, s))$ and $C' = (q_f, (\epsilon, b, s'))$ such that
  $M \vdash C \rightsquigarrow^* C'$, we have
  \[\garRem{\cleanUp(\It
      (\opn{isos}_\boolT(M)))}{\cleanUp(\It
      (\opn{isos}_\boolT(M^{-1})))}~\opn{isos}(C) \to^* \opn{isos}(C')\]
  The behavior of RTMs is thus captured by the language.\qed
\end{theorem}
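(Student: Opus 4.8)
The plan is to prove the equation by a single reduction trace through the body of $\garRem{\omega}{\omega'}$, writing $\omega = \cleanUp(\It(\opn{isos}_\boolT(M)))$ and $\omega' = \cleanUp(\It(\opn{isos}_\boolT(M^{-1})))$ for the two composite isos plugged into Bennett's construction. The argument is essentially bookkeeping: I would substitute $x_1 := \opn{isos}(C)$ into the clause defining $\garRem{\omega}{\omega'}$ and evaluate its six nested $\lett$-bindings in order, at each stage invoking one of three facts already available, namely the reduction behaviour of $\omega$ and $\omega'$ (established in the ``Recovering a canonical presentation'' discussion), the semantics of duplication (\Cref{lem:duplication-semantics}), and the semantics of inversion (\Cref{thm:isos-iso}).

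First I would record the two endpoint reductions. For the forward machine the preceding discussion gives $\omega~\opn{isos}(C) \to^* \pv{\opn{isos}(C')}{v}$ for some garbage $v$ of type $\natT\otimes\natT\otimes\natT\otimes[\Sigma^T]$. For $\omega'$ I would first argue that, since $M$ is a reversible Turing machine and $M \vdash C \rightsquigarrow^* C'$, the reverse machine satisfies $M^{-1} \vdash C' \rightsquigarrow^* C$, with $C'$ again a standard configuration; applying the same $\cleanUp$-and-iterate analysis to $M^{-1}$ then yields $\omega'~\opn{isos}(C') \to^* \pv{\opn{isos}(C)}{v'}$ for some garbage $v'$. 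Both $\opn{isos}(C)$ and $\opn{isos}(C')$ are closed values, which is exactly what \Cref{lem:duplication-semantics} requires.

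With these in hand the trace runs as follows. The first binding gives $x_2 = \opn{isos}(C')$ and $y = v$; duplication then forces $x_3 = z = \opn{isos}(C')$. The binding for $x_4$ computes $\omega^{-1}~\pv{x_3}{y} = \omega^{-1}(\omega(\opn{isos}(C)))$, which reduces to $\opn{isos}(C)$ by inversion, so that $x_4 = \opn{isos}(C)$ and the forward garbage $y$ is uncomputed. Running $\omega'$ on the surviving copy $z$ gives $z_2 = \opn{isos}(C)$ and $y_2 = v'$; since $z_2 = x_4 = \opn{isos}(C)$, the inverse duplication $(\dup_B^\emptyset)^{-1}~\pv{z_2}{x_4}$ collapses the pair of equal values to $z_3 = \opn{isos}(C)$. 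Finally $\omega'^{-1}~\pv{z_3}{y_2} = \omega'^{-1}(\omega'(\opn{isos}(C')))$ reduces to $\opn{isos}(C')$, so $z_4 = \opn{isos}(C')$ is returned, as required.

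I expect the main obstacle to be justifying the two ``reverse-direction'' cancellations cleanly rather than any single arithmetic step. \Cref{thm:isos-iso} is phrased as $\omega(\omega^{-1}(v)) \to^* v$, whereas I need $\omega^{-1}(\omega(\cdot))$ and $\omega'^{-1}(\omega'(\cdot))$; I would obtain these by instantiating that theorem at $\omega^{-1}$ (resp.\ $\omega'^{-1}$) and using the involution property $(\omega^{-1})^{-1}=\omega$, which is legitimate since $\omega$ and $\omega'$ are composites of well-typed isos and hence well typed (\Cref{lem:inv-type}). Since the theorem is conditional on the inverse application terminating, I would supply that termination from the already-exhibited forward trace together with \Cref{lem:inv-rewriting}, the inverse simply retracing the forward steps. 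The genuinely load-bearing point is that the inverse-duplication step fires only because $z_2 = x_4$: this equality is precisely the statement that $M^{-1}$ returns from $C'$ to the very configuration $C$, so it is the forward and backward determinism of $M$ that makes Bennett's trick erase the second garbage reversibly instead of getting stuck.
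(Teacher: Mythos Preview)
Your proposal is correct and follows essentially the same route as the paper. The paper's proof is terse: it packages your six-step trace into a separate Garbage Removal Semantics lemma (\Cref{lem:garb-remove-sem}), and then the theorem follows by checking that lemma's preconditions using the fact that $M^{-1}$ computes the inverse of $M$ together with \Cref{lem:sim-trans} and \Cref{thm:cleanup}. You have simply inlined that lemma, stepping through the bindings of $\garRem{\omega}{\omega'}$ explicitly; the key observation that $z_2 = x_4$ is exactly the precondition ``$v_3 = v$'' in \Cref{lem:garb-remove-sem}.

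One small quibble: when you justify termination of the inverse applications you cite \Cref{lem:inv-rewriting}, but that lemma concerns iso-level reduction $\omega\to\omega'$, not term-level evaluation of $\omega^{-1}$ applied to a value. What you actually need is that the forward trace of each clause can be replayed backwards, which is the content of the proof of \Cref{thm:isos-iso} itself rather than \Cref{lem:inv-rewriting}. The paper sidesteps this by phrasing \Cref{lem:garb-remove-sem} conditionally on termination, so strictly speaking neither proof spells out the termination argument in full; your instinct that it follows from retracing the forward reduction is correct, just mis-cited.
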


\section{Categorical Background}

We aim at providing a denotational semantics for the programming language
introduced above, meaning a mathematical interpretation abstract to the syntax.
Our approach is categorical, in the spirit of many others before us. Programs
are compositional by design, making it natural to interpret in a
framework ruled by compositionality.  Types are usually interpreted as objects
in a category $\CC$, and terms as morphisms in this category. We have seen that
the main feature of our programming language is reversibility and its terms can
be seen as partial isomorphisms, or partial injections.  We want this property
to be carried on the interpretation, and we present in this section the proper
categories to do so.  The category of sets and partial injective functions,
written $\PInj$, will be the recurring example throughout this section to help
the intuition.

\subsection{Join inverse rig category}

The axiomatisation of join inverse rig categories gives the conditions
for the morphisms of a category to be \emph{partial injections}.
First, the notion of restriction allows to capture the \emph{actual}
domain of a morphism through a partial identity function.
Historically, \emph{inverse} categories \cite{kastl1979inverse} were
introduced before \emph{restriction} categories, but the latter are
more convenient to introduce the subject.

\begin{definition}[Restriction~\cite{cockett2002restriction-I}]
	\label{def:restr}\rm A restriction structure is an operator that
        maps each morphism $f:A\rightarrow B$ to a morphism $\res
        f:A\rightarrow A$ such that for all $g$ and $h$ such that the
        domain of $g$ is $A$ and the domain of $h$ is $B$ we have
        $f \circ \res f = f$, $\res f \circ \res g = \res g \circ \res
        f$, $\res{f \circ \res g} = \res f \circ \res g$ and $\res h
        \circ f = f \circ \res{h \circ f}$.
  A morphism $f$ is said to be \emph{total} if $\res f = 1_A$.
  A category with a restriction structure is called a
  \emph{restriction category}.
  A functor $F:\mathcal C \rightarrow \mathcal D$ is a
  \emph{restriction functor} if $\res{F(f)} = F\left(\res f\right)$ for all
  morphism $f$ of $\mathcal C$. The definition is canonically
  extended to bifunctors.
  When unambiguous, we write $gf$ for the
composition $g\circ f$.
\end{definition}

\begin{example}
  Given sets $A,B$ and a partial function $f:A\to B$ defined on
  $A'\subseteq A$ and undefined on $A\setminus A'$, the restriction of
  $f$ is $\res f:A\to A$, the identity on $A'\subseteq A$ and
  undefined on $A\setminus A'$. This example shows that $\PInj$ is a
  restriction category.
\end{example}

To interpret reversibility, we need to introduce a notion of
reversed process, a process that exactly reverses another process.
This is given by a generalised notion of inverse.

\begin{definition}[Inverse category~\cite{kaarsgaard2017join}]
  \label{def:invcat}\rm
  An \emph{inverse category} is a restriction category where all morphisms
  are partial isomorphisms; meaning that for $f:A\rightarrow B$, there
  exists a unique $f^{\circ} \colon B\rightarrow A$ such that
  $f^{\circ} \circ f = \res f$ and $f \circ f^{\circ} = \rc f$.
\end{definition}

\begin{example}
	In $\PInj$, let us consider the partial function $f\colon\{0,1\}\to\{0,1\}$
	as $f(0)=1$ and undefined on $1$. Its restriction $\res f$
	is undefined on $1$ also but $\res f(0)=0$. Its \emph{inverse}
	$f^\circ$ is undefined on $0$ and such that $f^\circ(1)=0$.
\end{example}

The example above generalises and $\PInj$ is an actual inverse category. Even
more, it is \emph{the} inverse category: \cite{kastl1979inverse} proves that
every locally small inverse category is isomorphic to a subcategory of $\PInj$.

\begin{definition}[Restriction compatible~\cite{kaarsgaard2017join}]
  \label{def:restcomp}\rm
  Two morphisms $f,g:A\to B$ in a restriction category $\mathcal{C}$
  are restriction compatible if $f\res g = g\res f$.  The relation is
  written $f \smile g$.  If $\mathcal{C}$ is an inverse category, they
  are inverse compatible if $f\smile g$ and
  $f^{\circ} \smile g^{\circ}$, noted $f\asymp g$.
  A set $S$ of morphisms of the same type $A\to B$ is restriction compatible
  (\textit{resp.}  inverse compatible) if all elements of $S$ are
  pairwise restriction compatible (\textit{resp.} inverse compatible).
\end{definition}

\begin{definition}[Partial order~\cite{cockett2002restriction-I}]\label{def:order}\rm
  Let $f,g:A\to B$ be two morphisms in a restriction category. We
  then define $f \leq g$ as $g\res f = f$.
\end{definition}

\begin{definition}[Joins~\cite{guo2012products}]\label{def:join}\rm
  A restriction category $\mathcal{C}$ is equipped with joins if for
  all restriction compatible sets $S$ of morphisms $A\to B$, there
  exists
  $\bigvee_{s\in S} s:A\to B$ morphism of $\mathcal{C}$ such
  that, whenever $t \colon A \to B$ and whenever for all $s\in S$, $s\leq t$,
  $s \leq \bigvee_{s\in S} s$,
  $\bigvee_{s\in S} s \leq t$,
  $\res{\bigvee_{s\in S} s} = \bigvee_{s\in S} \res s$,
  $f\circ\left(\bigvee_{s\in S} s\right)
  = \bigvee_{s\in S} fs$,
  $\left(\bigvee_{s\in
      S} s\right)\circ g
    =\bigvee_{s\in S} sg$.
  Such a category is called a \emph{join restriction category}. An
  inverse category with joins is called a \emph{join inverse
  category}.
\end{definition}

Building up from Definition~\ref{def:restr}, a \emph{join restriction
functor} is a restriction functor that preserves all thus constructed joins.

\begin{definition}[Zero~\cite{kaarsgaard2017join}]\label{def:zero}\rm
  Since $\emptyset \subseteq \text{Hom}_{\mathcal C} (A,B)$, and since
  all of its elements are restriction compatible, there exists a
  morphism $0_{A,B} \doteq\bigvee_{s\in\emptyset} s$, called
  \emph{zero}.
  It satisfies the following equations:
  $f0=0$,
  $0g=0$,
  $0_{A,B}^{\circ} = 0_{B,A}$,
  $\res{0_{A,B}} = 0_{A,A}$.
\end{definition}

\begin{definition}[Restriction Zero]
	A restriction category $\mathcal C$ has a \emph{restriction zero} object 0
	iff for all objects $A$ and $B$, there exists a unique morphism $0_{A,B}
	\colon A \to B$ that factors through $0$ and satisfies $\res{0_{A,B}}
	=0_{A,A}$.
\end{definition}

\begin{definition}[Disjointness tensor~\cite{giles2014investigation}]
  \label{def:disten}\rm
	An inverse category $\mathcal{C}$ is said to have a \textit{disjointness
	tensor} if it is equipped with a symmetric monoidal restriction bifunctor
	$.\oplus .:\mathcal{C}\times\mathcal{C}\rightarrow\mathcal{C}$, with as unit
	a restriction zero $0$ and morphisms $\iota_l:A\rightarrow A\oplus B$ and
	$\iota_r:B\rightarrow A\oplus B$ that are total, jointly epic, and such that
	their inverses are jointly monic and $\rc{\iota_l}~\rc{\iota_r} = 0_{A\oplus
	B}$.
\end{definition}

\begin{definition}[\cite{kaarsgaard2021join}]\label{def:rig}\rm
	Let us consider a join inverse category equipped with a symmetric monoidal
	tensor product $(\otimes,1)$ and a disjointness tensor $(\oplus,0)$ that are
	join preserving, and such that there are isomorphisms $\delta_{A,B,C} \colon A
	\otimes (B\oplus C) \rightarrow (A\otimes B)\oplus (A\otimes C)$ and $\nu_A
	\colon A \otimes 0 \rightarrow 0$. This is called a \emph{join inverse rig
	category}.
\end{definition}

\subsection{\texorpdfstring{$\DCPO$}{DCPO}-category}

We use the vocabulary of enriched category theory to shorten the discussion in
this section. The notions of enrichment required to understand the semantics
later is basic and should not frighten the reader. Categories in computer
science are usually \emph{locally small}, meaning that given two objects $A$ and
$B$, there is a \emph{set} of morphisms $A\to B$. Enrichment is the study of the
structure of those sets of morphisms, which could be vector spaces or
topological spaces for example, more details can be found in~\cite{KELLY196515,
kelly1982basic, maranda_1965}. It turns out that homsets in join inverse rig
categories are dcpos -- directed-complete partial orders, i.e. a partial-ordered
set with all directed joins. This allows us to consider fixpoints in
homsets. $\DCPO$ is the category of directed complete partial orders and
Scott-continuous functions -- monotone functions preserving joins. Dcpos are
often used for the denotational interpretation of different sorts of
$\lambda$-calculi, and more generally, to interpret recursive functions or
indefinite loops.

\begin{definition}[\cite{fiore04axiomatic}]\label{def:dcpocat}\rm
	A category enriched over $\DCPO$, also called a $\DCPO$-category, is a
	locally small category whose hom-sets are directly partial ordered and where
	composition is a continuous operation (\textit{i.e.} a morphism in $\DCPO$).
\end{definition}

It is proven in \cite{kaarsgaard2017join} that a join inverse category
can be considered enriched in $\DCPO$ without loss of generality.

\begin{lemma}\label{lem:dcpofun}
  Let $\CC$ be a join inverse rig category. The functors:
    $-\otimes - \colon \CC\times\CC\to\CC$,
    $-\oplus - \colon \CC\times\CC\to\CC$,
    $-\inv \colon \CC^{op}\to\CC$
  are $\DCPO$-functors, meaning that they preserve the dcpo structure
	of homsets.
\end{lemma}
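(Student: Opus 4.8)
The plan is to verify, for each functor $F$, that its action on homsets is Scott-continuous --- monotone and preserving of directed joins --- which is exactly what it means to be a $\DCPO$-functor once one recalls from~\cite{kaarsgaard2017join} that the relevant order is that of Definition~\ref{def:order} and that the directed joins realising the dcpo enrichment are instances of the joins of Definition~\ref{def:join}. Two preliminary remarks organise the argument. First, any directed subset $S$ of a homset is pairwise restriction compatible: if $s,s'\le u$ then $s = u\res s$ and $s' = u\res{s'}$, and since restriction idempotents commute, $s\res{s'} = u\res s\,\res{s'} = s'\res s$; hence $\bigvee_{s\in S} s$ exists and obeys all the equations of Definition~\ref{def:join}. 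Second, binary join preservation already forces monotonicity: if $f\le g$ then $\{f,g\}$ is compatible with $\bigvee\{f,g\} = g$, so any operation preserving that join sends $f$ below $g$.

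For $-\otimes-$ and $-\oplus-$ I would argue uniformly, since by Definition~\ref{def:rig} both are join preserving; concretely $\left(\bigvee_i f_i\right)\otimes g = \bigvee_i (f_i\otimes g)$ and $g\otimes\left(\bigvee_i f_i\right) = \bigvee_i (g\otimes f_i)$ for restriction-compatible families, and likewise for $\oplus$. By the second remark each bifunctor is then monotone and preserves directed joins \emph{separately} in each argument. It remains to pass from separate to joint continuity, which is the standard fact that a map of two variables into a dcpo that is monotone and continuous in each argument is jointly continuous: a directed $W$ in the product homset has $\bigvee W = (\bigvee\pi_1 W, \bigvee\pi_2 W)$, and a cofinality argument using directedness of $W$ rewrites $F(\bigvee W)$ as $\bigvee_{(f,g)\in W} F(f,g)$. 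Hence $-\otimes-$ and $-\oplus-$ are $\DCPO$-functors.

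The delicate case is the contravariant $-\inv$, where $\DCPO$-functoriality means Scott-continuity of $\CC(A,B)\to\CC(B,A)$. Monotonicity I would read off the axioms: from $g\res f = f$ one takes inverses, uses $(\res f)\inv = \res f$ to get $f\inv = \res f\,g\inv$, and then computes $g\inv\res{f\inv} = f\inv$, i.e.\ $f\inv\le g\inv$; in particular $S\inv \defeq \{s\inv : s\in S\}$ is directed whenever $S$ is. The heart of the matter is the identity $\left(\bigvee_{s\in S} s\right)\inv = \bigvee_{s\in S} s\inv$ for directed $S$, which I would establish by showing $\bigvee_{s} s\inv$ to be \emph{the} partial inverse of $\bigvee_s s$ and invoking uniqueness of inverses (Definition~\ref{def:invcat}). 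Expanding with join preservation of composition, $\left(\bigvee_t t\inv\right)\left(\bigvee_s s\right) = \bigvee_{s,t} t\inv s$; for $s=t$ the summand is $\res s$, while for $s\ne t$ directedness gives $u\ge s,t$, so $s = u\res s$, $t\inv = \res t\,u\inv$, and $t\inv s = \res t\,\res u\,\res s = \res s\,\res t$. Collecting terms and using $\res{\bigvee_s s} = \bigvee_s\res s$ gives $\bigvee_{s,t}\res s\,\res t = \res{\bigvee_s s}$, and symmetrically the other composite equals $\res{(\bigvee_s s)\inv}$.

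I expect this last collapse of the cross terms $t\inv s$ into restriction idempotents --- the one point where the inverse structure and the join structure must be reconciled --- to be the main obstacle; everything else is bookkeeping with the restriction and join axioms. With the three continuity statements in hand, each functor is a $\DCPO$-functor, as claimed.
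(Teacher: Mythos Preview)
Your argument is correct. The paper, however, does not supply a proof of this lemma at all: it states the result immediately after recalling from~\cite{kaarsgaard2017join} that join inverse categories are $\DCPO$-enriched, and treats the $\DCPO$-functoriality of $\otimes$, $\oplus$ and $(-)\inv$ as a direct consequence of the definitions (in particular of the join-preservation clause in Definition~\ref{def:rig}) and of that reference. Your write-up is therefore strictly more detailed than the paper's own treatment; the verification that $(-)\inv$ preserves directed joins via uniqueness of partial inverses, and the passage from separate to joint continuity for the bifunctors, are exactly the computations one would carry out to substantiate the claim, and they go through as you outline.
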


\subsection{Compactness}

Inductive data types are written in the syntax as some least fixed point. As
said earlier, types are represented as objects in the category, and thus a type
judgement is an object mapping, or rather an endofunctor.  Here, we show how to
consider fixed points of endofunctors in our categorical setting.

\begin{definition}[Initial Algebra]\rm
	Given an endofunctor $F \colon \CC\to\CC$, an $F$-algebra is a pair
	of an object $A$ and a morphism $f \colon FA\to A$. $F$-algebras form
	a category with $F$-algebras homomorphisms. An initial $F$-algebra is
	an initial object in the category of $F$-algebras.
\end{definition}

\begin{theorem}[Lambek's theorem]
	Given an endofunctor $F:\CC\to\CC$ and an $F$-initial algebra
	$(X,\alpha\colon FX\to X)$, $\alpha$ is an isomorphism.\qed
\end{theorem}

With Lambek's theorem, we know that an initial algebra provides an object $X$
such that $X\cong FX$; $X$ is a fixed point of the endofunctor $F$, as
requested. The existence of such fixed points is given by the next theorem
\cite[Corollary~7.2.4]{fiore04axiomatic}.

\begin{definition}[Ep-pair]
  \label{def:ep-pair}
  Given a $\dcpo$-category $\CC$, a morphism $e \colon X \to Y$ in $\CC$ is
  called an \emph{embedding} if there exists a morphism $p \colon Y \to X$ such
  that $p \circ e = \iid_X$ and $e \circ p \leq \iid_Y$. The morphisms $e$ and
  $p$ form an \emph{embedding-projection pair} $(e,p)$, also called
  \emph{ep-pair}.
\end{definition}

We recall that an \emph{ep-zero} \cite[Definition~7.1.1]{fiore04axiomatic},
is an initial object such that every morphism with it as source is an embedding,
and is also a terminal object such that every morphism with it as target is a
projection.

\begin{theorem}
	\label{th:param-alg}
	A $\DCPO$-category with an ep-zero and colimits of $\omega$-chains
	of embeddings is parametrised $\DCPO$-algebraically $\omega$-compact;
	meaning that for every $\DCPO$-functor
	$F\colon\CC\times\DD\to\DD$, there is a pair consisting of
	a $\DCPO$-functor $F^\noma\colon \CC\to\DD$ and an indexed family
	$\alpha^F=\{\alpha^F_A \colon F(A,F^\noma A)\to F^\noma A \}$ of
	initial $F(A,-)$-algebras.
	This pair is called a \emph{parametrised} initial algebra.
        \qed
\end{theorem}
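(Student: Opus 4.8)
The plan is to reconstruct Fiore's parametrised initial-algebra construction in the enriched setting, following the classical ``colimit of the initial $\omega$-chain'' argument of Adámek, and then to upgrade the resulting ordinary functor to a genuine $\DCPO$-functor. Before the construction proper, I would record the behaviour of embeddings under $\DCPO$-functors. Since $F$ is a $\DCPO$-functor it is locally monotone and locally continuous, and it therefore sends ep-pairs to ep-pairs: if $(e,p)$ satisfies $p\circ e = \iid$ and $e\circ p \leq \iid$ (cf.\ \Cref{def:ep-pair}), then $F p\circ F e = F(\iid)=\iid$ and $F e\circ F p = F(e\circ p)\leq F(\iid)=\iid$. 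Moreover the ep-zero $0$ is an initial object out of which every morphism is an embedding, so for each $A\in\CC$ the unique arrow $0\to F(A,0)$ is an embedding. The decisive technical fact is the \emph{limit-colimit coincidence}: in a $\DCPO$-category an $\omega$-colimit of a chain of embeddings is simultaneously the $\omega^{op}$-limit of the associated chain of projections, with colimiting embeddings $e_n$ and projections $p_n$ satisfying $\bigvee_n e_n\circ p_n = \iid$. A locally continuous functor consequently preserves such colimits.

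Second, for each fixed $A\in\CC$ I would form the initial sequence
\[
0 \;\to\; F(A,0) \;\to\; F(A,F(A,0)) \;\to\; \cdots,
\]
whose connecting maps are embeddings by the previous paragraph, and take its colimit $F^\noma A$, which exists by hypothesis. Because $F(A,-)$ preserves this $\omega$-colimit, the standard Adámek argument shows that the colimiting cocone induces a structure map $\alpha^F_A\colon F(A,F^\noma A)\to F^\noma A$ making $F^\noma A$ the initial $F(A,-)$-algebra (an isomorphism, by Lambek's theorem). This yields the object assignment $A\mapsto F^\noma A$ together with the indexed family $\alpha^F$.

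Third, for functoriality, a morphism $f\colon A\to A'$ in $\CC$ induces a natural transformation $F(f,-)\colon F(A,-)\Rightarrow F(A',-)$. Endowing $F^\noma A'$ with the $F(A,-)$-algebra structure $\alpha^F_{A'}\circ F(f,F^\noma A')$ and invoking initiality of $(F^\noma A,\alpha^F_A)$ defines a unique algebra morphism $F^\noma f\colon F^\noma A\to F^\noma A'$. The functor laws $F^\noma\iid=\iid$ and $F^\noma(g\circ f)=F^\noma g\circ F^\noma f$ then follow from the uniqueness clause of initiality, so $F^\noma$ is an ordinary functor.

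The main obstacle is the final step: showing that $F^\noma$ is a $\DCPO$-functor, i.e.\ that $f\mapsto F^\noma f$ is Scott-continuous on homsets. Here I would pass to the colimit presentation of both $F^\noma A$ and $F^\noma A'$ and exhibit $F^\noma f$ as a directed join of finite approximants, each assembled from the colimit embeddings $e_n^{A'}$, the projections $p_n^{A}$, and iterates of $F(f,-)$. The identity $\bigvee_n e_n\circ p_n=\iid$ supplied by the limit-colimit coincidence, together with continuity of composition and local continuity of $F$, then lets me commute the enriched directed join past the whole construction and conclude that each approximant is continuous in $f$ and that their join is $F^\noma f$. This is exactly the content of Fiore's parametrised $\omega$-compactness theorem, and the argument reduces to \cite[Corollary~7.2.4]{fiore04axiomatic}, on which I would ultimately rely for the delicate continuity bookkeeping.
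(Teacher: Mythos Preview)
The paper does not give a proof of this theorem: it is stated with a trailing \qed\ and attributed to \cite[Corollary~7.2.4]{fiore04axiomatic}, with no argument reproduced. Your proposal is a correct reconstruction of the standard Ad\'amek/Smyth--Plotkin/Fiore argument (preservation of ep-pairs by locally monotone functors, the initial $\omega$-chain, limit--colimit coincidence to obtain preservation of the colimit by $F(A,-)$, initiality, functoriality via uniqueness, and Scott-continuity of $F^\noma$ via the approximant formula), and since you explicitly fall back on \cite[Corollary~7.2.4]{fiore04axiomatic} for the continuity bookkeeping, your treatment is strictly more detailed than, but entirely consistent with, what the paper does.
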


The hypotheses of the theorem above are verified by the categories we want to work with, without loss of generality.

\begin{proposition}[\cite{kaarsgaard2017join}]
	\label{prop:omega-chains}
	Any join inverse rig category can be faithfully embedded in a rig join
	inverse category with colimits of $\omega$-chains of embeddings. \qed
\end{proposition}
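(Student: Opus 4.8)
The plan is to realise the desired target as a free cocompletion of $\CC$ under colimits of $\omega$-chains of embeddings, in the spirit of the bilimit completions familiar from classical domain theory. Concretely, I would take the objects of a new category $\widehat{\CC}$ to be $\omega$-chains of embeddings $X_0 \xrightarrow{e_0} X_1 \xrightarrow{e_1} X_2 \to \cdots$ in $\CC$; by Definition~\ref{def:ep-pair} each $e_i$ comes with a projection $p_i = e_i\inv$ satisfying $p_i e_i = \iid$ and $e_i p_i \leq \iid$. For the morphisms one may take $\widehat{\CC}(X,Y) = \varprojlim_i \varinjlim_j \CC(X_i,Y_j)$, the limit over $i$ formed by precomposing with projections and the colimit over $j$ by postcomposing with embeddings; using the $\DCPO$-enrichment of $\CC$ (recalled just before Lemma~\ref{lem:dcpofun}) each such homset is again a dcpo and composition stays continuous. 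The constant-chain assignment $A \mapsto (A = A = \cdots)$ then gives the embedding $\CC \hookrightarrow \widehat{\CC}$, and faithfulness is immediate from the definition of the homsets; this is the first point I would record.

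The feature that makes this completion compatible with reversibility is that chains of embeddings are self-dual: applying $(-)\inv$ componentwise turns a chain of embeddings into a chain of projections and exchanges $\varprojlim_i \varinjlim_j$ with $\varprojlim_j \varinjlim_i$, so that $(-)\inv$ sends $\widehat{\CC}(X,Y)$ to $\widehat{\CC}(Y,X)$. I would therefore lift the inverse operator by taking componentwise inverses, checking the axioms of Definition~\ref{def:invcat} levelwise; here Lemma~\ref{lem:dcpofun} is used, since a morphism of chains is recovered as a directed join of its finite approximants and $(-)\inv$ preserves such joins. Joins lift in the same componentwise fashion: a restriction-compatible set of chain-morphisms is restriction-compatible level by level, so $\bigvee$ is computed inside each homset of $\CC$, and the equations of Definition~\ref{def:join} descend from those of $\CC$ together with continuity of composition.

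I would then transport the rig structure. Both $-\otimes-$ and $-\oplus-$ act on chains by applying the functor componentwise; being $\DCPO$-functors by Lemma~\ref{lem:dcpofun}, they preserve the colimits of $\omega$-chains of embeddings out of which $\widehat{\CC}$ is built, hence descend to join-preserving bifunctors on $\widehat{\CC}$, with the coherence isomorphisms $\delta_{A,B,C}$ and $\nu_A$ of Definition~\ref{def:rig} and the disjointness-tensor data of Definition~\ref{def:disten} extending by naturality. Finally, $\widehat{\CC}$ has the required colimits essentially by construction: an $\omega$-chain of embeddings between objects of $\widehat{\CC}$ is an $\omega$-indexed diagram of $\omega$-chains, and a standard diagonalisation produces a single $\omega$-chain of embeddings that serves as its colimit, with the insertions as colimiting cocone.

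The main obstacle I anticipate is the tension between the two completions one naively wants: an inverse category calls for a self-dual construction, whereas a cocompletion under colimits is inherently non-self-dual. Restricting diagrams to chains of \emph{embeddings} — which automatically carry projections and are therefore stable under $(-)\inv$ — is precisely what resolves this, but it forces a careful verification that the componentwise inverse of a morphism of chains is again a well-defined morphism of chains, and that it interacts correctly with joins and with $\otimes$ and $\oplus$. Establishing this compatibility, rather than the more routine transport of the monoidal and distributivity data, is where the real work lies, and it rests squarely on Lemma~\ref{lem:dcpofun}, which guarantees that $\otimes$, $\oplus$ and $(-)\inv$ are all continuous and hence commute with the defining colimits.
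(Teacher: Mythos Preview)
The paper does not give its own proof of this proposition: it is stated with a citation to \cite{kaarsgaard2017join} and closed with \qed, so there is no in-paper argument to compare your proposal against. What you outline is essentially the construction carried out in that reference --- a bilimit-style completion whose objects are $\omega$-chains of embeddings, with the inverse, join, and rig structure transported componentwise using the $\DCPO$-enrichment --- so your plan is on the right track and matches the cited source in spirit.

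One point worth tightening if you ever write this out in full: your description of $(-)\inv$ as ``exchanging $\varprojlim_i\varinjlim_j$ with $\varprojlim_j\varinjlim_i$'' is heuristically right but, as stated, names two a~priori different dcpos. The reason they coincide is precisely the ep-pair structure (each embedding has a canonical projection, so the two iterated (co)limits compute the same thing), and making this explicit is exactly the compatibility check you already flag as the crux. Beyond that, your sketch is sound.
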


\section{Denotational semantics}
\label{sec:detonational}

We now show how to build a denotational semantics for the language we presented
thus far. The semantics is akin to the one presented in~\cite{nous21invcat} but
with extra structure to handle inductive types and recursive functions. While
the semantics is sound and adequate w.r.t. a notion of operational equivalence
between terms, the main interest of the semantics rest in showing that, given
some RTM $M$ whose semantics is a function $f$, we show that the semantics of
$\opn{isos}(M)$ is the same as $f$. This would provide us with a formal proof
that any computable reversible function can be captured by an iso.

\medskip
\noindent
\textsf{\bfseries Types.~}
Let us consider $\CC$ a join inverse rig category
(Definition~\ref{def:rig}).  We can assume without loss of generality
that $\CC$ satisfies the hypothesis of Theorem~\ref{th:param-alg}. In
order to deal with open types, we make use an auxiliary judgement for
types, of the form $X_1,\ldots,X_n\vDash A$, where $\{X_i\}_i$ is a
subset of the free type variables appearing in $A$. We interpret this
kind of judgement as a $\DCPO$-functor $\CC^{\abs\Theta}\to\CC$
written $\sem{\Theta\vDash A}$. This can be formally defined as a
(simple) inductive relation,
and the semantics is defined similarly to what is done in
\cite{fiore04axiomatic,zamdzhiev2021commutative}.
$\sem{\Theta\vDash\1}$ is the constant functor that maps to the tensor product
unit. $\sem{\Theta,X\vDash X}$ is a projection. The other judgements are
obtained by induction: if $\sem{\Theta\vDash A} = f$ and $\sem{\Theta\vDash B}
= g$, then $\sem{\Theta\vDash A\oplus B} = \oplus\circ\pv{f}{g}$ and
$\sem{\Theta\vDash A\otimes B} = \otimes\circ\pv{f}{g}$. Finally,
$\sem{\Theta\vDash \mu X.A} = \left(\sem{\Theta,X\vDash A}\right)^\noma$.
All this is summed up in Table~\ref{fig:rev-type-interpretation}.

\begin{table}[!h]
  \begin{align*}
    \sem{\Theta \vDash A} &\colon \CC^{\abs\Theta}\to\CC \\
		\sem{\Theta, X \vDash X} &= \Pi \\
    \sem{\Theta \vDash I} &= K_1 \\
		\sem{\Theta\vDash A\oplus B} &= \oplus\circ\pv{\sem{\Theta\vDash
		A}}{\sem{\Theta\vDash B}} \\
		\sem{\Theta\vDash A\otimes B} &= \otimes\circ\pv{\sem{\Theta\vDash
		A}}{\sem{\Theta\vDash B}} \\
		\sem{\Theta\vDash \mu X.A} &= \left(\sem{\Theta,X\vDash A}\right)^\noma
  \end{align*}
  \caption{Interpretation of types.}
  \label{fig:rev-type-interpretation}
\end{table}

Lemma~\ref{lem:dcpofun} and Theorem~\ref{th:param-alg} ensure that this is
well-defined. For closed types, we have
  $\den\one = 1$,
  $\den{A\oplus B} = \den A \oplus \den B$,
 $\den{A\otimes B} = \den A \otimes \den B$ and
 $\sem{\mu X.A} \cong \sem{A[\mu X.A/X]}$.
Ground iso types are represented by dcpos of morphisms in $\CC$, written
$\den{A\iso B} = \Hom_{\CC}(\den A, \den B)$. The type of iso functions $T_1
\to T_2$ is interpreted by the dcpo of Scott continuous maps between the two
dcpos $\sem{T_1}$ and $\sem{T_2}$, written $[\sem{T_1} \to \sem{T_2}]$.
The terms used to build isos are dependent in two contexts: variables in
$\Delta$ and isos in $\Psi$. In general, if $\Delta = x_1 \colon A_1, \dots,
x_m \colon A_m$ and $\Psi = \phi_1 \colon T_1, \dots, \phi_n \colon
T_n$, then we set
  $\den\Delta = \den{A_1}\otimes\dots\otimes\den{A_m}$ and
	$\den\Psi = \sem{T_1} \times\dots\times \sem{T_2}$,
with $\otimes$ being the monoidal product in $\CC$ and $\times$ the cartesian
product in $\DCPO$.

\begin{table}
	\begin{align*}
		\sem{\Psi ; \Delta \entail t \colon A}(g) &\in \CC(\sem\Delta, \sem A) \\
		\sem{\Psi ; \emptyset \entail * \colon I}(g)
		&= \iid_{\sem I} \\
		\sem{\Psi ; x \colon A \entail x \colon A}(g)
		&= \iid_{\sem A} \\
		\sem{\Psi ; \Delta \entail \inl t \colon A \oplus B}(g)
		&= \iota_l \circ \sem{\Psi ; \Delta \entail t \colon A}(g) \\
		\sem{\Psi ; \Delta \entail \inr t \colon A \oplus B}(g)
		&= \iota_r \circ \sem{\Psi ; \Delta \entail t \colon B}(g) \\
		\sem{\Psi ; \Delta_1, \Delta_2 \entail t_1 \otimes t_2 \colon A \otimes B}(g)
		&= \sem{\Psi ; \Delta_1 \entail t_1 \colon A}(g) \otimes
		\sem{\Psi ; \Delta_2 \entail t_2 \colon B}(g) \\
		\sem{\Psi ; \Delta \entail \fold t \colon \mu X . A}(g)
		&= \alpha^{\sem{X \vDash A}} \circ \sem{\Psi ; \Delta \entail t \colon A[\mu
		X . A / X]}(g)
	\end{align*}
	\begin{align*}
		\sem{\Psi \entailiso \omega \colon T}
		&\in \dcpo(\sem\Psi, \sem T) \\
		\den{\Psi,\phi \colon T \entailiso \phi \colon T} &=
		\pi_{\sem T} \\
		\sem{\Psi \entailiso \omega_2 \omega_1 \colon T_2}
		&= \rmeval \circ \pv{\sem{\Psi \entailiso \omega_2 \colon T_1 \to
		T_2}}{\sem{\Psi \entailiso \omega_1 \colon T_1}} \\
		\sem{\Psi \entailiso \lambda \phi . \omega \colon T_1 \to T_2} &=
		\rmcurry(\sem{\Psi, \phi \colon T_1 \entailiso \omega \colon T_2}) \\
		\sem{\Psi \entailiso \ffix \phi . \omega \colon T}
		&= \fix (\sem{\Psi, \phi \colon T \entailiso \omega \colon T})
	\end{align*}
\caption{Denotational semantics of the language in a join inverse rig $\DCPO$-category.}
\label{tab:sem}
\end{table}

\medskip
\noindent
\textsf{\bfseries Terms.~}  A well-formed term judgement
$\Psi ; \Delta \entail t \colon A$ has for semantics a Scott
continuous map
$\sem{\Psi ; \Delta \entail t \colon
  A}\in\dcpo(\sem\Psi,\CC(\sem\Delta,\sem A))$, defined as in
\Cref{tab:sem} when $g \in \sem\Psi$.  All this is well-defined in
$\DCPO$ provided that $\sem{\Psi \entailiso \omega \colon A \iso B}$
is. This last point is the focus of the next section.

\begin{lemma}
  \label{lem:rev-orthogonal-semantics}
  Given two judgements $\Psi ; \Delta_1 \entail t_1 \colon A$ and
  $\Psi ; \Delta_2 \entail t_2 \colon A$, such that $t_1~\bot~t_2$, we
  have for all $g \in \sem\Psi$ the equality
  $ \sem{t_1}(g)\pinv \circ \sem{t_2}(g) =
  0_{\sem{\Delta_2},\sem{\Delta_1}}.  $\qed
\end{lemma}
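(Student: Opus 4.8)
The plan is to proceed by induction on the derivation of $t_1 \bot t_2$ following the rules of Definition~\ref{def:orthogonality}, fixing once and for all an arbitrary $g \in \sem\Psi$ (kept implicit below) and showing $\sem{t_1}\pinv \circ \sem{t_2} = 0$ in each case. For the base rule $\inl{t_1} \bot \inr{t_2}$, Table~\ref{tab:sem} gives $\sem{\inl{t_1}} = \iota_l \circ \sem{t_1}$ and $\sem{\inr{t_2}} = \iota_r \circ \sem{t_2}$, so by contravariance of $-\pinv$ the goal becomes $\sem{t_1}\pinv \circ \iota_l\pinv \circ \iota_r \circ \sem{t_2} = 0$, for which it suffices to establish $\iota_l\pinv \circ \iota_r = 0$. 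This is exactly where the disjointness tensor axioms (Definition~\ref{def:disten}) enter: since $\iota_l,\iota_r$ are total we have $\iota_l\pinv \iota_l = \res{\iota_l} = \iid$ and $\iota_r \iota_r\pinv = \rc{\iota_r}$, whence $\iota_l\pinv = \iota_l\pinv \rc{\iota_l}$ and $\iota_r = \rc{\iota_r}\,\iota_r$; substituting these and invoking the disjointness equation $\rc{\iota_l}\,\rc{\iota_r} = 0$ together with the absorption laws $f0 = 0 = 0g$ (Definition~\ref{def:zero}) yields $\iota_l\pinv \circ \iota_r = \iota_l\pinv \circ (\rc{\iota_l}\,\rc{\iota_r}) \circ \iota_r = 0$. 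The symmetric rule $\inr{t_1}\bot\inl{t_2}$ is identical with $l$ and $r$ exchanged.

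For the inductive context rule, $C_\bot[t_1] \bot C_\bot[t_2]$ is derived from $t_1 \bot t_2$, and I would argue by case analysis on the outer shape of $C_\bot$, peeling off the topmost constructor so that its contribution cancels under $-\pinv$ and the induction hypothesis $\sem{C'[t_1]}\pinv \circ \sem{C'[t_2]} = 0$ applies to the immediate subcontext $C'$. For $C_\bot = \inl{C'}$ or $\inr{C'}$ the injection is total, so $\iota\pinv\iota = \iid$ and the goal collapses directly to the hypothesis. For $C_\bot = \pv{C'}{t}$ (and symmetrically $\pv{t}{C'}$) I use that $-\otimes-$ is a restriction bifunctor, so $(f\otimes h)\pinv = f\pinv \otimes h\pinv$ and composition distributes over $\otimes$, rewriting the goal as $(\sem{C'[t_1]}\pinv \circ \sem{C'[t_2]}) \otimes (\sem{t}\pinv \circ \sem{t}) = 0 \otimes k$, which vanishes because $-\otimes k$ is join preserving (Definition~\ref{def:rig}, Lemma~\ref{lem:dcpofun}) and hence sends the empty join to $0$. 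For $C_\bot = \fold{C'}$ the outer map is the initial-algebra map $\alpha$, which by Lambek's theorem is an isomorphism, so $\alpha\pinv\alpha = \iid$ and the goal again reduces to the hypothesis. Finally, for $C_\bot = \letv{p}{t}{C'}$ the interpretation of the let binds the scrutinee $t$ and post-composes the body, giving $\sem{C_\bot[t_i]} = \sem{C'[t_i]} \circ M$ for a common morphism $M$ independent of the hole, whence contravariance yields $M\pinv \circ (\sem{C'[t_1]}\pinv \circ \sem{C'[t_2]}) \circ M = M\pinv \circ 0 \circ M = 0$.

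The main obstacle is the base case. The disjointness tensor is axiomatised through the restriction identity $\rc{\iota_l}\,\rc{\iota_r} = 0$ rather than directly through the composite $\iota_l\pinv \circ \iota_r$, so the delicate step is the restriction-calculus bookkeeping---inserting the idempotents $\rc{\iota_l}$ and $\rc{\iota_r}$ via totality of the injections and the inverse laws of Definition~\ref{def:invcat}, then reducing to the disjointness equation---that converts a hypothesis about restrictions into the vanishing of the actual morphism. The remaining cases are routine once the top constructor is stripped, and the let case depends only on the schematic shape of its interpretation (pre-composition with the scrutinee, post-composition with the body), which is made precise in the next section.
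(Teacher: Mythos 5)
Your proposal is correct and follows essentially the same route as the paper: the key step $\iota_l\pinv \circ \iota_r = 0$, derived from the disjointness equation $\rc{\iota_l}\,\rc{\iota_r} = 0$ via restriction/inverse calculus, is exactly the paper's Lemma~\ref{lem:orthogonal-injections}, and your induction on the derivation of $\bot$ (stripping injections, tensors, folds, and the $\mathtt{let}$ precomposition) matches the paper's Lemmas~\ref{lem:rev-orthogonal-semantics-values} and~\ref{lem:rev-orthogonal-semantics}. Your write-up is in fact more explicit than the paper's in the inductive cases (e.g.\ the join-preservation argument for $0 \otimes k = 0$ and the Lambek/totality argument for $\fold{}$), but the underlying argument is the same.
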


\noindent
\textsf{\bfseries Isos.~}  Isos do only depend on function variables,
but they are innately morphisms, so their denotation will be similar
to terms ---a Scott continuous map.  We define the denotation of an
iso by induction on the typing rules.  The interpretation of an
iso-variable is direct, it is the projection on the last component.
The interpretations of evaluations and $\lambda$-abstractions are
usual in a cartesian closed category, in our case, $\dcpo$. All the
rules apart for the iso-abstraction are found in \Cref{tab:sem}. The
remaining rule, building an iso abstraction $\isobreduit$, needs more
details.

\begin{lemma}
  \label{lem:compati-clauses}
  Given a well-formed iso abstraction
  $\Psi \entailiso \isobreduit \colon A \iso B$, for all
  $g \in \sem\Psi$, the morphisms in $\CC$ given by
  $
    \sem{\Psi ; \Delta_i \entail e_i \colon B}(g)
    \circ \sem{\Psi ; \Delta_i \entail v_i \colon A}(g)\pinv,
  $
  with $i \in I$ are pairwise inverse compatible.\qed
\end{lemma}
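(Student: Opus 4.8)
The plan is to reduce the claim to two ``vanishing composite'' identities that follow directly from the orthogonality hypotheses, and then to read off inverse compatibility from the inverse-category identities $\res f = f\inv \circ f$ and $\res{f\inv} = f \circ f\inv$ of \Cref{def:invcat}. Throughout, I fix $g \in \sem\Psi$ and abbreviate $V_i = \sem{\Psi;\Delta_i \entail v_i \colon A}(g)$ and $E_i = \sem{\Psi;\Delta_i \entail e_i \colon B}(g)$, so that the morphisms in question are $m_i = E_i \circ V_i\inv \colon \sem A \to \sem B$, with inverse $m_i\inv = V_i \circ E_i\inv$ (using that $(-)\inv$ is a contravariant involution). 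The goal is to establish, for every $i \neq j$, both $m_i \smile m_j$ and $m_i\inv \smile m_j\inv$, which is exactly $m_i \asymp m_j$ in the sense of \Cref{def:restcomp}.

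First I would extract the semantic consequences of orthogonality. Since the iso abstraction is well typed, the typing rule of \Cref{tab:typisos} forces $v_i~\bot~v_j$ and $e_i~\bot~e_j$ for all $i \neq j$. Feeding these into \Cref{lem:rev-orthogonal-semantics} yields $V_i\inv \circ V_j = 0_{\sem{\Delta_j},\sem{\Delta_i}}$ and $E_i\inv \circ E_j = 0_{\sem{\Delta_j},\sem{\Delta_i}}$; applying $(-)\inv$ gives the symmetric equalities $V_j\inv \circ V_i = 0$ and $E_j\inv \circ E_i = 0$ as well.

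The heart of the argument is then a short computation of two cross-composites. Inserting the identities above gives
\[
  m_i\inv \circ m_j = V_i \circ (E_i\inv \circ E_j) \circ V_j\inv = 0_{\sem A,\sem A},
  \qquad
  m_i \circ m_j\inv = E_i \circ (V_i\inv \circ V_j) \circ E_j\inv = 0_{\sem B,\sem B},
\]
and, taking inverses, also $m_j\inv \circ m_i = 0$ and $m_j \circ m_i\inv = 0$. From here inverse compatibility is immediate: using $\res{m_j} = m_j\inv \circ m_j$ I get $m_i \circ \res{m_j} = (m_i \circ m_j\inv) \circ m_j = 0 = (m_j \circ m_i\inv) \circ m_i = m_j \circ \res{m_i}$, so $m_i \smile m_j$; dually, using $\res{m_j\inv} = m_j \circ m_j\inv$ I get $m_i\inv \circ \res{m_j\inv} = (m_i\inv \circ m_j) \circ m_j\inv = 0 = (m_j\inv \circ m_i) \circ m_i\inv = m_j\inv \circ \res{m_i\inv}$, so $m_i\inv \smile m_j\inv$. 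Together these give $m_i \asymp m_j$.

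I do not expect a genuine obstacle here---the computation is routine once the setup is right---but the step that deserves care is recognising that both halves of inverse compatibility are needed and that they draw on \emph{different} hypotheses: $m_i \smile m_j$ rests on orthogonality of the \emph{values} (through $V_i\inv \circ V_j = 0$), whereas $m_i\inv \smile m_j\inv$ rests on orthogonality of the \emph{expressions} (through $E_i\inv \circ E_j = 0$). This is precisely why the typing rule for isos demands non-overlapping on the left and on the right, and checking that \Cref{lem:rev-orthogonal-semantics} applies uniformly to both sides is the one point I would verify explicitly.
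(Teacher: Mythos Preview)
Your proof is correct and follows essentially the same route as the paper: extract the vanishing cross-composites $m_i\inv m_j = 0$ and $m_i m_j\inv = 0$ from \Cref{lem:rev-orthogonal-semantics} via the orthogonality hypotheses in the typing rule, then conclude $m_i \asymp m_j$. The only packaging difference is that the paper isolates the last step as a general auxiliary lemma (``$f\inv g = 0$ and $f g\inv = 0$ imply $f \asymp g$''), whereas you inline that computation directly; your observation that the two halves of inverse compatibility draw on orthogonality of the values and of the expressions respectively is exactly the content of that lemma's two hypotheses.
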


Each clause $v_i \iso e_i$ of an iso abstraction is given an interpretation
$\sem{e_i} \circ \sem{v_i}\pinv$. The previous lemma shows that in the case of
an iso abstraction, the interpretations of all clauses can be joined (in the
sense of Definition~\ref{def:join}). This join also generalises to the join in
$\dcpo$ as shown by the lemma below.

\begin{lemma}
  \label{lem:join-scott}
  Given a dcpo $\Xi$, two objects $X$ and $Y$ of $\CC$, a set of
  indices $I$ and a family of Scott continuous maps
  $\xi_i \colon \Xi \to \CC(X,Y)$ that are pairwise inverse
  compatible, the function $\bigvee_{i \in I} \xi_i:\Xi \to \CC(X,Y)$
  defined by $x \mapsto \bigvee_{i \in I} \xi_i(x)$ is Scott continuous.
  \qed
\end{lemma}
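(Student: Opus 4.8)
The plan is to establish the two defining properties of a Scott-continuous map --- monotonicity and preservation of directed suprema --- for the pointwise join $\Phi := \bigvee_{i \in I} \xi_i$, working throughout in the intrinsic order $\leq$ of Definition~\ref{def:order}. Recall that by the $\DCPO$-enrichment of join inverse categories from~\cite{kaarsgaard2017join} the dcpo structure of $\CC(X,Y)$ is exactly this order, and that directed subsets of $\CC(X,Y)$ are in particular restriction compatible, so their directed suprema exist and coincide with the joins $\bigvee$ of Definition~\ref{def:join}. First I would check that $\Phi$ is well defined: for each $x \in \Xi$ the family $\{\xi_i(x)\}_{i\in I}$ is pairwise inverse compatible by hypothesis, hence pairwise restriction compatible, so $\bigvee_{i\in I}\xi_i(x)$ exists in $\CC(X,Y)$.

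For monotonicity, let $x \leq x'$ in $\Xi$. Each $\xi_i$ is monotone (being Scott continuous), so $\xi_i(x) \leq \xi_i(x')$, and since $\xi_i(x') \leq \bigvee_{j}\xi_j(x') = \Phi(x')$, transitivity gives $\xi_i(x) \leq \Phi(x')$; thus $\Phi(x')$ is an upper bound of $\{\xi_i(x)\}_{i\in I}$. By the least-upper-bound property of the join (Definition~\ref{def:join}) we conclude $\Phi(x) = \bigvee_i \xi_i(x) \leq \Phi(x')$.

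The substantive step is preservation of directed joins. Let $D \subseteq \Xi$ be directed with supremum $d$. Monotonicity of $\Phi$ makes $\{\Phi(x)\}_{x\in D}$ a directed family, so $t := \bigvee_{x\in D}\Phi(x)$ exists in the dcpo $\CC(X,Y)$, and the inequality $t \leq \Phi(d)$ is immediate from $\Phi(x)\leq\Phi(d)$ for all $x \in D$. For the reverse inequality I would argue that $t$ is an upper bound of $\{\xi_i(d)\}_{i\in I}$: fixing $i$, Scott continuity of $\xi_i$ gives $\xi_i(d)=\bigvee_{x\in D}\xi_i(x)$, and for each $x\in D$ we have $\xi_i(x)\leq\bigvee_j\xi_j(x)=\Phi(x)\leq t$, so $t$ dominates the directed family $\{\xi_i(x)\}_{x\in D}$ and therefore $\xi_i(d)\leq t$; as this holds for every $i$, the least-upper-bound property of $\bigvee_i$ yields $\Phi(d)=\bigvee_i\xi_i(d)\leq t$. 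Combining the two inequalities gives $\Phi(d)=t=\bigvee_{x\in D}\Phi(x)$, which is exactly continuity.

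The delicate point is the interaction of the two flavours of supremum --- the categorical join over the index set $I$ and the directed join over $D$. I expect the main obstacle to be the bookkeeping of existence and of the two universal properties; the argument is clean precisely because both joins are least upper bounds for the \emph{same} order $\leq$, so the proof is the standard interchange-of-suprema computation. Notably, this route never requires the doubly-indexed family $\{\xi_i(x)\}_{i\in I,\,x\in D}$ to be restriction compatible as a whole: each supremum we form is either the categorical join at a fixed point $x$ (compatible by hypothesis) or a directed join for a fixed index $i$ (compatible because directed), which sidesteps what would otherwise be the awkward compatibility obligation.
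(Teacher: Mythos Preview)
Your argument is correct. The paper's own proof is a single sentence: it observes that the pointwise function $\bigvee_{i\in I}\xi_i$ is the join of the family $\{\xi_i\}_{i\in I}$ in the function dcpo $[\Xi\to\CC(X,Y)]$, and hence is automatically Scott continuous as an element of that dcpo. Your proof is the explicit unpacking of exactly that claim: showing that a pointwise join of Scott-continuous maps is again Scott continuous is precisely the interchange-of-suprema computation you carry out, and that is what one must verify to know the join lives in $[\Xi\to\CC(X,Y)]$ in the first place. So the two routes are not really different in substance; the paper compresses the work into an appeal to the function-space order, while you spell out the monotonicity and directed-join preservation by hand. What your version buys is self-containment and the nice observation that one never needs global compatibility of the doubly-indexed family; what the paper's version buys is brevity, at the cost of leaving the reader to supply the standard domain-theoretic fact that pointwise joins of (suitably compatible) families of continuous maps are continuous.
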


The interpretation of an iso abstraction is then given by:
\begin{equation*}
	\begin{split}
		\den{\Psi\entailiso \isobreduit \colon A\iso B} =
		\bigvee_{i \in I} (\comp \circ \pv{
		\sem{\Psi ; \Delta_i \entail e_i \colon B}
		}{
			\sem{\Psi ; \Delta_i \entail v_i \colon
			A}\pinv
		})
	\end{split}
\end{equation*}
The semantics is well-defined, in the sense that the interpretation of
$\Psi \entailiso \isobreduit \colon A \iso B$ is a Scott continuous
map between the dcpos $\sem\Psi$ and $\CC(\sem A, \sem B)$.

\section{Adequacy}
\label{sec:rev-adequacy}
\label{sub:rev-sound}
\label{sub:rev-adequacy}

We show a strong relationship between the operational semantics and the
denotational semantics of the language.
First, we fix a mathematical interpretation $\sem -$ in a
join inverse rig category $\CC$, that is $\dcpo$-enriched and whose
objects $0$ and $1$ are distinct.

Since the language handles non-termination, our adequacy statement
links the denotational semantics to the notion of termination in the
operational semantics: Given $~\vdash t\colon A$, $t$ is said to be
\emph{terminating} if there exists a value $v$ such that
$t\rightarrow^* v$. We either write $t\downarrow$, or $t \downarrow
v$.

\begin{theorem}[Adequacy]
  \label{th:rev-adeq}
  Given $~\vdash t\colon A$, $t\downarrow$ iff
  $\sem{\entail t \colon A} \neq 0_{\sem A}$.
\end{theorem}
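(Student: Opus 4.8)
The plan is to prove the two implications separately: the forward direction $t\downarrow \Rightarrow \sem{\entail t\colon A}\neq 0_{\sem A}$ is a soundness argument, while the converse is the genuinely adequate direction and requires a logical-relation argument to cope with the fixpoint.

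For the forward direction I would first establish a \emph{soundness lemma} stating that reduction preserves the denotation: if $t\to t'$ then $\sem{\entail t\colon A}=\sem{\entail t'\colon A}$. This proceeds by induction on the derivation of $t\to t'$, the interesting cases being the iso application $\isobreduit~v'\to\sigma(e_i)$ and the $\lett$-reduction. Both are handled by a substitution lemma relating $\sem{\sigma(t)}$ to precomposition of $\sem{t}$ with the denotation of the matched value, together with the defining equation $\sem{\isobreduit}=\bigvee_{i}\sem{e_i}\circ\sem{v_i}\pinv$ and Lemma~\ref{lem:rev-orthogonal-semantics}: orthogonality gives $\sem{v_j}\pinv\circ\sem{v_i}=0$ for $j\neq i$, so only the matching clause survives the precomposition. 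I would then show by induction on values that every closed value has a \emph{total}, hence non-zero, denotation: $\unit$ maps to $\iid_1$, the coprojections $\iota_l,\iota_r$ and the algebra isomorphism $\alpha$ are total, and the tensor of total maps is total. Since $\res{\sem{v}}=\iid_1$ and the objects $0$ and $1$ are assumed distinct, $\sem{v}\neq 0_{\sem A}$. Combining, $t\downarrow v$ yields $\sem{t}=\sem{v}\neq 0$.

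For the backward direction the obstacle is recursion, since the denotation of $\ffix\phi.\omega$ is the least fixed point $\bigvee_n(\sem{\omega})^n(0)$ and a direct induction on terms fails. I would instead define a type-indexed logical relation (a formal approximation relation) $R_A$ between denotations in $\CC(1,\sem A)$ and closed terms of type $A$, stipulating at the unit and base cases that a non-zero denotation forces termination, and propagating structurally through $\oplus$, $\otimes$ and $\mu$. This is paired with a companion relation $R_T$ on iso-types, defined so that related isos send $R$-related arguments to $R$-related results, the function-space case $T_1\to T_2$ being the usual ``related inputs to related outputs''. The crux is a \emph{fundamental lemma}: every well-typed term and every well-typed iso is related to its own denotation, proved by induction on typing derivations. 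The $\ffix$ case uses that $R_T$ is admissible, i.e. closed under the directed joins computing the least fixed point, so that the Scott-continuity of the denotation (Lemma~\ref{lem:dcpofun}, and the continuity established for iso abstractions) lets the relation pass through the approximants $(\sem{\omega})^n(0)$, each of which corresponds to a bounded number of unfoldings whose behaviour terminates or gets stuck in finitely many steps.

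The main difficulty I anticipate is precisely this fixpoint case of the fundamental lemma, where one must reconcile the call-by-name unfolding $\ffix\phi.\omega\to\omega[\ffix\phi.\omega/\phi]$ of the operational semantics with the iterated-join computation of the least fixed point in $\dcpo$, and verify that $R_T$ is admissible so that the join of related approximants remains related. The higher-order iso-abstraction and application cases then follow from the cartesian-closed structure of $\dcpo$ via the $\rmcurry$ and $\rmeval$ clauses of Table~\ref{tab:sem}. Once the fundamental lemma is in place, adequacy is immediate: instantiating it at $\vdash t\colon A$ relates $\sem{t}$ to $t$, so $\sem{t}\neq 0$ forces $t\downarrow$.
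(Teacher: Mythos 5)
Your forward direction coincides with the paper's: it proves soundness (Proposition~\ref{prop:rev-soundness}) by induction on $\to$, using the substitution and orthogonality lemmas exactly as you describe, and concludes via Corollary~\ref{cor:rev-soundness} that $t\downarrow v$ implies $\sem t=\sem v\neq 0$. For the backward direction, however, you take a genuinely different route. The paper does not use a logical relation: following \cite{pagani2014quantitative}, it introduces a \emph{finitary sublanguage} in which $\ffix\phi.\omega$ is replaced by bounded fixpoints $\nfix{n}\phi.\omega$ that unfold $n$ times and then become the empty, diverging iso. It proves this sublanguage strongly normalising (Theorem~\ref{th:fin-sn}, by splitting the reduction into an iso-level system $\isoto$, shown terminating via a translation into simply-typed $\lambda$-calculus, and a term-level system $\termto$ with a decreasing measure, the two systems commuting), derives \emph{finitary} adequacy (Theorem~\ref{th:fin-adeq}) from strong normalisation, progress and finitary soundness, and lifts the result to the full language via two facts: the interpretation of a term is the join of the interpretations of its finitary subterms (Lemma~\ref{lem:subfin-join-term}), and a finitary subterm normalising to a value forces the original term to reduce to that same value (Lemma~\ref{lem:fin-red-to}). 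Your admissibility-plus-continuity step (a non-zero directed join has a non-zero approximant) is exactly where the paper invokes Lemma~\ref{lem:subfin-join-term}; the two proofs externalise versus internalise the same approximation idea. What the paper's route buys is that it never has to define a type-indexed relation; what yours buys is that it avoids the auxiliary calculus and the strong-normalisation/commutation machinery.

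Two caveats on your route. First, the phrase ``propagating structurally through $\oplus$, $\otimes$ and $\mu$'' hides a real gap: with inductive types a logical relation cannot be defined by induction on types, since $\mu X.A$ unfolds to the larger type $A[\mu X.A/X]$. You would need to define the relation on closed values by induction on the (finite) values themselves, or as a least fixed point of relations, or via minimal invariance of the parametrised initial algebras; any of these works here, but it is the one place where a genuinely new construction is required and your sketch does not supply it. Second, in the fixpoint case of the fundamental lemma you do not need each approximant $F^n(0)$ to ``correspond to a bounded number of unfoldings'' operationally: the standard argument relates every approximant to the syntactic iso $\ffix\phi.\omega$ itself, by induction on $n$, using closure of the relation under expansion (if $\omega\to\omega'$ and $d$ is related to $\omega'$, then $d$ is related to $\omega$), and then concludes by admissibility. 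If you instead insist on matching approximants with syntactically bounded unfoldings, you are in effect rebuilding the paper's finitary sublanguage inside your relation, and you would then also need its termination argument.
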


\smallskip
\noindent
\textsf{\bfseries Soundness.~}
%
We start by showing the simple implication in
Theorem~\ref{th:rev-adeq} amount to soundness: the denotational
semantics is stable w.r.t. computation.

\begin{proposition}[Soundness]
  \label{prop:rev-soundness}
	Given a valid term judgement $~\entail t \colon A$,
  provided that $t \to t'$, then we have
	$
		\sem{\entail t \colon A} = \sem{\entail t' \colon A}.
	$\qed
\end{proposition}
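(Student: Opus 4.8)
The plan is to proceed by induction on the derivation of the reduction from Table~\ref{def:rewriting-system}, strengthening the statement so that it covers both layers at once: term reductions preserve the term semantics for every $g\in\sem\Psi$, and iso reductions $\omega\to\omega'$ preserve the iso semantics $\sem{\omega}=\sem{\omega'}$. The two halves must be proved simultaneously, since the congruence rules $\omega_1\omega_2\to\omega'_1\omega_2$ and $\omega~t\to\omega'~t$ propagate iso reductions underneath term contexts, while pattern-matching feeds values back into the term layer. The congruence and evaluation-context rules are the routine cases: the semantic operations $\oplus$, $\otimes$, $\comp$, $\iota_l$, $\iota_r$ and $\alpha^{\sem{X\vDash A}}$ are genuine functions (indeed $\DCPO$-functors, by Lemma~\ref{lem:dcpofun}), so applying the induction hypothesis to the premise and substituting equals for equals closes each of them immediately.

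The technical backbone consists of two substitution lemmas that I would establish first. The \emph{term-variable} substitution lemma states that whenever $\match{\sigma}{p}{v}$, the denotation $\sem{\sigma(t)}$ factors as $\sem{t}$ precomposed with the (total) denotation assembling the matched value $v$; concretely, the pattern $p$ binds the components of $v$ and substituting them amounts to composing with a morphism $\sem\sigma$ built from $\sem{v}$ in $\CC$. The \emph{iso-variable} substitution lemma states that $\sem{\omega_1[\omega_2/\phi]} = \sem{\omega_1}\circ\pv{\iid}{\sem{\omega_2}}$ as maps out of $\sem\Psi$, which is the standard reindexing identity in the cartesian closed category $\DCPO$.

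With these in hand the principal rules follow smoothly. For the $\beta$-rule $(\lambda\phi.\omega_1)\omega_2\to\omega_1[\omega_2/\phi]$ I would combine the cartesian closed identity $\rmeval\circ\pv{\rmcurry(f)}{g}=f\circ\pv{\iid}{g}$ with the iso substitution lemma. For the fixpoint rule $\ffix\phi.\omega\to\omega[\ffix\phi.\omega/\phi]$, the definition $\sem{\ffix\phi.\omega}=\fix(\sem{\omega})$ from Table~\ref{tab:sem} and the fixpoint equation $\lfp(F)=F(\lfp(F))$ yield, after the iso substitution lemma, exactly $\sem{\omega[\ffix\phi.\omega/\phi]}$. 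For the $\lett$-rule $\letv{p}{v}{t}\to\sigma(t)$ with $\match{\sigma}{p}{v}$, the term substitution lemma does the work directly.

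I expect the pattern-matching rule $\isobreduit~v'\to\sigma(e_i)$ with $\match{\sigma}{v_i}{v'}$ to be the main obstacle. Here $\sem{\isobreduit~v'}(g)$ unfolds, using the semantics of iso application together with the iso-abstraction clause of the denotational semantics, to $\bigl(\bigvee_{j\in I}\sem{e_j}(g)\circ\sem{v_j}(g)\pinv\bigr)\circ\sem{v'}(g)$, and the plan is to show that this join collapses to the single matching summand. Distributing the composition with $\sem{v'}(g)$ over the join is licensed by Definition~\ref{def:join}. Value denotations are total, so $\sem{v_i}(g)\pinv\circ\sem{v_i}(g)=\res{\sem{v_i}(g)}=\iid$; together with $\sem{v'}(g)=\sem{v_i}(g)\circ\sem\sigma$ and the term substitution lemma, the $i$-th summand becomes exactly $\sem{\sigma(e_i)}(g)$. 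For every $j\neq i$, non-overlapping guarantees $v_i~\bot~v_j$, so Lemma~\ref{lem:rev-orthogonal-semantics} gives $\sem{v_j}(g)\pinv\circ\sem{v'}(g)=\sem{v_j}(g)\pinv\circ\sem{v_i}(g)\circ\sem\sigma=0$, annihilating those summands. The delicate point is verifying that the surviving summand is genuinely equal to $\sem{\sigma(e_i)}$ rather than merely below it, which is precisely where totality of value denotations is indispensable.
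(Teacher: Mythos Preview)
Your proposal is correct and follows essentially the same architecture as the paper: induction on the reduction relation, with the congruence cases immediate and the principal rules handled via two substitution lemmas (the paper's Proposition~\ref{prop:rev-substitution-interpretation} for term variables and Lemma~\ref{lem:sem-subst-iso} for iso variables), the fixpoint equation, cartesian closure in $\DCPO$, and the join-collapse argument for pattern matching (the paper's Lemma~\ref{lem:rev-beta-sound}). The only cosmetic differences are that the paper factors iso soundness out into a separate Lemma~\ref{lem:rev-iso-sound} rather than running a single mutual induction, and that it isolates the identity $\sem{v_i}\pinv\circ\sem{v'}=\sem\sigma$ as a standalone matching lemma (Lemma~\ref{lem:rev-matching-semantics}) proved by induction on the matching derivation, whereas you recover the same equality from totality of value denotations together with the term substitution lemma; both routes are valid.
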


We can conclude that if
$~\entail t \colon A$ with $t \downarrow$, we have
$\sem{\entail t \colon A} \neq 0_{\sem A}$.
This shows one of the implications in \Cref{th:rev-adeq}. For the proof
of the other implication, we follow a syntactic approach, inspired by
the proof in \cite{pagani2014quantitative}.

\smallskip
\noindent
\textsf{\bfseries Proof of Adequacy.~}
Our proof of adequacy involves a finitary sublanguage, where the
number of recursive calls is controlled syntactically: instead of
general fixpoints, we introduce a family of finitary fixpoints
$\nfix n \phi . \omega$, unfolding $n$ times before reducing to the
empty iso $\{\}$, corresponding to the diverging iso.

We show the adequacy result for the finitary terms thanks to strong
normalisation, and then show that it implies adequacy for the whole
language; this is achieved by observing that a normalising finitary
term is also normalising in its non-finitary form.

\section{Semantics preservation}
\label{sec:sem-preservation}

In this section, we fix the interpretation $\sem -$ of the language in $\PInj$,
the category of sets and partial injections. This choice comes without any loss
of generality (see \cite{kastl1979inverse}), and allows us to consider
\emph{computable} functions. In this section, we show that given a computable,
reversible function $f \colon \interp{A} \to \interp{B}$, there exists an
iso $\omega \colon A\iso B$ such that $\interp{\omega} = f$. In order to do
that, we fix a canonical flat representation of our types.

\subsection{A Canonical Representation}

We define a canonical representation of closed values of some type $A$ into a new
type $\opn{Enc} = \mathbb{B} \oplus \one \oplus \one \oplus \one \oplus \one \oplus \natT$
(recall that $\mathbb{B} = \one\oplus \one$ and $\natT = \mu X. \one\oplus X$).
For simplicity let us name each the following terms of type $\opn{Enc}$ :
$\ttt = \inl{(\inl{\unit})}$,
$\fff = \inl{(\inr{\unit})},$
$S = \inr{(\inl{\unit})},$
$D^\oplus = \inr{(\inr{(\inl{\unit})})},$
$D^\otimes = \inr{(\inr{(\inr{(\inl{\unit})})})},$
$D^\mu = \inr{(\inr{(\inr{(\inr{(\inl{\unit})})})})}$,
and for every natural number $n$, we write $\tilde{n}$ for the term
$\inr{(\inr{(\inr{(\inr{(\inr{(\inr{(\overline{n})})})})})})}$, where
$\overline{n}$ is the encoding of natural numbers, as given
in~\Cref{ex:cantor}.
Now, given some closed type $A$, we can define $\floor{-}_A \colon A\iso
[\opn{Enc}]$ the iso that transform any close value of type $A$ into a list of
$\opn{Enc}$. The iso is defined inductively over $A$: $\floor{-}_\one = \{\unit \iso [S]\},$ and
  \[\floor{-}_{A\oplus B} = \left\{\begin{array}{lcl}
    \inl{(x)} & \iso & \letv{y}{\floor{x}_A}{D^\oplus :: \fff :: y} \\
    \inr{(x)} & \iso & \letv{y}{\floor{x}_B}{D^\oplus :: \ttt :: y}
  \end{array}\right\},\]
  \[\floor{-}_{A \otimes B} = \left\{\begin{array}{lcl}
    \pv{x}{y} & \iso & \letv{x'}{\floor{x}_A}{}
                       \letv{y'}{\floor{y}_B}{} \\
              & & \letv{\pv{z}{n}}{++~\pv{x'}{y'}}{}
                  D^\otimes :: \tilde{n} :: z
  \end{array}\right\},\]
  \[\floor{-}_{\mu X. A} = \left\{\begin{array}{lcl} \fold{x} & \iso &
    \letv{y}{\floor{x}_{A[\mu X. A/X]}}{} D^\mu :: y
    \end{array}\right\},\] where the iso $++ \colon [A]\otimes [A]
    \iso [A]\otimes \natT$ which concatenate two lists is defined as:
\[\fix f. \left\{\begin{array}{lcl}
  \pv{[]}{x} & \iso & \pv{x}{0} \\
  \pv{h :: t}{x} & \iso & \letv{\pv{y}{n}}{f~\pv{t}{x}}{}
  \pv{h :: y}{S(n)}
\end{array}\right\}.\]

\subsection{Capturing every computable injection}

With this encoding, every iso $\omega \colon A\iso B$ can be turned
into another iso $\floor{\omega} \colon [\opn{Enc}] \iso [\opn{Enc}]$
by composing $\floor{-}_A$, followed by $\omega$, followed by
$\floor{-}_B^{-1}$. This is in particular the case for isos that are
the images of a Turing Machine. We are now ready to see how every
computable function $f$ from $\interp{A} \to \interp{B}$ can be turned
into an iso whose semantics is $f$. Given a computable function $f
\colon \interp{A} \to \interp{B}$, call $M_f$ the RTM computing $f$.
Since $f$ is in $\PInj$, its output uniquely determines its input.
Following~\cite{bennett1973logical}, given the output of the machine
$M_f$ there exists another Turing Machine $M_f'$ which takes this
output and recover the original input of $M_f$.
In our encoding of a RTM, the iso will have another additional garbage
which consist of a natural number, i.e. the number of steps of the RTM $M_f$.
Using $\garRem{\opn{isos}(M_f)}{\opn{isos}(M_f')}$ we can obtain a single iso,
from the encoding of $A$ to the encoding of $B$, without any garbage left.
This also ensures that $\sem{\garRem{\opn{isos}(M_f)}{\opn{isos}(M_f')}}(x)
= (\sem{\opn{isos}(M_f)}(x))_1$, for any input $x$.

\begin{theorem}[Computable function as Iso]
	\label{th:computable}
	Given a computable function $f\colon \interp{A} \to \interp{B}$, let $g\colon
	\interp{[\opn{Enc}] \otimes [\opn{Enc}]} \to \interp{[\opn{Enc}] \otimes
	[\opn{Enc}]}$ be defined as $g = \interp{\floor{-}_B} \circ f \circ
	\interp{\floor{-}_A^{-1}}$, and let $\omega\colon A \iso B$ be defined as $\{x~\iso~
	\letv{y}{\floor{x}_A}{}
    \letv{y'}{\garRem{\opn{isos}(M_g)}{\opn{isos}(M_g')}~y}{}
    \letv{z}{\floor{y'}_B^{-1}}{z}\}$.
  Then $\interp{\omega} = f$.\qed
\end{theorem}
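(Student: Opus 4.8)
The plan is to compute $\interp{\omega}$ by compositionality alone and then collapse it to $f$ using the correctness of the encoding and of the garbage-removal construction. First I would unfold the denotational semantics of the single-clause iso $\omega = \{x \iso e\}$. Its only clause uses the variable pattern $x$, whose interpretation is $\iid_{\interp{A}}$, so the join of Lemma~\ref{lem:compati-clauses} reduces to a single term and $\interp{\omega} = \interp{e}$, the interpretation of the body with free variable $x \colon A$. The body is a chain of $\lett$-bindings, and by the $\letv{p}{\omega'~p'}{\cdots}$ clause of Table~\ref{tab:sem} (composition in $\CC$) this gives
\[
  \interp{\omega}
  = \interp{\floor{-}_B^{-1}}
  \circ \interp{\garRem{\opn{isos}(M_g)}{\opn{isos}(M_g')}}
  \circ \interp{\floor{-}_A}.
\]
Writing $e_A = \interp{\floor{-}_A}$, $e_B = \interp{\floor{-}_B}$, and $G = \interp{\garRem{\opn{isos}(M_g)}{\opn{isos}(M_g')}}$, and using that inversion of isos is interpreted by the partial inverse in the inverse category (a consequence of Theorem~\ref{thm:isos-iso} together with Proposition~\ref{prop:rev-soundness} and Theorem~\ref{th:rev-adeq}), the decoding satisfies $\interp{\floor{-}_B^{-1}} = e_B^\pinv$, so that $\interp{\omega} = e_B^\pinv \circ G \circ e_A$.

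Second I would identify the middle morphism $G$ with $g$. Since $M_g$ is an RTM computing $g$, the Properties of RTM give that $\opn{Sem}(M_g)$ is exactly the graph of $g$, and the theorem capturing the exact semantics of a RTM shows that $\opn{isos}(M_g)$ realises $\opn{Sem}(M_g)$ up to the step-count garbage. Combined with the garbage-removal identity recorded just before the statement, namely $\sem{\garRem{\opn{isos}(M_g)}{\opn{isos}(M_g')}}(x) = (\sem{\opn{isos}(M_g)}(x))_1$, Soundness (Proposition~\ref{prop:rev-soundness}) and Adequacy (Theorem~\ref{th:rev-adeq}) promote this operational correctness to the denotational equality $G = g$ in $\PInj$.

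Third I would substitute the definition $g = e_B \circ f \circ e_A^\pinv$ and use that $\floor{-}_A$ and $\floor{-}_B$ are \emph{total} injections on closed values: the encoding is defined on every closed value and is uniquely decodable, so in $\PInj$ we have $e_A^\pinv \circ e_A = \iid_{\interp{A}}$ and $e_B^\pinv \circ e_B = \iid_{\interp{B}}$. This yields
\[
  \interp{\omega}
  = e_B^\pinv \circ (e_B \circ f \circ e_A^\pinv) \circ e_A
  = \iid_{\interp{B}} \circ f \circ \iid_{\interp{A}}
  = f,
\]
as desired.

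The step I expect to be the main obstacle is the second one: rigorously identifying $G$ with $g$. It bundles several earlier results (RTM realisability, the Bennett garbage-removal trick, soundness and adequacy) and requires care in matching the tape-string representation used by the RTM with the list-of-$\opn{Enc}$ representation of values, and in verifying that the garbage produced (the natural number counting transition steps) is genuinely discarded so that only the first component survives. A secondary technical point is the typing of $g$: as written its composite has type $\interp{[\opn{Enc}]}\to\interp{[\opn{Enc}]}$, and I would reconcile this with the stated $\interp{[\opn{Enc}]\otimes[\opn{Enc}]}$ domain arising from the paired-tape presentation of the RTM before invoking the realisability theorem; I would also isolate the totality and injectivity of $\floor{-}_A$ and $\floor{-}_B$ as a small auxiliary lemma, since the final collapse depends on them.
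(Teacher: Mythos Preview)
Your proposal is correct and follows essentially the same route as the paper: unfold the single-clause semantics to a composition $\interp{\floor{-}_B^{-1}} \circ G \circ \interp{\floor{-}_A}$, identify $G$ with $g$ via the RTM simulation theorem plus soundness, then collapse using $\interp{\floor{-}_A^{-1}} \circ \interp{\floor{-}_A} = \iid$ (and similarly for $B$). Your extra care about the typing mismatch in $g$ and the need to argue totality of the encoders is well placed; the paper's own proof glosses over both points and simply invokes Proposition~\ref{prop:rev-soundness} for the collapse and Theorem~\ref{thm:rtm-iso-simulation} with Proposition~\ref{prop:rev-soundness} for $G=g$.
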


\section{Conclusion}

In this paper, we built upon the language presented
in~\cite{nous21invcat,chardonnet2022curry,sabry2018symmetric} in order
to represent any partial injective function which can manipulate
inductive types.  We showed how one can encode any Reversible Turing
Machine, hence the (reversible) Turing Completeness, and we gave a
denotational semantics based on join inverse rig categories, together
with a soundness and adequacy theorem. Most notably, we showed that for
any computable function $f$ from $\PInj$, there exists an iso whose
semantics is $f$, thus our language fully characterises all of the
computable morphisms in $\PInj$.

\bibliography{ref}

\newpage
\appendix

\section{Proofs of Section~\ref{sec:language}}

The matching predicate, written $\sigma[v] = v'$ is defined
in~\Cref{tab:pattern_matching}, and one can see that $\sigma[v] = v'$
if and only if $\sigma(v) = v'$.

\begin{table}[t]
  \centering
\[\infer{\sigma[x] = e}{\sigma = \{ x \mapsto e\}}
\quad
\infer{\sigma[\inl{(e)}] = \inl{(e')}}{\sigma[e] = e'}
\quad
\infer{\sigma[\inr{(e)}] = \inr{(e')}}{\sigma[e] = e'}
\quad
\infer{\sigma[\fold{(e)}] = \fold{(e')}}{\sigma[e] = e'}
\]
\[
\infer{
\sigma[\pv{e_1}{e_2}] = \pv{e'_1}{e'_2}
}{
\sigma_1[e_1] = e'_1
&
\sigma_2[e_2] = e'_2
&
\text{supp}(\sigma_1) \cap \text{supp}(\sigma_2) = \emptyset
&
\sigma = \sigma_1\cup\sigma_2
} \quad \infer{\sigma[\unit] = \unit}{}
\]
  \caption{Pattern-matching.}
  \label{tab:pattern_matching}
\end{table}

We can first show that two different orthogonal values cannot match a same
value:

\begin{lemma}
  \label{lem:isos:orthogonality}
  Given $S$, a finite set of values of type $A$ such that for
  all $v_1, v_2\in S, v_1\bot v_2$, and a value $v$ of type $A$, if
  there exists $v_1, v_2\in S$ and $\sigma_1, \sigma_2$ such that
  $\sigma_1(v_1) = v$ and $\sigma_2(v_2) = v$ then $v_1 = v_2$.\qed
\end{lemma}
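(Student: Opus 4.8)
The plan is first to reduce the statement about the finite set $S$ to the two-clause case. Suppose $v_1,v_2\in S$ both match $v$, witnessed by $\sigma_1(v_1)=v$ and $\sigma_2(v_2)=v$. If $v_1\neq v_2$, then pairwise orthogonality of $S$ gives $v_1~\bot~v_2$, and from this I will derive a contradiction; hence $v_1=v_2$. So it suffices to establish the key claim: \emph{if $v_1~\bot~v_2$, then there is no value $v$ together with substitutions $\sigma_1,\sigma_2$ such that $\sigma_1(v_1)=v$ and $\sigma_2(v_2)=v$.} Throughout I would use that $\sigma[-]=-$ coincides with the substitution action $\sigma(-)$ (Table~\ref{tab:pattern_matching}) and that substitution commutes with every value constructor.

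I would prove the key claim by induction on the derivation of $v_1~\bot~v_2$ from \Cref{def:orthogonality}. In the base cases $\inl{t_1}~\bot~\inr{t_2}$ (and symmetrically $\inr{t_1}~\bot~\inl{t_2}$) we have $\sigma_1(\inl{t_1})=\inl{\sigma_1(t_1)}$ and $\sigma_2(\inr{t_2})=\inr{\sigma_2(t_2)}$, so $\sigma_1(v_1)=v=\sigma_2(v_2)$ would force $\inl{(\cdots)}=\inr{(\cdots)}$, which is impossible since the two head constructors differ. This is the heart of the argument: orthogonality always originates from a left/right constructor clash.

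For the inductive case $C_\bot[t_1]~\bot~C_\bot[t_2]$, derived from $t_1~\bot~t_2$: since $C_\bot$ is the same context on both sides and substitution commutes with the constructors appearing in it, the equalities $\sigma_1(C_\bot[t_1])=v=\sigma_2(C_\bot[t_2])$ force the two occurrences of the hole to land at the same structural position of $v$; extracting that common sub-value yields $\sigma_1(t_1)=\sigma_2(t_2)$. Thus $t_1$ and $t_2$ match a common value, contradicting the induction hypothesis applied to $t_1~\bot~t_2$. Note that because $v_1,v_2$ are values, the let-context clause of the $C_\bot$ grammar can never occur (a $\mathtt{let}$ term is not a value), so only the value-forming contexts $\inl{C_\bot}$, $\inr{C_\bot}$, $\pv{C_\bot}{t}$, $\pv{t}{C_\bot}$ and $\fold{C_\bot}$ need to be handled.

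The main obstacle is the inductive step, specifically justifying cleanly that matching $C_\bot[t_1]$ and $C_\bot[t_2]$ against the same $v$ forces their holes to coincide. The cleanest route is a small auxiliary observation that $\sigma(C_\bot[t]) = C'_\bot[\sigma(t)]$ for a context $C'_\bot$ obtained from $C_\bot$ by applying $\sigma$ to its non-hole subterms; the outer shape imposed by $C'_\bot$ on $v$ is then identical in both derivations, so the hole contents must be equal sub-values of $v$. Everything else is a routine constructor-clash argument.
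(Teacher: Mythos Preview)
Your argument is correct, but it is organised differently from the paper's. The paper proceeds by induction on the matching derivation $\sigma_1[v_1]=v$ (equivalently on the structure of $v_1$): in each constructor case it uses the shape of $v$ to constrain $v_2$, rules out the possibility that $v_2$ is a variable (nothing is orthogonal to a variable), and then appeals to the fact that orthogonality ``peels off'' a common head constructor, e.g.\ $\inl{v_1'}\,\bot\,\inl{v_2'}$ entails $v_1'\,\bot\,v_2'$, so the induction hypothesis applies to the subcomponents. Your route instead extracts the contrapositive two-value statement and inducts on the derivation of $v_1\,\bot\,v_2$; the work shifts to the context rule, where you need the auxiliary observation that substitution preserves the constructor path of $C_\bot$ so that the hole contents in $v$ coincide. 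Your decomposition gives a slightly cleaner standalone lemma (``orthogonal values have no common instance''), at the price of that side lemma on contexts; the paper's version avoids the context lemma but relies on the (implicit) inversion that a common head constructor on both sides of $\bot$ can be stripped. Either approach works here and neither is materially shorter.
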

\begin{proof}[Proof of Lemma~\ref{lem:isos:orthogonality}]
  By induction on $\sigma_1[v_1] = v$
  \begin{itemize}
      \item Case $\sigma_1[x] = v$: There is no value $v_2$ such
      that $v_2 \bot x$, hence $S = \set{x}$ so $v_2 = x$.
      \item Case $\sigma_1[\unit] = \unit$: Similarly.
      \item Case $\sigma_1[\inl{v_1'}] = \inl{v'}$: By definition of
      the pattern-matching we have $\sigma_1[v_1'] = v'$. The only
      possible value for $v_2$ is then $\inl{v_2'}$. By IH we get
      $v_1' = v_2'$ hence $v_1 = v_2$.
      \item The case for $\inr{v_1}$ and $\fold{v_1}$ are similar.
      \item Case for $\sigma[\pv{v_1^1}{v_1^2}] = \pv{v^1}{v^2}$: By
      definition of the pattern-matching we have $\sigma_1[v_1^1] =
      v^1$ and $\sigma_2[v_1^2] = v^2$. By orthogonality, we have
      that $v_2 = \pv{v_2^1}{v_2^2}$ and therefore by IH $v_1^1 =
      v_2^1$ and $v_1^2 = v_2^2$ so $v_1 = v_2$. \qed
  \end{itemize}
\end{proof}

\begin{proof}[Proof of Lemma~\ref{lem:inv-type}]
  By induction on $\omega$.
  \begin{itemize}
    \item $\omega = \isovar \colon T$. Therefore $\Psi = \isovar \colon T \cup
    \Psi'$ for some $\Psi'$. Then $\Psi^{-1} = \isovar \colon T^{-1} \cup
    \Psi'^{-1}$ and $\omega^{-1} = \isovar \colon T^{-1}$.

    \item If $\omega = \omega_1~\omega_2$, in this case we get $\omega_1$ is of
    type $(A\iso B) \to T$ while $\omega_2$ is of type $A\iso B$. Since
    $(\omega_1~\omega_2)^{-1} = \omega_1^{-1}~\omega_2^{-1}$ we get that
    $\omega_1^{-1} \colon (B\iso A) \to T^{-1}$ and $\omega_2 \colon B\iso A$ which
    can be proved by induction hypothesis as the context $\Psi$ is duplicated on
    both premises.

    \item If $\omega = \fix \isovar. \omega' \colon (A_i \iso B_i)_{i\in I} \to
    (A\iso B)$, then we obtain

    $\infer{\Psi \vdash_\omega \fix \isovar. \omega' \colon (A_i
    \iso B_i)_{i\in I} \to (A\iso B)}{\Psi, \isovar \colon A\iso B \vdash_\omega \omega' :
    (A_i \iso B_i)_{i\in I}}$.

    By definition, we get

    $\infer{\Psi^{-1} \vdash_\omega \fix\isovar. \omega'^{-1} \colon (B_i \iso A_i)_{i\in I}
    \to (B\iso A)} {\Psi^{-1}, \isovar \colon B\iso A \vdash_\omega \omega'^{-1} \colon (B_i \iso
    A_i)_{i\in I} \to (B\iso A)}$ which can then be proved by induction
    hypothesis.

    \item If $\omega = \lambda \isovar. \omega'$, is similar.

    \item If $\omega = \isobasique$, is direct as the typing contexts stays
    the same for the values and expressions and $OD$ is trivially preserved and
    by induction hypothesis on all isos subterms of $\omega$
    \qed
  \end{itemize}
\end{proof}

\begin{proof}[Proof of~\Cref{lem:inv-rewriting}]
  By a direct induction on $\to$, there is only two possible rewriting rule for
  an iso: either it is $(\lambda \isolambdavar. \omega_1) \omega_2$ which
  rewrites to $\omega_1[\omega_2/\isolambdavar]$ and whose inverse
  $(\lambda\isolambdavar. \omega_1^{-1}) \omega_2^{-1}$ rewrites to
  $\omega_1^{-1} [\omega_2^{-1}/\isolambdavar]$. The case where the
  iso is of the shape $\fix \isovar. \omega$ is also direct. All the congruence
  case are handled by induction hypothesis.
\end{proof}

\begin{lemma}[Commutativity of substitution~\cite{chardonnet2022curry}]
  \label{isos:lem:commutativity-substitution}
  Let $\sigma_1, \sigma_2$ and $v$, such that $\sigma_1\cup\sigma_2$
  closes $v$ and $\mathtt{supp}(\sigma_1) \cap \mathtt{supp}(\sigma_2) =
  \emptyset$ then $\sigma_1(\sigma_2(v)) = \sigma_2(\sigma_1(v))$
  \end{lemma}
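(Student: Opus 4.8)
The plan is to prove the statement by structural induction on the value $v$, which by the grammar of values ranges over the six cases $\unit$, $x$, $\inl{v'}$, $\inr{v'}$, $\pv{v_1}{v_2}$ and $\fold{v'}$. The only genuinely interesting case is the variable, so I would dispatch the constructor cases first and then concentrate on it.

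For the constructor cases I would simply unfold the structural definition of substitution on both sides. Since $\sigma(\inl{t}) = \inl{\sigma(t)}$, and likewise for $\inr{-}$, $\fold{-}$, and componentwise for pairs, both $\sigma_1(\sigma_2(-))$ and $\sigma_2(\sigma_1(-))$ commute past the constructor, so the goal reduces to the same identity on the immediate subterms. To invoke the induction hypothesis there, I would observe that the hypotheses are inherited by subterms: the disjointness of supports is unchanged, and the closure hypothesis descends because $(\sigma_1\cup\sigma_2)$ commutes with constructors, so e.g. $(\sigma_1\cup\sigma_2)(\pv{v_1}{v_2}) = \pv{(\sigma_1\cup\sigma_2)(v_1)}{(\sigma_1\cup\sigma_2)(v_2)}$ is closed iff both components are. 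The induction hypothesis then closes each of these cases.

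The base case $v = \unit$ is immediate, since both sides equal $\unit$. The crux is the variable case $v = x$, and this is where I expect the main obstacle to lie. Here the assumption that $\sigma_1\cup\sigma_2$ closes $v$ forces $x$ to belong to $\mathtt{supp}(\sigma_1)\cup\mathtt{supp}(\sigma_2)$, and disjointness then places it in exactly one, say $\mathtt{supp}(\sigma_1)$, with $\{x\mapsto w\}\subseteq\sigma_1$. The key point to spell out is that $w$ itself must be closed: it is precisely the closed term $(\sigma_1\cup\sigma_2)(x) = w$, hence has no free variables. Consequently $\sigma_2$ acts as the identity on $w$, and $\sigma_2$ also leaves $x$ fixed since $x\notin\mathtt{supp}(\sigma_2)$. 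This yields $\sigma_1(\sigma_2(x)) = \sigma_1(x) = w = \sigma_2(w) = \sigma_2(\sigma_1(x))$, and the symmetric case $x\in\mathtt{supp}(\sigma_2)$ is identical with the roles swapped. The whole argument thus hinges on using the ``closes $v$'' hypothesis to guarantee closedness of the substituted value, which is exactly what makes the two substitutions non-interfering.
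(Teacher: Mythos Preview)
Your proposal is correct and follows essentially the same route as the paper: structural induction on $v$, with the constructor cases reducing to the induction hypothesis and the variable case being the only interesting one. Your treatment of the variable case is in fact more careful than the paper's, which simply asserts $\sigma_1(\sigma_2(x)) = v' = \sigma_2(\sigma_1(x))$ without spelling out why the second substitution fixes $v'$; your use of the ``closes $v$'' hypothesis to conclude that the substituted value is closed is exactly the justification that makes this step go through.
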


  \begin{proof}
  Direct by induction on $v$ as $\sigma_1$ and $\sigma_2$ have disjoint
  support: In the case where $v = x$ then either $\set{x\mapsto v'} \in
  \sigma_1$ or $\set{x\mapsto v'}\in\sigma_2$ and hence
  $\sigma_1(\sigma_2(x)) = v' = \sigma_2(\sigma_1(x))$. All the other
  case are by direct induction hypothesis as the substitutions enter the
  subterms.
\end{proof}

\begin{proof}[Proof of Theorem~\ref{thm:isos-iso}]
  Without lost of generality we only consider the case where $\omega$ is a set
  of clauses, then by induction hypothesis on the size of $\omega$:
    \begin{itemize}
        \item Case where $\omega = \{v_1\iso v_1\alt \dots\alt v_n\iso v_n'\}$
        then $\omega^{-1}(\omega~v_0)$, by non-overlapping and the fact that the
        whole term reduce to a value, there exists a $v_i$ such that
        $\sigma[v_i] = v_0$ and hence the term reduces to $\omega^{-1}
        \sigma(v_i')$. It is clear that $\sigma[v_i'] = \sigma(v_i)$ and hence
        the terms reduces to $\sigma(v_i)$, but by the first pattern-matching we
        know that $\sigma(v_i) = v_0$, which concludes the case.

        \item Case where $\omega = \isobasique$,

    for simplicity of writing we
    write a single clause:

    \[\begin{array}{c} \left(
        \begin{array}{l@{~}c@{~}l} v_1&{\iso}&\letv{p_1}{\omega_1~p_1'}{} \\
                && \cdots \\
                && \letv{p_n}{\isoterm_n~p'_n}{v'_1}
        \end{array}
        \right)^{-1} := \left(
        \begin{array}{lcl@{}l@{}l} v'_1&{\iso}&\letv{p'_n}{\isoterm_n^{-1}~p_n}{} \\
                && \cdots \\
                && \letv{p'_1}{\isoterm_1^{-1}~p_1}{v_1}
        \end{array}
        \right).
        \end{array}
    \]

    Take some closed value $\entaile v_0 \colon A$ such that $\sigma[v_1] = v_0$.

    By linearity, we can decompose $\sigma$ into $\sigma_1, \dots, \sigma_n,
    \sigma_{n+1}$ such that, after substitution we obtain
    \[ \letv{p_1}{\omega_1\,\sigma_1(p'_1)}{} \dots
    \letv{p_n}{\omega_n\,\sigma_n(p'_n)}{}
    \sigma_{n+1}(v_1')\]

    Given By our hypothesis that the whole term reduces to a value, we know that
    each {\lett} construction will reduce, and by the rewriting strategy we get:
    \[\begin{array}{c} \letv{p_1}{\omega_1\,\sigma_1(p'_1)}{}
        \\
        \dots \\
        \letv{p_n}{\omega_n\,\sigma_n(p'_n)}{} \\
        \sigma_{n+1}(v_1') \end{array}
        \to \begin{array}{c} \letv{p_1}{\overline{v_1}}{}
            \\
            \dots \\
            \letv{p_n}{\omega_n\,\sigma_n(p'_n)}{} \\
            \sigma_{n+1}(v_1') \end{array}
        \to \begin{array}{c} \letv{p_2}{\omega_2\,\gamma_1^2(\sigma_1(p'_1))}{}
            \\
            \dots \\
            \letv{p_n}{\omega_n\,\gamma_1^n(\sigma_n(p'_n))}{} \\
            \sigma_{n+1}(v_1') \end{array} \] The final term reduces to
        $\gamma_n^n(\dots(\gamma_1^n(\sigma_{n+1}(v_1')))\dots)$
    and creates a new substitution $\gamma_i$, the term will hence reduce
    to $\gamma_n(\dots(\gamma_1(\sigma_{n+1}(v_1')))\dots)$. Let $\delta =
    \cup_i \gamma_i \cup \sigma_{n+1}$

    We are left to evaluate

    \[\left(
          \begin{array}{lcl@{}l@{}l} v_1'&{\iso}&
            \letv{p'_n}{\isoterm_n^{-1}~p_n}{} \\
               && \cdots \\
               && \letv{p'_1}{\isoterm_1^{-1}~p_1}{v_1}
          \end{array}
          \right)~\delta(v_1')\]

    We get $\delta[v_1'] = \delta(v_1')$.

    We know that each $\gamma_i$ closes only $p_i$, we can
    therefore substitute the term as:
    \[\letv{p_n'}{\omega_n~\gamma_n(p_n)}{\dots}
    \letv{p_1'}{\omega_1\gamma_1(p_1)}{\sigma_{n+1}(v_1')}\]

    By induction hypothesis, Each {\lett} clause will re-create the
    substitution $\sigma_i$, we know this as the fact that the initial
    {\lett} construction, $\letv{p_i}{\omega_i~\sigma_i(p_i')}{\dots}$
    reduces to $\letv{p_i}{v_i}{\dots}$ While the new one,
    $\letv{p_i'}{\omega^{-1}\gamma_i(p_i)}{\dots}$, is, by definition
    of the substitution the same as $\letv{p_i'}{\omega^{-1}
    v_i}{\dots}$

    Then, since we know that $v_i$ is the result of
    $\omega~\sigma_i(p_i')$, we get by induction hypothesis $\sigma(p_i')$
    as the result of the evaluation.

    Therefore, after rewriting we obtain:
    $\sigma_n(\dots(\sigma_1(\sigma_{n+1}(v_1')))\dots)$.
    By~\Cref{isos:lem:commutativity-substitution} we get
    $\sigma_1(\dots(\sigma_n(\sigma_{n+1}(v_1')))\dots)$ which is
    equal to $v$.
    \end{itemize}
\end{proof}

\begin{lemma}
  \label{lem:iso-subst-type}
  If $\Psi, \psi \colon T_1 \entailiso \omega_2 \colon T_2$ and $\Psi
  \entailiso \omega_1 \colon T_1$ are well-formed, then $\Psi \entailiso
  \omega_2[\omega_1/\psi] \colon T_2$.
\end{lemma}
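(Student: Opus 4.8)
The plan is to prove the statement by induction on the derivation of $\Psi, \psi \colon T_1 \entailiso \omega_2 \colon T_2$, carried out \emph{simultaneously} with the companion substitution statement for terms: \emph{if $\Psi, \psi \colon T_1; \Delta \entaile t \colon A$ and $\Psi \entailiso \omega_1 \colon T_1$, then $\Psi; \Delta \entaile t[\omega_1/\psi] \colon A$}. A mutual induction is unavoidable here, and recognising this is the first conceptual step: an iso abstraction $\isobasique$ is typed through the term judgements of its expressions $e_i$, and those expressions may contain iso applications $\letv{p}{\isoterm~p'}{e'}$ in which $\psi$ occurs; conversely, the term-level application rule (which derives $\Psi; \Delta \entaile \isoterm~t \colon B$ from $\Psi \entailiso \isoterm \colon A \iso B$ and $\Psi; \Delta \entaile t \colon A$) forces the term statement to call back into the iso statement. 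Throughout I rely on the iso-context $\Psi$ being non-linear, so that weakening $\Psi \entailiso \omega_1 \colon T_1$ to $\Psi, \isovar \colon T \entailiso \omega_1 \colon T_1$ is admissible, and on Barendregt's convention to keep bound iso-variables distinct from $\psi$ and absent from $\FV(\omega_1)$.

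I would first clear the easy cases. If $\omega_2 = \psi$ then $T_2 = T_1$ and $\omega_2[\omega_1/\psi] = \omega_1$, so the goal is exactly the hypothesis $\Psi \entailiso \omega_1 \colon T_1$; if $\omega_2$ is an iso-variable distinct from $\psi$, the substitution is the identity and it stays typed from $\Psi$ alone. The application rule $\omega_a~\omega_b$ and all the term congruence rules (injections, pairing, $\mathtt{fold}$, and $\lett$) follow by applying the two induction hypotheses to the immediate subterms and reassembling with the same rule, using that the substitution commutes with every constructor.

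The binding rules $\omega_2 = \ffix \isovar . \omega'$ and $\omega_2 = \lambda \isovar . \omega'$ need a little care. By Barendregt's convention $\isovar \neq \psi$ and $\isovar \notin \FV(\omega_1)$, so the substitution commutes with the binder, e.g. $(\ffix \isovar . \omega')[\omega_1/\psi] = \ffix \isovar . (\omega'[\omega_1/\psi])$. The premise types $\omega'$ in the extended context $\Psi, \psi \colon T_1, \isovar \colon T$; after weakening $\Psi \entailiso \omega_1 \colon T_1$ to include $\isovar \colon T$, the induction hypothesis gives $\Psi, \isovar \colon T \entailiso \omega'[\omega_1/\psi] \colon T$, and re-applying the fixpoint (respectively abstraction) rule concludes.

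The principal case is the iso abstraction $\omega_2 = \isobasique$, which I expect to be the main obstacle. Here $\isobasique[\omega_1/\psi] = \{v_i \iso e_i[\omega_1/\psi]\}_{i \in I}$: the patterns $v_i$ are values and contain no iso-variables, so they are untouched and stay typed, while each $e_i$ is handled by the term induction hypothesis, yielding $\Psi; \Delta_i \entaile e_i[\omega_1/\psi] \colon B$. It remains to re-establish the orthogonality side-conditions of the clause rule in Table~\ref{tab:typisos}. The conditions $v_i~\bot~v_j$ are immediate since the $v_i$ are unchanged; for $e_i~\bot~e_j$ I would prove the auxiliary fact that orthogonality is stable under iso-substitution, namely $t~\bot~t' \Rightarrow t[\omega_1/\psi]~\bot~t'[\omega_1/\psi]$, by a routine induction on the derivation of $t~\bot~t'$ (\Cref{def:orthogonality}): the base rules $\inl{t_1}~\bot~\inr{t_2}$ are preserved because substitution does not alter the head $\mathtt{inj}_\ell/\mathtt{inj}_r$ constructor, and every $C_\bot$-congruence case, including $\letv{p}{t}{C_\bot}$, goes through because the substitution is applied uniformly on both sides and commutes with the context. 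With the orthogonality conditions restored, re-applying the clause rule finishes this case and closes the simultaneous induction.
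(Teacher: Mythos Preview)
Your proposal is correct and follows exactly the same mutual induction on iso and term typing judgements that the paper uses. You are in fact more careful than the paper in the iso-abstraction case: you explicitly re-establish the orthogonality side-conditions $e_i[\omega_1/\psi]~\bot~e_j[\omega_1/\psi]$ via stability of $\bot$ under iso-substitution, whereas the paper's proof glosses over this and merely invokes the induction hypothesis on the sub-isos of each $e_i$.
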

\begin{proof}
  Formally, the inductive definition of an iso judgement also depends on term
  judgements, in other words we also prove the following statement:
  If $\Psi, \psi \colon T_1 ; \Delta \entail t \colon A$ and $\Psi
  \entailiso \omega_1 \colon T_1$ are well-formed, then $\Psi ; \Delta \entail
  t[\omega_1/\psi] \colon A$.
  The proof is done by mutual induction on the term and iso judgements.
  \begin{itemize}
    \item $\Psi, \psi \colon T_1 ; \emptyset \entail \unit \colon \one$. Direct.
    \item $\Psi, \psi \colon T_1 ; x \colon A \entail x \colon A$. Direct.
    \item $\Psi, \psi \colon T_1 ; \Delta_1, \Delta_2 \entail t_1 \otimes t_2 \colon A \otimes B$.
      We observe that $(t_1 \otimes t_2)[\omega_1/\psi] = t_1[\omega_1/\psi]
      \otimes t_2[\omega_1/\psi]$ and the induction hypothesis concludes.
    \item $\Psi, \psi \colon T_1 ; \Delta \entail \ini t \colon A_1 \oplus A_2$.
      Similar to the previous point.
    \item $\Psi, \psi \colon T_1 ; \Delta \entail \fold t \colon \mu X . A$.
      Similar to the previous point.
    \item $\Psi, \psi \colon T_1 ; \Delta \entail \omega~t \colon B$.
      We observe that $(\omega~t)[\omega_1/\psi] =
      \omega[\omega_1/\psi]~t[\omega_1/\psi]$ and the induction hypothesis
      concludes.
    \item $\Psi, \psi \colon T_1 ; \Delta_1, \Delta_2 \entail \letv{p}{t_1}{t_2} \colon B$.
      With the induction hypothesis, similar to the previous point.
    \item $\Psi, \psi \colon T_1 \entailiso \isobreduit \colon A \iso B$.
      By induction hypothesis, given any iso $\omega$ present in $e_i$,
      $\omega$ is well-formed and $\omega[\omega_1/\psi]$ is also.
    \item $\Psi, \psi \colon T_1, \phi \colon T \entailiso \phi \colon T$. Direct.
    \item $\Psi, \psi \colon T_1 \entailiso \ffix \phi . \omega \colon T$. Note that $(\ffix
      \phi . \omega)[\omega_1/\psi] = \ffix \phi . (\omega[\omega_1/\psi])$,
      and by induction hypothesis $\omega[\omega_1/\psi]$ is well-formed.
    \item $\Psi, \psi \colon T_1 \entailiso \lambda \phi . \omega \colon T_2 \to T'_2$.
      Similar to the previous point.
    \item $\Psi, \psi \colon T_1 \entailiso \omega' \omega \colon T_2$.  Note that $(\omega'
      \omega)[\omega_1/\psi] = \omega' [\omega_1/\psi] \omega[\omega_1/\psi]$,
      and by induction hypothesis, $\omega'[\omega_1/\psi]$ and
      $\omega[\omega_1/\psi]$ are well-formed.
  \end{itemize}
\end{proof}

\begin{lemma}[Iso Subject Reduction]
  \label{lem:iso-subject-reduction}
  If $\Psi \entailiso \omega \colon T$ is well-formed and $\omega \to \omega'$,
  then $\Psi \entailiso \omega' \colon T$.
\end{lemma}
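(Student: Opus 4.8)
The plan is to proceed by induction on the derivation of the iso reduction $\omega \to \omega'$, with a case analysis on the last rule applied. Inspecting the rewriting system in~\Cref{def:rewriting-system}, there are exactly three rules that produce an iso-to-iso step: the fixpoint unfolding $\ffix \phi.\omega_0 \to \omega_0[\ffix\phi.\omega_0/\phi]$, the application redex $(\lambda\phi.\omega_1)\,\omega_2 \to \omega_1[\omega_2/\phi]$, and the congruence $\omega_1\,\omega_2 \to \omega_1'\,\omega_2$ whenever $\omega_1 \to \omega_1'$. In each case I would first invert the typing derivation of $\omega$ --- a generation step that is immediate here since the rules of~\Cref{tab:typisos} are syntax-directed --- and then rebuild a typing derivation for $\omega'$. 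The two base cases rely on \Cref{lem:iso-subst-type} (iso substitution preserves typing), while the congruence case is the only one appealing to the induction hypothesis.

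For the fixpoint case, inverting the $\ffix$ rule on $\Psi \entailiso \ffix\phi.\omega_0 \colon T$ yields $\Psi, \phi \colon T \entailiso \omega_0 \colon T$. Since the whole iso $\ffix\phi.\omega_0$ is itself of type $T$ under $\Psi$, applying \Cref{lem:iso-subst-type} with the substituted iso $\ffix\phi.\omega_0 \colon T$ gives directly $\Psi \entailiso \omega_0[\ffix\phi.\omega_0/\phi] \colon T$, as required. The application-redex case is analogous: inverting the application rule shows that the head $\lambda\phi.\omega_1$ has some type $T_1 \to T$ and that the argument $\omega_2$ has type $T_1$; inverting the $\lambda$ rule then provides $\Psi, \phi \colon T_1 \entailiso \omega_1 \colon T$, so a second invocation of \Cref{lem:iso-subst-type}, this time substituting $\omega_2 \colon T_1$ for $\phi$, produces $\Psi \entailiso \omega_1[\omega_2/\phi] \colon T$.

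For the congruence case, from $\Psi \entailiso \omega_1\,\omega_2 \colon T$ the application rule gives $\Psi \entailiso \omega_1 \colon T_1 \to T$ for the head iso and $\Psi \entailiso \omega_2 \colon T_1$ for the argument. The induction hypothesis applied to $\omega_1 \to \omega_1'$ preserves the arrow type $T_1 \to T$ of $\omega_1$, and re-applying the application rule to $\omega_1'$ and $\omega_2$ reconstructs $\Psi \entailiso \omega_1'\,\omega_2 \colon T$, closing the case.

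I do not expect a genuine obstacle: the substantive work was already carried out in \Cref{lem:iso-subst-type}, so subject reduction for isos reduces to routine inversion plus a single use of the substitution lemma in each base case. The one point deserving mild care is keeping the function/argument roles of an application aligned between the reduction rule, which reduces the left (head) iso, and the typing rule of~\Cref{tab:typisos}, in which the left operand is the one carrying the arrow type; once this alignment is fixed, every case closes mechanically.
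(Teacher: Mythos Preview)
Your proposal is correct and follows essentially the same approach as the paper: induction on the iso reduction with the same three cases, inverting the syntax-directed typing rules and invoking \Cref{lem:iso-subst-type} for the two substitution cases and the induction hypothesis for the congruence case. Your write-up is in fact more explicit than the paper's, which simply gestures at ``the previous lemma'' in each base case.
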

\begin{proof}
  The proof is done by induction on $\to$.
  \begin{itemize}
    \item $\ffix \phi . \omega \to \omega[\ffix \phi . \omega/\phi]$.
      The iso $\ffix \phi . \omega$ is well-formed, thus $\omega$ is also,
      and the previous lemma concludes.
    \item $(\lambda \phi . \omega_1) \omega_2 \to \omega_1[\omega_2/\phi]$.
      For the application to be well-formed, we need both $\lambda \phi .
      \omega_1$ and $\omega_2$ to be well-formed.  The former ensures that
      $\omega_1$ is well-formed, and the previous lemma concludes.
    \item $\omega_1 \omega_2 \to \omega'_1 \omega_2$. The induction
      hypothesis ensures that $\omega'_1$ is well-formed, and $\omega_2$
      is also because the application $\omega_1 \omega_2$ is.
  \end{itemize}
\end{proof}

\begin{lemma}[Iso Progress]
  \label{lem:iso-progress}
  If $~\entailiso \omega \colon T$ is well-formed, $\omega$ is either an
  iso value or there exists $\omega'$ such that $\omega \to \omega'$.
\end{lemma}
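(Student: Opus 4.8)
The plan is to prove the statement by induction on the derivation of $\entailiso \omega \colon T$, taking the \emph{iso values} to be exactly the sets of clauses $\isobasique$ and the abstractions $\lambda\isovar.\omega$ --- precisely the two iso constructors for which no head-reduction rule of \Cref{def:rewriting-system} applies. Since the iso-context is empty, the shape of $\omega$ is determined by the last typing rule used, so I treat the five rules of \Cref{tab:typisos} in turn.

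Three of the cases dispatch immediately. The variable rule is vacuous, as it would require $\isovar \colon T$ to occur in the empty context. If $\omega = \ffix\isovar.\omega'$, the head rule fires and $\omega \to \omega'[\ffix\isovar.\omega'/\isovar]$, regardless of the typing. If $\omega$ is an abstraction $\lambda\isovar.\omega'$ or a set of clauses $\isobasique$, it is an iso value by definition, so there is nothing to show.

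The only genuinely interesting case is the application $\omega = \omega_2~\omega_1$, typed from $\entailiso \omega_1 \colon T_1$ and $\entailiso \omega_2 \colon T_1 \to T_2$. I would apply the induction hypothesis to the function $\omega_2$: if $\omega_2 \to \omega_2'$, the congruence rule yields $\omega_2~\omega_1 \to \omega_2'~\omega_1$; otherwise $\omega_2$ is an iso value. The crux --- and the one point that needs care --- is a \emph{canonical forms} argument: an iso value of function type $T_1 \to T_2$ must be an abstraction. This follows by inversion on \Cref{tab:typisos}, since the clause rule can only conclude a ground iso type $A \iso B$, never a function type, leaving $\lambda$-abstraction as the sole possibility. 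Writing $\omega_2 = \lambda\isovar.\omega'$, the $\beta$-rule $(\lambda\isovar.\omega')~\omega_1 \to \omega'[\omega_1/\isovar]$ applies and $\omega$ reduces, closing the induction. The only mild obstacle is thus making this canonical forms step rigorous; all other cases are direct from the shape of the reduction and typing rules.
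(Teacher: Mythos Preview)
Your proposal is correct and follows essentially the same approach as the paper: induction on the typing derivation, with the application case handled by applying the induction hypothesis to the function part and invoking a canonical-forms argument to conclude that an iso value of type $T_1 \to T_2$ must be a $\lambda$-abstraction. Your treatment is in fact slightly more explicit than the paper's, which omits the vacuous variable case and phrases the canonical-forms step somewhat awkwardly.
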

\begin{proof}
  The proof is done by induction on $~\entailiso \omega \colon T$.
  \begin{itemize}
    \item $~\entailiso \isobreduit \colon A \iso B$ is an iso value.
    \item $~\entailiso \ffix \phi . \omega \colon T$ reduces.
    \item $~\entailiso \lambda \phi . \omega \colon T_1 \to T_2$ is an iso
      value.
    \item $~\entailiso \omega_2 \omega_1 \colon T_2$. By induction hypothesis,
      either $\omega_2$ is a value or it reduces. If it reduces, $\omega_2
      \omega_1$ reduces. If it is a value, it cannot be an iso abstraction:
      being applied to another iso, it must have a type $T_1 \to T_2$. Thus, it
      is of the form $\lambda \phi . \omega'_2$, and $(\lambda \phi .
      \omega'_2) \omega_1$ reduces.
  \end{itemize}
\end{proof}

\begin{corollary}
  \label{cor:iso-progress}
  If $~\entailiso \omega \colon A \iso B$ is well-formed, either there is some
  $\Delta_i \entail v_i \colon A$ and $\Delta_i \entail e_i \colon B$ such that
  $\omega = \isobasique$, or there exists $\omega'$ such that $\omega \to
  \omega'$.
\end{corollary}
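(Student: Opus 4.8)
The plan is to derive the corollary directly from the Iso Progress lemma (\Cref{lem:iso-progress}) by a canonical-forms argument specialised to ground iso types. First I would apply \Cref{lem:iso-progress} to the hypothesis $\entailiso \omega \colon A \iso B$. This immediately splits into two cases: either there exists $\omega'$ with $\omega \to \omega'$, which already places us in the second disjunct of the statement, or $\omega$ is an iso value, which is the case requiring work.

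In the iso-value case, I would enumerate the possible iso values. Reading them off from the proof of \Cref{lem:iso-progress}, an iso value is either a clause set $\isobreduit$ or an abstraction $\lambda \phi . \omega'$. The key step---and essentially the only content of the proof---is to rule out the abstraction form by typing inversion. Inspecting \Cref{tab:typisos}, the unique rule concluding a judgement for $\lambda \phi . \omega'$ assigns it a \emph{function} type $T_1 \to T_2$; but our hypothesis types $\omega$ at the ground iso type $A \iso B$, which is not of this shape. Since the typing derivation is syntax-directed on the head constructor, $\omega$ cannot be an abstraction, and therefore $\omega = \isobreduit$.

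Finally, to recover the precise form asserted in the corollary, I would invert the typing of $\omega = \isobreduit$ at type $A \iso B$ using the last rule of \Cref{tab:typisos}. This rule's premises give, for each $i \in I$, a context $\Delta_i$ together with derivations $\Delta_i \entail v_i \colon A$ and $\Delta_i \entail e_i \colon B$ (here with the iso-context empty, as required), which is exactly the first disjunct of the statement.

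The proof is thus essentially a one-line consequence of \Cref{lem:iso-progress} plus a single typing inversion; the only genuine---though mild---obstacle is the canonical-forms observation that a well-typed iso value at a ground iso type must be a clause set rather than a $\lambda$-abstraction, everything else being bookkeeping on the structure of iso values.
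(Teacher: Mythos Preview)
Your proposal is correct and matches the paper's intended argument: the corollary is stated without proof immediately after \Cref{lem:iso-progress}, so it is meant to follow exactly by applying that lemma and then observing, via typing inversion, that an iso value at ground type $A \iso B$ cannot be a $\lambda$-abstraction and must therefore be a clause set.
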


A valuation $\sigma$ is said to be well-formed with regard to contexts $\Psi$
and $\Delta$ if for all $(x_i \colon A_i) \in \Delta$, we have $\set{x_i
\mapsto t_i} \subseteq \sigma$ and $\Psi ; \emptyset \entail t_i \colon A_i$.
We write $\Psi ; \Delta \Vdash \sigma$.

\begin{lemma}
  \label{lem:rev-term-subst}
  If $\Psi ; \Delta \entail t \colon A$ and $\Psi ; \Delta \Vdash \sigma$
  are well-formed, then $\Psi ; \emptyset \entail \sigma(t)
  \colon A$ is well-formed.
\end{lemma}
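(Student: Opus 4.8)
The plan is to proceed by structural induction on the derivation of $\Psi ; \Delta \entail t \colon A$, pushing the valuation $\sigma$ through each term constructor according to the definition of $\sigma(-)$ and appealing to the induction hypothesis on the premises. Since the typing rules of \Cref{tab:typterm} are syntax-directed, each case is determined by the shape of $t$.

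The base cases are immediate. If $t = \unit$ then $\Delta = \emptyset$ and $\sigma(\unit) = \unit$, so $\Psi ; \emptyset \entail \unit \colon \one$ holds. If $t = x$ with $\Delta = x \colon A$, then well-formedness of $\sigma$ with respect to $\Delta$ gives exactly $(x \mapsto t_x) \in \sigma$ with $\Psi ; \emptyset \entail t_x \colon A$, and $\sigma(x) = t_x$ is the desired judgement. For the unary congruence rules $\inl t$, $\inr t$ and $\fold t$, the substitution commutes with the constructor (e.g. $\sigma(\inl t) = \inl{\sigma(t)}$), so the induction hypothesis on the single premise followed by a reapplication of the same typing rule concludes. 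The iso-application case $\isoterm~t$ is analogous: since $\sigma$ acts only on term-variables, $\sigma(\isoterm~t) = \isoterm~\sigma(t)$, the iso judgement $\Psi \entailiso \isoterm \colon A \iso B$ is untouched, and the induction hypothesis on $t$ suffices.

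The multiplicative rules require splitting the valuation. For a pair $\pair{t_1}{t_2}$ typed from $\Psi ; \Delta_1 \entail t_1 \colon A$ and $\Psi ; \Delta_2 \entail t_2 \colon B$, linearity of the term context guarantees that the free term-variables of $t_1$ lie in $\Delta_1$ and those of $t_2$ in $\Delta_2$. I would restrict $\sigma$ to $\sigma_1$ on $\Delta_1$ and $\sigma_2$ on $\Delta_2$, observe that $\Psi ; \Delta_1 \Vdash \sigma_1$ and $\Psi ; \Delta_2 \Vdash \sigma_2$ and that $\sigma(t_i) = \sigma_i(t_i)$, so that the induction hypothesis on each premise followed by the pairing rule yields $\Psi ; \emptyset \entail \pair{\sigma(t_1)}{\sigma(t_2)} \colon A \otimes B$.

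The $\lett$ rule is where the main subtlety lies, and I expect it to be the chief obstacle. Here $\letv{(x_1,\dots,x_n)}{t_1}{t_2}$ is typed from $\Psi ; \Delta_1 \entail t_1 \colon A_1 \otimes \dots \otimes A_n$ and $\Psi ; \Delta_2, x_1 \colon A_1, \dots, x_n \colon A_n \entail t_2 \colon B$, so the induction hypothesis on $t_2$ must substitute only the $\Delta_2$-part of the context while leaving the pattern-bound variables $x_i$ in place. Mapping each $x_i$ to itself is not an option, since a free variable is not a closed term and so fails well-formedness of the valuation; I would therefore prove the slightly more general statement in which the conclusion context need not be empty, namely that a valuation covering a sub-context $\Delta$ of $\Delta, \Delta'$ yields $\Psi ; \Delta' \entail \sigma(t) \colon A$, the stated lemma being the instance $\Delta' = \emptyset$. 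Barendregt's convention ensures that the bound variables $x_1, \dots, x_n$ are distinct from $\mathrm{supp}(\sigma)$, so $\sigma$ fixes them and $\sigma_2(t_2)$ remains typed in $\Delta_2$ extended with the $x_i$; reapplying the $\lett$ rule to the two outcomes of the induction hypothesis then gives the claim.
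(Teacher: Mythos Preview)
Your approach is essentially the paper's: a structural induction on the typing derivation, pushing $\sigma$ through each constructor. The paper's proof handles the $\lett$ case in one line, simply claiming the induction hypothesis gives that $\sigma(t_2)$ is well-formed and reapplying the rule, without addressing the issue you raise about the pattern-bound variables $x_1,\dots,x_n$ remaining in the context of $t_2$. Your observation that the lemma as stated does not directly give an induction hypothesis for $t_2$ (since a valuation well-formed on $\Delta_2,x_1{:}A_1,\dots,x_n{:}A_n$ would have to send each $x_i$ to a closed term) is correct, and your proposed fix---strengthening the statement to allow a residual context $\Delta'$ in the conclusion---is the standard and clean way to make the induction go through. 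So you take the same route but are more rigorous at precisely the point the paper glosses over.
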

\begin{proof}
  The proof is done by induction on $\Psi ; \Delta \entail t \colon A$.
  \begin{itemize}
    \item $\Psi ; \emptyset \entail \unit \colon \one$. Nothing to do.
    \item $\Psi ; x \colon A \entail x \colon A$.
      Since $x \colon A \Vdash \sigma$ is well-formed, there is $~\entail t
      \colon A$ such that $\set{x \mapsto t} \subseteq \sigma$ and $\sigma(x) =
      t$.
    \item $\Psi ; \Delta_1, \Delta_2 \entail t_1 \otimes t_2 \colon A \otimes B$.
      The induction hypothesis ensures that $\sigma(t_1)$ and $\sigma(t_2)$ are
      well-formed, and thus $\sigma(t_1) \otimes \sigma(t_2) = \sigma(t_1
      \otimes t_2)$ also is.
    \item $\Psi ; \Delta \entail \ini t \colon A_1 \oplus A_2$.
      The induction hypothesis ensures that $\sigma(t)$ is well-formed, and
      thus $\ini \sigma(t) = \sigma(\ini t)$ is.
    \item $\Psi ; \Delta \entail \fold t \colon \mu X . A$.
      The induction hypothesis ensures that $\sigma(t)$ is well-formed, and
      thus $\fold \sigma(t) = \sigma(\fold t)$ is.
    \item $\Psi ; \Delta \entail \omega~t \colon B$.
      The induction hypothesis ensures that $\sigma(t)$ is well-formed, and
      thus $\omega~\sigma(t) = \sigma(\omega~t)$ is.
    \item $\Psi ; \Delta_1, \Delta_2 \entail \letv{p}{t_1}{t_2} \colon B$.
      The induction hypothesis ensures that $\sigma(t_1)$ and $\sigma(t_2)$ are
      well-formed, and thus $\letv{p}{\sigma(t_1)}{\sigma(t_2)}
      = \sigma(\letv{p}{t_1}{t_2})$ also is.
  \end{itemize}
\end{proof}

\begin{lemma}[Subject Reduction]
  \label{lem:rev-subject-reduction}
  If $\Psi; \Delta \entail t \colon A$ is well-formed and $t \to t'$,
  then $\Psi; \Delta \entail t' \colon A$ is also well-formed.
\end{lemma}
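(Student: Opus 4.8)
The plan is to induct on the derivation of $t \to t'$, using the term-level rules of \Cref{def:rewriting-system}. These fall into three kinds: the congruence rule closing $\to$ under an evaluation context $C_\to$, the rule $\omega~t \to \omega'~t$ that fires an iso-step inside a term, and the two genuinely computational steps, the let-reduction $\letv{p}{v}{t} \to \sigma(t)$ and the iso-application $\isobasique~v' \to \sigma(e_i)$, both of which produce a substitution by pattern-matching.

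The congruence cases are routine. For each context former I invert the matching typing rule of \Cref{tab:typterm} to expose a typing judgement for the redex in the appropriate sub-context, apply the induction hypothesis to re-type the contractum, and reassemble the derivation with the same rule. The rule $\omega~t \to \omega'~t$ is equally direct: inversion on the application rule gives $\Psi \entailiso \omega \colon A \iso B$ and $\Psi; \Delta \entail t \colon A$, and since the premise $\omega \to \omega'$ is an iso-step, Iso Subject Reduction (\Cref{lem:iso-subject-reduction}) yields $\Psi \entailiso \omega' \colon A \iso B$; reapplying the application rule gives $\Psi; \Delta \entail \omega'~t \colon B$.

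The two substitution steps are the substance of the proof, and both follow the same schema. For the let-rule, inversion gives $\Psi; \Delta_1 \entail v \colon A_1 \otimes \dots \otimes A_n$ and $\Psi; \Delta_2, x_1 \colon A_1, \dots, x_n \colon A_n \entail t \colon B$, with $\match{\sigma}{p}{v}$ decomposing $v$ along its tensor structure; for the iso-rule, inversion on the iso-typing of \Cref{tab:typisos} gives, for the fired clause, $\Psi; \Delta_i \entail v_i \colon A$ and $\Psi; \Delta_i \entail e_i \colon B$ with $\match{\sigma}{v_i}{v'}$. In each case I would first prove a \emph{matching lemma}: whenever a pattern (resp.\ value) typed in a context $\Gamma$ matches a value typed in a context $\Delta'$, the resulting substitution $\sigma$ is well-formed, in the sense that the variables of $\Gamma$ are assigned the sub-terms of the matched value at the expected types, with the linear context $\Delta'$ partitioned among them. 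Composing this with the substitution lemma (\Cref{lem:rev-term-subst}) then delivers the required typings $\Psi; \Delta_1, \Delta_2 \entail \sigma(t) \colon B$ and $\Psi; \Delta \entail \sigma(e_i) \colon B$.

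The main obstacle is exactly this matching lemma, and in particular the linear bookkeeping behind it. One must show that matching carves the scrutinee into sub-values whose types agree with those assigned to the pattern variables, and whose free variables partition the ambient linear context, each variable being consumed in exactly one sub-match. This is established by induction on the derivation of $\match{\sigma}{v_i}{v'}$ along \Cref{tab:pattern_matching}. A minor mismatch remains in that \Cref{lem:rev-term-subst} as stated targets a \emph{closing} substitution; for the closed instances relevant to adequacy this already suffices, while the fully general statement uses the evident generalisation of that lemma to substitutions landing in an arbitrary context $\Delta'$, whose proof is verbatim the same induction.
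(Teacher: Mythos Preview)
Your proposal is correct and follows essentially the same approach as the paper: induction on $\to$, using Iso Subject Reduction (\Cref{lem:iso-subject-reduction}) for the $\omega~t\to\omega'~t$ rule, the substitution lemma (\Cref{lem:rev-term-subst}) for the two computational steps, and the observation that typing is preserved under the evaluation contexts $C_\to$ for the congruence rule. Your write-up is in fact more careful than the paper's three-line sketch: you explicitly isolate the \emph{matching lemma} (that $\match{\sigma}{v_i}{v'}$ produces a well-formed substitution partitioning the linear context) which the paper leaves implicit, and you correctly flag that \Cref{lem:rev-term-subst} as stated is for closing substitutions and must be mildly generalised to handle open $v'$ under a non-empty $\Delta$.
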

\begin{proof}
  The proof is done by induction on $\to$. It revolves around three
  quick observations: Lemma~\ref{lem:iso-subject-reduction},
  Lemma~\ref{lem:rev-term-subst}, and that if $t_2$ and
  $C_\to[t_1]$ is well-formed, then $C_\to[t_2]$ also is.
\end{proof}


\section{Details on the encoding of RTM \texorpdfstring{, (\Cref{sec:RTM})}{}}

\subsection{Encoding}

We are now ready to encode any RTM as a well-typed iso of the
language. We first need to choose an encoding for the set of states
and the set of tape symbols:

\begin{definition}[Encoding of States and Tape Symbols]\rm
  Given a non-empty finite set of states $Q = \{q_1, \dots, q_n\}$ and finite
  set of tape symbols $\Sigma = \set{s_1, \dots, s_m}$ with $b\in
  \Sigma$ and $q_s, q_f\in Q$, we define the types $Q^T$ and
  $\Sigma^T$ as respectively $\bigoplus^{n} \one$ and
  $\bigoplus^{m} \one$.
\end{definition}

We then directly get a one-to-one correspondence between elements of
$Q$ and $\Sigma$ with closed values of $Q^T$ and $\Sigma^T$. We write
$q^T$ (resp. $a^T$) for the closed value corresponding to the element
$q\in Q$ (resp. $a\in\Sigma)$.

\begin{property}
  \label{isos-rtm:encoding-orthogonality}
  Given two different states (resp. two letter) $s_1, s_2$ we get
  $s_1^T \bot s_2^T$.\qed
\end{property}

We then define the type of configuration in the obvious manner: a
configuration $(q, (l, s, r))$ will be encoded as an element of $Q^T
\otimes ([\Sigma^T] \otimes \Sigma \otimes [\Sigma^T])$. We follow the
intuition of Zippers~\cite{huet1997zipper}: a single list (here, the
tape of the RTM) is represented by a pair of lists $(l_1, l_2)$ where
the list $l_1$ is in reverse order.

\begin{definition}[Encoding of Configurations]\rm We define the type
  of configurations as $C^T = (Q^T\otimes ([\Sigma^T] \otimes \Sigma
  \otimes [\Sigma^T]))$. Given a configuration $C = (q, ((\epsilon,
  a_1, \dots, a_n), a, (a_1', \dots, a_m', \epsilon)))$, it is encoded
  as $\opn{isos}(C) = (q^T, ([a_n^T, \dots, a_1^T], a^T, [a_1'^T,
  \dots, a_m'^T]))$. For example, the standard configuration $C =
  (q_s, (\epsilon, b, [a_1, \dots, a_n]))$ is represented as
  $\opn{isos}(C) = (q_s^T, ([], b^T, [a_1^T, \dots, a_n^T]))$.
\end{definition}

The problem with our current language is that we do not allow infinite
data structures. While the type of list is infinite, its instances are
finite. Implicitly we would like to have the equation $[] = [b] = [b,
b] = [b, b, b] = \dots$ but this is not possible in our current
system. Instead, in order to simulate this infinite amount of blank
symbol on both sides of the tape during the evaluation, we artificially
grow the size of the two tapes on both ends by blank symbols between each
transition step. In order to do that, we need the following:

\begin{lemma}[Existence of ``snoc'' iso]
  \label{lemma:snoc-iso}
  There exists an iso $\snoc \colon [A] \otimes A \iso [A] \otimes A$
  such that $\snoc~(\pv{[a_1, \dots, a_n]}{a}) \to^* \pv{[a_1,
  \dots, a_n, a]}{a}$. \qed
\end{lemma}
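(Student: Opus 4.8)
The plan is to take the iso $\snoc$ exactly as displayed in Table~\ref{tab:useful-functions} and prove two things about it: that it is well-typed of type $[A] \otimes A \iso [A] \otimes A$, and that it exhibits the claimed reduction on a closed list. Since $\snoc$ is assembled from the auxiliary isos $\len$ and $\snocc$, the successor iso $\set{x \iso S\,x}$, and the inverse $\len^{-1}$, I would first fix the behaviour of each ingredient and then thread them through the chain of $\lett$ bindings defining $\snoc$.

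First I would check the types against Table~\ref{tab:typisos}, obtaining $\len \colon [A] \iso [A] \otimes \natT$ and $\snocc \colon [A] \otimes A \otimes \natT \iso [A] \otimes A \otimes \natT$. For each of these the two clauses satisfy the orthogonality side condition, since the empty-list pattern is orthogonal to the cons pattern and, dually, $0$ is orthogonal to $S(n)$ on the counter component. As $\snoc$ has a single clause with the variable pattern $(x,y)$, its own orthogonality requirement is vacuous, and following the bound variables along the $\lett$ chain shows that the intermediate types compose to $[A] \otimes A \iso [A] \otimes A$.

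Next I would establish the semantic specifications of the ingredients by induction. An induction on the length of a closed list gives $\len~[a_1,\dots,a_n] \to^* ([a_1,\dots,a_n], \overline{n})$, and an induction on $n$ gives $\snocc~([a_1,\dots,a_n], a, \overline{n}) \to^* ([a_1,\dots,a_n,a], a, \overline{n})$, the base case invoking Lemma~\ref{lem:duplication-semantics} to reduce $\dup_A^\emptyset~a$ to $(a,a)$. The delicate ingredient is $\len^{-1}$: computing its clauses with the inversion procedure of Section~\ref{sec:language} yields $([~],0) \iso [~]$ and $(h::t', S(n)) \iso \letv{t}{\len^{-1}~(t',n)}{h :: t}$, and a further induction shows $\len^{-1}~([a_1,\dots,a_k], \overline{m}) \to^* [a_1,\dots,a_k]$ exactly when $m = k$, the reduction being stuck otherwise.

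Finally I would trace $\snoc~([a_1,\dots,a_n], a)$ through its four bindings: $\len$ yields $([a_1,\dots,a_n], \overline{n})$; $\snocc$ yields $([a_1,\dots,a_n,a], a, \overline{n})$; the successor iso sends the counter $\overline{n}$ to $\overline{n+1}$; and $\len^{-1}$, now fed the length-$(n+1)$ list $[a_1,\dots,a_n,a]$ together with the matching counter $\overline{n+1}$, returns $[a_1,\dots,a_n,a]$, so $\snoc$ outputs $([a_1,\dots,a_n,a], a)$ as claimed. I expect the $\len^{-1}$ step to be the main obstacle: the counter increment between $\snocc$ and $\len^{-1}$ is essential, since $\len^{-1}$ only fires when the supplied number equals the length of its list argument, and it is precisely this matching that makes the single clause of $\snoc$ reduce rather than get stuck.
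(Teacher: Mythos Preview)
Your proposal is correct and follows essentially the same route as the paper: establish the operational specifications of $\len$ and $\snocc$ by induction on the list length (with the base case of $\snocc$ using Lemma~\ref{lem:duplication-semantics}), then thread the results through the $\lett$ chain of $\snoc$. The one noteworthy difference is in how the $\len^{-1}$ step is handled: the paper simply invokes Theorem~\ref{thm:isos-iso} to justify that the inverse undoes $\len$, whereas you compute the clauses of $\len^{-1}$ explicitly via the syntactic inversion procedure and prove by induction that $\len^{-1}([a_1,\dots,a_k],\overline m)$ reduces exactly when $m=k$. Your treatment is more self-contained and in fact more careful, since Theorem~\ref{thm:isos-iso} as stated is conditional on termination and does not by itself guarantee that $\len^{-1}$ fires; your explicit analysis also makes transparent why the successor increment on the counter is indispensable, a point the paper's appendix leaves implicit.
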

\begin{proof}
  In order to implement $\snoc$ properly we need the two
  following isos :

  \[\begin{array}{ll}
    \textit{len}:[A]\iso[A] \otimes \natT \\
    = \fix \isovar.\left\{
      \begin{array}{l@{~}c@{~}l}
        [~] & {\iso} & ([~], 0) \\
        h :: t & {\iso} & \letv{(t', n)}{\isovar~t}{} \\
        & & (h :: t', S(n)) \\
      \end{array}
    \right\}
    \end{array}
    \]
    \[
    \begin{array}{ll}
      \textit{snoc'} \colon [A] \otimes A \otimes \natT \iso [A] \otimes A \otimes \natT \\
      = \fix \isovar.\left\{
    \begin{array}{l@{~}c@{~}l}
      ([~], x, 0) & {\iso} & \letv{(x_1, x_2)}{\dup_A^\emptyset x}{}\\
      & & ([x_1], x_2, 0) \\
      (h::t, x, S(n)) & {\iso} & \letv{(t', x', n')}{\isovar (t, x, n)}{} \\
      (h :: t', x', S(n'))
    \end{array}
  \right\}
    \end{array}\]

  The fact that both isos are well-typed is trivial, the non-overlapping
  condition is always respected for both isos by the argument of type
  $\natT$.

  We have \begin{itemize}
  \item $\len[v_1, \dots, v_n] \to^* [v_1, \dots, v_n],
  \ov{n}$
  \item If $|l| = \ov{n}$ then $\snocc ([v_1, \dots, v_n], v, \ov{n})
  \to^* ([v_1, \dots, v_n, v], v, \ov{n})$.
  \end{itemize}
  Both can be showed by induction on $n$.

  \begin{itemize}
    \item Case $0$, for $\len$ we simply have that
    $\len~[] \to ([], 0)$ by pattern-matching on the first
    clause. For $\snocc$ this is due
    to~\Cref{lem:duplication-semantics}.

    \item Case $n+1$, we have the input being $[v_1, \dots, v_n,
    v_{n+1}]$, for $\len$ by induction hypothesis on the
    recursive call we know that $\len [v_2, \dots, v_n, v_{n+1}]
    \to^* ([v_2, \dots, v_n, v_{n+1}], \ov{n})$, and in the second
    clause we simply add $v_1$ back into the list and return
    $\ov{n+1}$.
    A similar argument is made for $\snocc$.
  \end{itemize}

  We are now ready to define $\snoc$:

  \[
    \begin{array}{ll}
      \textit{snoc} \colon [A] \otimes A\iso [A] \otimes A \\
      = \fix \isovar.\left\{
    \begin{array}{l@{~}c@{~}l}
      (x, y) & {\iso} & \letv{(x', n')}{\len~x}{} \\
      & & \letv{(x'', y', n')}{\snocc~(x', y, n)}{} \\
      & & \letv{z}{\len^{-1}~(x'', n')}{} \\
      & & (z, y')
    \end{array}
  \right\}
    \end{array}\]

    Now, the fact that $\snoc~(\pv{[a_1, \dots, a_n]}{a}) \to^*
    \pv{[a_1, \dots, a_n, a]}{a}$ follow directly from the above proof
    for $\len$ and $\snocc$ and due to~\Cref{thm:isos-iso}.
\end{proof}

\begin{definition}[Erasure of blank]\rm
  \label{def:erase-blank-iso}
  We define $\opn{erase}_b \colon A \otimes \Sigma^T \iso A$ which erase
    its second argument when its value is $b^T$ as $\{\pv{x}{b^T} \iso
    x\}$.
\end{definition}

In practice the $\opn{erase}_b$ iso will be applied in order to remove
the second argument of the result of $\snoc$.

\begin{definition}[Growth iso]\rm \label{def:growth-iso} We define
  $\growth \colon [\Sigma^T] \otimes [\Sigma^T] \iso [\Sigma^T]
  \otimes [\Sigma^T]$ as (where $l, r$ are variables):

  $\left\{\begin{array}{ccc}
  (l, r) & {\iso} &
      \letv{\pv{l'}{b_1}}{\snoc\pv{l}{b^T}}{} \\
  & & \letv{\pv{r'}{b_2}}{\snoc\pv{r}{b^T}}{} \\
  & & \letv{l''}{\opn{erase}_b \pv{l'}{b_1}}{} \\
  & & \letv{r''}{\opn{erase}_b \pv{r'}{b_2}}{} \\
  & & (l'', r'')
  \end{array}\right\}$
\end{definition}

\begin{lemma}[Growth iso semantics]
  \label{lemma:growth-semantics}

  Given two list $l = [a_1, \dots, a_n]$ and $r = [a_1', \dots, a_m']$
  (where $a_i, a_j'$ are some value of type $\Sigma$)

  We have that \[\growth(l, r) \to^* ([a_1^T, \dots,
  a_n^T, b^T], [a_1'^T, \dots, a_m'^T, b^T])\]
\end{lemma}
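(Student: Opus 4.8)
The plan is to prove the statement by a direct, finite computation: $\growth$ is a single clause built from four successive $\mathtt{let}$-bindings and contains no $\fix$ of its own, so all the genuine recursion is already packaged inside the auxiliary isos $\snoc$ and $\opn{erase}_b$. Consequently no fresh induction is needed here; I would simply run the call-by-value rewriting system of \Cref{def:rewriting-system} one binding at a time and appeal to the already-established behaviour of the auxiliaries.

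Concretely, I would first note that the argument $(l,r)$ with $l=[a_1^T,\dots,a_n^T]$ and $r=[a_1'^T,\dots,a_m'^T]$ matches the unique left-hand pattern $(l,r)$ of $\growth$, firing the clause and evaluating its body under the substitution sending $l,r$ to these two list values. For the first binding, Lemma~\ref{lemma:snoc-iso} gives $\snoc\,(l,b^T)\to^*([a_1^T,\dots,a_n^T,b^T],\,b^T)$, so the pattern $(l',b_1)$ binds $l'=[a_1^T,\dots,a_n^T,b^T]$ and $b_1=b^T$; the second binding is symmetric, yielding $r'=[a_1'^T,\dots,a_m'^T,b^T]$ and $b_2=b^T$. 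For the third binding, since $b_1=b^T$ the pair $(l',b_1)$ matches the single clause $(x,b^T)\iso x$ of $\opn{erase}_b$ (Definition~\ref{def:erase-blank-iso}), so $l''$ is bound to $l'=[a_1^T,\dots,a_n^T,b^T]$; the fourth binding likewise gives $r''=[a_1'^T,\dots,a_m'^T,b^T]$. The body then returns $(l'',r'')=([a_1^T,\dots,a_n^T,b^T],\,[a_1'^T,\dots,a_m'^T,b^T])$, which is exactly the claimed right-hand side, so $\to^*$ holds by chaining these reductions.

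There is no real obstacle in this argument; the only point that requires care is the bookkeeping of the spurious second copy of $b^T$ that $\snoc$ must retain in order to stay reversible. The whole purpose of the two $\opn{erase}_b$ steps is to discard exactly this copy, and the proof must check that their second argument is indeed $b^T$ so that $\opn{erase}_b$ actually fires rather than leaving the term stuck. Once that is observed, the result follows immediately from Lemma~\ref{lemma:snoc-iso} and Definition~\ref{def:erase-blank-iso}.
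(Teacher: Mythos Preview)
Your proposal is correct and follows the same approach as the paper: the paper's proof is the one-liner ``Direct by Lemma~\ref{lemma:snoc-iso} and Definition~\ref{def:erase-blank-iso} on both $l^T$ and $r^T$'', and you have simply unpacked that into the four $\mathtt{let}$-reductions, correctly identifying that the only care point is checking $b_1=b_2=b^T$ so that $\opn{erase}_b$ fires.
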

\begin{proof}
  Direct by Lemma~\ref{lemma:snoc-iso} and
  Definition~\ref{def:erase-blank-iso} on both $l^T$ and $r^T$.
\end{proof}

We can finally encode the transition relation $\delta$ where each
element of $\delta$ is encoded as a clause between values. Then the
encoding of the whole transition relation is simply the iso that have
for clauses the encoding of each element of $\delta$.

\begin{definition}[Encoding of $\delta$]\rm
  Given a RTM $M = (Q, \Sigma, \delta, b, q_s, q_f)$ a relation $(q, r, q') \in
  \delta$, is encoded as a clause between values $v_1 \iso v_2$ of type $C^T
  \iso C^T$ defined by case analysis on $r$, where $x_1, x_2, y$ are variables.

  First define the expression with a hole which apply the growth iso
  to the left and right tape as $\gctx[-]$ as $\letv{(l,
  r)}{\growth~(x_1, x_2)}{[-]}$ Then we can define the encoding
  of the transitions :
  \begin{itemize}
    \item Transition $(q, \rightarrow, q')$ is encoded as the clause \\
    $(q^T, (x_1, y_1, y_2 :: x_2)) \iso \gctx[(q'^T, (y_1 :: l, y_2, r))]$

    \item Transition $(q, \leftarrow, q')$ is encoded as the clause \\
    $(q^T, (y_1::x_1, y_2,  x_2)) \iso \gctx[(q'^T, (l, y_1, y_2 :: r))]$

    \item Transition $(q, \downarrow, q')$ is encoded as the clause \\
    $(q^T, (x_1, y, x_2)) \iso \gctx[(q'^T, (l, y, r))]$.

    \item Transition $(q, (s, s'), q')$ is encoded as the clause \\
    $(q^T, (x_1, s^T, x_2)) \iso \gctx[(q'^T, (l, s'^T, r))]$
  \end{itemize}
  Then, we define $\opn{isos}(M)$ as the iso whose clauses are the
  encoding of each rule of $\delta$.
\end{definition}

Along with~\Cref{isos-rtm:encoding-orthogonality} and the backward determinism
of the RTM, we get that clauses in the resulting iso are orthogonal to one
another, and that $\opn{isos}(M)$ is well-typed.

\begin{lemma}[$\opn{isos}(M)$ is well-typed]
  \label{lem:preom-typed}
  Given a RTM $M$, then $\vdash_\omega \opn{isos}(M) \colon C^T\iso C^T$.\qed
\end{lemma}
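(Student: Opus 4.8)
The plan is to apply the iso-typing rule of Table~\ref{tab:typisos} directly. Since $\opn{isos}(M)$ is a finite set of clauses $\{v_1 \iso e_1 \mid \dots \mid v_k \iso e_k\}$ of type $C^T \iso C^T$, one per element of $\delta$, I must establish four things: that each left value $v_i$ is well-typed of type $C^T$ in the linear context $\Delta_i$ of its pattern variables, that each right expression $e_i$ is well-typed of type $C^T$ in the same $\Delta_i$, and that the families $(v_i)_i$ and $(e_i)_i$ are each pairwise orthogonal.

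For the typing I would proceed by case analysis on the four shapes of the clause $\opn{iso}(q, r, q')$ given in the encoding of $\delta$. In each case the left-hand side is a configuration value built from $q^T$, a scanned symbol, and the two tape lists, so it types to $C^T$ once the pattern variables are assigned $x_1, x_2 : [\Sigma^T]$ and the symbol variables the type $\Sigma^T$; the only point to check is linearity, i.e.\ that each bound variable of the pattern occurs exactly once. For the right-hand side I would use that $\growth \colon [\Sigma^T]\otimes[\Sigma^T] \iso [\Sigma^T]\otimes[\Sigma^T]$ is a well-typed iso (Definition~\ref{def:growth-iso}) together with the typing rule for $\letv{(l,r)}{\growth~(x_1,x_2)}{-}$, which binds $l, r : [\Sigma^T]$; the body then reassembles a configuration of type $C^T$ from $q'^T$, the remaining symbol variables, and $l, r$. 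Checking that the variables consumed by $\growth$ and those reused in the body partition $\Delta_i$ linearly completes this part.

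The orthogonality of the left values is where \emph{forward} determinism enters. Given two distinct clauses I would split into two subcases. If the source states differ, then their state components $q_1^T$ and $q_2^T$ are orthogonal by Property~\ref{isos-rtm:encoding-orthogonality}, and since the state sits at a fixed structural position the two configuration values are orthogonal via the appropriate tuple context of Definition~\ref{def:orthogonality}. If the source states coincide, forward determinism forces both transitions to be symbol rules $(s_1,s'_1)$ and $(s_2,s'_2)$ with $s_1 \neq s_2$; the left values then have identical shape and differ only in the scanned-symbol component, where $s_1^T \bot s_2^T$ again by Property~\ref{isos-rtm:encoding-orthogonality}. The orthogonality of the right expressions is the symmetric argument using \emph{backward} determinism: either the target states differ, yielding $(q'_1)^T \bot (q'_2)^T$ in the reconstructed state position, or they agree and backward determinism forces two symbol rules with $s'_1 \neq s'_2$, distinguishing the written-symbol component via $(s'_1)^T \bot (s'_2)^T$.

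The hard part will be this last orthogonality of the right-hand expressions, for two reasons. First, the $e_i$ are not values but expressions carrying the common wrapper $\letv{(l,r)}{\growth~(x_1,x_2)}{-}$, so orthogonality must be propagated through the let-context of Definition~\ref{def:orthogonality} down into the rebuilt configuration. Second, and more delicately, orthogonality as defined isolates a single $\inl/\inr$ mismatch inside \emph{one} common context, so I must check that the distinguishing difference — be it the target state $(q')^T$ or the written symbol $(s')^T$ — occurs at a single position with all sibling components matching. This is precisely where the exact form of backward determinism is needed: it guarantees that two clauses sharing a target state are both symbol rules of identical syntactic shape, so that their right-hand sides coincide everywhere except in the written-symbol slot, leaving a single distinguishing flip; the residual care is to present the different-target case so that the state flip is likewise isolable within a common orthogonality context.
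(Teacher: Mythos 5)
Your proposal is correct and takes essentially the same route as the paper's (much terser) proof: the typing itself is treated as routine, orthogonality of the left-hand values follows from forward determinism and that of the right-hand expressions from backward determinism, in each case by the same two-way split (distinct states, or equal states forcing two symbol rules with distinct letters) reduced to Property~\ref{isos-rtm:encoding-orthogonality} and propagated through the pair/let contexts. The ``hard part'' you flag at the end --- that Definition~\ref{def:orthogonality}, read literally, only lets you isolate a single left/right-injection mismatch inside one common context whose sibling components are syntactically equal on both sides --- is a defect of that definition rather than of your argument (read strictly it would also reject the paper's own Cantor-pairing example, whose patterns $\pv{S(i)}{j}$ and $\pv{0}{S(S(j))}$ agree in neither component); the paper's proof silently uses the liberal reading in which orthogonality in one component of a tuple suffices regardless of the others, and under that reading your case analysis goes through verbatim.
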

\begin{proof}
  While the typing of variables is obvious, one needs to check that the
  orthogonality is satisfied.

  Given two clauses, by forward determinism of the RTM we know that
  the state $q^T$ of the clauses are all different from one another
  when the iso is encoding a move transition; hence they are all
  orthogonal by~\Cref{isos-rtm:encoding-orthogonality}. The only other
  case occurs during a rewriting transition, in which case the state
  is the same, but the letter read has to be different (again by
  forward determinism), and
  again~\Cref{isos-rtm:encoding-orthogonality} apply. This is true for
  both the right-hand-side and the left-hand-sided clauses.
\end{proof}

We now show that the encoding of $\delta$, when the configuration is
extended by some empty character, simulates a step of the RTM.

\begin{lemma}
  \label{lem:sim-trans}
  Given a RTM $M$, if $M \vdash C \rightsquigarrow C'$ then
  $\opn{isos}(M)(\opn{isos}(C)) \to^* \tilde{C}$. Where $\tilde{C}$ is
  the result of applying the $\growth$ iso to the left and right
  tape of $\opn{isos}(C')$.
  \qed
\end{lemma}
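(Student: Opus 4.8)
The plan is to argue by case analysis on the single transition rule $(q, a, q') \in \delta$ witnessing $M \vdash C \rightsquigarrow C'$, following exactly the four shapes used to build the clauses of $\opn{isos}(M)$: the symbol rewrite $a = (s, s')$ and the three moves $a \in \set{\leftarrow, \downarrow, \rightarrow}$. Write $C = (q, (l, s, r))$ with $l = (\epsilon, a_1, \dots, a_n)$ and $r = (a_1', \dots, a_m', \epsilon)$, so that by definition $\opn{isos}(C) = (q^T, ([a_n^T, \dots, a_1^T], s^T, [a_1'^T, \dots, a_m'^T]))$, recalling that the left tape is stored in reverse as in the zipper encoding. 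First I would invoke forward determinism of $M$ together with Lemma~\ref{lem:preom-typed}: the clauses of $\opn{isos}(M)$ are pairwise orthogonal, so at most one clause can fire on $\opn{isos}(C)$. It then remains to check, in each case, that the clause arising from $(q, a, q')$ does match and that its body reduces to $\tilde C$.

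The simplest case is the rewrite $a = (s, s')$, where $C' = (q', (l, s', r))$ with both tapes unchanged. The value $\opn{isos}(C)$ matches the left-hand side $(q^T, (x_1, s^T, x_2))$ under the substitution $x_1 \mapsto [a_n^T, \dots, a_1^T]$, $x_2 \mapsto [a_1'^T, \dots, a_m'^T]$, and the body $\gctx[(q'^T, (l, s'^T, r))]$ fires the binding $\letv{(l,r)}{\growth~(x_1,x_2)}{(q'^T, (l, s'^T, r))}$. By Lemma~\ref{lemma:growth-semantics} the call to $\growth$ reduces to the pair of the two input lists each grown by one trailing $b^T$; substituting back yields exactly $\opn{isos}(C')$ with a trailing $b^T$ appended to each tape, which is precisely the $\tilde C$ of the statement.

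The three move cases are handled the same way, the only additional bookkeeping being the interaction between list head/tail operations and the reversed storage of the left tape. Consider the representative clause for $a = \rightarrow$, namely $(q^T, (x_1, y_1, y_2 :: x_2)) \iso \gctx[(q'^T, (y_1 :: l, y_2, r))]$: pattern-matching binds $y_1 \mapsto s^T$, sets $y_2$ to the head and $x_2$ to the tail of the right list, and prepends the old scanned symbol $y_1$ to the grown list $l$. Since the left component stores its tape in reverse, a prepend to that component realises an append to the corresponding end of the actual tape, so the triple produced is exactly the encoding of the configuration prescribed by the RTM transition, with $\growth$ already applied to both sides by Lemma~\ref{lemma:growth-semantics}. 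Reassembling then gives $\tilde C$, and the cases $\leftarrow$ and $\downarrow$ are symmetric.

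I expect the main obstacle to be precisely this zipper bookkeeping, together with the boundary situation in which the symbol the move must read sits on a notionally empty, all-blank portion of the tape: in the finite encoding the relevant list could be empty, and then the move clause would fail to destruct a head and tail. This is exactly what the $\growth$ step on the output forestalls, since reinstating a fresh $b^T$ at each end after every transition maintains the invariant that the symbol under the head, and the symbols immediately to either side, are always explicitly present. Discharging the lemma for the encoded configurations arising along such padded runs is then routine evaluation; the only delicate point is to state the invariant carefully enough that the head and tail destructed by each move clause are guaranteed to exist.
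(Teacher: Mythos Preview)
Your proposal is correct and follows essentially the same route as the paper: a case analysis on the shape of the transition rule $(q,a,q')$, exhibiting the matching substitution for the corresponding clause of $\opn{isos}(M)$ and then invoking Lemma~\ref{lemma:growth-semantics} to evaluate the $\growth$ call in the body. Your explicit appeal to orthogonality and your discussion of the empty-tape boundary case are slightly more detailed than the paper's proof, which simply notes in passing that the growth invariant guarantees the relevant list is never empty during simulation; otherwise the arguments coincide.
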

\begin{proof}[Proof of Lemma~\ref{lem:sim-trans}]

  By analysis on $M\vdash C\rightsquigarrow C'$: Given the
  configuration \\ $(q, ((\epsilon, a_1, \dots, a_n), s, (a_1', \dots,
  a_m', \epsilon)))$, \\ we have $\opn{isos}(C) = (q^T,(l_i, s^T,
  r_i))$ where $l_i = [a_1^T, \dots, a_n^T]$ and $r_i = [a_m'^T,
  \dots, a_1'^T]$.

  We then consider the two possible kinds of rule:

  The first thing to note is that if $n$ or $m = 0$, then $l_i$ (resp.
  $r_i$) is empty, but we won't be required to handle those cases as the
  $\growth$-iso will ensure that we never have an empty tape
  during the simulation.
  \begin{itemize}
    \item Considering a rule $(q, (s, s'), q')$, we get that $C' =
    (q', (\epsilon, a_1, \dots, a_n), s', (a_1', \dots, a_m',
    \epsilon))$.
    By definition of the encoding this corresponds to the clause \\
    $(q^T, (x_1, s^T, x_2)) \iso \gctx[(q'^T, (l, s'^T, r))]$, where
    $x_1, x_2, l, r$ are variables. By pattern matching we have that
    the first element of each pair matches since they are equal closed
    values. Then, by pattern-matching we obtain the substitution
    $\sigma \set{x_1 \mapsto l_i, x_2 \mapsto r_i}$. By definition of
    the substitution we obtain, $\sigma(\gctx[(q'^T, (l, s'^T, r))])$,
    which, by Lemma~\ref{lemma:growth-semantics} reduces to $(q'^T,
    (l', s'^T, r'))$ where $l' = [a_n^T, \dots, a_1^T, b^T]$ and $r' =
    [a_1'^T, \dots, a_m'^T, b^T]$, this whole configuration is the
    result of applying the $\growth$-iso to the left and right
    tape of $\opn{isos}(C')$.

    \item Considering a rule $(q, d, q')$ for $d\in\set{\leftarrow,
    \downarrow, \rightarrow}$, suppose $d = \leftarrow$ (the other
    case being similar), we get that $C' = (q', (\epsilon, a_1, \dots,
    a_n, s), a_1', (a_2', \dots, a_m', \epsilon))$.
    By definition of the encoding this corresponds to the clause \\
    $(q^T, (y_1 :: x_1, y_2,x_2)) \iso \gctx(q'^T, (l, y_1, y_2 ::
    r))$, which, as previously will match with $\sigma = \set{x_1
    \mapsto l_i, x_2\mapsto r_i, y_1 \mapsto s^T}$ which by
    pattern-matching and Lemma~\ref{lemma:growth-semantics} reduces to
    the configuration $(q'^T, (l_i', s', r_i'))$ where $l_i = [s^T,
    a_n^T, \dots, a_1^T, b^T]$, $s' = a_1'^T$ and $r_i = [a_2'^T,
    \dots, a_m'^T, b^T]$ which is indeed the result of applying the
    $\growth$-iso to the left and right tape of $\opn{isos}(C')$.
    \qed
  \end{itemize}
\end{proof}

\begin{corollary}
  Given a RTM $M$ and an initial configuration $C$ such that $M \vdash
  C \rightsquigarrow^{n} C'$ where $C'$ is the final configuration, we
  get that $(\underbrace{\opn{isos}(M), \dots, \opn{isos}(M)}_{n\text{
  times }}) C^T \to^* \tilde{C'}$ where $\tilde{C'}$ is the encoding
  of the final configuration $C'$ with additional $blank$ symbols on
  each side of the tape. Furthermore, at each step of the evaluation
  we get that the left and right tape is never empty.
\end{corollary}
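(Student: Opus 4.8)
The plan is to obtain the statement by iterating the single-step simulation of \Cref{lem:sim-trans} $n$ times, the induction on $n$ being immediate once the single step is phrased in a form stable under the blank-padding that $\growth$ introduces. The point to watch is that \Cref{lem:sim-trans} takes the \emph{canonical} encoding $\opn{isos}(C)$ as input but returns $\tilde{C}$, in which each tape carries one extra trailing $b^T$; hence the second and later applications of $\opn{isos}(M)$ act not on canonical encodings but on blank-padded ones, and the lemma as stated does not directly apply to them.

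I would therefore first prove a strengthened single-step lemma. Call an encoding $\hat{C}$ of a configuration $C=(q,(l,s,r))$ \emph{blank-extended} if it has the shape $(q^T,(\hat{l},s^T,\hat{r}))$ where $\hat{l}$ and $\hat{r}$ are the canonical list encodings of $l$ and $r$ each followed by at least one trailing $b^T$. The strengthened claim is: if $M\vdash C\rightsquigarrow C'$ and $\hat{C}$ is any blank-extended encoding of $C$, then $\opn{isos}(M)(\hat{C})\to^*\hat{C'}$ for some blank-extended encoding $\hat{C'}$ of $C'$. Its proof is the same case analysis on the transition as in \Cref{lem:sim-trans}: the clause for a rewrite $(q,(s,s'),q')$ inspects only the state and the scanned symbol, both matched on closed values, so the trailing blanks are untouched; the clauses for the moves deconstruct a tape at its \emph{front} (the end closest to the head), which is always possible because each tape is non-empty, and when the genuine content of that tape has been exhausted the symbol pulled off is precisely a trailing $b^T$, mirroring the way the abstract machine reads from its infinite blank region. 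In every case the body ends with $\growth$, which reappends a fresh $b^T$ to each tape, so the output is again blank-extended.

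With this lemma the result follows by plain iteration. Writing $M\vdash C\rightsquigarrow C_1\rightsquigarrow^{\,n-1}C'$, the first application gives $\opn{isos}(M)(\opn{isos}(C))\to^*\tilde{C_1}$ by \Cref{lem:sim-trans}, and $\tilde{C_1}$ is a blank-extended encoding of $C_1$; the remaining $n-1$ applications are covered by the strengthened lemma, each mapping a blank-extended encoding to a blank-extended encoding, so that $(\opn{isos}(M),\dots,\opn{isos}(M))\,\opn{isos}(C)\to^*\tilde{C'}$ with $\tilde{C'}$ a blank-extended encoding of the final configuration $C'$, i.e.\ $\opn{isos}(C')$ carrying additional blanks on each side. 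The ``furthermore'' clause is exactly the invariant built into the strengthened lemma: after the first step $\growth$ has put at least one $b^T$ on each tape, and since every subsequent step reapplies $\growth$, both tapes stay non-empty throughout the evaluation.

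The main obstacle is the strengthened lemma, and within it the clause-by-clause check that padding does no harm. On the one hand, padding must not create a spurious match: since states and scanned symbols are compared as closed values, the orthogonality of \Cref{isos-rtm:encoding-orthogonality} together with forward determinism still singles out a unique applicable clause regardless of the trailing blanks. On the other hand, padding must not block the intended match: the move clauses require the relevant tape to be non-empty, which is guaranteed precisely by the blank-extension. Once these two points are settled, the accumulation of blanks across the $n$ steps is harmless, because the RTM semantics identifies every trailing blank with the ambient infinite blank tape.
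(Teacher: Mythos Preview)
Your proposal is correct and follows the same high-level approach as the paper: iterate \Cref{lem:sim-trans} $n$ times and note that $\growth$ keeps both tapes non-empty after the first step. The paper's own proof is the two-sentence version of this (``direct implication of \Cref{lem:sim-trans}; non-emptiness follows from $\growth$''), so you are not diverging from it but rather filling in what the paper leaves implicit.

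Where you add value is in making explicit the input/output mismatch of \Cref{lem:sim-trans}: the lemma consumes the canonical encoding $\opn{isos}(C)$ but produces a blank-padded one, so the second and later applications are not literally covered by the lemma as stated. Your ``blank-extended encoding'' invariant and the strengthened single-step lemma are exactly the right fix, and your clause-by-clause justification (padding cannot create spurious matches because matching is on closed state/symbol values; padding cannot block the intended match because tapes are non-empty; reading a trailing $b^T$ coincides with the RTM reading from its blank region) is sound. The paper tacitly relies on the same observation---its proof of \Cref{lem:sim-trans} already notes that empty-tape cases ``won't be required'' thanks to $\growth$---but never states the strengthened form you write down. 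Your version is simply the rigorous rendering of the paper's sketch.
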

\begin{proof}
  This is a direct implication of~\Cref{lem:sim-trans}. The
  non-emptiness of the tapes follow from the fact that before each
  simulation we apply the $\growth$ iso on the configuration.
\end{proof}

\begin{remark}
Notice that given a RTM $M$, $\opn{isos}(M)^{-1} \not=
\opn{isos}(M^{-1})$ due to the fact that $\opn{isos}(M)^{-1}$ will
apply $\growth^{-1}$ and this is not what we want.
\end{remark}

So far, $\opn{isos}(M)$ simulates exactly one evaluation step of
the Turing Machine. In order to simulate a full run, we need to define
an iterator iso. It takes as argument an iso of type $A\iso A\otimes
(\one\oplus\one)$, which acts on data of type $A$ until the
additional boolean is set to $\fc$.

\begin{definition}[Iterator Iso]\rm
  Let $\omega_{\opn{aux}} \colon (A\iso A\otimes \natT) \to (A\otimes
    (\one\oplus\one)\iso A\otimes\natT)$ be
    \[\omega_{\opn{aux}} = \lambda\isolambdavar'. \left\{\begin{array}{l@{~}c@{~}l} (y, \tc) &\iso& \letv{(z,
    n)}{\isolambdavar'~y}{(z, S~n)} \\
    (y, \fc) & \iso & (y, 0) \end{array}\right\}\]
  Defined $\It \colon (A\iso A \otimes (\one\oplus\one)) \to (A\iso
  A\otimes \natT)$, counting the number of times its iso argument has been iterated:
  \[\It = \lambda\isolambdavar. \fix \isovar. \left\{\begin{array}{l@{~}c@{~}l}
     x& \iso & \letv{y}{\isolambdavar~x}{}
     \letv{z}{(\omega_{\opn{aux}}~\isovar)~y}{}
     z
  \end{array}\right\}\]
\end{definition}

As the iterator iso require an iso of type $A \iso A \otimes (\one
\oplus \one)$, we need to modify the encoding of the Turing Machine:

\begin{definition}[Second encoding]
  We modify the encoding of the RTM $\opn{isos}(M)$ into
  $\opn{isos}_\boolT(C)$ where the encoding of $\delta$, which we now
  call $\delta_\boolT$, is such that if the transition lead into the
  final state of the RTM, then the boolean is set to $\fc$, otherwise
  it is sent to $\tc$.
\end{definition}

The new encoding does not change the orthogonality and hence the new
iso is well-typed:

\begin{lemma}
  For any RTM $M$ we have $\vdash_\omega \opn{isos}_\boolT(M) \colon C^T
  \otimes (\one\oplus\one)\iso C^T \otimes (\one\oplus\one)$.\qed
\end{lemma}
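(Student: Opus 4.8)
The plan is to reduce the claim to Lemma~\ref{lem:preom-typed}, which already establishes that the undecorated encoding $\opn{isos}(M)$ is well-typed of type $C^T \iso C^T$ with pairwise orthogonal left-hand and right-hand sides, and then to check that decorating each clause with a boolean component disturbs neither the typing nor the two orthogonality conditions demanded by the iso-typing rule of~\Cref{tab:typisos}.

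First I would make explicit how $\opn{isos}_\boolT(M)$ arises from $\opn{isos}(M)$: each clause is obtained by pairing the transition output with a boolean flag, namely $\fc$ precisely when the target state is $q_f$ and $\tc$ otherwise. The three obligations of the iso-typing rule are then discharged in turn. Well-typedness of each clause in isolation is immediate from Lemma~\ref{lem:preom-typed}: the added flag is one of the two closed values of $\one\oplus\one$, and pairing it with an already well-typed value or expression yields a well-typed term of the enlarged type by the tensor-introduction rule of~\Cref{tab:typterm}.

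For the orthogonality conditions I would invoke the context-closure rules of Definition~\ref{def:orthogonality}. On the left-hand sides the matched configurations coincide with those of $\opn{isos}(M)$, so from $v_i~\bot~v_j$ (given by Lemma~\ref{lem:preom-typed}) the pairing context $\pv{C_\bot}{t}$ yields orthogonality of the decorated patterns. On the right-hand sides, any two distinct clauses already have orthogonal outputs $e_i~\bot~e_j$ by backward determinism, so the same pairing context preserves orthogonality after decoration; moreover, should two configurations ever coincide, the flags $\tc$ and $\fc$ are themselves orthogonal through the rule $\inl{t_1}~\bot~\inr{t_2}$, so the decoration can only reinforce orthogonality, never destroy it.

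I do not expect a genuine obstacle: the combinatorial core of the argument — the case analysis on the shape of the transition together with the appeal to forward and backward determinism — is exactly that of Lemma~\ref{lem:preom-typed}, and the only new point is the bookkeeping that the relation $\bot$ is closed under the relevant pairing contexts, which is built into Definition~\ref{def:orthogonality}. Consequently the statement follows by the same case analysis, now read off the decorated clauses.
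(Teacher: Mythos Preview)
Your proposal is correct and matches the paper's approach: the paper itself gives only the one-line remark ``the new encoding does not change the orthogonality and hence the new iso is well-typed'' before the \qed, and your argument simply spells out why, via Lemma~\ref{lem:preom-typed} and the context-closure rules of Definition~\ref{def:orthogonality}. One small point of confusion: you speak of ``decorated patterns'' on the left-hand side while also saying the left-hand configurations coincide with those of $\opn{isos}(M)$; in fact the left-hand sides are unchanged (the intended type is $C^T \iso C^T\otimes(\one\oplus\one)$, as required by $\It$), so no pairing context is needed there---but this does not affect the validity of your argument.
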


We can now show that the applying the Iterator iso to some standard
configuration with $\opn{isos}_\boolT(C)$ is the same as applying
multiple time $\opn{isos}(C)$.

\begin{lemma}[Semantics of $\It(\omega)$]
    \label{lem:rtm:it}
    Given some RTM $M$ and initial configuration $C$ then if
    $(\It~\opn{isos}_\boolT(M))~C^T \rightarrow^* (v, \ov{n})$ then
    $\underbrace{\opn{isos}(M), \dots, \opn{isos}(M)}_{n+1 \text{
    times}}~C^T \to^* v$, where $\ov{n}$ is the encoding of the
    natural number $n$ into an element of type $\natT$, as given
    in~\Cref{ex:cantor}. \qed
\end{lemma}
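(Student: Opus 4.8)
The plan is to prove the statement by induction on $n$, in the slightly generalised form where $C^T$ is replaced by an arbitrary configuration value $D$ of type $C^T$, since the inductive step reapplies the hypothesis to the successor configuration. Throughout I abbreviate $W = \opn{isos}_\boolT(M)$, $W' = \opn{isos}(M)$ and $F = \It~W$. The induction rests on one preliminary observation about the two encodings: by the definition of the second encoding, $\delta_\boolT$ differs from $\delta$ only by appending a boolean that is $\fc$ exactly when the transition enters the final state $q_f$ and $\tc$ otherwise. Hence, for every configuration value $D$, the iso $W~D$ reduces to a value $(D',\beta)$ whose configuration component $D'$ is precisely the value of $W'~D$, with $\beta = \fc$ iff $D'$ is in state $q_f$; this is the bridge transferring information from the $W$-side to the $W'$-side.

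First I would unfold the head reductions of $F~D$. Reducing the iso in head position, $\It~W = (\lambda\isolambdavar.\fix\isovar.\{\dots\})~W$ fires the $\lambda$-redex (substituting $W$ for $\isolambdavar$) and then the $\fix$ (substituting $F$ itself for the recursive variable $\isovar$); firing the single resulting clause on $D$ yields
\[ F~D \to^* \letv{y}{W~D}{}\letv{z}{(\omega_{\opn{aux}}~F)~y}{}z. \]
Because $F~D$ is assumed to reach a value, the inner call $W~D$ must reach a value $(D',\beta)$, after which $y \mapsto (D',\beta)$ and the clause of $\omega_{\opn{aux}}$ that fires is dictated by $\beta$.

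For the base case $n = 0$, the result carries the numeral $\ov 0$ only if the clause $(y,\fc)\iso(y,0)$ fires, forcing $\beta = \fc$ and $v = D'$; the preliminary then gives $W'~D \to^* D' = v$, which is exactly $n+1 = 1$ application of $W'$. For the inductive step, assume the claim for $n$ and suppose $F~D \to^* (v,\ov{n+1})$. Since $\ov{n+1}$ is a successor, the clause $(y,\tc)\iso\letv{(z,m)}{F~y}{}(z,S~m)$ must fire, so $\beta = \tc$ and the computation continues as $\letv{(z,m)}{F~D'}{}(z,S~m)$; since the rewriting system is deterministic, the value of $F~D'$ is forced to be $(v,\ov n)$. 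The induction hypothesis applied to $D'$ yields $\underbrace{W',\dots,W'}_{n+1}~D' \to^* v$, and prepending the step $W'~D \to^* D'$ supplied by the preliminary gives $\underbrace{W',\dots,W'}_{n+2}~D \to^* v$, i.e. the desired $(n+1)+1$ applications.

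The main obstacle is the bookkeeping around the doubly-bound iso $F = \fix\isovar.(\dots)[W/\isolambdavar]$: one must verify that the recursive occurrence appearing through $\omega_{\opn{aux}}$ is genuinely $F$ again, and then exploit determinism to read off, from the single hypothesis $F~D \to^* (v,\ov n)$, both that the boolean first becomes $\fc$ exactly at the final iteration and that the recursive subcall returns $\ov{n-1}$. Everything else reduces to routine unfolding of let-bindings and pattern matching.
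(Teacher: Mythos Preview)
Your proposal is correct and follows essentially the same approach as the paper: induction on $n$, unfolding the head redexes of $\It$ to expose the call to $\opn{isos}_\boolT(M)$, and then branching on whether the returned boolean is $\fc$ (base case) or $\tc$ (inductive step). You are in fact slightly more careful than the paper in two respects: you make explicit the generalisation from the initial configuration $C^T$ to an arbitrary configuration value $D$ (which the paper uses tacitly when applying the induction hypothesis to the intermediate configuration), and you spell out the bridge lemma relating the two encodings $\opn{isos}_\boolT(M)$ and $\opn{isos}(M)$, which the paper leaves implicit.
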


\begin{proof}[Proof of Lemma~\ref{lem:rtm:it}]
  Call $\It'$ the Iterator iso where the variable $\isolambdavar$
  has been substituted by $\opn{isos}_\boolT(M)$.

  Then, proceed by induction on $n$.
  \begin{itemize}
      \item $0$, Our hypothesis state that:
      \begin{align*}
        & (\It~\opn{isos}_\boolT(M))~C^T \\
        &\to^* \letv{y}{\opn{isos}_\boolT(M)~C^T}{\letv{z}{\omega_{\opn{aux}}~y}{z}} \\
        &\to^* \letv{z}{(\omega_{\opn{aux}}~\It')~(v, \fc)}{z} \\
        &\to^* (v, 0)
      \end{align*}

      Therefore, we know that $v$ correspond to a final state of the
      RTM and thus: $\opn{isos}(M)~C^T \to v$.

      \item $n+1$:
      Our hypothesis state that:

      \begin{align*}
        &(\It~\opn{isos}_\boolT(M))~C^T\\
        &\to \letv{y}{\opn{isos}_\boolT(M)~C^T}{(\letv{z}{(\omega_{\opn{aux}}~\phi)~y}{z})} \\
        & \to
        \letv{z}{\omega_{\opn{aux}}~(v_1, \tc)}{z}  \\&\equiv \letv{z}{\left\{\begin{array}{l@{~}c@{~}l}
            (y, \tc) &\iso& \letv{(z, n)}{\It'~y}{(z, S~n)} \\
            (y, \fc) & \iso & (y, S~0) \end{array}\right\}~(v_1,
            \tc)}{z} \\
        &\to \letv{y}{(\letv{(z, n)}{\It~v_1}{(z, S~n)})}{y}
      \end{align*}

      Then we know that $\It'~v_1$ will reduce to some $(v', n)$
      and so by IH we get that
      $\underbrace{\opn{isos}(M)~\dots~\opn{isos}(M)}_{n+1 \text{
      times}}~v_1 \to^* (v', \fc)$, so we get
      $\underbrace{\opn{isos}(M)~\dots~\opn{isos}(M)}_{n+2 \text{
      times}}~C^T \to^* v'$. \qed
  \end{itemize}
\end{proof}

The new encoding does not change the orthogonality and hence the new
iso is well-typed:

Finally, we get from from~\Cref{lem:sim-trans} and~\Cref{lem:rtm:it}
that if, from an initial configuration $C$, a RTM goes into the final
configuration $C'$ in $n+1$ steps, then the Iterator iso on
$\opn{isos}(M)$ with input the encoding of $C$ reduces to the pair of
the encodings of $C'$ and $n$.

\begin{theorem}[Simulation of String Semantics]
  \label{thm:rtm-iso-simulation}
  Let $M$ be a RTM. If  $M \vdash (q_s, (\epsilon, b, s))
	\rightsquigarrow^{n+1} (q_f, (\epsilon, b, s'))$, then \\
	$It(\opn{isos}_\boolT(M))~(q_s^T, ([b^T], b^T, s^T)) \rightarrow^*
	((q_f^T, ([b^T, \dots, b^T], b, [a_1^T, \dots, a_n^T, b^T, \dots,
	b^T])), \overline{n})$ where $s' = (a_1, \dots, a_n)$.
\end{theorem}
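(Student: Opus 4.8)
The plan is to assemble the theorem from the two building blocks already established: the corollary to \Cref{lem:sim-trans}, which iterates the single-step simulation, and \Cref{lem:rtm:it}, which relates the iterator iso $\It$ to repeated applications of $\opn{isos}(M)$. First I would write the run as a chain $C_0 \rightsquigarrow C_1 \rightsquigarrow \cdots \rightsquigarrow C_{n+1}$ with $C_0 = (q_s, (\epsilon, b, s))$ and $C_{n+1} = (q_f, (\epsilon, b, s'))$, and recall that, by the definition of the boolean-augmented encoding, $\opn{isos}_\boolT(M)$ computes exactly $\opn{isos}(M)$ on the configuration component while additionally returning $\fc$ precisely when the target state is $q_f$ and $\tc$ otherwise. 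Since there are no transitions into $q_s$ nor out of $q_f$, only the final step $C_n \rightsquigarrow C_{n+1}$ produces $\fc$; every earlier application returns $\tc$.

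Next I would apply the corollary to \Cref{lem:sim-trans} to the hypothesis $M \vdash C_0 \rightsquigarrow^{n+1} C_{n+1}$, obtaining that $\underbrace{\opn{isos}(M),\dots,\opn{isos}(M)}_{n+1}~(q_s^T,([b^T],b^T,s^T)) \to^* \tilde{C_{n+1}}$, where $\tilde{C_{n+1}}$ is the encoding of $C_{n+1}$ whose two tapes have each been grown by the $\growth$ iso at every step; the same corollary guarantees the tapes stay non-empty throughout, which is why the initial left tape is padded to $[b^T]$ and pattern matching on the transition clauses never gets stuck. Tracking the growth, the left tape of $\tilde{C_{n+1}}$ is $[b^T,\dots,b^T]$ and the right tape is $[a_1^T,\dots,a_n^T,b^T,\dots,b^T]$ with $s' = (a_1,\dots,a_n)$, which is exactly the configuration displayed in the statement. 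It then remains to thread this through the iterator. Because $\opn{isos}_\boolT(M)$ returns $\tc$ on the first $n$ applications and $\fc$ on the $(n+1)$-th, the iso $\It(\opn{isos}_\boolT(M))$ unfolds its fixpoint exactly $n+1$ times before $\omega_{\opn{aux}}$ meets $\fc$ and halts, contributing one increment per $\tc$, hence a final count $\overline{n}$; the configuration component is carried along untouched and equals $\tilde{C_{n+1}}$. Invoking \Cref{lem:rtm:it} together with the determinism of the rewriting system, and hence uniqueness of normal forms, identifies this unique value with $(\tilde{C_{n+1}},\overline{n})$, the desired reduct.

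The main obstacle is bookkeeping rather than conceptual. First, one must carefully count the blank padding: each of the $n+1$ growth steps appends one $b^T$ to each tape end, and combined with the initial padding on the left this must be reconciled with the deliberately imprecise $[b^T,\dots,b^T]$ notation of the statement, so the argument only needs the \emph{shape}, not an exact blank count. Second, there is a genuine off-by-one to respect: $n+1$ transition steps correspond to $n+1$ iso applications inside $\It$, yet the iterator's counting convention (increment on $\tc$, reset on $\fc$) yields $\overline{n}$, matching the base case $n=0$ of \Cref{lem:rtm:it}. Finally, \Cref{lem:rtm:it} is phrased as an implication from an $\It$-reduction to an iterated-iso reduction, whereas here I argue in the reverse direction; closing this gap is precisely where determinism of $\to$ is used, ensuring that the value produced by $\It(\opn{isos}_\boolT(M))$ can be no other than the one exhibited above.
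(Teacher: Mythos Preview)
Your proposal is correct and follows essentially the same route as the paper, which simply states that the result follows directly from \Cref{lem:sim-trans} and \Cref{lem:rtm:it}. You are in fact more careful than the paper: you explicitly flag the direction mismatch in \Cref{lem:rtm:it} and close it via determinism, and you track the blank-padding bookkeeping that the paper leaves implicit.
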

\begin{proof}
  The proof follow directly from~\Cref{lem:sim-trans} and~\Cref{lem:rtm:it}.
\end{proof}

If fact the converse is also true, i.e. if given some RTM $M$ and a
configuration $C$ such that $\opn{isos}(M)(\opn{isos}(C)) \to v$ where
$v$ is another configuration, then $M \vdash C \rightsquigarrow v$.

\begin{theorem}[Equivalence of RTM and isos]
  Given a RTM $M$ and some configuration $C$, if
  $\opn{isos}(M)~\opn{isos}(C) \to v$ then $M \vdash C
  \rightsquigarrow C'$ where $v = \opn{isos}(C')$ up to some blank
  symbols on the right and left side of the tape.
\end{theorem}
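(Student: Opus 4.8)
The plan is to run the argument of \Cref{lem:sim-trans} backwards, turning the operational reduction of the iso into a single transition of the machine. Suppose $\opn{isos}(M)~\opn{isos}(C) \to^* v$ for some value $v$. Since $\opn{isos}(M)$ is a set of clauses $\isobasique$ whose left-hand sides $v_i$ are exactly the encoded left patterns of the transitions in $\delta$, the only applicable rewriting rule is pattern-matching: there is a clause $v_i \iso e_i$ and a substitution $\sigma$ with $\sigma(v_i) = \opn{isos}(C)$, firing $\sigma(e_i)$, whose remaining reductions involve only the $\growth$ iso inside the context $\gctx[-]$. First I would invoke \Cref{lem:preom-typed}, which records that the clauses of $\opn{isos}(M)$ are pairwise orthogonal; by \Cref{lem:isos:orthogonality} the matched left pattern $v_i$ is then unique, so the transition $(q, r, q') \in \delta$ giving rise to it is uniquely determined by $\opn{isos}(C)$.

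Next I would perform a case analysis on the shape of $r$, mirroring the four cases in the definition of the encoding of $\delta$. Writing $C = (q, (l, s, \rho))$ so that $\opn{isos}(C) = (q^T, (\overline{l}, s^T, \overline{\rho}))$, the form of the matched pattern forces the structure of $C$ and reads off the successor. For instance, a clause coming from a move $(q, \leftarrow, q')$ has left pattern $(q^T, (y_1 :: x_1, y_2, x_2))$, so a successful match forces the state to be $q$ and the (reversed) left tape $\overline{l}$ to be non-empty, which is precisely the precondition for $M$ to take a $\leftarrow$-step from $C$; the substitution $\sigma$ supplies the head and tails needed to name the target configuration $C'$. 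Applying \Cref{lemma:growth-semantics} to evaluate $\gctx[-]$ in $e_i$, the resulting value is exactly $\opn{isos}(C')$ with one extra $b^T$ appended to each of the two tapes. Hence $M \vdash C \rightsquigarrow C'$ and $v = \opn{isos}(C')$ up to blank symbols on both ends, as claimed. The symbol-rewrite case $(q,(s,s'),q')$ and the remaining moves $\rightarrow$ and $\downarrow$ are handled identically, each forcing the corresponding precondition on $C$ and each yielding the matching $C'$.

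The main obstacle is bookkeeping rather than conceptual: I must verify that \emph{every} way the pattern-matching can succeed corresponds to a genuine transition of $M$, including the boundary subtlety that the move-clauses match only when the relevant side of the tape is non-empty. Here the $\growth$-padding carried out at the previous step guarantees non-empty tapes (as recorded in the corollary following \Cref{lem:sim-trans}), so whenever $C$ is reachable this precondition is met and no stuck reductions or spurious matches arise. Combined with the uniqueness afforded by orthogonality, this shows the reconstructed $C'$ is the unique $\rightsquigarrow$-successor of $C$, giving the converse of \Cref{lem:sim-trans}.
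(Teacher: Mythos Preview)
Your approach is essentially the paper's: exploit the one-to-one correspondence between clauses of $\opn{isos}(M)$ and elements of $\delta$, do a case analysis on the matched clause, and invoke \Cref{lemma:growth-semantics} for the blank padding. The paper's own proof is a two-sentence sketch along exactly these lines.

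One small point: your final paragraph introduces a restriction that is neither in the statement nor needed. The theorem is stated for an \emph{arbitrary} configuration $C$, not only reachable ones, so appealing to ``the $\growth$-padding carried out at the previous step'' and the corollary after \Cref{lem:sim-trans} is out of place. The worry you raise is in fact vacuous in this direction of the equivalence: the hypothesis is that $\opn{isos}(M)~\opn{isos}(C)$ \emph{does} reduce, so some clause matched; for a move clause this already forces the relevant encoded tape to be of the form $h :: t$, and the corresponding RTM transition $(q,d,q')$ is applicable to any configuration regardless of tape contents (tapes in the RTM model are infinite). There are thus no ``stuck reductions'' to rule out and no ``spurious matches'' to worry about. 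Drop the reachability clause and your argument goes through for all $C$, matching the theorem as stated.
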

\begin{proof}
  The proof is direct as there is a one-to-one correspondence between
  clauses of $\opn{isos}(M)$ and element of $\delta$, then proceeds by
  analysis of the reduction of $\opn{isos}(M)~\opn{isos}(C) \to v$ and
  by analysis of $C$. Since $\opn{isos}(M)$ will add some blank
  symbols on each side of the tape of $C$, even if the tape is
  initially empty, the transition is still defined.
\end{proof}

Notice that at the end of the simulation, we do not exactly obtain the
tape, but the tape up-to some blank symbols on the left and on the
right. By~\Cref{lem:sim-trans} we know that the left-tape is just full
of blanks symbols while the right-tape contain the result.

Is it then possible to apply a $\opn{clean-up}$ iso which takes a
configuration with additional blank and remove them, leaving only the
end result.

\begin{definition}[Remove Blank Iso]

  Given an alphabet $\Sigma = \set{b, a_1, \dots, a_n}$ We define the
  remove-blank iso from $[\Sigma] \to [\Sigma] \otimes \natS$ as:
  \[  \rmBlank = \fix \isovar. \left\{\begin{array}{ccc}
    [] & {\iso} & ([], 0) \\
    b^T :: t & {\iso} & \letv{(t', n)}{\isovar~t}{(t', S(n))} \\
    a_1^T :: t & {\iso} & ((a_1^T :: t), 0) \\
    \vdots & \vdots & \vdots \\
    a_n^T :: t & {\iso} & ((a_n^T :: t), 0) \end{array}\right\}\]
\end{definition}
\begin{lemma}[Semantics of $\rmBlank$]
  \label{lem:semantics-removeBlank}
  Given $l = [b_1^T, \dots, b_m^T, a_1^T, \dots, a_n^T]$, st $a_1,
  \dots, a_n\in\Sigma$ and $a_1\not= b$ then $\rmBlank l
  \to^* ([a_1^T, \dots, a_n^T], m)$.
\end{lemma}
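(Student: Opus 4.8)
The plan is to proceed by induction on $m$, the number of leading blank symbols stripped from the front of the list. Throughout I write $m$ for its encoding $\overline{m}$ of type $\natT$, following the abuse of notation of~\Cref{ex:cantor}. The crucial preliminary observation is that, since $\rmBlank = \fix\isovar.\{\dots\}$, each application of $\rmBlank$ to a value first unfolds the fixpoint through the rule $\ffix\isovar.\omega \to \omega[\ffix\isovar.\omega/\isovar]$ of~\Cref{def:rewriting-system}, replacing the bound iso-variable $\isovar$ by $\rmBlank$ itself, and only then does pattern-matching on the argument fire. This is exactly what turns the recursive occurrence in the second clause into $\rmBlank~t$, so that the induction hypothesis applies to it.

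For the base case $m = 0$ the list is $l = [a_1^T, \dots, a_n^T]$ with $a_1 \neq b$. After unfolding the fixpoint, the head $a_1^T$ is a non-blank symbol and is therefore orthogonal to $b^T$; it matches exactly the clause $a_1^T :: t \iso ((a_1^T :: t), 0)$ indexed by the alphabet symbol equal to $a_1$, under the substitution $\{t \mapsto [a_2^T, \dots, a_n^T]\}$. Hence $\rmBlank~l \to^* ([a_1^T, \dots, a_n^T], 0)$, as required. Note that the hypothesis $a_1 \neq b$ is precisely what guarantees we land on a non-blank clause instead of continuing to strip symbols, so it cannot be dropped.

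For the inductive step I would write $l = b^T :: l'$ with $l' = [b_2^T, \dots, b_m^T, a_1^T, \dots, a_n^T]$, which still begins with $m-1$ blanks followed by the same non-blank tail. After unfolding, the head $b^T$ matches the second clause $b^T :: t \iso \letv{(t', n)}{\isovar~t}{(t', S(n))}$, binding $t \mapsto l'$ and turning $\isovar$ into $\rmBlank$. The body then evaluates $\rmBlank~l'$, which by the induction hypothesis reduces to $([a_1^T, \dots, a_n^T], m-1)$. Since the {\lett} is call-by-value, it binds $t' \mapsto [a_1^T, \dots, a_n^T]$ and $n \mapsto \overline{m-1}$, so the whole expression reduces to $([a_1^T, \dots, a_n^T], S(\overline{m-1})) = ([a_1^T, \dots, a_n^T], m)$, closing the induction.

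The only genuinely delicate point is the bookkeeping around the fixpoint: one must check that the recursive occurrence $\isovar$ is substituted by the full $\rmBlank$ before each match, and that the orthogonality of the clause heads (via~\Cref{isos-rtm:encoding-orthogonality}, lifted through the common $\fold{(\inr{(\pv{[-]}{t}})})$ context) makes the evaluation deterministic, so that no competing clause could fire on a blank head. Everything else is a routine unwinding of the call-by-value and call-by-name rules, and I expect no further obstacle.
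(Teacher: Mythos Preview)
Your proof is correct and follows essentially the same inductive argument as the paper, which also proceeds by induction on $m$ (the paper phrases it as ``induction on $m,n$'' but $n$ never decreases). You supply more operational detail about fixpoint unfolding and pattern-matching than the paper does, which is fine.

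One small omission: your base case $m=0$ silently assumes $n\geq 1$ (you speak of ``the head $a_1^T$''), whereas the statement is meant to cover $n=0$ as well, in which case $l=[\,]$ and the first clause $[\,]\iso([\,],0)$ fires. The paper handles this explicitly as a separate subcase. It is a one-line addition and does not affect the structure of your argument.
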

\begin{proof}
  This is done by induction on $m, n$.
  \begin{itemize}
    \item If $m = 0, n > 0$ we enter one of the clause $a_i^T :: t$ for some
    $a_i\not= b\in\Sigma$, in which case the list is just returned as it
    is, with $0$ as second argument.
    \item If $m = 0, n = 0$, then the list is empty and we enter the
    first clause.
    \item If $m > 0, n = 0$, then we can apply our induction
    hypothesis on the recursive call.
    \item Case $m, n > 0$ is similar to the previous case.
  \end{itemize}
\end{proof}

As the right tape contains the blank at the end, we need to first
reverse it in order to apply the $\rmBlank$ iso:

\begin{definition}[List reversal]

First define $\opn{rev}_{\text{aux}} \colon [A] \otimes [A] \iso [A]
\otimes [A]$ as

\[ \fix\isovar. \left\{\begin{array}{ccc}
  ([], y) & {\iso} & ([], y) \\
  (h :: t, y) & {\iso} & \letv{(h_1, h_2)}{\dup_A^\emptyset~h}{} \\
  & & \letv{(t_1, t_2)}{\isovar (t, h_2 :: y)}{} \\
  & & (h_1 :: t_1, t_2)
  \end{array}\right\} \]

  Then the inverse of a list is simply $\opn{rev} \colon [A] \iso [A]
  \otimes [A]$ $\opn{rev} = \set{x \iso \letv{(t_1,
  t_2)}{\opn{rev}_{\text{aux}}~(x, [])}{(t_1, t_2)}}$
\end{definition}
\begin{lemma}[Semantics of $\opn{rev}$]
  \label{lem:semantics-revlist}
  Given a list $l = [v_1, \dots, v_n]$ then $\opn{rev} = (l, [v_n,
  \dots, v_1])$.
\end{lemma}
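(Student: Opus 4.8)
The statement as written is not directly amenable to induction, since the recursive clause of $\opn{rev}_{\text{aux}}$ threads an accumulator that the top-level call initialises to $[]$. The plan is therefore to first prove a strengthened invariant about $\opn{rev}_{\text{aux}}$, generalised over this accumulator, and then specialise it. Concretely, I would prove by induction on $n$ that for every list $l = [v_1, \dots, v_n]$ and every list $y = [w_1, \dots, w_k]$,
\[
  \opn{rev}_{\text{aux}}~([v_1, \dots, v_n], [w_1, \dots, w_k]) \to^*
  ([v_1, \dots, v_n], [v_n, \dots, v_1, w_1, \dots, w_k]).
\]

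The base case $n = 0$ is immediate: $\opn{rev}_{\text{aux}}~([], y)$ matches the first clause and reduces to $([], y)$, which is exactly the right-hand side when $v_1,\dots,v_n$ is empty. For the inductive step, I would unfold the fixpoint using the rewriting rule $\ffix \isovar. \omega \to \omega[\ffix \isovar. \omega/\isovar]$, so that the input $(h :: t, y)$ matches the second clause. Evaluating the body in order: the first {\lett} binding computes $\dup_A^\emptyset~h \to^* \pv{h_1}{h_2}$ with $h_1 = h_2 = h$ by Lemma~\ref{lem:duplication-semantics}; the second binding fires the recursive call $\opn{rev}_{\text{aux}}~(t, h :: y)$, to which the induction hypothesis applies, yielding $(t_1, t_2)$ with $t_1 = t$ and $t_2 = [v_n, \dots, v_2, h, w_1, \dots, w_k]$ (writing $t = [v_2,\dots,v_n]$ and $h = v_1$). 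The clause then returns $(h_1 :: t_1, t_2)$, which is $([v_1,\dots,v_n], [v_n,\dots,v_1,w_1,\dots,w_k])$, matching the claimed invariant.

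Specialising the invariant to $y = []$ gives $\opn{rev}_{\text{aux}}~([v_1,\dots,v_n], []) \to^* ([v_1,\dots,v_n], [v_n,\dots,v_1])$. Finally, by definition $\opn{rev}~l$ reduces by pattern-matching to $\letv{(t_1,t_2)}{\opn{rev}_{\text{aux}}~(l, [])}{(t_1,t_2)}$; substituting the reduced value of the auxiliary call and discharging the {\lett} binding yields $(l, [v_n,\dots,v_1])$, as required. The main obstacle is purely in finding the correct generalisation over the accumulator $y$: once the invariant above is in place, each reduction step is routine and relies only on the fixpoint unfolding rule and the previously established duplication semantics (Lemma~\ref{lem:duplication-semantics}).
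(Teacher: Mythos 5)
Your proposal is correct and follows essentially the same route as the paper: induction on the list, using Lemma~\ref{lem:duplication-semantics} for the duplicated head and then the induction hypothesis on the recursive call, before discharging the top-level $\lett$ in $\opn{rev}$. The one difference is that you state explicitly the strengthened invariant for $\opn{rev}_{\text{aux}}$ with an arbitrary accumulator $y$, which the paper's one-line proof leaves implicit even though its appeal to the induction hypothesis is only sound under precisely this generalisation; your version is thus a more careful rendering of the same argument.
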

\begin{proof}
  By simple induction on $l$, if it is empty then it is direct by
  definition of $\opn{rev}_{\text{aux}}$, if it is $v_1 :: t$ then by
  substitution and~\Cref{lem:duplication-semantics} we have that the
  value that will be substituted to $h_1$ and $h_2$ is $v_1$, then we
  can apply our induction hypothesis.
\end{proof}

Then we can define the $\cleanUp$ iso from $C^T \otimes \natT \iso C^T \otimes
\natT \otimes \natT \otimes \natT \otimes [\Sigma^T]$
where, in the output type we have that
\begin{itemize}
  \item The configuration is the same as the input, without the extra
  blanks symbol on the tapes.
  \item The first natural number is the number of steps of the token
  machine, i.e. the same as the natural number given as input.
  \item The second natural number is the number of blank symbols on the left-tape.
  \item The third natural number is the number of blank symbols on the right-tape.
  \item The list of letter of the alphabet is the initial right-tape.
\end{itemize}

Then we can define the clean up as:

\begin{definition}[CleanUp Iso]

\[  \left\{\begin{array}{ccc}
  ((x, (l, y, r)), n) & {\iso} &
      \letv{(l', n_1)}{\rmBlank~l}{} \\
  & & \letv{(r_{\text{ori}}, r_{\text{rev}})}{\opn{rev}~r}{} \\
  & & \letv{(r', n_2)}{\rmBlank~r_{\text{rev}}}{} \\
  & & (x, (l', y, r')), n, n_1, n_2, r_{\text{ori}}
  \end{array}\right\}\]
\end{definition}

\begin{theorem}[CleanUp Semantics]
  \label{thm:cleanup}
Given a RTM $M$, and initial configuration $C$ such that $M \vdash C
\rightsquigarrow C' = (q, (\epsilon, b, (a_1, \dots, a_n)))$ and a value
$v$ of type $C^T$ such that\\ $v = (q^T, ([b^T, \dots, b^T], b^T,
[a_1^T, \dots, a_n^T, b^T, \dots, b^T]))$

Then we have that $\cleanUp \It(\opn{isos}_\boolT(M)) C^T
\to^* ((q^T, ([], b^T, [a_1^T, \dots, a_n^T])), z)$. where $z$ is of
type $\natT \otimes \natT \otimes \natT \otimes [\Sigma^T]$
\end{theorem}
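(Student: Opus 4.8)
The plan is to reduce the whole term in two stages: first evaluate the inner iteration $\It(\opn{isos}_\boolT(M))\,C^T$, then feed the resulting value to the single clause of $\cleanUp$. Invoking \Cref{thm:rtm-iso-simulation} on the run of $M$ from $C$ to $C'$ supplied by the hypothesis, the inner term reduces to a value $(v, \overline{n})$ of type $C^T \otimes \natT$, where $v = (q^T, ([b^T, \dots, b^T], b^T, [a_1^T, \dots, a_n^T, b^T, \dots, b^T]))$ is exactly the padded encoding of $C'$ given in the statement and $\overline n$ is the number of simulation steps. Since the rewriting system is deterministic and hence confluent (\Cref{lemma:subject_reduction} and the remark following it) and follows a call-by-value strategy, the argument of $\cleanUp$ is fully evaluated before its pattern fires; it therefore suffices to show $\cleanUp\,(v, \overline n) \to^* ((q^T, ([], b^T, [a_1^T, \dots, a_n^T])), z)$ for a suitable $z$.

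First I would pattern-match the clause of $\cleanUp$ against $(v, \overline n)$, which succeeds and yields the substitution $x \mapsto q^T$, $l \mapsto [b^T, \dots, b^T]$, $y \mapsto b^T$, $r \mapsto [a_1^T, \dots, a_n^T, b^T, \dots, b^T]$, $n \mapsto \overline n$, so that the body's $\lett$-bindings may be evaluated in sequence. For the first binding, $l$ is a list consisting only of blanks, so \Cref{lem:semantics-removeBlank} gives $\rmBlank\,l \to^* ([], n_1)$ with $n_1$ the number of blanks on the left tape, whence $l' = []$. For the right tape I would apply \Cref{lem:semantics-revlist} to obtain $\opn{rev}\,r \to^* (r, \mathrm{rev}(r))$, binding $r_{\text{ori}} \mapsto r$ and $r_{\text{rev}} \mapsto \mathrm{rev}(r)$, after which all the padding blanks sit at the front of $r_{\text{rev}}$. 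Invoking \Cref{lem:semantics-removeBlank} once more strips exactly these leading blanks, producing the cleaned right tape together with the count $n_2$. Substituting into the output tuple $((x, (l', y, r')), n, n_1, n_2, r_{\text{ori}})$ then gives a term of the announced shape with $z = (\overline n, n_1, n_2, r_{\text{ori}})$ of type $\natT \otimes \natT \otimes \natT \otimes [\Sigma^T]$.

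The main obstacle is the bookkeeping of list orientation together with the side condition of \Cref{lem:semantics-removeBlank}, which only applies when the first symbol following the leading blanks is non-blank. To discharge it I would use that $C'$ is a standard configuration, so its content string $(a_1, \dots, a_n)$ is blank-free; this guarantees that, after reversing $r$, the leading run of blanks is exactly the padding introduced by the $\growth$ steps and that $\rmBlank$ stops at the first genuine tape symbol rather than eating into the data. The remaining care is to track how the zipper convention for the right tape interacts with the reversal, so that the cleaned right component is presented in the orientation $[a_1^T, \dots, a_n^T]$ claimed in the statement; once this orientation is fixed, equality of the configuration part is immediate and the auxiliary data is simply collected into $z$. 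Everything else is a direct, deterministic unfolding of the two semantics lemmas.
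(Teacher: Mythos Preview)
Your proposal is correct and follows exactly the route the paper takes: the paper's proof is a one-line citation of \Cref{thm:rtm-iso-simulation}, \Cref{lem:semantics-removeBlank}, and \Cref{lem:semantics-revlist}, and your two-stage unfolding is precisely the intended elaboration of that sketch. Your flagging of the list-orientation bookkeeping for the right tape is a fair point of care that the paper glosses over, but it does not change the approach.
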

\begin{proof}
  The proof is direct following~\Cref{thm:rtm-iso-simulation},
  \Cref{lem:semantics-removeBlank} and \Cref{lem:semantics-revlist}.
\end{proof}

The configuration obtained correspond exactly to $\opn{isos}(C')$ but
we now have to deal with another form of garbage. Thankfully, there
exists a systemic way to remove the garbage from a reversible process.


\subsection{Getting rid of the garbage}
As we mentioned, often in order to realise some operation in a
reversible way, one need to add some additional garbage. Nevertheless,
there exists a way to remove this garbage, albeit at the cost of a
heavier computation time. First described
in~\cite{bennett1973logical}, given two Turing machines $f_1$ and
$f_2$ and some input $\opn{in}$ such that if
$f_1(\opn{in}) = \opn{out} \otimes \opn{garbage}$ and
$f_2(\opn{out}) = \opn{in} \otimes \opn{garbage'}$, then the process
consist of taking additional tapes in the Turing Machine in order to
reversibly duplicate (represented by the $\oplus$) or reversibly erase
some data (represented by the $\chi$) in order to recover only the
output of $f_1$, without any garbage. The process is shown
in~\Cref{fig:garbage-removal}.

Given an iso $\omega \colon A\iso B\otimes C$ and
$\omega' \colon B \iso A \otimes C'$ where $C, C'$ represent garbage, we
can build an iso from $A\iso B$ as follow where the variables
$x, y, z$ (and their indices) respectively correspond to the first,
second, and third wire of~\Cref{fig:garbage-removal}. This operator
makes use of the iso \opn{Dup} discussed in Section~\ref{sec:dup}.

\begin{definition}[Garbage Removal Iso]~\rm
  We defined $\garRem{\omega}{\omega'}$ by $\{x_1~\iso~
\letv{\pv{x_2}{y}}{\omega~x_1}{}
             \letv{\pv{x_3}{z}}{\dup_B^\emptyset~x_2}{}
            \letv{x_4}{\omega^{-1}~\pv{x_3}{y}}{}
            \letv{\pv{z_2}{y_2}}{\omega'~z}{}
             \letv{z_3}{(\dup_B^\emptyset)^{-1}~\pv{z_2}{x_4}}{}
             \letv{z_4}{\omega'^{-1}~\pv{z_3}{y_2}}{}
             z_4\}$.
\end{definition}

\begin{lemma}[Garbage Removal is well-typed]
  \label{lem:garbage-removal-typed}
  Given $\omega \colon A\iso B\otimes C$ and $\omega' \colon B \iso A \otimes C'$ then
  $\garRem{\omega}{\omega'}$ is a well-typed iso of type $A\iso B$.\qed
\end{lemma}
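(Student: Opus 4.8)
The plan is to produce the (unique) typing derivation for the one-clause iso
  $\garRem{\omega}{\omega'} = \{x_1 \iso e\}$, where $e$ is the nested chain of
  {\lett}-bindings of the definition. Since the iso has a single clause, the
  orthogonality side-conditions $v_i~\bot~v_j$ and $e_i~\bot~e_j$ of the
  iso-typing rule (\Cref{tab:typisos}) are vacuous, so the only obligation is to
  check that, in the term context $x_1 \colon A$, the body $e$ is well-typed of
  type $B$; the clause then has type $A \iso B$ as claimed. First I would collect
  the types of the auxiliary isos appearing in $e$: by \Cref{lem:inv-type},
  $\omega \colon A\iso B\otimes C$ yields $\omega^{-1} \colon B\otimes C\iso A$,
  and $\omega' \colon B\iso A\otimes C'$ yields $\omega'^{-1} \colon A\otimes
  C'\iso B$; by \Cref{lem:duplication-typed} the duplication isos are well-typed,
  their inverses being typed again by \Cref{lem:inv-type}.

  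Then I would walk down the chain of {\lett}-bindings, applying the
  {\lett}-rule of \Cref{tab:typterm} once per line and recording the type of
  each freshly bound variable. Explicitly: from $\omega~x_1 \colon B\otimes C$ we
  obtain $x_2 \colon B$ and $y \colon C$; duplicating $x_2$ gives $x_3 \colon B$
  and $z \colon B$; applying $\omega^{-1}$ to $\pv{x_3}{y} \colon B\otimes C$
  gives $x_4 \colon A$; applying $\omega'$ to $z \colon B$ gives $z_2 \colon A$
  and $y_2 \colon C'$; recombining $\pv{z_2}{x_4}$ via the inverse duplication
  gives $z_3 \colon A$; and finally $\omega'^{-1}$ applied to $\pv{z_3}{y_2}
  \colon A\otimes C'$ returns $z_4 \colon B$, the value returned by the clause.
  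In parallel I would check the linear discipline of \Cref{tab:typterm}: each of
  $x_1,x_2,y,x_3,z,x_4,z_2,y_2,z_3,z_4$ is introduced once and consumed exactly
  once, so the term context splits correctly at each pair-introduction and each
  {\lett}.

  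The computation is essentially bookkeeping, and the one point that needs real
  care---and which I expect to be the main (if modest) obstacle---is matching the
  domain of each iso application to the type produced by the previous binding, in
  particular at the recombination step. There the two arguments $z_2$ and $x_4$
  are both copies of the \emph{input} and hence of type $A$, so the inverse
  duplication applied to $\pv{z_2}{x_4}$ must be the one on $A$, namely
  $\left(\dup_A^\emptyset\right)^{-1} \colon A\otimes A\iso A$ (so that
  $\dup_A^\emptyset$, rather than $\dup_B^\emptyset$, is the iso inverted at that
  line); with this reading every application type-checks and the returned value
  $z_4$ has type $B$. As the orthogonality obligations are empty, this single
  derivation establishes $\vdash_\omega \garRem{\omega}{\omega'} \colon A\iso B$.
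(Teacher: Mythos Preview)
Your approach is exactly the paper's: the paper's proof is the one-liner ``Direct as each variable is used indeed only once, and by well-typing hypothesis on $\omega,\omega'$, \Cref{lem:inv-type} and~\Cref{lem:duplication-typed}'', and you simply spell out the derivation that this sentence summarises. You also correctly spotted that the definition as printed does not type-check at the recombination step: $z_2$ and $x_4$ both have type $A$, so the inverse duplication there must be $\bigl(\dup_A^\emptyset\bigr)^{-1}$ rather than $\bigl(\dup_B^\emptyset\bigr)^{-1}$; the paper's own proof glosses over this and never notices the mismatch.
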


\begin{lemma}[Garbage Removal Semantics]\label{lem:garb-remove-sem}
  Given $\omega \colon A\iso B\otimes C$ and $\omega' \colon B \iso A \otimes C'$ and any
  well-typed value $v$ of type $A$, such that if $\omega~v \to^* \pv{v_1}{v_2}$
  and $\omega'~v_1\to^* \pv{v_3}{v_4}$ then $v_3 = v$,
  then if $\garRem{\omega}{\omega'}~v \to^* v_5$ then $v_1 = v_5$.
	\qed
\end{lemma}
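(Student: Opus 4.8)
The plan is to trace the reduction of $\garRem{\omega}{\omega'}~v$ one $\lett$-binding at a time, reading off the value assigned to each intermediate variable and invoking \Cref{thm:isos-iso} (applying an iso's inverse undoes it on values) at the three points where an inverse iso is used. Since we assume the whole term reduces to the value $v_5$, and $\to$ is deterministic and call-by-value on the sequence of $\lett$ constructions forming the single clause, each binding must itself fire and produce a value; I extract those values in order.

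First I would evaluate the leading bindings. Matching the clause against $v$ sets $x_1 \mapsto v$; by hypothesis $\omega~v \to^* \pv{v_1}{v_2}$, so $x_2 = v_1$ and $y = v_2$. Applying the duplication iso to $x_2 = v_1$ gives, by \Cref{lem:duplication-semantics}, the pair $\pv{v_1}{v_1}$, hence $x_3 = v_1$ and $z = v_1$. Next, $x_4 = \omega^{-1}~\pv{x_3}{y} = \omega^{-1}~\pv{v_1}{v_2}$; since $\omega~v \to^* \pv{v_1}{v_2}$, \Cref{thm:isos-iso} gives $\omega^{-1}~\pv{v_1}{v_2} \to^* v$, so $x_4 = v$.

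I would then handle the second half. From $z = v_1$ we obtain $\pv{z_2}{y_2} = \omega'~v_1 \to^* \pv{v_3}{v_4}$, and the hypothesis $v_3 = v$ yields $z_2 = v$ and $y_2 = v_4$. The crucial step is the inverse duplication $z_3 = (\dup^\emptyset)^{-1}~\pv{z_2}{x_4} = (\dup^\emptyset)^{-1}~\pv{v}{v}$, whose two components coincide precisely because $z_2 = v = x_4$. Since $\dup^\emptyset~v \to^* \pv{v}{v}$ by \Cref{lem:duplication-semantics}, \Cref{thm:isos-iso} gives $(\dup^\emptyset)^{-1}~\pv{v}{v} \to^* v$, so $z_3 = v$. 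Finally $z_4 = \omega'^{-1}~\pv{z_3}{y_2} = \omega'^{-1}~\pv{v}{v_4}$; as $\omega'~v_1 \to^* \pv{v}{v_4}$, \Cref{thm:isos-iso} once more gives $z_4 = v_1$. The clause returns $z_4$, so $v_5 = v_1$, as required.

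The main obstacle, and really the only nonroutine point, is the inverse-duplication step: $(\dup^\emptyset)^{-1}$ is a partial iso that fires only on pairs with equal components, so one must check that its two arguments agree. This is exactly where the hypothesis $v_3 = v$ is needed, since it forces $z_2 = v = x_4$; without it the term would get stuck rather than reduce, consistent with the partiality of the language. The remaining care is bookkeeping: justifying that every $\lett$ produces a value from the assumption that the whole term reduces to $v_5$ together with determinism of $\to$, and applying \Cref{thm:isos-iso} in the uniform form ``$\rho^{-1}$ undoes $\rho$ on values'' at the three inversion sites $\omega^{-1}$, $(\dup^\emptyset)^{-1}$, and $\omega'^{-1}$.
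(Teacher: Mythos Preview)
Your proposal is correct and follows exactly the approach the paper intends: the paper's own proof is the single line ``Direct by~\Cref{lem:duplication-semantics}'', and what you have written is precisely that direct unfolding of the six $\lett$-bindings. Your trace even improves on the paper by making explicit the three uses of \Cref{thm:isos-iso} at the inversion sites, which the paper's one-line proof leaves implicit but which are certainly required.
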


\subsection{Back to Reversible Turing Machines}

\begin{theorem}[Capturing the exact semantics of a RTM]
    For all RTM $M$ such that given standard configuration $C = (q_s,
    (\epsilon, b, s))$ and $C' = (q_f, (\epsilon, b, s'))$ such that
    $M \vdash C \rightsquigarrow^* C'$, we have that 
    \[\garRem{\opn{cleanUp}(\opn{It}
    (\opn{isos}_\boolT(M)))}{\opn{cleanUp}(\opn{It}
    (\opn{isos}_\boolT(M^{-1})))}~\opn{isos}(C) \to^* \opn{isos}(C')\]
\end{theorem}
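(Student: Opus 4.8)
The plan is to reduce the statement to three earlier results: the semantics of \cleanUp{} (\Cref{thm:cleanup}), the typing and the semantics of garbage removal (\Cref{lem:garbage-removal-typed} and \Cref{lem:garb-remove-sem}), using the reversibility of $M$ to control the behaviour of the recovered component. Write $\omega = \cleanUp(\It(\opn{isos}_\boolT(M)))$ and $\omega' = \cleanUp(\It(\opn{isos}_\boolT(M^{-1})))$. Composing the types, $\omega$ is a first-order iso $C^T \iso C^T \otimes G$ and $\omega'$ is $C^T \iso C^T \otimes G'$, where $G$ and $G'$ are the residual garbage types $\natT \otimes \natT \otimes \natT \otimes [\Sigma^T]$. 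Since the source and target base types of both isos coincide (here $A = B = C^T$), \Cref{lem:garbage-removal-typed} shows that $\garRem{\omega}{\omega'}$ is well-typed of type $C^T \iso C^T$, so the displayed term is well-formed; this settles the typing before I turn to the reduction.

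First I would evaluate the forward direction. Since $M \vdash C \rightsquigarrow^* C'$ with $C = (q_s, (\epsilon, b, s))$ and $C' = (q_f, (\epsilon, b, s'))$ both standard, \Cref{thm:cleanup} gives $\omega~\opn{isos}(C) \to^* \pv{\opn{isos}(C')}{z}$ for some garbage value $z$ of type $G$. The crucial point is that the first component is exactly the clean encoding $\opn{isos}(C')$ of the final configuration, with all spurious blank symbols stripped off by the \cleanUp{} wrapper.

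Next I would treat the backward direction via the inverse machine. By the properties of reversible Turing machines, $M^{-1}$ is an RTM whose initial and final states are $q_f$ and $q_s$ and whose transition relation is the inverse of that of $M$; running the forward trace of $M$ backwards yields $M^{-1} \vdash C' \rightsquigarrow^* C$, with $C'$ and $C$ standard configurations \emph{for} $M^{-1}$. Applying \Cref{thm:cleanup} to $M^{-1}$ with initial configuration $C'$ and final configuration $C$, and noting that $\opn{isos}_\boolT(M^{-1})$ flips its boolean precisely when $M^{-1}$ reaches its own final state $q_s$, I obtain $\omega'~\opn{isos}(C') \to^* \pv{\opn{isos}(C)}{z'}$ for some garbage $z'$ of type $G'$. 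The essential consequence is that the first component recovers $\opn{isos}(C)$, i.e. the original input.

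Finally I would invoke \Cref{lem:garb-remove-sem} with $v = \opn{isos}(C)$. Taking $v_1 = \opn{isos}(C')$ and $v_2 = z$ from the forward run, and $v_3 = \opn{isos}(C)$, $v_4 = z'$ from the run of $\omega'$ on $v_1$, the hypothesis $v_3 = v$ is exactly the recovery property established above, so the lemma yields $\garRem{\omega}{\omega'}~\opn{isos}(C) \to^* v_1 = \opn{isos}(C')$, which is the claim. The one genuinely delicate step is the backward direction: I must verify that the canonical inverse machine $M^{-1}$ produces precisely the reversed trace on standard configurations, and that the $\boolT$-annotated encoding together with the \cleanUp{} wrapper interacts correctly with the swapped roles of $q_s$ and $q_f$, so that \Cref{thm:cleanup} applies to $M^{-1}$ verbatim; everything else is a routine chaining of the cited lemmas.
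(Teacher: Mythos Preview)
Your proposal is correct and follows essentially the same route as the paper's own proof: define the two cleaned-up iterators, invoke \Cref{thm:cleanup} on $M$ and on $M^{-1}$ (using that $M^{-1}$ computes the inverse of $M$, which the paper attributes to~\cite{axelsen11rtm}) to verify the hypothesis of \Cref{lem:garb-remove-sem}, and conclude. Your write-up is in fact more explicit than the paper's, which compresses the whole argument into a single sentence citing \Cref{lem:sim-trans}, \Cref{thm:cleanup}, and \Cref{lem:garb-remove-sem}.
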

\begin{proof}

    Write $\Omega_1 = \opn{cleanUp}(\opn{It} (\opn{isos}_\boolT(M)))$ and
    $\Omega_2 = \opn{cleanUp}(\opn{It}( \opn{isos}_\boolT(M^{-1})))$.

    Since we know that $M^{-1}$ computes the inverse of
    $M$~\cite{axelsen11rtm} and by~\Cref{lem:sim-trans},
    and~\Cref{thm:cleanup} we satisfy the pre-condition
    of~\Cref{lem:garb-remove-sem}.
\end{proof}

%

\section{Proofs of Section~\ref{sec:RTM}}

\begin{proof}[Proof of Lemma~\ref{lem:sim-trans}]
  By analysis on $M\vdash C\rightsquigarrow C'$: Given the
  configuration $(q, (l, s, r))$ where $l, r$ are the tapes and $s$
  the letter pointed by the head of the RTM. We have $\opn{isos}(C) =
  (q^T,(l^T, s^T::r^T))$.

  We then consider the two possible kind of rule:
  \begin{itemize}
    \item Considering a rule $(q, (s, s'), q')$, by definition this
    correspond to the clause $(q^T), (x_1, s^T::x_2) \iso (q'^T, (x_1,
    s'^T::x_2))$, where $x_1, x_2$ are variables. By pattern matching
    we have that the first element of each pair matches since they are
    equal, and closed values. Then, by pattern-matching, $l^T$ and
    $r^T$ matches respectively $x_1$ and $x_2$ and $s^T$ matches with
    $s^T$ as they are equal closed values. Thus, the term reduces to
    $\sigma((q'^T, (x_1, s'^T::x_2)))$ with $\sigma = \set{x_1 \mapsto
    l^T, x_2 \mapsto r^T}$ which is equal to $(q'^T, (l^T, s'^T::s^T))
    = \opn{isos}(C')$

    \item Considering a rule $(q, d, q')$ for $d\in\set{\leftarrow,
    \downarrow, \rightarrow}$, suppose $d = \leftarrow$ (the other
    case being similar), we get the corresponding clause $(q^T, (x_1,
    y::x_2)) \iso (q'^T, (y::x_1, x_2))$, which, as previously will
    match with $\sigma = \set{x_1 \mapsto l^T, x_2\mapsto r^T,
    y\mapsto s^T}$ and reduce to $\sigma((q'^T, (y::x_1, x_2))) =
    (q^T, (s^T::l^T, r^T)) = \opn{isos}(C')$.
  \end{itemize}
\end{proof}

\begin{proof}[Proof of Lemma~\ref{lem:rtm:it}]
  By induction on $n$.
  \begin{itemize}
      \item $0$, Our hypothesis state that:
      \begin{align*}
        & (\It~\omega)~v \\
        &\to \letv{y}{\omega~v}{\letv{z}{\omega_{\opn{aux}}~y}{z}} \\
        &\to \letv{z}{\omega_{\opn{aux}}~(v', \fc)}{z} \\
        &\to (v', 0)
      \end{align*}
      Therefore: $\omega~v \to (v', \fc)$.

      \item $n+1$:
      Our hypothesis state that:

      \begin{align*}
        &(\It~\omega)~v\\
        &\to \letv{y}{\omega~v}{(\letv{z}{\omega_{\opn{aux}}~y}{z})} \\
        & \to
        \letv{z}{\omega_{\opn{aux}}~(v_1, \tc)}{z}  \\&\equiv \letv{z}{\left\{\begin{array}{l@{~}c@{~}l}
            (y, \tc) &\iso& \letv{(z, n)}{\It(\omega)~y}{(z, S~n)} \\
            (y, \fc) & \iso & (y, S~0) \end{array}\right\}~(v_1,
            \tc)}{z} \\
        &\to \letv{y}{(\letv{(z, n)}{\It~v_1}{(z, S~n)})}{y}
      \end{align*}

      Then we know that $\It~v_1$ will reduce to some $v', n$ and so by IH
      we get that $\underbrace{\omega~\dots~\omega}_{n+1 \text{ times}}~v_1
      \to^* (v', \fc)$, so we get $\underbrace{\omega~\dots~\omega}_{n+2 \text{
      times}}~v \to^* (v', \fc)$.
  \end{itemize}
\end{proof}

\section{Proofs related to the denotational semantics}

\begin{table}[!h]
	\[
    \begin{array}{c}
			\infer{\Theta\vDash \1}{}
			\qquad
			\infer{\Theta, X\vDash X}{}
			\qquad
			\infer{\Theta\vDash \mu X.A}{\Theta,X\vDash A}
      \qquad
      \infer[\star\in\{\otimes,\oplus\}]{\Theta\vDash A\star B}
			{
				\Theta\vDash A
				&
				\Theta\vDash B
			}
		\end{array}
	\]
	\caption{Formation rules for types.}
	\label{tab:typesform}
\end{table}

\begin{lemma}
  \label{lem:res-zero-zero}
  Given an inverse category $\CC$ and a morphism $f \colon X \to Y$ such that
  $\res f = 0_{X,X}$, then $f = 0_{X,Y}$.
\end{lemma}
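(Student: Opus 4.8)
The plan is to derive the statement directly from the first restriction axiom together with the absorption property of the zero morphism; no induction or auxiliary construction is needed. Recall from Definition~\ref{def:restr} that every restriction category satisfies $f \circ \res f = f$. Since we are handed the hypothesis $\res f = 0_{X,X}$, I would simply substitute it into this axiom to obtain $f = f \circ \res f = f \circ 0_{X,X}$.

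It then remains to simplify the right-hand side. This is exactly the absorption equation $f0 = 0$ recorded in Definition~\ref{def:zero}, valid in any inverse category (where the zero is the empty join $\bigvee_{s\in\emptyset} s$ and is absorbed on either side by composition). Applying it with the zero morphism $0_{X,X} \colon X \to X$ gives $f \circ 0_{X,X} = 0_{X,Y}$, where the target zero is of type $X \to Y$ by matching the domains and codomains of the composite. Chaining the two equalities yields $f = 0_{X,Y}$, which is the claim.

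Since the argument is a two-step calculation, there is no genuine obstacle. The only point demanding a moment's care is bookkeeping of the types: one must check that $f \circ 0_{X,X}$ really is the morphism $0_{X,Y} \colon X \to Y$ rather than some zero of another hom-set, but this is immediate from the types of $f$ and $0_{X,X}$. I would therefore present the proof as the single chain of equalities $f = f \circ \res f = f \circ 0_{X,X} = 0_{X,Y}$, citing Definition~\ref{def:restr} for the first equality and Definition~\ref{def:zero} for the last.
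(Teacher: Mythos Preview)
Your proposal is correct and matches the paper's proof essentially verbatim: the paper also writes the single chain $f = f\,\res f = f\,0_{X,X} = 0_{X,Y}$, invoking Definition~\ref{def:restr} for the first step and the absorbing property of zero for the last.
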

\begin{proof}
  By Def.~\ref{def:restr}, we know that $f = f \res f$, and thus $f = f \res f
  = f 0_{X,X} = 0_{X,Y}$.
\end{proof}

\begin{lemma}
  \label{lem:inv-ortho-compati}
  Given an inverse category $\CC$ and two morphisms $f, g \colon X \to Y$ in
  $\CC$ such that $f\pinv g = 0_{X,X}$ and $fg\pinv = 0_{Y,Y}$, then $f$ and
  $g$ are inverse compatible.
\end{lemma}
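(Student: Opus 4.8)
The plan is to unfold the definition of inverse compatibility (Definition~\ref{def:restcomp}): $f \asymp g$ means both $f \smile g$ and $f^\circ \smile g^\circ$, where restriction compatibility $f \smile g$ asks $f \res g = g \res f$. So I would prove two restriction-compatibility statements, one for the pair $(f,g)$ and one for $(f^\circ, g^\circ)$, and in each case reduce the two sides to the same zero morphism using the characterisation $\res f = f^\circ f$ and $\res{f^\circ} = f f^\circ$ coming from Definition~\ref{def:invcat}.

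First I would record two consequences of the hypotheses obtained by applying the inverse operation. Since $\cdot^\circ$ is a contravariant involution on an inverse category (so $(hk)^\circ = k^\circ h^\circ$ and $(f^\circ)^\circ = f$) and $0_{A,B}^\circ = 0_{B,A}$ by Definition~\ref{def:zero}, taking inverses of the two hypotheses $f g^\circ = 0_{Y,Y}$ and $f^\circ g = 0_{X,X}$ yields the dual equations $g f^\circ = 0_{Y,Y}$ and $g^\circ f = 0_{X,X}$.

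Next, for restriction compatibility of $(f,g)$ I substitute $\res g = g^\circ g$ and $\res f = f^\circ f$ and use associativity together with the absorption law $0h = 0$ from Definition~\ref{def:zero}: this gives $f \res g = f g^\circ g = (f g^\circ) g = 0_{Y,Y}\, g = 0_{X,Y}$ and symmetrically $g \res f = g f^\circ f = (g f^\circ) f = 0_{Y,Y}\, f = 0_{X,Y}$, so $f \res g = g \res f$ and hence $f \smile g$. For the inverses I use $\res{g^\circ} = g g^\circ$ and $\res{f^\circ} = f f^\circ$; the same computation gives $f^\circ \res{g^\circ} = (f^\circ g) g^\circ = 0_{X,X}\, g^\circ = 0_{Y,X}$ and $g^\circ \res{f^\circ} = (g^\circ f) f^\circ = 0_{X,X}\, f^\circ = 0_{Y,X}$, so $f^\circ \smile g^\circ$. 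Combining the two conditions yields $f \asymp g$.

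The argument is essentially bookkeeping, so I do not expect a genuine obstacle; the only points requiring care are deriving the dual equations by inverting the hypotheses correctly (remembering the contravariance of $\cdot^\circ$), and keeping track of the source and target of each zero morphism so that the absorption law $0h = 0$ is applied with matching types.
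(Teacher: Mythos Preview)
Your proof is correct and follows the same overall strategy as the paper: show that each side of the compatibility equation $f\res g = g\res f$ (and its inverse counterpart) is the zero morphism. The execution differs slightly. The paper first computes the restriction $\res{f\res g} = \res f\,\res g = f^\circ f g^\circ g = f^\circ\,0_{Y,Y}\,g = 0_{X,X}$ using the restriction axioms, and then invokes the auxiliary Lemma~\ref{lem:res-zero-zero} (``$\res h = 0$ implies $h = 0$'') to conclude $f\res g = 0_{X,Y}$. You instead expand $\res g = g^\circ g$ directly and absorb the zero immediately: $f\res g = (fg^\circ)g = 0_{Y,Y}\,g = 0_{X,Y}$. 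Your route is a bit more elementary, since it bypasses the auxiliary lemma, at the modest cost of first deriving the dual equations $gf^\circ = 0$ and $g^\circ f = 0$ by applying $(-)^\circ$ to the hypotheses; the paper's route stays entirely within the original hypotheses but leans on one extra lemma. Either way the argument is sound.
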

\begin{proof}
  Our goal is to prove that $f \res g = g \res f$, but we are going to prove
  that both are equal to zero. Def.~\ref{def:restr} ensures that $\res{f \res
  g} = \res f \res g$, and then Def.~\ref{def:invcat} gives that $\res f \res g
  = f\pinv f g\pinv g$, which is then equal to $f\pinv 0_{Y,Y} g$ by
  hypothesis, and thus $\res{f \res g} = \res f \res g = 0_{X,X}$;
  Lemma~\ref{lem:res-zero-zero} ensures then that $f \res g = 0_{X,Y}$. Note
  that $\res{g \res f} = \res f \res g = \res g \res f$ by
  Def.~\ref{def:restr}, thus $g \res f = 0_{X,Y}$ too. We have proven that $f
  \smile g$. The proof that $f\pinv \smile g\pinv$ is similar.
\end{proof}

\begin{lemma}
  \label{lem:orthogonal-injections}
  In a join inverse category $\CC$ with disjointness tensor, for all
  objects $X$ and $Y$, we have $\iota_l\pinv \circ \iota_r = 0_{Y,X}$.
\end{lemma}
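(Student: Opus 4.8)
The plan is to avoid any appeal to the joint epi/monic properties of the injections and instead reduce the claim to a short computation driven by the disjointness axiom $\rc{\iota_l}\circ\rc{\iota_r} = 0_{X\oplus Y}$ of Definition~\ref{def:disten}, together with the basic restriction identities of Definition~\ref{def:restr} and the inverse-category equations of Definition~\ref{def:invcat}. The guiding idea is that $\iota_l\pinv$ already absorbs its own codomain-restriction $\rc{\iota_l} = \res{\iota_l\pinv}$, so it suffices to show that $\rc{\iota_l}$ annihilates $\iota_r$.

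First I would record the consequences of $\iota_r$ being total. Totality gives $\res{\iota_r} = \iota_r\pinv\circ\iota_r = \iid_Y$, and hence, using $\rc{\iota_r} = \iota_r\circ\iota_r\pinv$ from Definition~\ref{def:invcat}, the factorisation $\rc{\iota_r}\circ\iota_r = \iota_r\circ(\iota_r\pinv\circ\iota_r) = \iota_r$. This lets me insert the idempotent $\rc{\iota_r}$ freely in front of $\iota_r$.

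Next I would combine this with the disjointness axiom to obtain $\rc{\iota_l}\circ\iota_r = \rc{\iota_l}\circ\rc{\iota_r}\circ\iota_r = 0_{X\oplus Y}\circ\iota_r = 0_{Y,X\oplus Y}$, where the last equality uses $0g = 0$ from Definition~\ref{def:zero}. Finally, since $\rc{\iota_l} = \res{\iota_l\pinv}$ and $f\circ\res f = f$ gives $\iota_l\pinv\circ\rc{\iota_l} = \iota_l\pinv$, I conclude $\iota_l\pinv\circ\iota_r = \iota_l\pinv\circ\rc{\iota_l}\circ\iota_r = \iota_l\pinv\circ 0_{Y,X\oplus Y} = 0_{Y,X}$, again by $f0 = 0$.

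The computation itself is routine; the only real subtlety --- and the step I would check most carefully --- is the bookkeeping of the restriction idempotents, namely ensuring that the disjointness axiom is applied to $\rc{\iota_l}$ and $\rc{\iota_r}$ (that is, to $\res{\iota_l\pinv}$ and $\res{\iota_r\pinv}$) rather than to $\res{\iota_l}$ and $\res{\iota_r}$, since the latter are both identities by totality and would render the axiom vacuous. Once the correct idempotents are in play, the result drops out directly, with no recourse to Lemma~\ref{lem:res-zero-zero} nor to the joint epi/monic hypotheses.
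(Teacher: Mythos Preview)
Your proof is correct and follows essentially the same route as the paper's: both start from the disjointness axiom $\rc{\iota_l}\circ\rc{\iota_r}=0$, absorb $\rc{\iota_l}$ into $\iota_l\pinv$ via $f\res f=f$, and absorb $\rc{\iota_r}$ into $\iota_r$ on the other side. The only cosmetic difference is the order in which you perform these two absorptions, and that you invoke totality of $\iota_r$ to get $\rc{\iota_r}\circ\iota_r=\iota_r$ where the paper instead precomposes by $\iota_r$ and uses $\iota_r\res{\iota_r}=\iota_r$.
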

\begin{proof}
  We know from Definition~\ref{def:disten} that $\rc{\iota_l}~\rc{\iota_r}
  = 0_{X \oplus Y}$, thus by postcomposition
  $\iota_l\pinv~\rc{\iota_l}~\rc{\iota_r} = 0_{X \oplus Y, X}$. Since $\CC$ is
  a restriction category (see Def.~\ref{def:restr}), $\iota_l\pinv~\rc{\iota_l}
  = \iota_l\pinv$, and thus $\iota_l\pinv~\rc{\iota_r} = 0_{X \oplus Y, X}$.
  With Definition~\ref{def:invcat}, we have $\rc{\iota_r} =
  \iota_r\iota_r\pinv$, thus $\iota_l\pinv \iota_r \iota_r\pinv = 0_{X \oplus
  Y, X}$. Finally, we precompose by $\iota_r$, to obtain $\iota_l\pinv \iota_r
  \iota_r\pinv \iota_r = 0_{Y,X}$. Definition~\ref{def:invcat} again tells us
  that $\iota_r\pinv ~\iota_r = \res{\iota_r}$, and also $\iota_r~\res{\iota_r}
  = \iota_r$, hence the equality: $\iota_l\pinv \iota_r = 0_{Y,X}$.
\end{proof}

\begin{proposition}
  Given a well-formed iso abstraction
  $\Psi \entailiso \isobreduit \colon A \iso B$, its interpretation
  $\sem{\Psi \entailiso \isobreduit \colon A \iso B}$ is well-defined
  as a Scott continuous map between the dcpos $\sem\Psi$ and
  $\CC(\sem A, \sem B)$.
\end{proposition}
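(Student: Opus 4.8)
The plan is to read the displayed formula as a join, taken in $\dcpo$, of the per-clause interpretations, and to reduce the statement to the two results already at our disposal, namely \Cref{lem:compati-clauses} and \Cref{lem:join-scott}. For each $i \in I$, set
\[
  \xi_i = \comp \circ \pv{\sem{\Psi ; \Delta_i \entail e_i \colon B}}{\sem{\Psi ; \Delta_i \entail v_i \colon A}\pinv}
  \colon \sem\Psi \to \CC(\sem A, \sem B),
\]
so that $\den{\Psi\entailiso \isobreduit \colon A\iso B} = \bigvee_{i\in I}\xi_i$. There are then three things to verify: that each $\xi_i$ is a well-typed Scott continuous map, that the pointwise join exists, and that the join is itself Scott continuous.

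First I would check that each $\xi_i$ is Scott continuous. By the mutual induction defining the semantics, the maps $\sem{\Psi ; \Delta_i \entail e_i \colon B} \colon \sem\Psi \to \CC(\sem{\Delta_i}, \sem B)$ and $\sem{\Psi ; \Delta_i \entail v_i \colon A} \colon \sem\Psi \to \CC(\sem{\Delta_i}, \sem A)$ are Scott continuous. Post-composing the latter with the inverse operation $-\inv$ yields the map $g \mapsto \sem{\Psi ; \Delta_i \entail v_i \colon A}(g)\pinv \colon \sem\Psi \to \CC(\sem A, \sem{\Delta_i})$, which is again Scott continuous because $-\inv$ is a $\DCPO$-functor by \Cref{lem:dcpofun} and hence acts continuously on homsets. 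The pairing of two Scott continuous maps into the product dcpo is Scott continuous by the universal property of the product, and post-composition with $\comp$ is Scott continuous since composition in a join inverse rig category is a morphism of $\DCPO$ (the enrichment). Thus $\xi_i$, being a composite of Scott continuous maps, is Scott continuous, and a quick check of types confirms that it lands in $\CC(\sem A, \sem B)$.

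It remains to form the join. By \Cref{lem:compati-clauses}, for every $g \in \sem\Psi$ the family $\{\xi_i(g)\}_{i\in I}$ of morphisms of $\CC$ is pairwise inverse compatible; in particular it is restriction compatible, so the join $\bigvee_{i\in I}\xi_i(g)$ exists in $\CC$ by \Cref{def:join}. Applying \Cref{lem:join-scott} to the inverse-compatible family $\{\xi_i\}_{i\in I}$ then shows that $g \mapsto \bigvee_{i\in I}\xi_i(g)$ is Scott continuous from $\sem\Psi$ to $\CC(\sem A, \sem B)$, which is exactly $\den{\Psi\entailiso \isobreduit \colon A\iso B}$. I expect the only delicate point to be the continuity of the inverse on clause values, which is precisely what \Cref{lem:dcpofun} supplies; everything else is a routine assembly of the continuity of composition and pairing in the $\DCPO$-enriched setting, together with the compatibility already guaranteed by \Cref{lem:compati-clauses}.
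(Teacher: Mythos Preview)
Your proposal is correct and follows exactly the paper's approach: the paper's proof simply states that the result is a direct conclusion of \Cref{lem:compati-clauses} and \Cref{lem:join-scott}, and you have unpacked precisely what this entails, additionally spelling out why each per-clause map $\xi_i$ is Scott continuous (a point the paper leaves implicit in the inductive definition of the semantics).
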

\begin{proof}
This is a direct conclusion of Lemmas~\ref{lem:compati-clauses} and
 \ref{lem:join-scott}.
\end{proof}

\begin{lemma}
  \label{lem:sem-subst-iso}
  Given two well-formed isos $\Psi, \phi \colon T_2 \entailiso \omega_1 \colon
  T_1$ and $\Psi \entailiso \omega_2 \colon T_2$,
  the iso
   $\sem{\Psi \entailiso \omega_1[\omega_2/\phi] \colon T_1}$
   is defined as $\sem{\Psi, \phi \colon T_2 \entailiso \omega_1 \colon T_1}
    \circ \pv{\iid}{\sem{\Psi \entailiso \omega_2 \colon T_2}}$.
\end{lemma}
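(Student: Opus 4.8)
The plan is to prove the statement by mutual induction on the typing derivations of isos and of terms, since the interpretation of an iso depends on that of the terms occurring in its clauses and, conversely, terms refer to isos through the application construct $\omega~t$. I would therefore strengthen the claim to the following pair of equations, the first being the statement itself and the second its term-level companion:
\[
  \sem{\Psi \entailiso \omega_1[\omega_2/\phi] \colon T_1}
  = \sem{\Psi, \phi \colon T_2 \entailiso \omega_1 \colon T_1}
    \circ \pv{\iid}{\sem{\Psi \entailiso \omega_2 \colon T_2}},
\]
\[
  \sem{\Psi ; \Delta \entail t[\omega_2/\phi] \colon A}
  = \sem{\Psi, \phi \colon T_2 ; \Delta \entail t \colon A}
    \circ \pv{\iid}{\sem{\Psi \entailiso \omega_2 \colon T_2}},
\]
where in both cases the composite is taken in $\dcpo$, sending $g \in \sem\Psi$ to the value of the left factor at $(g, \sem{\Psi \entailiso \omega_2 \colon T_2}(g))$. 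The syntactic substitution lemma (Lemma~\ref{lem:iso-subst-type}) ensures that the left-hand sides are well-typed, so the equations are meaningful.

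The iso-variable case is what pins down the formula: if $\omega_1 = \phi$ then $\phi[\omega_2/\phi] = \omega_2$ and $\sem{\phi} = \pi_{\sem{T_2}}$, so the right-hand side is $\pi_{\sem{T_2}} \circ \pv{\iid}{\sem{\omega_2}} = \sem{\omega_2}$ by the universal property of the cartesian product; if $\omega_1 = \psi \neq \phi$, then $\psi[\omega_2/\phi] = \psi$ and the relevant projection is left unchanged by precomposition. The application case, as well as the structural term cases ($\unit$, variable, injections, pairing, $\fold$, iso-application and $\lett$), all reduce to the induction hypotheses together with the fact that pairing and composition commute with precomposition by $h \defeq \pv{\iid}{\sem{\omega_2}}$, e.g. $\rmeval \circ \pv{f \circ h}{g \circ h} = (\rmeval \circ \pv{f}{g}) \circ h$ and the corresponding functoriality of $-\otimes-$, $\iota_l$, $\iota_r$ and the algebra maps $\alpha$. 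For an iso abstraction $\isobreduit$ the substitution only touches the expressions $e_i$, as values carry no iso-variables; invoking the term induction hypothesis on each $e_i$ and noting that the join defining $\sem{\isobreduit}$ is computed pointwise (Lemma~\ref{lem:join-scott}), so that precomposition distributes over it, closes this case.

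The delicate cases are the binders $\lambda \psi . \omega'$ and $\fix \psi . \omega'$, where the induction hypothesis must be applied in the extended context $\Psi, \phi \colon T_2, \psi \colon T$. Here I would first record a semantic weakening fact---since $\omega_2$ does not mention $\psi$, its interpretation in the extended context equals $\sem{\omega_2}$ precomposed with the projection that discards the $\psi$-component---and handle the fact that $\phi$ is no longer the last variable of the context by the symmetry isomorphism of the product (an exchange step). The $\lambda$-case then becomes the standard cartesian-closed identity $\rmcurry(F \circ (h \times \iid)) = \rmcurry(F) \circ h$, and the $\fix$-case rests on the uniformity of the parametrised least fixpoint: for continuous $F \colon P \times D \to D$ and $h \colon P' \to P$ one has $\fix(F \circ (h \times \iid_D)) = \fix(F) \circ h$, because at every $p'$ the least fixed point of $F(h(p'),-)$ is exactly $\fix(F)(h(p'))$. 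I expect the main obstacle to be precisely this bookkeeping under binders---commuting currying and the parametrised fixpoint past the substitution map while tracking weakening and exchange---rather than any genuinely hard estimate; once the uniformity of $\fix$ is isolated, each remaining case collapses to naturality of the structural maps.
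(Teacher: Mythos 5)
Your proposal is correct and follows essentially the same route as the paper, whose proof is stated simply as an induction on the formation rules for $\omega_1$; your mutual induction with a term-level companion statement is exactly the detail that induction requires (and mirrors how the paper proves the syntactic analogue, Lemma~\ref{lem:iso-subst-type}). The points you flag as delicate---semantic weakening and exchange under binders, the currying identity, and uniformity of the parametrised fixpoint---are precisely the facts the paper's one-line proof leaves implicit, and your treatment of them is sound.
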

\begin{proof}
  The proof is done by induction on the formation rules for $\omega_1$.
\end{proof}

\begin{lemma}
  \label{lem:rev-orthogonal-semantics-values}
  Given two judgements $\Psi ; \Delta_1 \entail v_1 \colon A$ and
  $\Psi ; \Delta_2 \entail v_2 \colon A$, such that $v_1~\bot~v_2$, we have
  for all $g \in \sem\Psi$ the equality
  $
    \sem{v_1}(g)\pinv \circ \sem{v_2}(g) = 0_{\sem{\Delta_2},\sem{\Delta_1}}.
  $
\end{lemma}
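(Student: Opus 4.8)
The plan is to proceed by induction on the derivation of $v_1~\bot~v_2$ using Definition~\ref{def:orthogonality}. Since both $v_1$ and $v_2$ are \emph{values}, the orthogonality context $C_\bot$ never uses the $\letv{p}{t}{C_\bot}$ production, so only the constructors $\inl{}$, $\inr{}$, $\fold{}$ and pairing can occur. Nested contexts decompose into repeated single-constructor applications and the empty context $[-]$ is vacuous, so I may assume the last rule is one of: the two base rules $\inl{w_1}~\bot~\inr{w_2}$ and $\inr{w_1}~\bot~\inl{w_2}$; or a propagation rule among $\inl{w_1}~\bot~\inl{w_2}$, $\inr{w_1}~\bot~\inr{w_2}$, $\fold{w_1}~\bot~\fold{w_2}$, $\pv{w_1}{u}~\bot~\pv{w_2}{u}$, $\pv{u}{w_1}~\bot~\pv{u}{w_2}$, each arising from a strictly smaller derivation $w_1~\bot~w_2$.

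For the base case $\inl{w_1}~\bot~\inr{w_2}$, I would compute directly from Table~\ref{tab:sem}:
\[
\sem{\inl{w_1}}(g)\pinv \circ \sem{\inr{w_2}}(g)
= \sem{w_1}(g)\pinv \circ \iota_l\pinv \circ \iota_r \circ \sem{w_2}(g).
\]
Lemma~\ref{lem:orthogonal-injections} gives $\iota_l\pinv \circ \iota_r = 0$, and the absorption laws of Definition~\ref{def:zero} collapse the whole expression to $0_{\sem{\Delta_2},\sem{\Delta_1}}$. The symmetric base case uses the dual identity $\iota_r\pinv \circ \iota_l = (\iota_l\pinv \circ \iota_r)\pinv = 0$.

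For the propagation rules through $\inl{}$, $\inr{}$ and $\fold{}$, the crucial observation is that the prefixed morphism is \emph{total}: $\iota_l$ and $\iota_r$ are total by Definition~\ref{def:disten}, and the structural algebra morphism $\alpha$ is an isomorphism by Lambek's theorem. Hence $\iota_l\pinv \circ \iota_l = \res{\iota_l} = \iid$ (and likewise for $\iota_r$ and $\alpha$), so for instance $\sem{\inl{w_1}}(g)\pinv \circ \sem{\inl{w_2}}(g) = \sem{w_1}(g)\pinv \circ \iid \circ \sem{w_2}(g)$, which is $0$ by the induction hypothesis. The pairing rules I would handle through the bifunctoriality and inverse-preservation of $\otimes$: writing $\Delta_i = \Delta_{w_i}, \Delta_u$ and using $(f \otimes h)\pinv = f\pinv \otimes h\pinv$,
\[
\sem{\pv{w_1}{u}}(g)\pinv \circ \sem{\pv{w_2}{u}}(g)
= \left(\sem{w_1}(g)\pinv \circ \sem{w_2}(g)\right) \otimes \left(\sem{u}(g)\pinv \circ \sem{u}(g)\right),
\]
whose first factor is $0$ by the induction hypothesis; since $-\otimes-$ preserves joins by Definition~\ref{def:rig} and $0$ is the empty join, tensoring with $0$ yields $0$. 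The case $\pv{u}{w_1}~\bot~\pv{u}{w_2}$ is symmetric.

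The main obstacle I anticipate is justifying the two structural facts about the tensor: that $(f \otimes h)\pinv = f\pinv \otimes h\pinv$ and that $0 \otimes h = h \otimes 0 = 0$. The first follows from $\otimes$ being a restriction bifunctor together with the uniqueness of partial inverses in an inverse category; the second from its join-preservation, reading $0$ as the empty join $\bigvee_{s\in\emptyset} s$. Everything else reduces to routine rewriting with the zero absorption laws of Definition~\ref{def:zero} and the totality of the canonical injections and of the fold isomorphism.
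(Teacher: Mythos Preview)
Your proposal is correct and follows the same approach as the paper: induction on the derivation of $\bot$, with the base cases handled by Lemma~\ref{lem:orthogonal-injections} and the propagation cases reduced to the induction hypothesis via totality of the injections/fold and the restriction and join-preserving structure of $\otimes$. The paper's own proof is a terse three-line version of exactly this argument; your write-up spells out the structural facts about $\otimes$ (inverse preservation and absorption of $0$) that the paper leaves implicit.
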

\begin{proof}[Proof of \Cref{lem:rev-orthogonal-semantics-values}]
  This is proven by induction on the definition of $\bot$. The two cases $\inl
  v_1~\bot~\inr v_2$ and $\inr v_1~\bot~\inl v_2$ are covered by
  Lemma~\ref{lem:orthogonal-injections}. the other cases involve
  precompositions and tensor products, the result is direct with the induction
  hypothesis.
\end{proof}
\begin{proof}[Proof of \Cref{lem:rev-orthogonal-semantics}]
  A large part of this lemma is already proven in
  Lemma~\ref{lem:rev-orthogonal-semantics-values}. It remains to observe that
  the interpretation of $\mathtt{let}$ involves a precomposition, thus the
  induction hypothesis on the definition $\bot$ concludes.
\end{proof}

\begin{proof}[Proof of \Cref{lem:compati-clauses}]
  Lemma~\ref{lem:rev-orthogonal-semantics} gives us that for all $g \in
  \sem\Psi$ and $i \neq j \in I$:
  \begin{align*}
    (\sem{e_i}(g)
    \circ \sem{v_i}(g)\pinv)\pinv
    \circ
    (\sem{e_j}(g)
    \circ \sem{v_j}(g)\pinv)
    &= 0_{\sem A,\sem A} \\
    (\sem{e_i}(g)
    \circ \sem{v_i}(g)\pinv)
    \circ
    (\sem{e_j}(g)
    \circ \sem{v_j}(g)\pinv)\pinv
    &= 0_{\sem B,\sem B}
  \end{align*}
  which are the hypothesis of Lemma~\ref{lem:inv-ortho-compati} to
  ensure that for all $g \in \sem\Psi$ and $i \neq j \in I$,
  \[
    \sem{e_i}(g)
    \circ \sem{v_i}(g)\pinv
    \asymp
    \sem{e_j}(g)
    \circ \sem{v_j}(g)\pinv
  \]
  and this concludes.
\end{proof}
\begin{proof}[Proof of \Cref{lem:join-scott}]
  The function can also be obtained as the join in the dcpo $[\Xi \to
  \CC(X,Y)]$, it is therefore Scott continuous.
\end{proof}

\begin{proposition}
  \label{prop:rev-substitution-interpretation}
  Given a well-formed term $\Psi ; \Delta \entail t \colon A$ and for all $(x_i
  \colon A_i) \in \Delta$, a well-formed term $\Psi ; \emptyset \entail t_i \colon A_i$; if
  $\sigma = \set{x_i \mapsto t_i}_i$, then for all $g \in \sem\Psi$:
  \[
    \sem{\Psi ; \emptyset \entail \sigma(t) \colon A}(g) = \sem{\Psi ; \Delta
    \entail t \colon A}(g) \circ \left(\bigotimes_i \sem{\Psi ; \emptyset
    \entail t_i \colon A_i}(g) \right).
  \]
  We define then, for all $g \in \sem\Psi$: $\sem\sigma(g) = \bigotimes_i
  \sem{\entail t_i \colon A_i}(g)$.
\end{proposition}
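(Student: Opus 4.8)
The plan is to prove the identity by structural induction on the derivation of $\Psi ; \Delta \entail t \colon A$, equivalently on the term $t$, unfolding the definition of the interpretation from \Cref{tab:sem} at each step and reading off the claimed equation by a short diagram chase. The two base cases are immediate: for $t = \unit$ the context $\Delta$ is empty, the empty tensor is the identity on $\sem\one$, and $\sigma(\unit) = \unit$, so both sides coincide; for $t = x$ we have $\Delta = x \colon A$ and a single $t_x$ with $x \mapsto t_x \in \sigma$, and since $\sem{x}(g) = \iid_{\sem A}$ the right-hand side is $\iid_{\sem A} \circ \sem{t_x}(g) = \sem{\sigma(x)}(g)$.

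For the constructor cases $t = \inl{t'}$, $t = \inr{t'}$ and $t = \fold{t'}$, the interpretation is a postcomposition by $\iota_l$, $\iota_r$, or $\alpha^{\sem{X \vDash A}}$ respectively. Since substitution commutes with these constructors ($\sigma(\inl{t'}) = \inl{\sigma(t')}$, and so on), I would apply the induction hypothesis to $t'$ and then use associativity of composition to move the postcomposed morphism past $\bigotimes_i \sem{t_i}(g)$. The iso-application case $t = \omega~t'$ is handled the same way: the denotation $\sem{\omega}(g)$ of the iso is independent of the linear context $\Delta$, so that $\sem{\omega~t'}(g) = \sem{\omega}(g) \circ \sem{t'}(g)$ and the induction hypothesis on $t'$ closes the case.

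The tensor case $t = \pv{t_1}{t_2}$ with $\Delta = \Delta_1, \Delta_2$ is the first place the bookkeeping matters: linearity of the typing forces $\sigma$ to split as $\sigma_1 \cup \sigma_2$ along $\Delta_1$ and $\Delta_2$, and correspondingly $\bigotimes_i \sem{t_i}(g)$ factors, up to the coherence isomorphisms of the monoidal unit, as $(\bigotimes_{x_i \in \Delta_1} \sem{t_i}(g)) \otimes (\bigotimes_{x_j \in \Delta_2} \sem{t_j}(g))$. Then the interchange law for the bifunctor $\otimes$ turns $(\sem{t_1}(g) \otimes \sem{t_2}(g)) \circ (\bigotimes_{\Delta_1} \otimes \bigotimes_{\Delta_2})$ into $(\sem{t_1}(g) \circ \bigotimes_{\Delta_1}) \otimes (\sem{t_2}(g) \circ \bigotimes_{\Delta_2})$, and the induction hypotheses on $t_1$ and $t_2$ identify each factor with $\sem{\sigma_1(t_1)}(g)$ and $\sem{\sigma_2(t_2)}(g)$, whose tensor is $\sem{\sigma(\pv{t_1}{t_2})}(g)$.

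The main obstacle is the $\mathtt{let}$ case $t = \letv{(x_1,\dots,x_n)}{t_1}{t_2}$, again with $\Delta = \Delta_1, \Delta_2$. Here I first invoke Barendregt's convention to guarantee that the freshly bound pattern variables $x_1,\dots,x_n$ lie outside $\mathrm{supp}(\sigma)$, so that $\sigma$ commutes with the binder and $\sigma(t) = \letv{(x_1,\dots,x_n)}{\sigma_1(t_1)}{\sigma_2(t_2)}$ with the same context split as before. Recalling that the interpretation of the $\mathtt{let}$ is obtained by precomposing $\sem{t_2}(g) \in \CC(\sem{\Delta_2}\otimes\sem{A_1}\otimes\cdots\otimes\sem{A_n}, \sem B)$ with the morphism assembled from $\iid_{\sem{\Delta_2}}$ and $\sem{t_1}(g)$ via $\otimes$ and the structural isomorphisms, I would apply the induction hypothesis separately to $t_1$ over $\Delta_1$ and to $t_2$ over $\Delta_2, x_1 \colon A_1, \dots, x_n \colon A_n$, where the new variables are sent to themselves, and then reassemble using bifunctoriality of $\otimes$ together with naturality of the coherence isomorphisms, exactly as in the tensor case. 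The delicate point throughout is that the monoidal reassociation and unit isomorphisms implicit in $\den\Delta$ are natural, so they slide past the tensored substitution morphisms; granting this, every case reduces to associativity of composition and the interchange law, completing the induction.
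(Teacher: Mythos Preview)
Your proposal is correct and follows exactly the approach the paper takes: the paper's proof consists of the single sentence ``The proof is straightforward by induction on the formation rules for $t$,'' and your write-up is precisely that induction spelled out case by case. One small caveat: in the $\mathtt{let}$ case, phrasing the handling of the bound pattern variables as ``sent to themselves'' is not literally compatible with the statement (which requires the substituted terms to be closed); the clean way to make this go through is to strengthen the induction hypothesis to allow substituting only a subcontext, so that after applying it to $t_2$ over $\Delta_2$ the pattern variables $x_1,\dots,x_n$ remain as a residual context and are absorbed by $\sem{t_1}$ via bifunctoriality of $\otimes$, as you already sketch.
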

\begin{proof}
  The proof is straightforward by induction on the formation rules for $t$.
\end{proof}

\begin{lemma}
  \label{lem:rev-matching-semantics}
  Given two well-formed values $\Psi ; \Delta \entail v \colon A$ and $\Psi ;
  \emptyset \entail v' \colon A$, and a substitution $\sigma$, if
  $\match{\sigma}{v}{v'}$ then for all $g \in \sem\Psi$:
  \[ \sem v(g)\pinv \circ
  \sem{v'}(g) = \sem\sigma(g). \]
\end{lemma}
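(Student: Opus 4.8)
The plan is to reduce the statement to the substitution lemma (\Cref{prop:rev-substitution-interpretation}) by exploiting the fact that the interpretation of a value is always a \emph{total} morphism. First I would invoke the remark preceding \Cref{tab:pattern_matching}, namely that $\match{\sigma}{v}{v'}$ holds if and only if $\sigma(v) = v'$. Since $v'$ is closed and $\sigma$ substitutes precisely the variables of $\Delta$ (the free variables of $v$) by the closed values it records, \Cref{prop:rev-substitution-interpretation} applies to $v$ and yields, for every $g \in \sem\Psi$,
\[
  \sem{v'}(g) = \sem{\sigma(v)}(g) = \sem{v}(g) \circ \sem\sigma(g).
\]

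The second ingredient is a short auxiliary claim: for every well-formed value $\Psi;\Delta \entail v \colon A$ and every $g$, the morphism $\sem v(g)$ is total, i.e. $\res{\sem v(g)} = \iid_{\sem\Delta}$. I would prove this by induction on the structure of $v$. The variable and $\unit$ cases give the identity, which is total. For $\inl w$ and $\inr w$ the interpretation is $\iota_l \circ \sem w(g)$ (resp. $\iota_r \circ \sem w(g)$), a composite of total maps, since $\iota_l,\iota_r$ are total by \Cref{def:disten} and the induction hypothesis makes $\sem w(g)$ total. For $\fold w$ the interpretation precomposes $\sem w(g)$ with the structure map $\alpha$, which is an isomorphism by Lambek's theorem and hence total. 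For $\pv{v_1}{v_2}$ the interpretation is $\sem{v_1}(g) \otimes \sem{v_2}(g)$, and $\otimes$ being a restriction bifunctor (\Cref{def:restr}) sends a pair of total maps to a total map. Totality is thus preserved in every case, closing the induction.

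Finally I would combine the two facts. Using the defining equation $f^{\circ} \circ f = \res f$ of an inverse category (\Cref{def:invcat}) together with the totality of $\sem v(g)$,
\[
  \sem v(g)\pinv \circ \sem{v'}(g)
  = \sem v(g)\pinv \circ \sem v(g) \circ \sem\sigma(g)
  = \res{\sem v(g)} \circ \sem\sigma(g)
  = \sem\sigma(g),
\]
which is the desired identity. The only genuinely delicate point is the totality claim, and within it the bookkeeping of the pairing case, where the tensor decomposition of $\sem\sigma$ must be matched with the splitting of $\Delta$ into the two subcontexts carrying the free variables of $v_1$ and $v_2$; both reduce to the compatibility of $\otimes$ and of the coproduct injections with the restriction structure, as recorded in \Cref{def:restr,def:disten}. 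An equally viable but more computational alternative would be a direct induction on the derivation of $\match{\sigma}{v}{v'}$, using $\iota_l\pinv \iota_l = \iid$, $\iota_r\pinv \iota_r = \iid$, $\alpha\pinv \alpha = \iid$, and $(f\otimes g)\pinv = f\pinv \otimes g\pinv$ in the respective cases.
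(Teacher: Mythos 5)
Your proof is correct, but it takes a genuinely different route from the paper's. The paper proves \Cref{lem:rev-matching-semantics} by a direct induction on the derivation of the matching predicate $\match{\sigma}{v}{v'}$ — essentially the ``more computational alternative'' you sketch in your last sentence, using $\iota_l\pinv\iota_l = \iid$, the restriction structure of $\otimes$, and the invertibility of the algebra map $\alpha$ case by case. You instead factor the statement through \Cref{prop:rev-substitution-interpretation}: from the observation (stated just before \Cref{tab:pattern_matching}) that $\match{\sigma}{v}{v'}$ implies $\sigma(v)=v'$, you get $\sem{v'}(g) = \sem{v}(g)\circ\sem\sigma(g)$, and then you discharge the unwanted factor $\sem v(g)\pinv\circ\sem v(g) = \res{\sem v(g)}$ by an auxiliary lemma stating that interpretations of values are total. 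That auxiliary induction is sound: variables and $\unit$ denote identities, $\iota_l,\iota_r$ are total by \Cref{def:disten}, $\alpha$ is total because it is an isomorphism (an isomorphism in a restriction category satisfies $\iid = \res{\alpha^{-1}\alpha}\leq\res\alpha\leq\iid$), and $\otimes$ preserves totality as a restriction bifunctor; composition of total maps is total by the axioms of \Cref{def:restr}. Your approach buys modularity — it reuses a proposition the paper proves anyway and isolates the reusable fact that values denote total morphisms — at the cost of needing that extra totality lemma, whereas the paper's induction on the matching derivation is self-contained and performs the same case analysis directly on the semantic clauses. One bookkeeping point worth making explicit in your write-up: \Cref{prop:rev-substitution-interpretation} requires the domain of $\sigma$ to be exactly the variables of $\Delta$ with closed well-typed images; this holds here because the matching rules force $\opn{supp}(\sigma)$ to be exactly the free variables of $v$, which by linearity of the typing judgement coincide with $\Delta$, and the images are subterms of the closed value $v'$.
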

\begin{proof}
  The proof is straightforward by induction on $\sigma[b]=b'$.
\end{proof}

\begin{lemma}
  \label{lem:rev-beta-sound}
  Given a well-formed iso abstraction $~\entailiso \isobreduit \colon A
  \iso B$ and a well-formed value $~\entail v' \colon A$, if
  $\match{\sigma}{v_i}{v'}$, then
  \[
    \sem{\entail \isobreduit~v' \colon B}
    = \sem{\entail \sigma(v_i) \colon B}.
  \]
\end{lemma}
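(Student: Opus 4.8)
The plan is to prove the statement by combining the denotational interpretation of an iso abstraction with the matching semantics lemma. Recall that the interpretation of the iso abstraction is the join $\sem{\entailiso \isobreduit \colon A \iso B} = \bigvee_{i \in I} \sem{e_i} \circ \sem{v_i}\pinv$, and that the application $\sem{\entail \isobreduit~v' \colon B}$ unfolds (via the evaluation rule for application, which corresponds to composition in $\CC$) to $\left(\bigvee_{i \in I} \sem{e_i}(g) \circ \sem{v_i}(g)\pinv\right) \circ \sem{v'}(g)$, where $g$ is the (empty) iso-context valuation. Since joins distribute over composition (Definition~\ref{def:join}), this equals $\bigvee_{i \in I} \sem{e_i}(g) \circ \sem{v_i}(g)\pinv \circ \sem{v'}(g)$.

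The key step is then to show that every summand except the $i$-th one vanishes, and that the $i$-th summand gives exactly $\sem{\sigma(v_i)}$. For the $i$-th term, since $\match{\sigma}{v_i}{v'}$, I would apply Lemma~\ref{lem:rev-matching-semantics} to obtain $\sem{v_i}(g)\pinv \circ \sem{v'}(g) = \sem\sigma(g)$, so that the $i$-th summand becomes $\sem{e_i}(g) \circ \sem\sigma(g)$. By Proposition~\ref{prop:rev-substitution-interpretation} this is precisely $\sem{\sigma(e_i)}(g)$; here I would note that the statement writes $\sigma(v_i)$ where the intended expression is $\sigma(e_i)$, following the reduction rule $\isobreduit~v' \to \sigma(e_i)$ of Table~\ref{def:rewriting-system}. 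For each $j \neq i$, I would argue that $\sem{v_j}(g)\pinv \circ \sem{v'}(g)$ is zero: since $\match{\sigma}{v_i}{v'}$ and the clauses are pairwise orthogonal, Lemma~\ref{lem:isos:orthogonality} guarantees that $v'$ cannot match any $v_j$ with $j \neq i$, and combining the orthogonality $v_i \bot v_j$ with Lemma~\ref{lem:rev-orthogonal-semantics-values} (applied after rewriting $\sem{v'}$ through the matched $v_i$) yields $\sem{v_j}(g)\pinv \circ \sem{v'}(g) = 0$, whence the whole summand is zero.

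The main obstacle I expect is handling the $j \neq i$ terms cleanly: Lemma~\ref{lem:rev-orthogonal-semantics-values} is stated for two orthogonal \emph{open} values sharing a common matched value, whereas here $v'$ is a closed value obtained by substituting into $v_i$. The cleanest route is to use the matching $\match{\sigma}{v_i}{v'}$ to rewrite $\sem{v'}(g) = \sem{v_i}(g) \circ \sem\sigma(g)$ (again via Lemma~\ref{lem:rev-matching-semantics}, reading the matching equationally), so that $\sem{v_j}(g)\pinv \circ \sem{v'}(g) = \left(\sem{v_j}(g)\pinv \circ \sem{v_i}(g)\right) \circ \sem\sigma(g)$, and then invoke $v_i \bot v_j$ together with Lemma~\ref{lem:rev-orthogonal-semantics-values} to kill the bracketed factor. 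Assembling these observations, the join collapses to the single nonzero term $\sem{\sigma(e_i)}(g)$, giving the desired equality $\sem{\entail \isobreduit~v' \colon B} = \sem{\entail \sigma(e_i) \colon B}$ for all $g$, and hence as Scott continuous maps.
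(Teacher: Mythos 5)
Your proposal is correct, and its computational skeleton (unfold the application as a composition, distribute the join over composition, collapse to the matching clause, then apply the matching lemma and the substitution proposition) coincides with the paper's proof; you also rightly flag the notational slip $\sigma(v_i)$ versus $\sigma(e_i)$ in the statement. The genuine difference lies in how the summands for $j \neq i$ are killed. The paper asserts the \emph{syntactic} orthogonality $v_j~\bot~v'$ between the non-matching patterns and the closed value $v'$, and then invokes Lemma~\ref{lem:rev-orthogonal-semantics} on that pair; this claim is stated without proof, and it is not an instance of the typing hypothesis (which only gives $v_i~\bot~v_j$ between patterns), so it would require a separate argument that matching propagates orthogonality through the relation of Definition~\ref{def:orthogonality}. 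You avoid this entirely: you factor $\sem{v'}(g) = \sem{v_i}(g) \circ \sem\sigma(g)$ and kill the bracketed factor $\sem{v_j}(g)\pinv \circ \sem{v_i}(g)$ using the clause orthogonality $v_j~\bot~v_i$ guaranteed by the typing rule together with Lemma~\ref{lem:rev-orthogonal-semantics-values}, then conclude with $0 \circ \sem\sigma = 0$. This buys a more self-contained argument resting only on lemmas the paper actually proves, at the cost of one extra rewriting step. One small imprecision: the factoring $\sem{v'} = \sem{v_i} \circ \sem\sigma$ is not literally Lemma~\ref{lem:rev-matching-semantics} ``read equationally'' (that lemma gives $\sem{v_i}\pinv \circ \sem{v'} = \sem\sigma$, and composing back with $\sem{v_i}$ only yields a restriction idempotent on the left); the correct citation is Proposition~\ref{prop:rev-substitution-interpretation} applied to $v' = \sigma(v_i)$, using the equivalence between the matching predicate and substitution equality noted in the appendix. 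With that citation fixed, your proof is complete.
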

\begin{proof}
  First, we deduce from the assumption $\match{\sigma}{v_i}{v'}$ that
  \begin{itemize}
    \item $\neg(v_i~\bot~v')$, and thus $\sem{v_i}\pinv \circ \sem{v'} =
      \sem\sigma$, thanks to Lemma~\ref{lem:rev-matching-semantics}.
    \item for all $j \neq i$, $v_j~\bot~v'$, and thus $\sem{v_j}\pinv \circ
      \sem{v'} = 0$, thanks to Lemma~\ref{lem:rev-orthogonal-semantics}.
  \end{itemize}
  We can then compute the semantics, with $\omega \defeq \isobreduit$:
  \begin{align*}
    &\ \sem{\omega~v'} & \\
    &= \sem\omega \circ \sem{v'} & \text{(by definition.)} \\
    &= \left( \bigvee_j \sem{v_j} \circ \sem{b_j}\pinv \right) \circ \sem{v'}
    & \text{(by definition.)} \\
    &= \bigvee_j \sem{v_j} \circ \sem{b_j}\pinv \circ \sem{v'}
    & \text{(composition distributes over join.)} \\
    &= \sem{v_i} \circ \sem{b_i}\pinv \circ \sem{v'}
    & \text{(lemma~\ref{lem:rev-orthogonal-semantics}.)} \\
    &= \sem{v_i} \circ \sem\sigma
    & \text{(lemma~\ref{lem:rev-matching-semantics}.)} \\
    &= \sem{\sigma(v_i)}
    & \text{(prop.~\ref{prop:rev-substitution-interpretation}.)}
  \end{align*}
\end{proof}

\smallskip
\noindent
\textsf{\bfseries Proof of Soundness.~}

\begin{lemma}
  \label{lem:rev-iso-sound}
  Given a well-formed iso judgement $~\entailiso \omega \colon T$,
  if $\omega \to \omega'$, then
  \[
    \sem{\entailiso \omega \colon T}
    = \sem{\entailiso \omega' \colon T}.
  \]
\end{lemma}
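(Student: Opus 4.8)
The plan is to argue by induction on the derivation of $\omega \to \omega'$. Since $\omega$ is typed as a pure iso, the only reductions that can fire are the three iso-level rules of \Cref{def:rewriting-system}: the fixpoint unfolding $\ffix\phi.\omega_0 \to \omega_0[\ffix\phi.\omega_0/\phi]$, the $\beta$-step $(\lambda\phi.\omega_1)\omega_2 \to \omega_1[\omega_2/\phi]$, and the left-congruence $\omega_1\omega_2 \to \omega_1'\omega_2$ with $\omega_1 \to \omega_1'$. In the two non-congruence cases the key tool is \Cref{lem:sem-subst-iso}, which expresses the denotation of a substituted iso as a precomposition with a pairing $\pv{\iid}{-}$; specialised to the empty iso-context of the present statement, the identity in that pairing is the identity on the terminal dcpo $\sem\emptyset$.

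For fixpoint unfolding, \Cref{tab:sem} gives $\sem{\entailiso \ffix\phi.\omega_0 \colon T} = \fix(f)$, where $f = \sem{\phi\colon T \entailiso \omega_0 \colon T}$ is the continuous endomap whose least fixed point is selected. Applying \Cref{lem:sem-subst-iso} with $\omega_2 = \ffix\phi.\omega_0$ yields $\sem{\entailiso \omega_0[\ffix\phi.\omega_0/\phi] \colon T} = f \circ \pv{\iid}{\fix(f)}$, which over the terminal object is precisely $f(\fix(f))$. The defining fixpoint identity $\fix(f) = f(\fix(f))$ then closes the case.

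For the $\beta$-step, \Cref{tab:sem} gives $\sem{\entailiso (\lambda\phi.\omega_1)\omega_2 \colon T_2} = \rmeval \circ \pv{\rmcurry(g)}{\sem{\omega_2}}$, where $g = \sem{\phi\colon T_1 \entailiso \omega_1 \colon T_2}$. The $\beta$-law of the cartesian-closed structure of $\dcpo$, namely $\rmeval \circ \pv{\rmcurry(g)}{k} = g \circ \pv{\iid}{k}$, turns this into $g \circ \pv{\iid}{\sem{\omega_2}}$, which \Cref{lem:sem-subst-iso} identifies with $\sem{\entailiso \omega_1[\omega_2/\phi] \colon T_2}$. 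For the left-congruence, the interpretation of an application $\omega_1\omega_2$ is $\rmeval \circ \pv{\sem{\omega_1}}{\sem{\omega_2}}$, depending on $\omega_1$ only through its denotation; the induction hypothesis $\sem{\omega_1} = \sem{\omega_1'}$ therefore gives the result immediately.

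The only delicate point is bookkeeping: \Cref{lem:sem-subst-iso} and the $\beta$-law are naturally stated over an arbitrary context, whereas here that context is empty, so one must confirm that the pairings $\pv{\iid}{-}$, together with $\rmeval$ and $\rmcurry$, are taken over the terminal dcpo $\sem\emptyset$. This is routine, and no genuine obstacle arises beyond correctly instantiating the substitution lemma and invoking the least-fixed-point equation.
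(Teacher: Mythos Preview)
Your proof is correct and follows essentially the same approach as the paper: induction on $\to$, with the fixpoint and $\beta$ cases discharged via \Cref{lem:sem-subst-iso} together with the fixed-point equation and the cartesian-closed $\beta$-law respectively, and the congruence case by induction hypothesis. You are slightly more explicit than the paper about the pairing $\pv{\iid}{-}$ over the terminal dcpo $\sem\emptyset$, where the paper simply writes compositions like $\sem{\omega_1}\circ\sem{\omega_2}$, but this is only a notational difference.
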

\begin{proof}
  The proof is done by induction on $\to$.
  \begin{itemize}
    \item $\ffix \phi . \omega \to \omega[\ffix \phi . \omega/\phi]$.
      \begin{align*}
        &\ \sem{\ffix \phi . \omega} & \\
        &= \fix \sem\omega
        & \text{(by definition.)} \\
        &= \sem\omega \circ \fix(\sem \omega)
        & \text{(fixed point.)} \\
        &= \sem \omega \circ \sem{\ffix \phi . \omega}
        & \text{(by definition.)} \\
        &= \sem{\omega[\ffix \phi . \omega/\phi]}
        & \text{(Lem~\ref{lem:sem-subst-iso}.)}
      \end{align*}
    \item $(\lambda \phi . \omega_1) \omega_2 \to \omega_1[\omega_2/\phi]$.
      \begin{align*}
        &\ \sem{(\lambda \phi . \omega_1) \omega_2} & \\
        &= \rmeval \circ \pv{\rmcurry(\sem{ \omega_1})}{\sem{\omega_2}}
        & \text{(by definition.)} \\
        &= \sem{\omega_1} \circ \sem{\omega_2}
        & \\
        &= \sem{\omega_1[\omega_2/\phi]}
        & \text{(Lem~\ref{lem:sem-subst-iso}.)}
      \end{align*}
    \item $\omega_1 \omega_2 \to \omega'_1 \omega_2$.
      Direct with the induction hypothesis.
  \end{itemize}
\end{proof}

\begin{proof}[Proof of \Cref{prop:rev-soundness}]
	The proof is done by induction on $\to$.
	\begin{itemize}
		\item $\isobreduit v' \to \sigma(v_j)$. This is covered by Lemma~\ref{lem:rev-beta-sound}.
		\item $\letv{p}{v}{t} \to \sigma(t)$. This is a conclusion of
			Prop.~\ref{prop:rev-substitution-interpretation}.
		\item $\omega~t \to \omega'~t$. We conclude with the previous lemma
			(Lemma~\ref{lem:rev-iso-sound}).
		\item $\ini t \to \ini t'$ when $t \to t'$. The induction hypothesis gives $\sem t = \sem{t'}$
			and then
			\[
				\sem{\ini t} = \iota_i \sem t = \iota_i \sem{t'} = \sem{\ini t'}.
			\]
		\item $t \otimes t_2 \to t' \otimes t_2$ when $t \to t'$.
			\[
				\sem{t \otimes t_2} = \sem{t} \otimes \sem{t_2} \overset{\text{IH}}{=}
				\sem{t'} \otimes \sem{t_2} = \sem{t' \otimes t_2}.
			\]
		\item $\isobreduit t \to \isobreduit t'$ when $t \to t'$. In general,
			\[
				\sem{\omega~t} = \sem{\omega} \sem t \overset{\text{IH}}{=} \sem\omega \sem{t'}
				= \sem{\omega~t'}.
			\]
		\item $\fold t \to \fold t'$ when $t \to t'$.
			\[
				\sem{\fold t} = \alpha \sem t \overset{\text{IH}}{=} \alpha \sem{t'} = \sem{\ini t'}.
			\]
		\item $\letv{p}{t}{t_2} \to \letv{p}{t'}{t_2}$ when $t \to t'$.
			\[
				\sem{\letv{p}{t}{t_2}} = \sem{t_2} (\iid \otimes \sem t)
				\overset{\text{IH}}{=} \sem{t_2} (\iid \otimes \sem{t'})
				= \sem{\letv{p}{t'}{t_2}}.
			\]
	\end{itemize}
\end{proof}

\begin{corollary}
  \label{cor:rev-soundness}
  Given a well-formed term judgement $~\entail t \colon A$ such that
  $t \downarrow$, then $\sem{\entail t \colon A} \neq 0_{\sem A}$.
\end{corollary}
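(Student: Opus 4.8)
The plan is to reduce the statement to a claim about closed \emph{values} and then to establish that such values are interpreted by \emph{total} morphisms. First I would upgrade \Cref{prop:rev-soundness} from a single step $\to$ to its reflexive-transitive closure $\to^*$ by an immediate induction on the length of the reduction, so that $t \to^* v$ gives $\sem{\entail t \colon A} = \sem{\entail v \colon A}$. Since $t\downarrow$ means exactly that $t \to^* v$ for some value $v$, it suffices to show $\sem{\entail v \colon A} \neq 0_{\sem A}$ for an arbitrary closed value $v$. Because the term context $\Delta$ is empty for a closed value, its denotation is a single morphism $\sem v \colon 1 \to \sem A$ in $\CC$ (the parameter $g \in \sem\emptyset$ being trivial).

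The core of the argument is the claim that $\sem v$ is total, i.e.\ $\res{\sem v} = \iid_1$ (\Cref{def:restr}), which I would prove by structural induction on the closed value $v$, reading off the clauses of \Cref{tab:sem}. For $v = \unit$ the denotation is $\iid_1$, which is total. For $v = \inl{v'}$ the denotation is $\iota_l \circ \sem{v'}$ (and symmetrically $\iota_r \circ \sem{v'}$ for $\inr{v'}$): the injections are total by \Cref{def:disten}, and the induction hypothesis makes $\sem{v'}$ total, so the composite is total. For $v = \pair{v_1}{v_2}$ the denotation is $\sem{v_1} \otimes \sem{v_2}$, and since $\otimes$ is a restriction bifunctor we get $\res{\sem{v_1} \otimes \sem{v_2}} = \res{\sem{v_1}} \otimes \res{\sem{v_2}} = \iid_1 \otimes \iid_1 = \iid_1$ by the induction hypothesis (up to the coherence isomorphism $1 \otimes 1 \cong 1$, which is itself total). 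For $v = \fold{v'}$ the denotation is $\alpha \circ \sem{v'}$, where $\alpha$ is the initial-algebra structure map; Lambek's theorem gives that $\alpha$ is an isomorphism, hence total, and the composite is total again.

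Finally, totality yields the conclusion. If we had $\sem v = 0_{1,\sem A}$, then by \Cref{def:zero} its restriction would be $\res{0_{1,\sem A}} = 0_{1,1}$, while totality gives $\res{\sem v} = \iid_1$; this forces $\iid_1 = 0_{1,1}$, which makes $1$ a zero object, hence isomorphic to $0$, contradicting the standing assumption that $0$ and $1$ are distinct. The main obstacle is really the bookkeeping of the totality induction — verifying that every constructor is sent to a total morphism (the two injections, the tensor of totals, and the Lambek isomorphism $\alpha$) — and then making the last step precise by extracting the inequality $\iid_1 \neq 0_{1,1}$ from the distinctness of $0$ and $1$.
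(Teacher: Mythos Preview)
Your proposal is correct and follows the same overall approach as the paper: reduce to a closed value via soundness and then argue that closed values have nonzero denotation. The paper's own proof is a single sentence that simply asserts $\sem v \neq 0$ after invoking \Cref{prop:rev-soundness}; the totality induction you spell out (unit, injections, tensor, $\fold$ via Lambek) and the final step extracting $\iid_1 \neq 0_{1,1}$ from the distinctness of $0$ and $1$ are exactly what is needed to substantiate that assertion, so your write-up is strictly more complete than the paper's while remaining the same argument.
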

\begin{proof}
  Knowing that there is a value $v$ such that $t \to^* v$,
  Prop.~\ref{prop:rev-soundness} ensures that $\sem t = \sem v \neq 0$.
\end{proof}

\section{Proof of Adequacy}
\label{sub:rev-proof-adequacy}

Our proof of adequacy involves a finitary sublanguage, where the
number of recursive calls is controlled syntactically. We show the
adequacy result for the finitary terms thanks to strong normalisation,
and then show that it implies adequacy for the whole language; this is
achieved by observing that a normalising finitary term is also
normalising in its non-finitary form.

\noindent
\textsf{\bfseries Finitary sublanguage.~}
We introduce the syntax for finitary terms, where the number of possible
reductions is limited by the syntax itself. The grammar of finitary isos
is given by:
\[
	\omega~::=~\isobreduit \alt \lambda \phi . \omega \alt \phi \alt \omega
	\omega \alt \nfix n \phi . \omega
\]
where $n$ is a natural number. The iso $\nfix n \phi . \omega$ has the same
formation rule as $\ffix \phi . \omega$ above. We introduce syntactic sugar,
that denotes an expression that will never reduce, by induction on iso types:
\[
	\begin{array}{c}
		\Omega_{A \iso B} \defeq \set{\mid \cdot}
		\qquad
		\Omega_{T_1 \to T_2} \defeq \lambda \phi^{T_1} . \Omega_{T_2}
	\end{array}
\]
The syntax of finitary terms does not change compared to the language presented
at the beginning of the chapter, with the addition of a term $\bot$, which
indicates that there was no match when applying an iso abstraction to a value.
The finitary operational semantics is then defined as:
\[
	\begin{array}{c}
		\infer{\nfix 0 \phi^ T . \omega \finto \Omega_T}{}
		\qquad
		\infer{\nfix{n+1} \phi . \omega \finto \omega[\nfix n \phi . \omega /
		\phi]}{}
		\mynl
		\infer{(\lambda \phi. \omega_1) \omega_2 \finto \omega_1[\omega_2/\phi]}{}
		\qquad
		\infer{\omega_1\omega_2 \finto \omega'_1 \omega_2}{\omega_1 \finto \omega'_1}
	\end{array}
\]
\[
	\begin{array}{c}
		\infer{ \isobreduit v' \finto \sigma(e_i)}{
			\match{\sigma}{v_i}{v'}}
		\qquad
		\infer{ \isobreduit v' \finto \bot}{
			\forall i, \neg(\match{\sigma}{v_i}{v'})}
		\\[1.5ex]
		\infer{C_\to[t_1] \finto C_\to[t_2]}{t_1 \finto t_2}
		\qquad
		\infer{C_\to[t] \finto \bot}{t \finto \bot}
		\qquad
		\infer{\letv{p}{v}{t} \finto \sigma(t)}{\match{\sigma}{p}{v}}
		\qquad
		\infer{\omega~t \finto \omega'~t}{\omega \finto \omega'}
	\end{array}
\]

\begin{lemma}[Iso Subject Reduction]
	\label{lem:fin-iso-subject-reduction}
	If $\Psi \entailiso \omega \colon T$ is well-formed, $\omega$ is finitary and
	$\omega \finto \omega'$, then $\Psi \entailiso \omega' \colon T$.
\end{lemma}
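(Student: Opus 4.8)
The plan is to proceed by induction on the derivation of $\omega \finto \omega'$, mirroring the structure of the non-finitary argument in \Cref{lem:iso-subject-reduction}, but with two additional base cases arising from the finitary fixpoint. Since $\omega$ is a syntactic iso, the only finitary reduction rules whose left-hand side is an iso are the two rules for $\nfix n \phi . \omega$, the $\beta$-rule $(\lambda \phi . \omega_1)\omega_2 \finto \omega_1[\omega_2/\phi]$, and the congruence rule $\omega_1 \omega_2 \finto \omega'_1 \omega_2$. These are the only cases to treat; the remaining rules of the finitary semantics act on terms, not on isos.

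Before the main induction I would establish the auxiliary fact that $\Psi \entailiso \Omega_T \colon T$ for every iso-type $T$ and every context $\Psi$, by induction on $T$. For $T = A \iso B$, the iso $\Omega_{A\iso B} = \set{\mid \cdot}$ is the empty iso abstraction, which is well-typed at $A \iso B$ because the two orthogonality side-conditions of the iso-abstraction rule quantify over an empty index set and therefore hold vacuously. For $T = T_1 \to T_2$, the iso $\Omega_{T_1 \to T_2} = \lambda \phi^{T_1} . \Omega_{T_2}$ is typed by the $\lambda$-abstraction rule using the induction hypothesis $\Psi, \phi \colon T_1 \entailiso \Omega_{T_2} \colon T_2$.

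With this in hand the main cases are routine. For $\nfix 0 \phi^T . \omega \finto \Omega_T$, the reduct is well-typed at $T$ directly by the auxiliary fact. For $\nfix{n+1} \phi . \omega \finto \omega[\nfix n \phi . \omega / \phi]$, well-formedness of the redex gives, through the same formation rule as $\ffix$, that $\Psi, \phi \colon T \entailiso \omega \colon T$; this very judgement also witnesses $\Psi \entailiso \nfix n \phi . \omega \colon T$, so the substitution lemma (\Cref{lem:iso-subst-type}) yields $\Psi \entailiso \omega[\nfix n \phi . \omega / \phi] \colon T$. The $\beta$-case and the congruence case are verbatim as in \Cref{lem:iso-subject-reduction}: the former uses that well-typedness of $\lambda \phi . \omega_1$ forces $\Psi, \phi \colon T_1 \entailiso \omega_1 \colon T_2$ and then invokes \Cref{lem:iso-subst-type}, while the latter applies the induction hypothesis to $\omega_1 \finto \omega'_1$ and reuses the unchanged typing of the argument $\omega_2$.

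The only genuinely new ingredient compared to the non-finitary proof is the auxiliary typing of $\Omega_T$, whose single subtle point is checking that the empty iso abstraction is admissible; this is immediate once the orthogonality conditions are seen to be vacuous over an empty set of clauses. I therefore expect no real obstacle: for typing purposes the finitary fixpoint behaves exactly like an ordinary fixpoint whose body is substituted by a strictly smaller finitary fixpoint, and the $\nfix 0$ rule merely collapses to a well-typed diverging iso.
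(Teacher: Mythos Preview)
Your proposal is correct and takes essentially the same approach as the paper: the paper omits the proof of \Cref{lem:fin-iso-subject-reduction} as routine, relying on the analogy with \Cref{lem:iso-subject-reduction}, and your argument is exactly the expected filling-in of that analogy, with the two additional $\nfix n$ cases handled via the substitution lemma and the auxiliary well-typedness of $\Omega_T$.
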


\begin{lemma}[Iso Progress]
	\label{lem:iso-fin-progress}
	If $\Psi \entailiso \omega \colon T$ is well-formed and $\omega$ is finitary,
	$\omega$ is either an iso value or there exists $\omega'$ such that $\omega
	\finto \omega'$.
\end{lemma}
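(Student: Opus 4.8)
The plan is to prove this by induction on the typing derivation $\Psi \entailiso \omega \colon T$ (equivalently, on the structure of the finitary iso $\omega$, following its restricted grammar), mirroring the proof of the non-finitary Iso Progress lemma (\Cref{lem:iso-progress}) but with the general fixpoint replaced by the finitary one. The case analysis is driven by the head constructor of $\omega$, i.e.\ by the last rule applied.

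First I would dispatch the two value cases: if $\omega = \isobreduit$ or $\omega = \lambda \phi . \omega'$, then $\omega$ is already an iso value and there is nothing to show. Next, the fixpoint case $\omega = \nfix n \phi . \omega'$ is where the finitary system genuinely simplifies matters: unlike a general $\ffix$, the finitary fixpoint \emph{always} reduces. If $n = 0$, the rule $\nfix 0 \phi^T . \omega' \finto \Omega_T$ applies; if $n = m+1$, the rule $\nfix{m+1} \phi . \omega' \finto \omega'[\nfix m \phi . \omega'/\phi]$ applies. In both subcases we exhibit an $\omega''$ with $\omega \finto \omega''$, so no well-foundedness argument on $n$ is needed here.

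The substantive case is the application $\omega = \omega_2\,\omega_1$. By the induction hypothesis on $\omega_2$, either $\omega_2 \finto \omega_2'$ for some $\omega_2'$, in which case the congruence rule gives $\omega_2\,\omega_1 \finto \omega_2'\,\omega_1$; or $\omega_2$ is an iso value. In the latter subcase I would use the typing to pin down its shape: since $\omega_2$ is applied to an argument, it must have a function type $T_1 \to T_2$, and an iso value of function type can only be a $\lambda$-abstraction---the iso-abstraction $\isobreduit$ is ruled out because it is typed $A \iso B$, not an arrow. Writing $\omega_2 = \lambda \phi . \omega_2'$, the $\beta$-rule yields $(\lambda \phi . \omega_2')\,\omega_1 \finto \omega_2'[\omega_1/\phi]$.

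I expect the only real care to be this type-based inversion in the application case (a function-typed value is exactly a $\lambda$-abstraction), together with the treatment of the context and the iso-variable case. As in \Cref{lem:iso-progress}, the cleanest route is to establish the statement for closed isos ($\Psi = \emptyset$), so that $\omega = \phi$ never occurs; the result then suffices for the intended use in the adequacy argument. If one keeps a general $\Psi$, one must instead classify iso-variables (and neutral applications headed by a variable) as normal forms, weakening the notion of iso value accordingly. Everything else is immediate from the finitary reduction rules and the induction hypothesis.
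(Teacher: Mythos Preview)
Your proposal is correct and mirrors the paper's approach: the paper does not spell out a proof for this finitary lemma, but its proof of the non-finitary Iso Progress (\Cref{lem:iso-progress}) proceeds by exactly the same induction on the typing derivation, with the same case analysis and the same type-based inversion in the application case; the only new ingredient is your split on $n$ in the $\nfix n$ case, which is the obvious replacement for the single unfolding of $\ffix$. Your remark about the context $\Psi$ is on point: the paper's non-finitary lemma is stated for closed isos only, and the sole use of the finitary lemma (in the proof of \Cref{lem:fin-progress}) is also for closed terms, so restricting to $\Psi=\emptyset$ as you suggest is both sufficient and what the paper effectively relies on.
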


\begin{lemma}[Subject Reduction]
	\label{lem:fin-subject-reduction}
	If $\Psi; \Delta \entail t \colon A$ is well-formed, $t$ is finitary and $t
	\finto t'$, then $\Psi; \Delta \entail t' \colon A$ is also well-formed.
\end{lemma}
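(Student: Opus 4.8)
The plan is to proceed by induction on the derivation of $t \finto t'$, following the proof of the non-finitary subject reduction (\Cref{lem:rev-subject-reduction}) almost verbatim and appealing to its finitary analogue for isos, \Cref{lem:fin-iso-subject-reduction}, whenever an iso reduces inside a term.

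First I would dispatch the rules already present in the non-finitary calculus. For the two substitution-driven reductions $\isobreduit v' \finto \sigma(e_i)$ (with $\match{\sigma}{v_i}{v'}$) and $\letv{p}{v}{t} \finto \sigma(t)$ (with $\match{\sigma}{p}{v}$), inverting the typing derivation of the redex yields, through the matching predicate of \Cref{tab:pattern_matching}, a well-formed valuation $\Psi; \Delta \Vdash \sigma$, so that \Cref{lem:rev-term-subst} gives $\Psi; \emptyset \entail \sigma(e_i) \colon B$ (resp. the required typing of $\sigma(t)$). For the iso congruence $\omega~t \finto \omega'~t$ (with $\omega \finto \omega'$), inversion gives $\omega \colon A \iso B$ and $t \colon A$; \Cref{lem:fin-iso-subject-reduction} preserves $\omega' \colon A \iso B$, and the application rule rebuilds $\omega'~t \colon B$. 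For the term congruence $C_\to[t_1] \finto C_\to[t_2]$, the induction hypothesis gives $t_2$ the same type as $t_1$, after which I would invoke the evident plugging fact — filling the hole of a well-typed evaluation context with a term of the hole's type yields a well-typed term — which is a short induction on the grammar of $C_\to$.

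The only genuinely new cases are the two that emit $\bot$: $\isobreduit v' \finto \bot$ when no clause matches, and $C_\to[t] \finto \bot$. Both are immediate once one uses that, in the finitary sublanguage, $\bot$ inhabits every type in every context; in the first case the redex has type $B$ and $\bot \colon B$, and in the second the term has the type $A$ of the surrounding context and $\bot \colon A$. (The dual fact needed on the iso side, that $\Omega_T \colon T$ for every iso type $T$ — by induction on $T$, with base case the vacuously well-typed empty iso — belongs to \Cref{lem:fin-iso-subject-reduction}, not to the present lemma.)

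I expect no real obstacle, since the argument is structurally a copy of the non-finitary proof and the finitary-specific $\nfix n$ rules are confined to the iso lemma. The single point deserving care is that the error term $\bot$ — and only $\bot$ — must be allowed to inhabit any type regardless of the linear context, discarding the resources consumed by the redex it replaces; this is exactly the typing concession that keeps subject reduction sound once exhaustivity of matching is dropped.
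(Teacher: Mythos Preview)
Your proposal is correct and mirrors exactly what the paper does: it gives no proof at all for \Cref{lem:fin-subject-reduction}, implicitly relying on the same three observations as in \Cref{lem:rev-subject-reduction} (iso subject reduction, the substitution lemma \Cref{lem:rev-term-subst}, and closure under evaluation contexts), which is precisely the structure you spell out. Your explicit treatment of the two $\bot$-producing rules, and the observation that this forces $\bot$ to inhabit every type in every context, is a detail the paper leaves entirely implicit.
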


\begin{lemma}[Progress]
	\label{lem:fin-progress}
	If $~\entail t \colon A$ and $t$ is finitary, then:
	\begin{itemize}
		\item either $t$ is a value,
		\item or $t \to \bot$,
		\item or there exists $t'$ such that $t \finto t'$.
	\end{itemize}
\end{lemma}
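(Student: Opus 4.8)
The plan is to prove the statement by induction on the derivation of the typing judgement $~\entail t \colon A$, using the term typing rules of \Cref{tab:typterm}. Since the finitary calculus adds the symbol $\bot$ but provides no typing rule for it, every well-typed finitary term is built solely from the genuine term constructors, and its subterms are again well-typed; hence the induction has exactly the cases $\unit$, $x$, $\inl{t'}$, $\inr{t'}$, $\pv{t_1}{t_2}$, $\fold{t'}$, $\isoterm~t'$ and $\letv{p}{t_1}{t_2}$. The variable case is vacuous because the term context is empty, and $\unit$ is already a value.

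For the congruence constructors $\inl{(-)}$, $\inr{(-)}$ and $\fold{(-)}$, I would apply the induction hypothesis to the immediate subterm $t'$: if $t'$ is a value the whole term is a value; if $t' \finto t''$ the term steps through the matching evaluation context ($\inl{C_\to}$, $\inr{C_\to}$, $\fold{C_\to}$); and if $t' \finto \bot$ the term reaches $\bot$ by the rule $C_\to[t]\finto\bot$. The product case $\pv{t_1}{t_2}$ is handled the same way through the product evaluation contexts under the standard left-to-right value restriction: a step or a $\bot$ in an active component propagates, and when both components are values the pair is a value. For the binding $\letv{p}{t_1}{t_2}$, the induction hypothesis on $t_1$ handles the step and $\bot$ subcases via the context $\letv{p}{C_\to}{t_2}$; and when $t_1$ is a value $v$, the pattern $p$ is a tuple of distinct variables and $v$ has the corresponding product type, so the matching $\match{\sigma}{p}{v}$ always succeeds and the redex $\letv{p}{v}{t_2}\finto\sigma(t_2)$ fires.

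The central case is the iso application $\isoterm~t'$, typed from $\entailiso\isoterm\colon A\iso B$ and $\entail t'\colon A$. Here I would first invoke the finitary Iso Progress lemma (\Cref{lem:iso-fin-progress}): either $\isoterm\finto\isoterm'$, in which case $\isoterm~t'\finto\isoterm'~t'$ by the rule $\omega~t\finto\omega'~t$, or $\isoterm$ is an iso value. A canonical-forms argument then forces a closed iso value of the ground type $A\iso B$ to be a clause set $\isobreduit$, since the only other iso value, a $\lambda$-abstraction, necessarily carries a function type. With $\isoterm=\isobreduit$ fixed, I would case on $t'$ through the induction hypothesis: if $t'\finto t''$ or $t'\finto\bot$ the application progresses through the context $\isobreduit~C_\to$ (resp.\ the $\bot$-rule); and if $t'$ is a value $v'$, the decisive point is that the finitary semantics is \emph{total} on clause matching — either some clause matches and $\isobreduit~v'\finto\sigma(e_i)$, or none does and $\isobreduit~v'\finto\bot$. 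In every subcase progress holds.

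I expect this iso-application case to be the main obstacle, precisely because it is where finitary progress departs from ordinary progress: in the non-finitary language an iso applied to a non-matching value is genuinely stuck, whereas here the added rule $\isobreduit~v'\finto\bot$ restores the trichotomy. The argument therefore rests on two ingredients that the earlier progress lemma did not need, namely the canonical-forms fact that closed iso values at type $A\iso B$ are exactly clause sets, and the exhaustive match-or-$\bot$ dichotomy built into the finitary operational semantics; all remaining cases are a routine propagation of the three alternatives through the evaluation contexts.
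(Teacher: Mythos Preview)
Your proposal is correct and follows essentially the same approach as the paper: induction on the typing derivation, with the iso-application case handled via the finitary Iso Progress lemma, the canonical-forms observation that a closed iso value at type $A\iso B$ must be a clause set, and the match-or-$\bot$ dichotomy. If anything, your argument is more carefully spelled out than the paper's---you make the $\bot$-propagation through evaluation contexts explicit and justify why matching in the $\lett$ case always succeeds, points the paper leaves implicit.
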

\begin{proof}
	The proof is done by induction on $~\entail t \colon A$.
	\begin{itemize}
		\item $~\entail \unit \colon \one$ is a value.
		\item $~\entail t_1 \otimes t_2 \colon A \otimes B$.
			By induction hypothesis, either $t_1$ reduces, in which case $t_1 \otimes
			t_2$ reduces too, or $t_1$ is a value. If $t_1$ is a value, by induction
			hypothesis, either $t_2$ reduces, in which case $t_1 \otimes t_2$
			reduces, or $t_2$ is a value, and thus $t_1 \otimes t_2$ is a value.
		\item $\Psi ; \Delta \entail \ini t \colon A_1 \oplus A_2$.
			By induction hypothesis, either $t$ reduces, in which case $\ini t$ too,
			or $t$ is a value, and thus $\ini t$ is a value.
		\item $\Psi ; \Delta \entail \fold t \colon \mu X . A$.
			By induction hypothesis, either $t$ reduces, in which case $\fold t$ too,
			or $t$ is a value, and thus $\fold t$ is a value.
		\item $\Psi ; \Delta \entail \omega~t \colon B$.
			Thanks to Lemma~\ref{lem:iso-fin-progress}, $\omega$ either reduces, in which
			case $\omega~t$ also reduces, or it is an iso value $\isobreduit$. In the last case,
			in the induction hypothesis gives that either $t$ reduces, in which case
			$\omega~t$ also reduces, or $t$ is value. In that case, either $t$ matches with
			one $v_i$, and $\omega~t$ reduces, or it does not, and $\omega~t \to \bot$.
		\item $\Psi ; \Delta_1, \Delta_2 \entail \letv{p}{t_1}{t_2} \colon B$.
			In both cases of the induction hypothesis, $\letv{p}{t_1}{t_2}$ reduces.
	\end{itemize}
\end{proof}

\noindent
\textsf{\bfseries Strong Normalisation.~}
We prove that the reduction $\finto$ is strongly normalising, by observing that
this system is separated in two very distinct systems: one that
reduces the iso $\lambda$-terms, and another that performs the reversible
computations. We show that both those systems can be extended to commute with
each other, which ensures strong normalisation as long as both are strongly
normalising. We start by introducing the system that performs the reductions on
isos.

\[
	\begin{array}{c}
		\infer{\nfix 0 \phi^ T . \omega \isoto \Omega_T}{}
		\qquad
		\infer{\nfix{n+1} \phi . \omega \isoto \omega[\nfix n \phi . \omega /
		\phi]}{}
		\mynl
		\infer{(\lambda \phi. \omega_1) \omega_2 \isoto \omega_1[\omega_2/\phi]}{}
		\qquad
		\infer{\omega_1\omega_2 \isoto \omega'_1 \omega_2}{\omega_1 \isoto \omega'_1}
	\end{array}
\]
\[
	\begin{array}{c}
		\infer{
			\isobreduit \isoto \left\{
				\begin{array}{ll}
					\alt v_i \iso e_i & \text{ if } i \neq j \\
					\alt v_j \iso e'_j & \text{ else}
				\end{array}
			\right\}
		}{
			e_j \isoto e'_j
		}
		\mynl
		\mynl
		\infer{C_\to[t_1] \isoto C_\to[t_2]}{t_1 \isoto t_2}
		\qquad
		\infer{\letv{p}{t'}{t_1} \isoto \letv{p}{t'}{t_2}}{t_1 \isoto t_2}
		\qquad
		\infer{\omega~t \isoto \omega'~t}{\omega \isoto \omega'}
	\end{array}
\]

\begin{lemma}
	\label{lem:iso-sn}
	The reduction system $\isoto$ is strongly normalising.
\end{lemma}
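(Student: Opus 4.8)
The plan is to prove strong normalisation of $\isoto$ by the Tait--Girard reducibility method, adapted to handle the finitary fixpoints and the mutual dependency between isos and the terms sitting inside clause bodies. Since the iso types $T$ are exactly the types of a simply-typed $\lambda$-calculus (base types $A\iso B$ and arrows $T_1\to T_2$), while terms only carry first-order base types $A$, I would define reducibility by induction on iso types and use plain strong normalisation for terms. Concretely, let $\mathrm{Red}_{A\iso B}$ be the set of $\isoto$-strongly normalising isos of that type, let $\mathrm{Red}_{T_1\to T_2}=\{\omega \mid \omega \text{ is }\isoto\text{-strongly normalising and }\forall \omega'\in\mathrm{Red}_{T_1},\ \omega\,\omega'\in\mathrm{Red}_{T_2}\}$, and declare a term reducible exactly when it is $\isoto$-strongly normalising.

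First I would establish the usual reducibility-candidate properties, proven simultaneously by induction on $T$: each $\mathrm{Red}_T$ is contained in the strongly normalising isos, is closed under $\isoto$-reduction, and satisfies the neutral-term closure (if $\omega$ is neutral---an iso-variable, an application, or a fixpoint---and every one-step reduct of $\omega$ is reducible, then $\omega$ is reducible). The arrow case is the standard argument and the base case is immediate, since there reducibility \emph{is} strong normalisation.

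The core is an adequacy lemma proven by mutual induction on the typing derivations of terms and isos: for every reducible iso-substitution $\gamma$ for $\Psi$ (each $\gamma(\phi_i)\in\mathrm{Red}_{T_i}$), if $\Psi\entailiso\omega\colon T$ then $\gamma(\omega)\in\mathrm{Red}_T$, and if $\Psi;\Delta\entail t\colon A$ then $\gamma(t)$ is strongly normalising. The $\lambda$-abstraction and application cases are the textbook ones, using the standard fact that $(\lambda\phi.\omega_1)\omega_2$ is reducible whenever $\omega_2$ and $\omega_1[\omega_2/\phi]$ are; the iso-variable case is the hypothesis on $\gamma$. The finitary fixpoint $\nfix{n}\phi.\omega$ is handled by an \emph{inner induction on $n$}: the sole reduct of $\nfix{0}\phi.\omega$ is $\Omega_T$, which is reducible by a side induction on $T$ (since $\Omega_{A\iso B}$ has no clause body to reduce and $\Omega_{T_1\to T_2}=\lambda\phi.\Omega_{T_2}$), while for $\nfix{n+1}\phi.\omega$ the unique reduct $\omega[\nfix{n}\phi.\omega/\phi]$ is reducible by the inner and outer induction hypotheses, so the neutral-term property concludes. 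For the term constructors---injections $\inl{-}$ and $\inr{-}$, tensor $t_1\otimes t_2$, $\fold{-}$, iso application $\omega\,t$, and let-binding $\letv{p}{t_1}{t_2}$---one checks that strong normalisation of the immediate subterms entails strong normalisation of the whole, which holds because $\isoto$ never performs value-matching, so term-variables stay inert and each constructor reduces only by reducing one subterm at a time. The iso abstraction $\isobreduit$ is where the two inductions meet: its reductions are confined to the clause bodies $e_i$, which are reducible (hence strongly normalising) by the term part of the induction hypothesis, and since there are finitely many clauses each reducing independently, $\isobreduit$ is strongly normalising, i.e.\ reducible.

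Finally, I would instantiate the adequacy lemma with the identity iso-substitution---legitimate because every iso-variable is neutral with no reducts, hence reducible by the neutral-term property---to conclude that every well-typed iso lies in its reducibility set and is therefore $\isoto$-strongly normalising. The main obstacle I anticipate is organising the mutual induction so that the iso-abstraction case (which needs strong normalisation of the term-level clause bodies) and the term-application case (which needs reducibility of the enclosed isos) are discharged coherently; by contrast, handling the finitary fixpoint through its nested induction on the unfolding index is routine once the neutral-term closure is available.
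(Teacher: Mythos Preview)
Your proposal is correct. It is essentially the same underlying technique as the paper---Tait--Girard reducibility---but you organise it differently. The paper does not carry out the reducibility argument on the source language at all: it defines a translation $\abs{-}$ of finitary isos and terms into a simply-typed $\lambda$-calculus with pairs and a primitive finitary fixpoint, sending $\omega~t$ to the pair $\pv{\abs\omega}{\abs t}$, $\isobreduit$ to the tuple $\langle \abs{e_i}\rangle_{i\in I}$, $\lambda\phi.\omega$ and $\nfix{n}\phi.\omega$ to the corresponding target constructs, and erasing the first-order term constructors ($\mathtt{inj}$, $\mathtt{fold}$, $x$) to unit or to their subterm. Every $\isoto$-step is then simulated by at least one step in the target, and strong normalisation of the target (a routine extension of simply-typed $\lambda$-calculus) is invoked as a black box. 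Your approach instead builds the reducibility sets directly on the iso types of the source and treats term-level strong normalisation as the base case; this is more self-contained and makes the role of the finitary index $n$ in $\nfix{n}$ explicit via the inner induction, at the cost of having to set up the mutual term/iso adequacy lemma yourself. The paper's route is shorter because the translation collapses all the first-order term structure to pairs, after which nothing bespoke remains; your route avoids introducing an auxiliary calculus and the simulation check. Both are sound and standard.
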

\begin{proof}
	We translate finitary isos and finitary terms into a simply-typed
	$\lambda$-calculus with pairs. We write $\abs t$ the translation of $t$.
	\[
		\begin{array}{c}
			\abs \unit \defeq \unit
			\qquad
			\abs x \defeq \unit
			\qquad
			\abs{\ini t} \defeq \abs t
			\qquad
			\abs{\pv{t}{t'}} \defeq \pv{\abs t}{\abs{t'}}
			\mynl
			\abs{\fold t} \defeq \abs t
			\qquad
			\abs{\omega~t} \defeq \pv{\abs\omega}{\abs t}
			\qquad
			\abs{\letv{p}{t}{t'}} \defeq \pv{\abs t}{\abs{t'}}
			\mynl
			\abs{\isobreduit} \defeq \langle \abs{e_i} \rangle _{i \in I}
			\qquad
			\abs{\lambda \phi . \omega} \defeq \lambda \phi . \abs \omega
			\qquad
			\abs \phi \defeq \phi
			\mynl
			\abs{\omega_2\omega_1} \defeq \abs{\omega_2} \abs{\omega_1}
			\qquad
			\abs{\nfix n \phi . \omega} \defeq \nfix n \phi . \abs \omega
		\end{array}
	\]
	This $\lambda$-calculus is strong normalising; this can be proven with
	candidates of reducibility, \emph{à la} System F \cite[Chapters 11 and
	14]{girard1989proofs}.
\end{proof}

We then introduce the reductions that perform the reversible computation,
our equivalent of $\beta$-reduction but for our iso language.
\[
	\begin{array}{c}
		\infer{ \isobreduit v' \termto \sigma(e_i)}{
			\match{\sigma}{v_i}{v'}}
		\qquad
		\infer{ \isobreduit v' \termto \bot}{
			\forall i, \neg(\match{\sigma}{v_i}{v'})}
		\\[1.5ex]
		\infer{C_\to[t_1] \termto C_\to[t_2]}{t_1 \termto t_2}
		\qquad
		\infer{C_\to[t] \termto \bot}{t \termto \bot}
		\qquad
		\infer{\letv{p}{v}{t} \termto \sigma(t)}{\match{\sigma}{p}{v}}
	\end{array}
\]
This system is strongly normalising thanks to a decreasing argument:
the number of isos and $\mathtt{let}$ constructors strictly decreases
when applying the reduction $\termto$.
\begin{lemma}
	\label{lem:term-red-sn}
	The reduction system $\termto$ is strongly normalising.
\end{lemma}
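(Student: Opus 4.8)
The plan is to exhibit a weight function $w$ from finitary terms (and isos) to $\N$ that counts exactly the iso-application nodes and the $\lett$-nodes occurring in a term, including those buried inside the clauses of iso abstractions, and then to show that every $\termto$-step strictly decreases it. Since $\N$ is well-founded, this immediately yields strong normalisation.

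First I would define $w$ by mutual induction on terms and isos. On terms: $w(\unit) = w(x) = w(\bot) = 0$; $w(\ini{t}) = w(\fold{t}) = w(t)$; $w(\pv{t_1}{t_2}) = w(t_1) + w(t_2)$; $w(\omega~t) = 1 + w(\omega) + w(t)$; and $w(\letv{p}{t_1}{t_2}) = 1 + w(t_1) + w(t_2)$. On isos: $w(\isobreduit) = \sum_{i \in I} w(e_i)$, $w(\lambda \phi . \omega) = w(\nfix n \phi . \omega) = w(\omega)$, $w(\phi) = 0$, and $w(\omega_1~\omega_2) = w(\omega_1) + w(\omega_2)$. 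The crucial feature is that values contain neither iso-applications nor $\lett$-constructors, so $w(v) = 0$ for every value $v$; more generally, substitution by values is weight-preserving, that is $w(\sigma(t)) = w(t)$ whenever $\sigma$ maps variables to values. This is a routine induction on $t$ using that each substituted value has weight $0$.

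Next I would show that $t \termto t'$ implies $w(t) > w(t')$, by induction on the derivation of the reduction. For the redex $\isobreduit~v' \termto \sigma(e_i)$ one computes $w(\isobreduit~v') = 1 + \sum_{j \in I} w(e_j) + w(v') = 1 + \sum_{j \in I} w(e_j)$, whereas $w(\sigma(e_i)) = w(e_i)$ by weight-invariance of substitution; since $i \in I$ we get $1 + \sum_{j \in I} w(e_j) \geq 1 + w(e_i) > w(e_i)$. For $\isobreduit~v' \termto \bot$ the left-hand side has weight $\geq 1$ while $w(\bot) = 0$. For $\letv{p}{v}{t} \termto \sigma(t)$ we have $w(\letv{p}{v}{t}) = 1 + w(t) > w(t) = w(\sigma(t))$. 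The two congruence rules follow from the observation that each evaluation context satisfies $w(C_\to[t]) = k_{C_\to} + w(t)$ for a constant $k_{C_\to} \geq 0$ depending only on the context, so the strict inequality on the contents of the hole is preserved; in the case $C_\to[t] \termto \bot$ with $t \termto \bot$, the induction hypothesis gives $w(t) > w(\bot) = 0$, hence $w(C_\to[t]) \geq w(t) \geq 1 > 0$.

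Since $w$ takes values in $\N$ and strictly decreases along every $\termto$-step, no infinite $\termto$-sequence can exist, which is precisely strong normalisation of $\termto$. The one delicate point is the iso-abstraction redex: one must account for the fact that the discarded clauses $e_j$ with $j \neq i$ may themselves contain iso-applications and $\lett$'s, so the measure has to sum over \emph{all} clauses rather than only the selected branch. The extra $+1$ contributed by the application node then guarantees a strict decrease regardless of which branch is fired, and the weight-invariance of value-substitution ensures that the substitution $\sigma$ applied when extracting $\sigma(e_i)$ does not inflate the measure.
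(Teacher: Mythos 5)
Your proof is correct and takes essentially the same approach as the paper: the paper dispatches this lemma with the one-line observation that the number of isos and $\lett$ constructors strictly decreases under $\termto$, which is exactly the measure you formalise. Your elaboration (the weight function, value-substitution invariance since values have weight zero, and additivity of weight over evaluation contexts) merely fills in the routine details the paper leaves implicit.
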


\begin{lemma}
	\label{lem:red-commutative}
	$\termto~\isoto~\subseteq~\isoto~\termto$.
\end{lemma}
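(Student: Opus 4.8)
The statement is a local commutation property. Reading the juxtaposition $\termto~\isoto$ in diagrammatic order (a $\termto$-step followed by an $\isoto$-step), the claim is that every peak $t_1 \termto t_2 \isoto t_3$ can be re-sequenced as $t_1 \isoto t' \termto t_3$ for some $t'$. The plan is to prove this by induction on the derivation of $t_1 \termto t_2$, with a secondary case analysis on the position of the $\isoto$-redex contracted in $t_2$.

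The crucial observation is that both base $\termto$-redexes — a firing match $\isobreduit\,v' \termto \sigma(e_i)$ and a firing let $\letv{p}{v}{t}\termto\sigma(t)$ — rewrite by applying a substitution $\sigma$ whose codomain consists of \emph{values}. Values are built only from $\unit,\inl{\cdot},\inr{\cdot},\pv{\cdot}{\cdot},\fold{\cdot}$ and contain no isos, hence are $\isoto$-normal; moreover the left-hand patterns of a clause set are never touched by $\isoto$ (which only rewrites clause bodies), so whether $v'$ matches $v_i$ is invariant under $\isoto$. Consequently a $\termto$-reduction neither creates nor erases $\isoto$-redexes except by transporting redexes already present in the surviving clause body $e_i$ (resp.\ the continuation $t$). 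This yields the substitution lemma I will need: if $\sigma$ substitutes values and $\sigma(e)\isoto u$, then $u = \sigma(e')$ for some $e'$ with $e\isoto e'$; the converse $e\isoto e' \Rightarrow \sigma(e)\isoto\sigma(e')$ is the obvious congruence.

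For the base cases: if $t_1=\isobreduit\,v'\termto\sigma(e_i)=t_2$, then any $t_2\isoto t_3$ lies inside $\sigma(e_i)$, so by the substitution lemma $t_3=\sigma(e_i')$ with $e_i\isoto e_i'$; I reduce the clause body first, $\isobreduit\,v'\isoto\{\dots\mid v_i\iso e_i'\mid\dots\}\,v'$ (using $\omega~t\isoto\omega'~t$ with the clause-body rule), then fire the \emph{same} match to reach $\sigma(e_i')=t_3$. The let case is identical, using the continuation-congruence $\letv{p}{v}{t}\isoto\letv{p}{v}{t'}$; the no-match case $t_2=\bot$ is vacuous since $\bot$ is $\isoto$-normal. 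For the congruence case $t_1=C_\to[t_a]\termto C_\to[t_b]=t_2$, I split on where the $\isoto$-redex of $t_2$ sits: if it is disjoint from the hole it acts only on the context, $C_\to\isoto C_\to'$, and the two steps commute outright; if it lies inside $t_b$, giving $t_b\isoto t_b'$, the induction hypothesis on $t_a\termto t_b\isoto t_b'$ produces $t_a\isoto t_a'\termto t_b'$, which I lift back through $C_\to$. Since the grammar of evaluation contexts $C_\to$ never descends into the body of a fixpoint $\nfix n \phi.\omega$ or of an abstraction $\lambda\phi.\omega$, no $\isoto$-redex can contain the $\termto$-hole from the outside, so these are all the cases.

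The main obstacle is the overlap at a node $\isobreduit\,v'$ that is simultaneously a $\termto$-match-redex and carries an $\isoto$-reducible clause body: here the order genuinely matters, because firing the match on clause $i$ discards every other clause body. The reason the commutation still holds — and the reason the inclusion runs in this direction and not the reverse — is precisely that in a peak $t_1\termto t_2\isoto t_3$ the $\isoto$-step is taken \emph{after} the match, so it necessarily acts inside the surviving body $e_i$ and can be pulled back by the substitution lemma; an $\isoto$-step on a discarded body $e_j$ with $j\ne i$ simply cannot occur in a peak of this shape. Establishing the substitution lemma cleanly is therefore the technical heart of the argument.
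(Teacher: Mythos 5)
Your proposal cannot be checked against the paper's argument for the simple reason that there is none: Lemma~\ref{lem:red-commutative} is stated bare, with no proof, and is immediately consumed by \cite[Theorem~1]{bachmair1986commutation} to obtain strong normalisation of $\isoto \cup \termto$. Your proof supplies exactly the missing content, and it is correct. Its three ingredients are the right ones. First, the orientation: reading $\termto~\isoto~\subseteq~\isoto~\termto$ as ``a peak $t_1 \termto t_2 \isoto t_3$ re-sequences as $t_1 \isoto t' \termto t_3$'' is the only reading compatible with the paper's phrase ``$\isoto$ commutes over $\termto$,'' and, as you observe, the only true one, since an $\isoto$-step inside a clause body discarded by a match cannot be replayed after the match fires. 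Second, the substitution lemma holds for precisely the reasons you give: matching substitutions bind variables to values, values contain no isos, lets or applications and hence are $\isoto$-normal, and the paper's substitution never enters isos ($\sigma(\omega~t) = \omega~\sigma(t)$), so the $\isoto$-redexes of $\sigma(e)$ are in bijection with those of $e$. Third, the induction on the $\termto$-derivation with a case split on the position of the $\isoto$-redex, together with the vacuous $\bot$ cases, is exhaustive. One small repair: in the congruence case you justify that no $\isoto$-redex can contain the hole by noting that $C_\to$ never descends under $\nfix{n} \phi.\omega$ or $\lambda\phi.\omega$, but the case that actually needs ruling out is an $\isoto$-step on an \emph{enclosing iso abstraction} via the clause-body rule, i.e.\ the hole sitting inside some body $e_j$ of an $\isobreduit$. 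You should instead invoke the stronger (and true) fact that the grammar of $C_\to$ never enters any iso at all --- in particular not clause bodies. With that adjustment the case analysis is complete and the proof stands.
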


We say that $\isoto$ commutes \cite{bachmair1986commutation} over $\termto$.
This and the strong normalisation of both systems $\isoto$ and $\termto$
ensures the strong normalisation of them combined $\isoto \cup \termto$
\cite[Theorem 1]{bachmair1986commutation}.

\begin{lemma}
	\label{lem:sub-system-fin}
	$\finto~\subseteq~\isoto \cup \termto$.
\end{lemma}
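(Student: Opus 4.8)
The plan is to prove the inclusion by induction on the derivation of a reduction step, regarding $\finto$ as defined simultaneously on terms and on isos. For every rule generating $\finto$ I will reproduce the same step either as an $\isoto$-step or as a $\termto$-step. The inclusion is genuinely one-directional: $\isoto$ and $\termto$ each carry congruences that $\finto$ lacks (for instance $\isoto$ reduces inside a clause body $e_j$ and inside a $\mathtt{let}$-body, which the call-by-value $\finto$ never does), so we only expect $\finto$ to be contained in their union.

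First I would dispatch the axioms by inspection. The three iso-level axioms $\nfix 0 \phi^T.\omega \finto \Omega_T$, $\nfix{n+1}\phi.\omega \finto \omega[\nfix n \phi.\omega/\phi]$ and $(\lambda\phi.\omega_1)\omega_2 \finto \omega_1[\omega_2/\phi]$ are verbatim axioms of $\isoto$, hence land in $\isoto$. Dually, the reversible-computation axioms $\isobreduit v' \finto \sigma(e_i)$ on a successful match, $\isobreduit v' \finto \bot$ on no match, and $\letv{p}{v}{t}\finto\sigma(t)$ are verbatim axioms of $\termto$, hence land in $\termto$.

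For the congruence rules I would apply the induction hypothesis and lift the resulting classification through the matching context rule, using two structural observations to keep the two polarities apart. The first is that no $\termto$-reduction has a bare iso as its subject: every $\termto$-axiom reduces an applied clause set $\isobreduit v'$ and every $\termto$-congruence passes through a term context $C_\to$, so a $\finto$-step whose subject is a bare iso is necessarily an $\isoto$-step. This settles the iso-application congruence $\omega_1\omega_2 \finto \omega'_1\omega_2$ and the head congruence $\omega~t \finto \omega'~t$, whose premises reduce bare isos and whose conclusions are then supplied by the corresponding $\isoto$-congruences. The second observation is that $\bot$ never appears on the right of an $\isoto$-rule, so any reduction to $\bot$ is forced to be a $\termto$-step; this covers $C_\to[t]\finto\bot$, matched by $C_\to[t]\termto\bot$. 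The one remaining case, $C_\to[t_1]\finto C_\to[t_2]$ from $t_1\finto t_2$, splits on the induction hypothesis: since $\isoto$ and $\termto$ share the very same context grammar $C_\to$, an $\isoto$ (resp. $\termto$) sub-step lifts through $C_\to[-]$ to an $\isoto$ (resp. $\termto$) step.

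The whole argument is a routine rule-by-rule classification, and I expect no real obstacle beyond bookkeeping. The only point demanding care is maintaining the separation of polarities --- confining iso-subject reductions to $\isoto$ and $\bot$-producing reductions to $\termto$ --- so that each $\finto$-congruence is lifted through the congruence of the correct system and never split across the two.
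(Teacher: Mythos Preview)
Your proposal is correct and follows the same approach as the paper, which simply states that the proof is direct by showing any rule in $\finto$ is either in $\isoto$ or $\termto$. You give considerably more detail than the paper does---in particular the polarity-tracking observations that bare-iso premises force $\isoto$ and $\bot$-producing steps force $\termto$---but this is just an explicit unfolding of the one-line argument the paper leaves implicit.
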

\begin{proof}
	The proof is direct, by showing that any rule in $\finto$ is either
	in $\isoto$ or $\termto$.
\end{proof}

\begin{theorem}
	\label{th:fin-sn}
	The reduction system $\finto$ is strongly normalising.
\end{theorem}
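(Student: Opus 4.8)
The plan is to assemble the statement directly from the preceding lemmas, since all the technical work has already been carried out. The key observation I would rely on is that strong normalisation is inherited downward along relation inclusion: if $R \subseteq S$ and $S$ is strongly normalising, then so is $R$, because any infinite $R$-reduction sequence would in particular be an infinite $S$-reduction sequence, contradicting the strong normalisation of $S$. This elementary fact does the real work of the conclusion.

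First I would establish that the combined relation $\isoto \cup \termto$ is strongly normalising. For this I would invoke the commutation theorem of Bachmair and Dershowitz \cite{bachmair1986commutation}: whenever two strongly normalising relations $\to_1$ and $\to_2$ commute in the sense that $\to_2 \circ \to_1 \subseteq \to_1 \circ \to_2$ (with the convention, fixed in Lemma~\ref{lem:red-commutative}, that $R\circ R'$ means first $R$ then $R'$), their union is strongly normalising. The three hypotheses needed are exactly Lemma~\ref{lem:iso-sn} (strong normalisation of $\isoto$), Lemma~\ref{lem:term-red-sn} (strong normalisation of $\termto$), and Lemma~\ref{lem:red-commutative} (the commutation $\termto~\isoto \subseteq \isoto~\termto$). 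Applying the theorem yields that $\isoto \cup \termto$ is strongly normalising.

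I would then conclude by the downward-inheritance observation: Lemma~\ref{lem:sub-system-fin} gives $\finto \subseteq \isoto \cup \termto$, and since the right-hand relation is strongly normalising, so is $\finto$, which is the statement of Theorem~\ref{th:fin-sn}.

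I do not expect a genuine obstacle at this final step; the difficulty is entirely concentrated upstream, in the reducibility-candidate argument behind Lemma~\ref{lem:iso-sn} and in verifying the commutation diagram of Lemma~\ref{lem:red-commutative}. The only point worth watching when writing the proof is bookkeeping about the direction of composition, so that the orientation used in citing Bachmair and Dershowitz agrees with the convention adopted in Lemma~\ref{lem:red-commutative}; getting that orientation backwards is the sole way the short argument could go wrong.
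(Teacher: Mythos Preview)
Your proposal is correct and matches the paper's proof essentially step for step: the paper likewise invokes \cite[Theorem~1]{bachmair1986commutation} together with Lemmas~\ref{lem:iso-sn}, \ref{lem:term-red-sn}, and~\ref{lem:red-commutative} to get strong normalisation of $\isoto \cup \termto$, and then concludes via Lemma~\ref{lem:sub-system-fin}. Your additional remarks on downward inheritance and the orientation of composition are sound and simply make explicit what the paper leaves implicit.
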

\begin{proof}
	With \cite[Theorem 1]{bachmair1986commutation} and Lemmas~\ref{lem:iso-sn},
	\ref{lem:term-red-sn}, and~\ref{lem:red-commutative}, we can ensure that
	$\isoto \cup \termto$ is strongly normalising. We conclude then with
	Lemma~\ref{lem:sub-system-fin}, that shows that $\finto$ is a subsystem of a
	strongly normalising system.
\end{proof}

\noindent
\textsf{\bfseries Finitary adequacy.~}
We prove adequacy, but for finitary terms. To do so, we also need to introduce
the denotational semantics of finitary isos. The interpretation of $\nfix n$,
instead of being Kleene's fixed point, is the morphism obtained by unfolding
$n$ times. The interpretation of $\Omega_T$ is the bottom element of $\sem T$.

\begin{lemma}
	\label{lem:fin-iso-sound}
	Given a well-formed finitary iso judgement $~\entailiso \omega \colon T$, if
	$\omega \finto \omega'$, then
	\[
		\sem{\entailiso \omega \colon T}
		= \sem{\entailiso \omega' \colon T}.
	\]
\end{lemma}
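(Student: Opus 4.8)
The plan is to proceed by induction on the derivation of $\omega \finto \omega'$, following the structure of the proof of Lemma~\ref{lem:rev-iso-sound} but replacing its single fixpoint case with the two finitary fixpoint rules. Two of the four iso-reduction rules are literally identical to the non-finitary system and need no new argument. For the $\beta$-rule $(\lambda \phi . \omega_1)\,\omega_2 \finto \omega_1[\omega_2/\phi]$, I would unfold the interpretation to $\rmeval \circ \pv{\rmcurry(\sem{\omega_1})}{\sem{\omega_2}} = \sem{\omega_1}\circ\sem{\omega_2}$ and conclude with Lemma~\ref{lem:sem-subst-iso}, exactly as in the full-language soundness proof. For the congruence rule $\omega_1\omega_2 \finto \omega'_1\omega_2$, the interpretation is $\rmeval \circ \pv{\sem{\omega_1}}{\sem{\omega_2}}$, so applying the induction hypothesis $\sem{\omega_1} = \sem{\omega'_1}$ under the operation $\rmeval \circ \pv{-}{\sem{\omega_2}}$ gives the claim immediately.

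The two genuinely new cases concern the finitary fixpoint $\nfix n$, and here I would lean on the stipulated interpretation, namely that $\sem{\nfix n \phi . \omega}$ is the $n$-fold unfolding (the $n$-th iterate of the associated functional applied to the bottom element) and that $\sem{\Omega_T}$ is the bottom element of $\sem T$. For $\nfix 0 \phi . \omega \finto \Omega_T$ this is purely definitional: the zeroth unfolding is by construction the bottom element, which is precisely $\sem{\Omega_T}$. For $\nfix{n+1}\phi.\omega \finto \omega[\nfix n \phi.\omega/\phi]$, I would write the $(n{+}1)$-fold unfolding as $\sem{\Psi,\phi\colon T\entailiso\omega\colon T}\circ\pv{\iid}{\sem{\nfix n \phi.\omega}}$, and then Lemma~\ref{lem:sem-subst-iso} identifies this composite with $\sem{\omega[\nfix n \phi.\omega/\phi]}$, closing the case.

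The only point I would verify with care is that Lemma~\ref{lem:sem-subst-iso} applies verbatim to the finitary syntax: its proof is an induction on the formation rules of the substituted iso, and the finitary grammar differs from the full language only in replacing $\ffix$ by the family $\nfix n$, which carries the same typing rule, so the induction transfers without change. Granting this, every case of the present lemma reduces either to a direct definitional identity or to a single application of the substitution lemma, so no further obstacle remains.
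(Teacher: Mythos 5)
The paper itself states this lemma without proof; the intended argument is exactly the adaptation of its proof of Lemma~\ref{lem:rev-iso-sound} to the finitary syntax, which is what you give. Your handling of the two new cases is correct --- with the stipulated interpretation of $\nfix{n}\phi.\omega$ as the $n$-fold iterate of the associated functional applied to the bottom element (the reading forced by the rule $\nfix{0}\phi.\omega \finto \Omega_T$ and by the join formula in Lemma~\ref{lem:subfin-join}), the $\nfix{0}$ case is definitional and the $\nfix{n+1}$ case is a single application of Lemma~\ref{lem:sem-subst-iso} --- and your verification that the substitution lemma transfers to the finitary grammar, whose only change is replacing $\ffix$ by the family $\nfix{n}$ with the same formation rule, is precisely the right point of care.
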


\begin{proposition}[Finitary Soundness]
	\label{prop:fin-soundness}
	Given a well-formed finitary term judgement $~\entail t \colon A$, if $t
	\finto t'$, then
	\[
		\sem{\entail t \colon A} = \sem{\entail t' \colon A}.
	\]
\end{proposition}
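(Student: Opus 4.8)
The plan is to proceed by induction on the derivation of $t \finto t'$, reusing wholesale the structure of the proof of Proposition~\ref{prop:rev-soundness}. Each rule of $\finto$ that has a counterpart in $\to$ is treated exactly as before: the matching rule $\isobreduit v' \finto \sigma(e_i)$ with $\match{\sigma}{v_i}{v'}$ is handled by Lemma~\ref{lem:rev-beta-sound}; the rule $\letv{p}{v}{t} \finto \sigma(t)$ follows from Proposition~\ref{prop:rev-substitution-interpretation}; the rule $\omega~t \finto \omega'~t$ reducing the iso argument is settled by the finitary iso-soundness statement Lemma~\ref{lem:fin-iso-sound} (the finitary analogue of Lemma~\ref{lem:rev-iso-sound}, whose interpretation of $\nfix n$ is the $n$-fold unfolding rather than the Kleene fixed point); and every congruence rule $C_\to[t_1] \finto C_\to[t_2]$ reduces to the same compatible cases as in Proposition~\ref{prop:rev-soundness} (injections, tensor, \texttt{fold}, and $\lett$), where the induction hypothesis $\sem{t_1} = \sem{t_2}$ together with the functoriality of $\iota_l, \iota_r$, $\otimes$ and $\alpha$ closes the case. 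The only genuinely new rules are the two that introduce the non-matching term $\bot$, whose interpretation is fixed to be the zero morphism $0_{\sem A}$.

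For the congruence-to-$\bot$ rule $C_\to[t] \finto \bot$ (fired when $t \finto \bot$), the induction hypothesis gives $\sem t = \sem\bot = 0$. The evaluation contexts relevant here all place their hole in a term position and are interpreted by operations strict in that hole: postcomposition $\iota_l \circ (-)$, $\iota_r \circ (-)$, $\sem\omega \circ (-)$, $\alpha \circ (-)$, precomposition $\sem{t_2} \circ (\iid \otimes (-))$, and tensoring $(-) \otimes \sem v$, $\sem v \otimes (-)$. The strictness equations $f \circ 0 = 0$ and $0 \circ g = 0$ of Definition~\ref{def:zero}, together with the fact that $\otimes$ preserves joins (Definition~\ref{def:rig} and Lemma~\ref{lem:dcpofun}) and hence sends the empty join $0$ to $0$, let the zero propagate outward, so that $\sem{C_\to[t]} = 0 = \sem\bot$.

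The delicate case is the no-match rule $\isobreduit v' \finto \bot$, which fires precisely when the closed value $v'$ matches none of the patterns $v_i$. Writing $\sem{\isobreduit} = \bigvee_i \sem{e_i} \circ \sem{v_i}\pinv$ and distributing composition over the join (Definition~\ref{def:join}), it suffices to show $\sem{v_i}\pinv \circ \sem{v'} = 0$ for every $i$. I would isolate this in a dedicated \emph{no-match} lemma, the negative counterpart of Lemma~\ref{lem:rev-matching-semantics} (and structurally parallel to Lemma~\ref{lem:rev-orthogonal-semantics-values}): if a closed value $v'$ does not match a pattern $v$ of the same type, then $\sem{v}\pinv \circ \sem{v'} = 0$. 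Its proof is by induction on the structure of $v$: since a variable pattern and the unit pattern always match, a failure to match must expose at some aligned position a clash between a left and a right injection, and there Lemma~\ref{lem:orthogonal-injections} yields $\iota_l\pinv \circ \iota_r = 0$ (or symmetrically); at the other positions one uses that $\iota_l, \iota_r, \alpha$ are total (so $\res{\iota_l} = \res{\iota_r} = \res\alpha = \iid$) to peel off the head constructors, and the strictness of $\otimes$ and of composition with respect to $0$ to carry the zero back up to the root.

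The main obstacle is precisely this no-match lemma. The content to verify is twofold: first, that the only way a well-typed closed value can fail to match a well-typed pattern is through an injection clash — which follows by inspecting the value grammar, the unit and variable patterns matching unconditionally and the $\otimes$- and \texttt{fold}-patterns forcing a matching head constructor on the well-typed argument — and second, that a single such clash forces the inverse-composition to vanish, via $\iota_l\pinv \circ \iota_r = 0$ and the strictness of the surrounding structural operations. Once this lemma is in place, the two $\bot$-cases close immediately and the remaining cases are a mechanical reprise of the non-finitary soundness argument.
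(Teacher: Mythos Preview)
The paper states Proposition~\ref{prop:fin-soundness} without proof, leaving it as a routine adaptation of Proposition~\ref{prop:rev-soundness}. Your proposal is correct and in fact supplies more detail than the paper does: you rightly isolate the two $\bot$-introducing rules as the only new content, and the ``no-match lemma'' you formulate (if a closed value $v'$ fails to match a pattern $v$ of the same type, then $\sem{v}\pinv \circ \sem{v'} = 0$) is precisely the missing ingredient the authors leave implicit. Your structural argument for it---that typing forces matching head constructors everywhere except at an $\mathtt{inj}_\ell/\mathtt{inj}_r$ clash, where Lemma~\ref{lem:orthogonal-injections} applies---is sound, and the strictness of composition and $\otimes$ with respect to $0$ does the rest. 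The paper's use of this proposition in the proof of Theorem~\ref{th:fin-adeq} (``$\sem t \neq 0$, thus it cannot terminate on $\bot$'') confirms that $\sem\bot = 0$ is the intended interpretation, so your reading is exactly what the authors had in mind.
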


\begin{theorem}[Finitary Adequacy]
	\label{th:fin-adeq}
	Given a well-formed finitary term judgement $~\entail t \colon A$,
	$t \downarrow$ iff $\sem{\entail t \colon A} \neq 0_{\sem A}$.
\end{theorem}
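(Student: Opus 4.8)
The statement is an equivalence, and I would prove the two implications separately, leaning on the three facts already established for the finitary system: Finitary Soundness (Proposition~\ref{prop:fin-soundness}), strong normalisation of $\finto$ (Theorem~\ref{th:fin-sn}), and Progress (Lemma~\ref{lem:fin-progress}). The only genuinely new ingredient is a lemma controlling the denotation of closed \emph{values}, so I would isolate that first.

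The plan is to prove, by induction on the closed value $\entail v\colon A$, that $\sem{v}$ is a \emph{total} morphism $1\to\sem A$ in the sense of Definition~\ref{def:restr}, i.e. $\res{\sem v}=\iid_1$. The base case $\sem\unit=\iid_1$ is total by definition; for $\sem{\inl v}=\iota_l\circ\sem v$ and $\sem{\inr v}=\iota_r\circ\sem v$ I would use that $\iota_l,\iota_r$ are total (Definition~\ref{def:disten}) together with the identity $\res{g\circ f}=\res{\res g\circ f}$, which shows totality is closed under composition; for $\sem{\pv{v_1}{v_2}}=\sem{v_1}\otimes\sem{v_2}$ I would use that $-\otimes-$ is a restriction bifunctor, so $\res{f\otimes g}=\res f\otimes\res g=\iid_1$; and for $\sem{\fold v}=\alpha\circ\sem v$ I would use that $\alpha$ is an isomorphism, hence total. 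From totality I then deduce $\sem v\neq 0_{\sem A}$: if $\sem v=0$ then $\iid_1=\res{\sem v}=0_{1,1}$ by Definition~\ref{def:zero}, which forces every morphism into and out of $1$ to be zero, so $1$ is a zero object and hence $1\cong 0$, contradicting our standing assumption that $0$ and $1$ are distinct.

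With this lemma in hand, the forward direction is immediate. If $t\downarrow$, say $t$ reduces to a value $v$, then Finitary Soundness gives $\sem t=\sem v$, and the value lemma gives $\sem v\neq 0_{\sem A}$, whence $\sem t\neq 0_{\sem A}$.

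For the converse I would argue via normalisation. Given $\sem t\neq 0_{\sem A}$, strong normalisation (Theorem~\ref{th:fin-sn}) guarantees a maximal reduction sequence ending at an irreducible $w$; Progress (Lemma~\ref{lem:fin-progress}) then says $w$ is either a value or the no-match term $\bot$. Finitary Soundness gives $\sem t=\sem w$. Since the finitary semantics assigns $\sem\bot=0_{\sem A}$, the case $w=\bot$ would yield $\sem t=0_{\sem A}$, contradicting the hypothesis; hence $w$ is a value and $t\downarrow$. The main obstacle here is precisely the bookkeeping around $\bot$: one must confirm that the only irreducible finitary terms are values and $\bot$ (this is exactly the content of Progress together with the added $\bot$-rules) and that $\sem\bot=0$, so that ``stuck by non-matching'' is faithfully recorded as the zero morphism. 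The other delicate point, already handled by the value lemma, is that totality of value denotations is what rules out the degenerate identification $\sem v=0$; this is where the hypothesis $0\neq 1$ is essential.
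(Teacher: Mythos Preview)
Your proposal is correct and follows exactly the paper's route: the forward direction uses Finitary Soundness to reduce to the case of a value, and the backward direction combines strong normalisation with Progress to reach an irreducible $w\in\{v,\bot\}$, then rules out $\bot$ by soundness and $\sem\bot=0$. The only difference is that you are more careful than the paper: where the paper simply asserts $\sem v\neq 0$ without argument (both here and in Corollary~\ref{cor:rev-soundness}), you isolate and prove the totality lemma for closed values and explain how the hypothesis $0\neq 1$ enters; this is a genuine improvement in rigour rather than a different approach.
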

\begin{proof}
	We prove both directions of the double implication.
	\begin{itemize}
		\item[$(\Rightarrow)$] Knowing that $t \downarrow$, there exists a value
			$v$ such that $t \finto^* v$, and Prop.~\ref{prop:fin-soundness} ensures
			that $\sem t = \sem v \neq 0$.
		\item[$(\Leftarrow)$] We know that $\finto$ is strongly normalising (see
			Th.~\ref{th:fin-sn}), which means that the reduction from $t$ terminates,
			and Lemma~\ref{lem:fin-progress} ensures that it terminates either on a
			value $v$ or on $\bot$. However, $\sem t \neq 0$, thus it cannot
			terminate on $\bot$ because of Prop.~\ref{prop:fin-soundness}. We have
			then $t \finto^* v$, which concludes.
	\end{itemize}
\end{proof}

\noindent
\textsf{\bfseries Finitary Subterms.~}
We conclude in two steps. First we observe that the interpretation of a term is
nothing more than the join of the interpretations of its finitary subterms,
then we show that a reduction $\to^*$ can be linked to a finitary reduction
$\finto^*$.

\begin{definition}[Finitary Subiso]
	\label{def:fin-subiso}
	Let $\subfin$ be the smallest relation between (finitary or not) isos such that:
	\[
		\begin{array}{c}
			\infer{\nfix n \phi . \omega \subfin \ffix \phi . \omega}{}
			\qquad
			\infer{\omega[\omega_1/\phi] \subfin
			\omega[\omega_2/\phi]}{\omega_1 \subfin \omega_2}
		\end{array}
	\]
\end{definition}

\begin{lemma}
	\label{lem:subfin-to-dcpo}
	Given two well-formed (finitary or not) iso judgements
	$\Psi \entailiso \omega_1 \colon T$ and $\Psi \entailiso \omega_2 \colon T$
	such that $\omega_1 \subfin~\omega_2$, then
	\[
		\sem{\Psi \entailiso \omega_1 \colon T}
		\leq \sem{\Psi \entailiso \omega_2 \colon T}.
	\]
\end{lemma}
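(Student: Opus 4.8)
The plan is to proceed by induction on the derivation of $\omega_1 \subfin \omega_2$, following the two rules of Definition~\ref{def:fin-subiso}. Throughout I read $\leq$ as the pointwise dcpo order on the Scott continuous maps $\sem{\Psi \entailiso \omega_i \colon T} \in \dcpo(\sem\Psi, \sem T)$, so that the inequality amounts to $\sem{\omega_1}(h) \leq \sem{\omega_2}(h)$ in $\sem T$ for every environment $h \in \sem\Psi$.

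For the base case $\nfix n \phi . \omega \subfin \ffix \phi . \omega$, I would unwind both interpretations at a fixed $h \in \sem\Psi$. Writing $F_h \colon \sem T \to \sem T$ for the Scott continuous map $x \mapsto \sem{\Psi, \phi \colon T \entailiso \omega \colon T}(h, x)$, the full fixpoint is interpreted by Kleene's formula, $\sem{\ffix \phi . \omega}(h) = \bigvee_{m} F_h^m(\bot)$, whereas the finitary fixpoint is interpreted by the $n$-fold unfolding, $\sem{\nfix n \phi . \omega}(h) = F_h^n(\bot)$, the base $\sem{\Omega_T} = \bot = F_h^0(\bot)$ being provided by the interpretation of $\Omega_T$ and each unfolding step being an application of $F_h$ justified by Lemma~\ref{lem:sem-subst-iso}. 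Since $F_h^n(\bot)$ is one of the elements of the directed family whose join is $\sem{\ffix \phi . \omega}(h)$, it lies below that join; as this holds for every $h$, the pointwise order yields $\sem{\nfix n \phi . \omega} \leq \sem{\ffix \phi . \omega}$.

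For the inductive step, I assume $\omega_1 \subfin \omega_2$ with $\sem{\omega_1} \leq \sem{\omega_2}$ and must compare $\sem{\omega[\omega_1/\phi]}$ and $\sem{\omega[\omega_2/\phi]}$. By the substitution lemma (Lemma~\ref{lem:sem-subst-iso}) each is of the form $\sem{\omega} \circ \pv{\iid}{\sem{\omega_i}}$. It then suffices to note that both the pairing $\pv{\iid}{-}$ and post-composition with the Scott continuous map $\sem{\omega}$ are monotone in the $\DCPO$-enriched setting (Definition~\ref{def:dcpocat} and Lemma~\ref{lem:dcpofun}), so that the hypothesis $\sem{\omega_1} \leq \sem{\omega_2}$ is transported to $\sem{\omega} \circ \pv{\iid}{\sem{\omega_1}} \leq \sem{\omega} \circ \pv{\iid}{\sem{\omega_2}}$, which is exactly the required inequality.

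The only genuinely delicate point is the base case: I must make precise that the finitary interpretation of $\nfix n$ coincides exactly with the iterate $F_h^n(\bot)$ occurring in Kleene's ascending chain for $\fix F_h$, rather than with some unrelated approximant. Once this identification is secured, the comparison is immediate from the definition of a directed join, and the inductive step is routine, resting solely on the continuity of composition guaranteed by the $\DCPO$-category structure.
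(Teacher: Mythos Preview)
The paper states Lemma~\ref{lem:subfin-to-dcpo} without proof, so there is no argument to compare against. Your proof is correct and is exactly the natural one: induction on the derivation of $\subfin$, with the base case reducing to the observation that the $n$-th Kleene iterate sits below the supremum, and the inductive step reducing to monotonicity of the substitution operator via Lemma~\ref{lem:sem-subst-iso}.

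One small remark on the citations in your inductive step: the monotonicity you need is that of pairing and composition \emph{in the category $\DCPO$ itself}, since $\sem{\omega}$, $\sem{\omega_i}$ and $\pv{\iid}{-}$ are all Scott continuous maps between dcpos. This is just the fact that $\DCPO$ is cartesian closed (or a direct pointwise check), not a consequence of Definition~\ref{def:dcpocat} or Lemma~\ref{lem:dcpofun}, which concern the enrichment of $\CC$ and its functors. The argument is unaffected, but the justification could be tightened.
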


\begin{lemma}
	\label{lem:subfin-join}
	Given a well-formed iso judgement $\Psi \entailiso \omega \colon T$,
	we have:
	\[
		\sem{\Psi \entailiso \omega \colon T} =
		\bigvee_{\underset{\omega' \text{ finitary}}{\omega'\subfin~\omega}}
		\sem{\Psi \entailiso \omega' \colon T}.
	\]
\end{lemma}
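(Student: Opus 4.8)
The plan is to prove both inequalities of the claimed equality by induction on the typing derivation of $\omega$ (so that isos nested inside the expressions $e_i$ of a clause set are treated as strictly smaller, and the statement is proved simultaneously with the analogous one for finitary subterms of terms). The inequality $\bigvee_{\omega'\subfin\omega\text{ fin}}\sem{\omega'}\leq\sem{\Psi\entailiso\omega\colon T}$ is immediate from Lemma~\ref{lem:subfin-to-dcpo}: every finitary $\omega'\subfin\omega$ satisfies $\sem{\omega'}\leq\sem{\omega}$, so $\sem{\omega}$ is an upper bound of the family and dominates its join. The real content is the reverse inequality, for which I first record that the family $\{\sem{\omega'}\mid\omega'\subfin\omega,\ \omega'\text{ finitary}\}$ is directed: it is monotone in the $\subfin$-order (again Lemma~\ref{lem:subfin-to-dcpo}) and, at each fixpoint, monotone in the chosen unfolding depth, so any two approximants are dominated by the one taking the larger depth at every position. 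This guarantees that the right-hand side is a genuine directed join in the relevant dcpo.

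For the structural cases I would push the induction hypothesis through the continuous semantic operations. For $\omega=\phi$ the only finitary subiso is $\phi$ itself and there is nothing to prove; for $\omega=\lambda\psi.\omega_0$ I use that $\rmcurry$ is Scott-continuous (as $\DCPO$ is cartesian closed) to get $\sem{\lambda\psi.\omega_0}=\rmcurry\left(\bigvee_{\omega_0'}\sem{\omega_0'}\right)=\bigvee_{\omega_0'}\sem{\lambda\psi.\omega_0'}$; for $\omega=\omega_1\omega_2$ I use continuity of $\rmeval$, of pairing and of composition, together with directedness, to merge the two joins coming from the two induction hypotheses into a single join over the products $\omega_1'\omega_2'$; and for a clause set $\isobreduit$ I use that composition and the clause-join (Lemma~\ref{lem:join-scott}) are continuous, applying the induction hypothesis to the isos occurring in the $e_i$. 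In every case the bookkeeping reduces to checking that the finitary subisos of the compound iso are exactly the compounds of finitary subisos of its parts, which follows by inverting the two rules defining $\subfin$.

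The \emph{crux} is the fixpoint case $\omega=\ffix\phi.\omega_0$. Here $\sem{\ffix\phi.\omega_0}=\fix\left(\sem{\Psi,\phi\colon T\entailiso\omega_0\colon T}\right)$ is Kleene's least fixed point, so for each $g$ it equals $\bigvee_{n}F_g^{\,n}(\bot)$ with $F_g=\sem{\omega_0}(g,-)$ and $\bot=\sem{\Omega_T}$. By induction on $n$, using the defining semantics of $\nfix n$ as the $n$-fold unfolding together with the substitution Lemma~\ref{lem:sem-subst-iso}, one gets $\sem{\nfix n\phi.\omega_0}(g)=F_g^{\,n}(\bot)$, whence $\sem{\ffix\phi.\omega_0}=\bigvee_n\sem{\nfix n\phi.\omega_0}$. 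To also replace the fixpoints buried inside $\omega_0$ I combine this with the induction hypothesis $\sem{\omega_0}=\bigvee_{\omega_0'}\sem{\omega_0'}$ and the Scott-continuity of the least-fixed-point operator in its functional argument, obtaining $\sem{\ffix\phi.\omega_0}=\fix\left(\bigvee_{\omega_0'}\sem{\omega_0'}\right)=\bigvee_{\omega_0'}\fix\left(\sem{\omega_0'}\right)=\bigvee_{\omega_0',\,n}\sem{\nfix n\phi.\omega_0'}$. Since the pairs $(\omega_0',n)$ index precisely the finitary subisos $\nfix n\phi.\omega_0'$ of $\ffix\phi.\omega_0$ and the family is directed, this collapses to $\bigvee_{\omega'\subfin\omega\text{ fin}}\sem{\omega'}$, as required. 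I expect the main obstacle to be exactly this interchange of the two joins, over the outer unfolding depth and over the inner approximants: it relies both on the continuity of $\fix$ and on the directedness argument above, and it is the only point where finite unfolding of the outer fixpoint and approximation of the isos inside $\omega_0$ must be handled at once.
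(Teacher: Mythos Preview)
Your proposal is correct and follows the same approach as the paper. The paper's own proof is extremely terse: it records the single observation
\[
  \sem{\Psi \entailiso \ffix \phi . \omega \colon T}
  = \bigvee_{n \in \N} \sem{\Psi \entailiso \nfix n \phi . \omega \colon T}
\]
and then says ``the general conclusion falls by induction''. You supply exactly the details that this sentence leaves implicit: the upper-bound direction via Lemma~\ref{lem:subfin-to-dcpo}, the directedness of the approximating family, the continuity arguments for the structural cases, and the use of Scott-continuity of the least-fixed-point operator to merge the outer unfolding depth with the inner approximants in the $\ffix$ case.
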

\begin{proof}
	We observe that, by definition:
	\[
		\sem{\Psi \entailiso \ffix \phi . \omega \colon T}
		= \bigvee_{n \in \N} \sem{\Psi \entailiso \nfix n \phi . \omega \colon T}
	\]
	which proves the desired result in the case of $\ffix \phi . \omega$. The
	general conclusion falls by induction.
\end{proof}

We generalise to terms the definition of subisos given above.

\begin{definition}[Finitary Subterm]
	\label{def:fin-subterm}
	Let $\subfin$ be the smallest congruence relation between (finitary or not)
	terms such that:
	\[
		\begin{array}{c}
			\infer{\omega_1~t \subfin~\omega_2~t}{\omega_1 \subfin~\omega_2}
		\end{array}
	\]
\end{definition}

The following lemma follows directly from the previous definition and
Lemma~\ref{lem:subfin-join}.

\begin{lemma}
	\label{lem:subfin-join-term}
	Given a well-formed term judgement $\Psi; \Delta \entail t \colon A$,
	we have:
	\[
		\sem{\Psi; \Delta \entail t \colon A} =
		\bigvee_{\underset{t' \text{ finitary}}{t'\subfin~t}}
		\sem{\Psi; \Delta \entail t' \colon A}.
	\]
\end{lemma}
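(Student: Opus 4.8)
The plan is to prove the statement by structural induction on the well-formed term $t$, in the same spirit as Lemma~\ref{lem:subfin-join} but lifting the argument from isos to terms. The crucial ingredients are two. First, every semantic term constructor of Table~\ref{tab:sem} is built out of composition, the tensor $\otimes$, the injections $\iota_l,\iota_r$ and the structure map $\alpha$, all of which are Scott-continuous by the $\DCPO$-enrichment (Lemma~\ref{lem:dcpofun}, together with the fact from Definition~\ref{def:join} that composition distributes over joins). Second, $\subfin$ is a congruence (Definition~\ref{def:fin-subterm}), so the finitary subterms of a compound term are exactly the compound terms built from finitary subterms of its immediate components; in particular the only new generator, at the iso-application node, is handled by Lemma~\ref{lem:subfin-join}.

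First I would dispatch the base cases $t = \unit$ and $t = x$. Neither contains an iso, so by the congruence rules the only finitary term $t'$ with $t' \subfin t$ is $t$ itself; the right-hand join is then a singleton and the equation is immediate. Next come the unary constructors. For $t = \inl{t_0}$ the semantics is $\sem{\inl{t_0}} = \iota_l \circ \sem{t_0}$, and the finitary subterms of $\inl{t_0}$ are precisely the $\inl{t_0'}$ with $t_0'$ a finitary subterm of $t_0$. Using the induction hypothesis on $t_0$ and continuity of post-composition, I would compute
\[
\sem{\inl{t_0}} = \iota_l \circ \bigvee_{t_0' \subfin t_0} \sem{t_0'} = \bigvee_{t_0' \subfin t_0} \iota_l \circ \sem{t_0'} = \bigvee_{t' \subfin \inl{t_0}} \sem{t'},
\]
where in all these joins $t_0'$, $t'$ range over finitary terms. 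The cases $\inr$ and $\fold$ are identical, using $\iota_r$ and $\alpha$ respectively.

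The binary constructors and the iso-application are treated the same way, except that the join now ranges over independent choices in each component, so a product of joins must be turned into a single join. For $t = \pv{t_1}{t_2}$ the semantics is $\sem{t_1} \otimes \sem{t_2}$, and since $\otimes$ is a $\DCPO$-bifunctor (Lemma~\ref{lem:dcpofun}) it is continuous in each argument separately; the same applies to $\letv{p}{t_1}{t_2}$, whose semantics is $\sem{t_2} \circ (\iid \otimes \sem{t_1})$, and to $\omega~t_0$, where Lemma~\ref{lem:subfin-join} supplies the join decomposition of $\sem{\omega}$ while the induction hypothesis supplies that of $\sem{t_0}$.

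The one point requiring care---and the main obstacle---is that separate continuity in each argument yields the joint join only if the family of finitary-subterm combinations is directed. This directedness is exactly what makes the claim \emph{follow directly}: a finitary subterm is obtained by replacing each occurrence of $\ffix$ by some $\nfix{n}$, and any two such choices admit a common upper bound under $\subfin$ by taking, position by position, the larger unfolding depth, which is monotone on denotations by (the term-level analogue of) Lemma~\ref{lem:subfin-to-dcpo}. Hence the product of the two directed families is directed, the continuous binary operation commutes with it, and re-indexing by the congruence rules of Definition~\ref{def:fin-subterm} recovers the stated equality, closing the induction.
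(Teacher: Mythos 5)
Your proof is correct and takes essentially the same approach the paper intends: the paper dispatches this lemma in one line as following ``directly'' from the congruence definition of $\subfin$ on terms (Definition~\ref{def:fin-subterm}) and Lemma~\ref{lem:subfin-join}, which is exactly the structural induction you spell out, using Lemma~\ref{lem:subfin-join} at iso-application nodes and continuity of the semantic constructors elsewhere. Your explicit treatment of directedness of the family of finitary subterms (needed to push the separate joins through the binary constructors $\otimes$, $\mathtt{let}$ and iso application), argued via taking pointwise maxima of unfolding depths, addresses a genuine subtlety that the paper silently glosses over, and you handle it correctly.
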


It is also the right time to observe that if a term has a finitary subterm that
reduces to a value eventually, the former also normalises to the same value.

\begin{lemma}
	\label{lem:fin-red-to}
	Given a well-formed closed term judgement $~\entail t \colon A$, if
	there exists a finitary subterm $t' \subfin t$ and a value such that
	$t' \finto^* v$, then $t \to^* v$.
\end{lemma}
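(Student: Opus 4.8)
The plan is to prove the statement by induction on the length of the finitary reduction $t' \finto^* v$, after setting up a \emph{simulation} of $\finto$ by $\to^*$ that preserves the relation $\subfin$. The base case is immediate: a value contains no iso, hence no fixpoint, so if $t'$ is a value and $t' \subfin t$ then necessarily $t = t'$ (the only generator of $\subfin$ that alters a term acts on a subterm of the shape $\omega~t$, which never occurs inside a value), and $t \to^* v$ holds in zero steps. For the inductive step, writing $t' \finto t_1' \finto^* v$, I first establish a one-step simulation lemma: if $t' \subfin t$ and $t' \finto t_1'$ is \emph{not} one of the two degenerate finitary rules discussed below, then there is a term $t_1$ with $t \to^* t_1$ and $t_1' \subfin t_1$. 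Applying the induction hypothesis to $t_1' \finto^* v$ and $t_1' \subfin t_1$ yields $t_1 \to^* v$, and we conclude $t \to^* t_1 \to^* v$.

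First I would record two routine compatibility properties of $\subfin$: that it is preserved by substitution of (iso-free) values for term variables, and by substitution of $\subfin$-related isos for iso variables; both follow by induction on the relevant formation rules, exactly in the style of the substitution lemmas already proved for the type system. With these in hand, the one-step simulation is a case analysis on the rule used for $t' \finto t_1'$. Because $\subfin$ is a congruence that only ever replaces a general fixpoint $\ffix \phi . \omega$ by a finitary one $\nfix{n}\phi.\omega'$ with $\omega' \subfin \omega$, a redex of $t'$ sits at a position where $t$ carries the corresponding redex, up to $\subfin$. The non-fixpoint redexes ---iso application $(\lambda\phi.\omega_1)\omega_2$, the $\mathtt{let}$ rule, a matching clause $\isobreduit$ applied to a value $v'$ with $\match{\sigma}{v_i}{v'}$, and the congruence closure--- are simulated by the identically-shaped $\to$ rule, and $\subfin$ is re-established using the compatibility properties (for the matching clause one uses that patterns and the scrutinee $v'$ contain no isos, so the same clause $i$ fires in $t$). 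The only genuinely new case is the fixpoint: a step $\nfix{n+1}\phi.\omega \finto \omega[\nfix{n}\phi.\omega/\phi]$ is simulated by the single general unfolding $\ffix\phi.\omega^0 \to \omega^0[\ffix\phi.\omega^0/\phi]$ at the matching position, and $\omega[\nfix{n}\phi.\omega/\phi] \subfin \omega^0[\ffix\phi.\omega^0/\phi]$ follows from $\nfix{n}\phi.\omega \subfin \ffix\phi.\omega^0$ together with iso-substitution compatibility.

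The main obstacle is the treatment of the two finitary rules that have no counterpart in $\to$, namely $\nfix{0}\phi.\omega \finto \Omega_T$ and $\isobreduit$ applied to $v'$ reducing to $\bot$ when no clause matches: a general fixpoint always unfolds instead of collapsing to $\Omega_T$, and a general iso applied to an unmatched value simply gets stuck rather than producing $\bot$. The observation that rescues the argument is that neither degenerate step can occur along a reduction ending in a value. Indeed, a no-match step fires inside an evaluation context and, by the rule $C_\to[t]\finto\bot$, collapses the whole term to $\bot$, which is a normal form distinct from any value, so the reduction would halt at $\bot$; and a step $\nfix{0}\phi.\omega \finto \Omega_T$ places the never-reducing iso $\Omega_T$ in the active position, whose forced evaluation (peeling $\lambda$-abstractions via $\Omega_{T_1\to T_2}=\lambda\phi.\Omega_{T_2}$ until the empty iso $\Omega_{A\iso B}=\set{\mid\cdot}$ is applied to a value) deterministically produces $\bot$ as well. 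Since $\finto$ is deterministic and the given sequence reaches the value $v$, it contains no degenerate step, so the one-step simulation applies at every step; chaining these simulations through the induction gives $t \to^* v$, as required.
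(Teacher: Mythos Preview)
Your proof is correct and follows the same route as the paper: a step-by-step simulation of $\finto$ by $\to$ that preserves $\subfin$, combined with the observation that the two degenerate finitary rules inevitably lead to $\bot$ and therefore cannot occur along a reduction ending at a value. The paper's own proof is a two-line sketch of precisely this idea (``the same reduction steps up to a point'', with the end point being a value or $\bot$); your version makes the simulation invariant, the substitution-compatibility of $\subfin$, and the $\nfix 0 \finto \Omega_T$ analysis explicit, which the paper leaves implicit.
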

\begin{proof}
	The finitary term $t'$ has the same reduction steps as $t$ up to a point.
	Lemma~\ref{lem:fin-progress} ensures that this end point is either a value or
	$\bot$ in the finitary case. Thus if the reduction from $t'$ gets to a value,
	the reduction from $t$ must also finish on a value. Since the reduction steps
	were exactly the same, both reductions have the same normal form.
\end{proof}

\noindent
\textsf{\bfseries Conclusion.~}
We finally have all the tools to conclude with adequacy for closed terms of our
original language.

\begin{proof}[Proof of Adequacy (Theorem~\ref{th:rev-adeq})]
	There are two implications to prove.
	\begin{itemize}
		\item[$(\Rightarrow)$] This first implication is proven as
			Corollary~\ref{cor:rev-soundness}.
		\item[$(\Leftarrow)$] Suppose that $\sem t \neq 0$. Necessarily, thanks to
			Lemma~\ref{lem:subfin-join-term}, there exists a finitary term $t'$ such
			that $t' \subfin t$ and $\sem{t'} \neq 0$. In Theorem~\ref{th:fin-adeq},
			we have proven adequacy for finitary terms, meaning that there exists a
			value $v$ such that $t' \finto^* v$. Lemma~\ref{lem:fin-red-to} ensures
			then that $t \to^* v$, which concludes.
	\end{itemize}
\end{proof}

\section{Proofs of Section~\ref{sec:sem-preservation}}

\begin{proof}[Proof of \Cref{lem:duplication-typed}]
  The first case if done by a straightforward case analysis on $A$.
  The second case is done by induction on $A$. the case of the type
  $\one, A\otimes B$ and $A\oplus B$ are straightforward. For an
  inductive type $\mu X. A$, due to~\Cref{lem:duplication-invariant}
  we can reason on whether or not $X$ is in the set $S$. If it is, it
  means that a previous instance of the type $\mu X.  A$ has been
  encountered, in which case the associated iso variable $\isovar$ is
  of type $\mu X. A \iso (\mu X. A) \otimes (\mu X. A)$ and so the iso
  is well-typed. If $X$ is not in $S$ then by induction hypothesis.
\end{proof}

\begin{proof}[Proof of \Cref{lem:duplication-semantics}]
  The reduction is proven by a straightforward induction on $v$.
	The denotational semantics is given by definition.
\end{proof}

\begin{proof}[Proof of \Cref{lem:garbage-removal-typed}]
  Direct as each variable is used indeed only once, and by well-typing
  hypothesis on $\omega, \omega'$, \Cref{lem:inv-type}
  and~\Cref{lem:duplication-typed}.
\end{proof}

\begin{proof}[Proof of \Cref{lem:garb-remove-sem}]
  Direct by~\Cref{lem:duplication-semantics}.
\end{proof}

\begin{proof}[Proof of Theorem~\ref{th:computable}]
  \begin{itemize}
    \item In $\omega$, call the right-hand-side $e$.
    \item Notice that $\interp{e} = \interp{\floor{-}_B^{-1}} \circ
			\interp{\garRem{\opn{isos}(M_g)}{\opn{isos}(M_g')}} \circ \interp{\floor{-}_A}$
    \item By~\Cref{thm:rtm-iso-simulation} and \Cref{prop:rev-soundness}, we know that
    $\interp{\garRem{\opn{isos}(M_g)}{\opn{isos}(M_g')}} = g$.
    \item Therefore, since $g = \interp{\floor{-}_B} \circ f \circ
			\interp{\floor{-}_A^{-1}}$ by definition, we get

			\begin{align*}
				\sem e &= \interp{\floor{-}_B^{-1}} \circ
				\interp{\garRem{\opn{isos}(M_g)}{\opn{isos}(M_g')}} \circ \interp{\floor{-}_A} \\
				&= \interp{\floor{-}_B^{-1}} \circ g \circ \interp{\floor{-}_A} \\
				&= \interp{\floor{-}_B^{-1}} \circ \interp{\floor{-}_B} \circ f \circ
    \interp{\floor{-}_A^{-1}} \circ \interp{\floor{-}_A}
			\end{align*}

    \item By~\Cref{prop:rev-soundness} we get that $\interp{\floor{-}_B^{-1}} \circ
    \interp{\floor{-}_B} = \iid_B$ and $\interp{\floor{-}_A^{-1}} \circ
    \interp{\floor{-}_A} = \iid_A$.

    \item Therefore $\interp{e} = f$.

    \item Since the left-hand-side of $\omega$ is just a variable we get
    $\interp{\omega} = \interp{e}\circ \iid\inv = \interp{e} = f$.
  \end{itemize}
\end{proof}

\end{document}